\documentclass[aos]{imsart}

\pdfoutput=1

\usepackage{footmisc} 

\RequirePackage{amsthm,amsmath,amssymb}
\RequirePackage[sort&compress,numbers]{natbib}
\RequirePackage[colorlinks,citecolor=blue,urlcolor=blue,filecolor=cyan]{hyperref}
\RequirePackage{graphicx}

\usepackage{titlesec}

\titleclass{\part}{top} 
\titleformat{\part}
[display]
{\centering}
{\title{Foundations of Structural Causal Models with Cycles and Latent Variables}}
{-3pt}
{\normalfont}
\titlespacing*{\part}{0pt}{0pt}{1cm}


\usepackage{xcolor}

\usepackage[OT1]{fontenc}
\usepackage[font=scriptsize,skip=2pt,textfont=it]{caption}
\usepackage{centernot}
\usepackage{float}
\usepackage{bm}
\usepackage{boldline}
\usepackage{array}
\usepackage{enumerate}
\usepackage{enumitem}
\usepackage{adjustbox}
\usepackage{thmtools}
\usepackage[outline]{contour}
\usepackage{url}
\usepackage{tikz}
\usepackage{multirow}
\usepackage{subfig}
\usepackage{nameref}


\setlength{\textfloatsep}{10pt plus 1.0pt minus 2.0pt}

\setcounter{tocdepth}{3}

\startlocaldefs
\newtheorem{theorem}{Theorem}[section]
\newtheorem{proposition}[theorem]{Proposition}
\newtheorem{corollary}[theorem]{Corollary}

\newtheorem{definition}[theorem]{Definition}

\newtheorem{lemma}[theorem]{Lemma}

\newtheorem{example}[theorem]{Example}
\newtheorem*{remark}{Remark}


\let\oldFootnote\footnote
\newcommand\nextToken\relax

\renewcommand\footnote[1]{%
    \oldFootnote{#1}\futurelet\nextToken\isFootnote}

\newcommand\isFootnote{%
    \ifx\footnote\nextToken\textsuperscript{,}\fi}

\usetikzlibrary{cd}
\usetikzlibrary{arrows,fit}
\usetikzlibrary{patterns,calc}
\usetikzlibrary{matrix,decorations.pathreplacing,decorations.pathmorphing,calc,positioning}
\tikzstyle{var}=[circle,draw=black,fill=white,thick,minimum size=20pt,inner sep=0pt]
\tikzstyle{exvar}=[circle,draw=black,fill=gray,thick,minimum size=20pt,inner sep=2pt]
\tikzstyle{arr}=[->,>=stealth',draw=black,line width=1pt]
\tikzstyle{biarr}=[<->,>=stealth',draw=black,line width=1pt]

\newcommand\ellipsebyfoci[4]{
  let \p1=(#1), \p2=(#2), \p3=($(\p1)!.5!(\p2)$)
  in \pgfextra{
    \pgfmathsetmacro{\angle}{atan2(\y2-\y1,\x2-\x1)}
    \pgfmathsetmacro{\focal}{veclen(\x2-\x1,\y2-\y1)/2/1cm}
    \pgfmathsetmacro{\lentotcm}{\focal*2*#3}
    \pgfmathsetmacro{\axeone}{(\lentotcm - 2 * \focal)/2+\focal}
    \pgfmathsetmacro{\axetwo}{sqrt((\lentotcm/2)*(\lentotcm/2)-\focal*\focal}
  }
  (\p3) ellipse (\axeone cm and #4*\axetwo cm);
}

\tikzset{
  symbol/.style={
    draw=none,
    every to/.append style={
      edge node={node [sloped, allow upside down, auto=false]{$#1$}}}
  }
}

\newcommand{\hyperrefbox}[2]{\begingroup\hypersetup{hidelinks,allbordercolors=blue,pdfborder={0 0 1}}\hyperref[#1]{#2}\endgroup}
\newcommand{\hyperrefsuppbox}[2]{\begingroup\hypersetup{hidelinks,allbordercolors=blue,pdfborder={0 0 1}}\hyperref[#1]{#2}\endgroup}
\newcommand{\hyperlinkbox}[2]{\begingroup\hypersetup{hidelinks,allbordercolors=blue,pdfborder={0 0 1}}\hyperlink{#1}{#2}\endgroup}

\newcommand\eref[1]{(\ref{#1})}

\newcommand\de[1]{\mathrm{de}({#1})}

\newcommand\intervene{\mathrm{do}}

\newcommand\marg{\mathrm{marg}}

\newcommand\pa{\mathrm{pa}}

\renewcommand\de{\mathrm{de}}
\newcommand\ch{\mathrm{ch}}
\newcommand\an{\mathrm{an}}

\newcommand\scc{\mathrm{sc}}

\newcommand\NN{{\mathbb N}}
\newcommand\RN{\mathbb{R}}

\newcommand\id{\mathbb I}
\newcommand\Prb{\mathbb{P}}
\newcommand\given{\,|\,}
\newcommand\indep{{\,\perp\mkern-12mu\perp\,}}

\newcommand\sigmablocked{{\,\perp\,}}

\newcommand\twin{\mathrm{twin}}
\newcommand{\equivobs}[1]{\equiv_{\text{obs}(#1)}}
\newcommand{\equivint}[1]{\equiv_{\text{int}(#1)}}
\newcommand{\equivcf}[1]{\equiv_{\text{cf}(#1)}}
\newcommand\acy{\mathrm{acy}}

\newcommand{\ot}{\leftarrow}
\newcommand{\oto}{\leftrightarrow}

\newcommand\B[1]{\bm{#1}}
\newcommand\C[1]{\mathcal{#1}}
\newcommand\BC[1]{\B{\C{#1}}}

\newcommand{\foralmostall}{{\scalebox{1}[1.156]{$\vee$}}\hspace*{-0.74em}\raisebox{0.5ex}{\scalebox{1}[0.8]{$\sim$}}}

\newcommand{\Stephan}[1]{{\color{orange}{#1}}}
\newcommand{\Jonas}[1]{{\color{blue}{Jonas: #1}}}
\newcommand{\Joris}[1]{{\color{purple}{Joris: #1}}}
\newcommand{\Patrick}[1]{{\color{orange}{Patrick: #1}}}

\renewcommand{\Stephan}[1]{}
\renewcommand{\Jonas}[1]{}
\renewcommand{\Joris}[1]{}
\renewcommand{\Patrick}[1]{}

\renewcommand\thmcontinues[1]{Continued}
\endlocaldefs

\begin{document}

\begin{frontmatter}

\title{Foundations of Structural Causal Models with Cycles and Latent Variables}
\runtitle{Foundations of Structural Causal Models}

\begin{aug}
\author[A]{\fnms{Stephan} \snm{Bongers}\ead[label=e1]{s.r.bongers@uva.nl}},
\author[A]{\fnms{Patrick} \snm{Forr\'e}\ead[label=e4]{p.d.forre@uva.nl}},
\author[B]{\fnms{Jonas} \snm{Peters}\ead[label=e2]{jonas.peters@math.ku.dk}}
\and
\author[C]{\fnms{Joris M.} \snm{Mooij}\ead[label=e3]{j.m.mooij@uva.nl}}


\address[A]{Informatics Institute, 
University of Amsterdam,
\printead{e1,e4}}

\address[B]{Department of Mathematical Sciences,
University of Copenhagen,
\printead{e2}}

\address[C]{Korteweg-De Vries Institute,
University of Amsterdam,
\printead{e3}}
\end{aug}

\begin{abstract}
Structural causal models (SCMs), also known as (nonparametric) structural
equation models (SEMs), are widely used for causal modeling purposes. In particular, acyclic 
SCMs, also known as recursive SEMs, form a well-studied subclass of SCMs that 
generalize causal Bayesian networks to allow for latent confounders. In this paper, we investigate SCMs 
in a more general setting, allowing for the presence of both latent confounders and cycles.
We show that in the presence of cycles, many of the convenient properties of acyclic SCMs do not hold in general: they do not always have a solution; they do not always induce unique observational, interventional and counterfactual distributions; a marginalization does not always exist, and if it exists the marginal model does not always respect the latent projection; they do not always satisfy a Markov property; and their graphs are not always consistent with their causal semantics. 
We prove that for SCMs in general each of these properties does hold under certain solvability conditions. Our work generalizes results for SCMs with cycles that were only known for certain special cases so far. 
We introduce the class of simple SCMs that extends the class of acyclic SCMs to the cyclic setting, while preserving many of the convenient properties of acyclic SCMs.
With this paper we aim to provide the foundations for a general theory of statistical causal modeling with SCMs.
\end{abstract}

\begin{keyword}[class=MSC]
\kwd[Primary ]{62A09}
\kwd{68T30}
\kwd[; secondary ]{68T37}
\end{keyword}


\begin{keyword}
\kwd{structural causal models}
\kwd{causal graph}
\kwd{cycles}
\kwd{interventions}
\kwd{counterfactuals}
\kwd{solvability}
\kwd{Markov properties}
\kwd{marginalization}
\end{keyword}

\end{frontmatter}

\section{Introduction}
\label{sec:Introduction}

\Joris{Wishlist of nice additions, will be postponed until after submission:
\begin{itemize}
  \item Introduce a DMG version of the graphical twin operator (the bidirected edges
between each node $i$ and its copy $i'$ that could be redundant in case $i$ has no exogenous parents are similar to these type of nondeterminism-only edges that are by default introduced into the graphical marginalization and acyclification of a DMG.)
  \item See if the old Proposition 5.2.8 can be added again, with a correct proof
  \item Add a proof that the graphical marginalization commutes rather than referring to Evans' proof
\end{itemize}}

Structural causal models (SCMs), also known as (nonparametric) structural equation models 
(SEMs), are widely used for causal modeling purposes
\citep{Bol89,SGS00,Pea09,PJS17}. They form the basis for many statistical
methods that aim at inferring knowledge of the underlying causal structure from
data \citep[see, e.g.,][]{MCKB09,MH13,PMJS14,BPE14,MPJ+16}. In these models, 
the causal relationships between the variables are expressed in the form of deterministic, functional relationships, 
and probabilities are introduced through the assumption that certain variables are exogenous 
latent random variables. SCMs arose out of certain causal models that were first introduced in 
genetics \citep{Wri21}, econometrics \citep{Haa43}, electrical engineering \citep{Mas53, Mas56} and the social 
sciences \citep{GD73, Dun75}. 

Acyclic SCMs, also known as recursive SEMs, form a special well-studied subclass of SCMs that
generalize causal Bayesian networks \citep{Pea09}.
They have many convenient properties~\citep[see, e.g.,][]{Pea85,LDLL90, Ver93, Lau96, Ric03, Eva16, Eva18}: (i) they induce a unique distribution over the variables; (ii) they are closed under perfect interventions; (iii) they are closed under marginalizations; (iv) their marginalization respects the latent projection; (v) they obey (various equivalent versions of) the Markov property and (vi) their graphs express the causal relationships encoded by the SCM in an intuitive manner.

One important limitation of acyclic SCMs is that they cannot 
model systems that involve causal cycles. 
In many systems occurring in the real world, there are feedback loops between observed variables. For example, in economics the price of a product may be a function of the demanded or supplied quantities, and vice versa, the demanded and supplied quantities may be functions of the price. The underlying dynamic processes describing such systems have an acyclic causal structure over time. However, causal cycles may arise when one approximates such systems over time~\citep{Fis70,MMH18,MH20} or when one describes the equilibrium states of these systems~\citep{IS94,LSRH08,HEH12,MJS13,BM18,BBM19,PBP19}. In particular, in~\citep{BM18} it was shown that the equilibrium states of a system governed by (random) differential equations can be described by an SCM that represents their causal semantics, which gives rise to a plethora of SCMs that include cycles (we provide some examples of such feedback systems in Appendix~\ref{app:AppendixExamplesEquilibriumModels} of the Supplementary Material).
In contrast to their acyclic counterparts, SCMs with cycles have enjoyed less attention in the literature and are not as
well understood. In general, none of the above
properties (i)--(vi) hold in the class of SCMs.
However, some progress has been made in the case of discrete \citep{PD96,Nea00} and linear models \citep{Spi93, Spi94, Spi95, Ric96c, Kos96, HEH12}, and more recently, for more general cyclic models the Markov properties have been elucidated \citep{FM17}.

\paragraph*{Contributions}
The purpose of this paper is to provide the foundations for a general theory of statistical causal modeling with SCMs. We study properties of SCMs and allow for cycles, latent variables and nonlinear functional
relationships between the variables. We investigate to which extent and under which sufficient conditions each of the properties (i)--(vi) 
holds, in particular, in the presence of cycles. In the next paragraphs, we describe our contributions in more detail.

When there are cyclic functional relationships between variables, one encounters various technical complications, which
even arise in the linear setting. The structural equations of an acyclic SCM trivially have a unique 
solution. This unique solvability property ensures that the SCM gives rise to a 
unique, well-defined probability distribution on the variables. 
In the case of cycles, however, this property may be violated, and
consequently, the SCM may not have a solution at
all, or may allow for multiple different probability distributions~\citep{Hal98}.
Even if one starts with a cyclic SCM that is uniquely solvable, 
performing an intervention on the SCM may lead to an intervened SCM that is not uniquely 
solvable. Hence, a cyclic SCM may not give rise to a unique, well-defined probability 
distribution corresponding to that intervention, and whether or not this happens may depend
on the intervention. We provide sufficient conditions for the existence and uniqueness of these probability distributions after intervention. In general, it is not clear whether the solutions of the structural equations of an SCM 
are measurable if cycles are present. In addition, we provide sufficient and necessary conditions for 
the measurability of solution functions of cyclic SCMs.

SCMs provide a detailed modeling description of a system. Not all information may be necessary
for a certain modeling task, which motivates to consider certain classes of SCMs to be equivalent.
In this paper, we formally introduce several of such equivalence relations. For example, we consider two SCMs observationally equivalent if they
cannot be distinguished based on observations alone. Observationally equivalent SCMs can often
still be distinguished by interventions. We consider two SCMs interventionally equivalent if
they cannot be distinguished based on observations and interventions. While
these concepts have been around in implicit form for acyclic SCMs, we formulate them in such a way that they also
apply to cyclic SCMs that have either no solution at all or have multiple different induced probability 
distributions on the variables. Finally, we consider two SCMs counterfactually equivalent if they 
cannot be distinguished based on observations and interventions and in addition encode the 
same counterfactual distributions, which are the distributions induced by the so-called twin SCM via the 
twin network method~\cite{BP94}. 
These different equivalence relations formalize the different levels of abstraction in the so-called
causal hierarchy~\citep{SP08,PM18}. In addition, we add another, strong version of equivalence, such that equivalent SCMs have the same solutions. This notion clarifies ambiguities when a function is constant in one of its arguments, for example.

Marginalization becomes useful if not all variables are observed: given a
joint probability distribution on some variables, we obtain a marginal distribution on a subset of
the variables by integrating out the remaining variables. Analogously, we
can marginalize an acyclic SCM by substituting the solutions of the structural equations of
a subset of the endogenous variables into the structural equations of the
remaining endogenous variables. For acyclic SCMs, the induced observational and
interventional distributions of the marginalized SCM coincide with the marginals of the distributions induced by the original SCM~\cite[see][a.o.]{Ver93,SRM+98,Eva16,Eva18}.
In other words, for acyclic SCMs the operation of marginalization preserves the  probabilistic and causal semantics (restricted to the remaining variables). 
We show that for cyclic SCMs a marginalization does not always exist without further assumptions.
In \citep{FM17} it is shown that for modular SCMs, which can be seen as an SCM together with an additional structure of a compatible system of solution functions, a marginalization can be defined that preserves the probabilistic and causal semantics. 
We prove that this additional structure is not necessary and use a local unique solvability condition instead.
Under this condition, we show that an SCM and its marginalization
are observationally, interventionally and counterfactually
equivalent on the remaining endogenous variables. Analogously, we define a
marginalization operation on the associated graph of an SCM, which
generalizes the latent projection \citep{Ver93,Tia02,Eva16}. 
In general, the marginalization of an SCM does not respect the latent projection of its associated graph, 
but we show that it does so under an additional local ancestral unique solvability condition.

In graphical models, Markov properties allow one to read off conditional independencies in a distribution
directly from a graph. 
Various equivalent formulations of Markov properties exist for acyclic SCMs \citep{Lau96},
one prominent example being the $d$-separation criterion, also known as 
the directed global Markov property, which was originally derived for Bayesian networks \citep{Pea85}.
Markov properties have been of key importance to derive various central results regarding causal reasoning and causal discovery. 
For cyclic SCMs, however, the usual Markov properties do not
hold in general, as was already pointed out by Spirtes~\cite{Spi94}. 
His solution in terms of collapsed graphs was recently generalized and reformulated 
for a general class of causal graphical models \citep{FM17} 
by adapting the notion of $d$-separation into what has been termed
$\sigma$-separation. This resulted in a general directed global Markov property 
expressed in terms of $\sigma$-separation instead of $d$-separation.
Here, we formulate these general Markov properties specifically within the framework of SCMs.
Again, they only hold under certain unique solvability conditions.

In addition to its interpretation in terms of conditional independencies, the graph of an acyclic
SCM also has a direct causal interpretation \citep{Pea09}. 
As was already observed in~\cite{Nea00}, the causal interpretation of SCMs with
cycles can be counterintuitive, as the causal semantics under interventions no longer
needs to be compatible with the structure imposed by the functional relations between the variables.
We resolve this issue by showing that under certain ancestral unique solvability conditions the causal interpretation of SCMs is consistent with its graph. 

Cycles lead to several technical complications 
related to solvability issues. 
We introduce a special subclass of (possibly cyclic) SCMs, the class of simple SCMs, for which most of these technical complications are absent and which preserves much of the simplicity of the theory for acyclic SCMs.
A simple SCM is an SCM that is uniquely solvable with respect to every subset of the variables. 
Because of this strong solvability assumption, simple SCMs have all the convenient properties
(i)--(vi): they always have uniquely defined
observational, interventional and counterfactual distributions; we can perform every perfect intervention and marginalization on them and the result is again a simple SCM; marginalization does respect the latent projection; they obey the general
directed global Markov property, and for special cases (including the acyclic, linear and discrete case)
they obey the (stronger) directed global Markov property; their graphs have a 
direct and intuitive causal interpretation. 

The scope of this paper is limited to establishing the foundations for statistical
causal modeling with cyclic SCMs (Figure~\ref{fig:OverviewCausalModels} in Appendix~\ref{app:AppendixOverviewCausalGraphicalModels} of the Supplementary Material shows an overview of how SCMs relate to other causal graphical models). For a detailed discussion of causal reasoning,
causal discovery and causal prediction with cyclic SCMs we refer the reader to other
literature \citep[e.g., ][]{Ric96a, Ric96b, RS99, EHS10, HEH12, HHEJ13, FDD12}.
Several recent results (generalizations of the do-calculus, adjustment criteria
and an identification algorithm) for modular SCMs \citep{FM18, FM19} directly apply
to the subclass of simple SCMs, as well. Finally, many causal discovery algorithms that have been designed for the acyclic case also apply to simple SCMs with no or only minor
changes \citep{MMC19,MooijClaassen_2005.00610}.

\paragraph*{Overview}

Figure~\ref{fig:OverviewDiagram} gives an overview of the different objects that can be constructed from an SCM and the different mappings between them. For pairs of mappings between the objects with the names in bold, we prove commutativity results which are summarized in Table~\ref{tab:CommutationRelations}.
\begin{figure}
\adjustbox{scale=0.78,center}{%
  \tikzcdset{arrow style=tikz,diagrams={>=latex}}
  \begin{tikzcd}[/tikz/row 1/.style={nodes={draw, rounded corners}}, /tikz/row 3/.style={nodes={draw, rounded corners}}, /tikz/row 4/.style={nodes={draw, rounded corners}}, /tikz/row 5/.style={nodes={draw, rounded corners}}, /tikz/row 7/.style={nodes={draw, rounded corners}}, every label/.append style = {font = \small}] 
    \hyperrefbox{def:intervenedSCM}{\textbf{\begin{tabular}{c} intervened \\ SCM \end{tabular}}} \arrow[ddd, dashed, every label/.append style={left,xshift=-1ex}, "{\text{\ref{thm:SolvabilityIffCondition}}}"] & \hyperrefbox{def:TwinSCM}{\textbf{\begin{tabular}{c} twin \\ SCM \end{tabular}}} &[5pt] \hyperrefbox{def:MarginalSCM}{\textbf{\begin{tabular}{c} marginal \\ SCM \end{tabular}}} &[-25pt] & \hyperrefbox{def:InterventionOnGraph}{\textbf{\begin{tabular}{c} intervened \\ graph \end{tabular}}} & \hyperrefbox{def:TwinGraph}{\textbf{\begin{tabular}{c} twin \\ graph \end{tabular}}} & \hyperrefbox{def:LatentProjection}{\textbf{\begin{tabular}{c} marginal \\ graph \end{tabular}}} \\
& & & & & & \\
& \hyperrefbox{def:SCM}{\textbf{SCM}} \arrow[uul, shift={(-3pt,0pt)}, line width=1pt, every label/.append style={xshift=1ex, yshift=2ex},"{\begin{tabular}{c} $\intervene$ \\ \ref{def:intervenedSCM} \end{tabular}}"] \arrow[uu, shift={(-3pt,0pt)}, line width=1pt, every label/.append style={xshift=1.1ex,yshift=0.6ex},"{\begin{tabular}{c} $\twin$ \\ \ref{def:TwinSCM} \end{tabular}}"] \arrow[uur, shift={(-5pt,0pt)}, dashed, line width=1pt, every label/.append style={right,xshift=1.4ex,yshift=-0.8ex},"{\begin{tabular}{c} $\marg$ \\ \ref{def:MarginalSCM} \end{tabular}}"] \arrow[rrrr, shift={(-8pt,0pt)}, line width=1pt, every label/.append style={xshift=3ex,yshift=-0.2ex},"{\C{G} \text{ (or } \C{G}^a \text{)}}","\ref{def:Graphs}"'] \arrow[ddl, shift={(-9pt,0pt)}, dashed, every label/.append style={xshift=0ex}, "{\text{\ref{thm:SolvabilityIffCondition}}}"] \arrow[dd, shift={(-9pt,0pt)}, dashed, every label/.append style={xshift=-0.8ex,yshift=0.5ex},"{\begin{tabular}{c} $\acy$ \\ \ref{def:Acyclification} \end{tabular}}"] & & & & \hyperrefbox{def:Graphs}{\textbf{\begin{tabular}{c} (augmented) \\ graph \end{tabular}}} \arrow[uul, shift={(-12pt,0pt)}, line width=1pt, every label/.append style={xshift=1ex, yshift=2ex},"{\begin{tabular}{c} $\intervene$ \\ \ref{def:InterventionOnGraph} \end{tabular}}"] \arrow[uu, shift={(-12pt,0pt)}, line width=1pt, every label/.append style={xshift=1ex,yshift=0.6ex},"{\begin{tabular}{c} $\twin$ \\ \ref{def:TwinGraph} \end{tabular}}"] \arrow[uur, shift={(-12pt,0pt)}, line width=1pt, every label/.append style={right,xshift=-0.5ex,yshift=-2ex},"{\begin{tabular}{c} $\marg$ \\ \ref{def:LatentProjection} \end{tabular}}"] \arrow[ddl, shift={(-18pt,0pt)}, every label/.append style={right,xshift=-2ex,yshift=-2ex},"{\begin{tabular}{c} $\acy$ \\ \ref{def:GraphicalAcyclification} \end{tabular}}"] \arrow[ddr, shift={(-18pt,0pt)}, dashed, every label/.append style={xshift=0.3ex,yshift=0.1ex},"\ref{def:DirectCausesConfounders}"] \arrow[dddd, shift={(-18pt,0pt)}, every label/.append style={xshift=-1.3ex},"{\begin{tabular}{c} $d/\sigma$-sep. \\ \ref{def:DSeparation} / \ref{def:SigmaSeparation} \end{tabular}}"] & \\[-28pt]
\hyperlinkbox{def:InterventionalDistribution}{\begin{tabular}{c} interventional \\ distribution(s) \end{tabular}} & & & & & & \\
  \hyperlinkbox{def:ObservationalDistribution}{\begin{tabular}{c} observational \\ distribution(s) \end{tabular}} \arrow[ddrr, shift={(-24pt,0pt)}] & \hyperrefsuppbox{def:Acyclification}{\begin{tabular}{c} acyclified \\ SCM \end{tabular}} \arrow[r, shift={(-24pt,0pt)}, every label/.append style={yshift=-0.2ex}, "\C{G}","\ref{def:Graphs}"'] & \hyperrefbox{def:Graphs}{\begin{tabular}{c} graph of the \\ acyclified \\ SCM \end{tabular}} \arrow[rr, shift={(-24pt,0pt)}, symbol=\subseteq, every label/.append style={xshift=-0.2ex,yshift=-4.1ex}, "{\ref{prop:CompatibilityAcyclification}}"] & & \hyperrefsuppbox{def:GraphicalAcyclification}{\begin{tabular}{c} acyclified \\ graph \end{tabular}} \arrow[rr, shift={(-24pt,0pt)}, phantom, every label/.append style={yshift=-1.8ex}, "{\begin{tabular}{c} \scalebox{2}{$\circlearrowleft$} \\ \scalebox{1.1}{\ref{prop:SigmaSeparationAsDSeparation}} \end{tabular}}", near start] \arrow[ddr, shift={(-24pt,0pt)}, every label/.append style={left,xshift=5ex,yshift=0.5ex},"{d\text{-sep. }\,\,\,\ref{def:DSeparation}}"] & & \hyperrefbox{def:DirectCausesConfounders}{{\renewcommand{\arraystretch}{1.25}\begin{tabular}{c} direct causes, \\ causes, \\ confounders \end{tabular}}} \\
& & & & & & \\
& & \hyperrefbox{thm:globalMarkovPropertiesSCMs}{\begin{tabular}{c} (conditional) \\ independencies \end{tabular}} \arrow[rrr, bend right=15, shift={(-30.5pt,0pt)}, dashed, rightarrow, every label/.append style={yshift=0.25ex}, "\text{faithfulness}"',"\text{\ref{def:dFaithfulness} / \ref{def:sigmaFaithfulness}}"] & & & \hyperrefbox{thm:globalMarkovPropertiesSCMs}{\begin{tabular}{c} $d/\sigma$- \\ separations \end{tabular}} \arrow[lll, bend right=15, shift={(-32.5pt,0pt)}, dashed, rightarrow, every label/.append style={yshift=-0.25ex}, "\text{Markov properties}"',"\text{\ref{def:Markov_property} / \ref{def:generalized_Markov_property}}"] \\[-16pt]
& & & & & & \\
\end{tikzcd}
}
\caption{Overview of the objects constructed from an SCM and the mappings
between them. The numbers correspond to the definition, proposition or theorem of the corresponding object, mapping or result. When an arrow is dashed, the relation only holds under nontrivial assumptions that can be found in the corresponding definition or theorem. 
The symbol ``$\subseteq$'' stands for the subgraph of a directed mixed graph (see Definition~\ref{def:DirectedMixedGraph} in the Supplementary Material) and the symbol ``\protect\scalebox{1.2}{$\circlearrowleft$}'' denotes that the surrounding diagram commutes. Table~\ref{tab:CommutationRelations} gives an overview of the commutativity results for each pair of mappings between the objects with the names in bold.}
\label{fig:OverviewDiagram}
\end{figure}

\addtocounter{figure}{-1}
\begin{figure}
\captionsetup{name=Table}
\begin{tabular}{cccc}
  \scalebox{0.9}{{\renewcommand{\arraystretch}{1.2}
      \begin{tabular}[t]{c V{2.5} c|c|c}
    \textbf{SCMs} & $\intervene$ & $\twin$ & $\marg$ \\ \hlineB{2.5}
    $\C{G}, \C{G}^a$ & \ref{prop:InterventionOnGraph} & \ref{prop:CommuteTwinGraph} & (\ref{prop:LatentProjection}) \\ \hline
	$\intervene$ & \ref{prop:IntDisjSubsetCommute} & \ref{prop:CommuteInterveneTwin1} & \ref{prop:MarginalizationCommuteWithIntervention} \\ \hline
    $\twin$ & $\cdots$ & - & \ref{prop:MarginalizationCommuteWithTwinning} \\ \hline
	$\marg$ & $\cdots$ & $\cdots$ & \ref{prop:MarginalizationCommutes}
\end{tabular}}}
 & & &
\scalebox{0.9}{{\renewcommand{\arraystretch}{1.2}
    \begin{tabular}[t]{c V{2.5} c|c|c}
    \textbf{Graphs} & $\intervene$ & $\twin$ & $\marg$ \\ \hlineB{2}
  $\intervene$ & \ref{prop:IntDisjSubsetCommute} & \ref{prop:CommuteInterveneTwin2} & \ref{prop:LatentProjectionCommuteWithIntervention} \\ \hline
	$\twin$ & $\cdots$ & - & \ref{prop:LatentProjectionCommuteWithTwin} \\ \hline
	$\marg$ & $\cdots$ & $\cdots$ & \ref{prop:LatentProjectionCommutes}
\end{tabular}}}
\end{tabular}
  \vspace{0.5em}
  \caption{Overview of the commutativity results of different pairs of mappings, defined on SCMs (left table) and on graphs (right table). All results apply under the assumptions stated in the corresponding proposition. The entries denoted by dots are omitted due to symmetry. We do not consider the commutativity of the twin operation with itself in this paper. Proposition \ref{prop:LatentProjection} (in parentheses) is not a commutativity result but a weaker relation. The graphical twin operator is only defined for directed graphs.}
  \label{tab:CommutationRelations}
\end{figure}

\paragraph*{Outline}
This paper is structured as follows: In Section~\ref{sec:SCM}, we provide a
formal definition of SCMs and a natural notion of equivalence between SCMs, 
define the (augmented) graph corresponding to an SCM, and describe perfect interventions
and counterfactuals. In Section~\ref{sec:Solvability}, we discuss the concept of
(unique) solvability, its properties and how it relates to self-cycles. 
In Section~\ref{sec:Equivalences}, we define and relate various equivalence relations between SCMs. In
Section~\ref{sec:Marginalization}, we define a marginalization operation that
is applicable to cyclic SCMs under certain conditions. We discuss several
properties of this marginalization operation and discuss the relation with 
a marginalization
operation defined on directed mixed graphs. In Section~\ref{sec:MarkovProperty}, we discuss Markov properties of SCMs. In Section~\ref{sec:CausalInterpretationSCMs}, we discuss the causal interpretation of the graphs of SCMs. Section~\ref{sec:SimpleSCMs} introduces and discusses the class of simple SCMs. 

The Supplementary Material introduces causal graphical models in Appendix~\ref{app:AppendixCausalGraphicalModels}. This section also contains details on Markov properties and modular SCMs. Appendix~\ref{app:AppendixSolvabilityResults} provides additional (unique) solvability properties, some results for linear
SCMs are discussed in Appendix~\ref{app:AppendixLinSCMs}, other examples in Appendix~\ref{app:AppendixExamples} and the proofs of all
the theoretical results are in Appendix~\ref{app:AppendixProofs}. Appendix~\ref{app:AppendixMST} contains
some lemmas and measurable selection theorems that are used in several proofs.

\section{Structural causal models}
\label{sec:SCM}

%
In this section, we provide the definition and properties of structural causal models (SCMs). Our definition of SCMs slightly deviates from existing definitions~\citep{Bol89,Pea09,SGS00}, because we make the definition of the SCM independent of the random variables that solve it. This enables us to deal with the various technical complications that arise in the presence of cycles.

\subsection{Structural causal models and their solutions}
\label{sec:SCMdef}


\begin{definition}[Structural causal model]
\label{def:SCM}
A \emph{structural causal model (SCM)} is a tuple\footnote{We often use boldface for 
variables that have multiple components, for example, vectors
in a Cartesian product.}
$$\C{M} := \langle \C{I}, \C{J}, \BC{X}, \BC{E}, \B{f}, \Prb_{\BC{E}} \rangle \,,$$
where 
\begin{enumerate}
  \item 
    $\C{I}$ is a finite index set of \emph{endogenous variables}, 
  \item 
    $\C{J}$ is a disjoint finite index set of \emph{exogenous variables},
  \item 
    $\BC{X} = \prod_{i \in \C{I}} \C{X}_i$ is the product of the domains of the 
    endogenous variables, where each domain $\C{X}_i$ is a standard measurable space (see Definition~\ref{def:StandardMeasurableSpace}),
  \item 
    $\BC{E} = \prod_{j \in \C{J}} \C{E}_j$ is the product of the domains of the 
    exogenous variables, where each domain $\C{E}_j$ is a standard measurable space, 
  \item 
    $\B{f} : \BC{X} \times \BC{E} \to \BC{X}$ is a measurable function that specifies 
    the \emph{causal mechanism}, 
  \item 
    $\Prb_{\BC{E}}=\prod_{j\in\C{J}}\Prb_{\C{E}_j}$ is a product measure, the \emph{exogenous distribution}, where 
$\Prb_{\C{E}_j}$ is a probability measure on $\C{E}_j$ for each $j\in\C{J}$.\footnote{For the case $\C{J}=\emptyset$, we have that $\BC{E}$ is the singleton $\B{1}$ and $\Prb_{\BC{E}}$ is the degenerate probability measure $\Prb_{\B{1}}$.}
\end{enumerate}
\end{definition}

In SCMs, the functional relationships between variables are 
  expressed in terms of deterministic equations, where each equation expresses an endogenous variable (on the left-hand side) in terms of a causal mechanism depending on endogenous and exogenous variables (on the right-hand side). This allows us to model interventions in an unambiguous way by changing the causal mechanisms that target specific endogenous variables (see Section~\ref{sec:Interventions}).
\begin{definition}[Structural equations]
Let $\C{M}=\langle \C{I}, \C{J}, \BC{X}, \BC{E}, \B{f}, \Prb_{\BC{E}} \rangle$ be an SCM. We call the
set of equations
$$
  x_i=f_i(\B{x},\B{e}) \qquad\B{x}\in\BC{X}, \B{e}\in\BC{E}
$$
for $i \in \C{I}$ the \emph{structural equations} of the structural causal model $\C{M}$. 
\end{definition}

Although it is common to assume the absence of cyclic functional 
relations (see Definition~\ref{def:AcyclicSCM}), we make no such assumption here. In particular, we allow for self-cycles, which we will discuss in more detail in Sections~\ref{sec:Graph} and \ref{sec:SelfCycles}.

The solutions of an SCM in terms of random variables are defined up to almost sure equality. Random variables that are almost surely equal are generally considered to be equivalent to each other for all practical purposes.
\begin{definition}[Solution]
\label{def:Solution}
A pair $(\B{X},\B{E})$ of random variables $\B{X} : \Omega \to \BC{X}, \B{E} : \Omega \to \BC{E}$, where $\Omega$ is a probability space, is a \emph{solution} of the SCM $\C{M} = \langle \C{I}, \C{J}, \BC{X}, \BC{E}, \B{f}, \Prb_{\BC{E}} \rangle$ if 
\begin{enumerate}
  \item  
    $\Prb^{\B{E}} = \Prb_{\BC{E}}$, that is, the distribution of $\B{E}$ is equal
    to $\Prb_{\BC{E}}$,\footnote{ 
    This implies that the components $E_j$ of $\B{E}$ are
    mutually independent, since $\Prb_{\BC{E}}=\prod_{j\in\C{J}}\C{E}_j$.} and
  \item 
    the \emph{structural equations} are satisfied, that is,
    $$
    \B{X} = \B{f}(\B{X},\B{E}) \text{ a.s..}
    $$
\end{enumerate}
For convenience, we call a random variable $\B{X}$ \emph{a solution of} $\C{M}$ if there exists a 
random variable $\B{E}$ such that $(\B{X},\B{E})$ forms a solution of $\C{M}$. 
\end{definition}
Often, the endogenous random variables $\B{X}$ can be observed, while the exogenous random 
variables $\B{E}$ are treated as latent. Latent exogenous variables are often referred to as 
``disturbance terms'' or ``noise variables.'' For a solution $\B{X}$, we 
call the distribution $\Prb^{\B{X}}$ the \emph{\hypertarget{def:ObservationalDistribution}{observational distribution} of 
$\C{M}$ associated to $\B{X}$}. In general, there may be multiple different observational 
distributions associated to an SCM due to the existence of different solutions of the structural 
equations. This is a consequence of the allowance of cycles in SCMs, as the following simple example illustrates. 

\begin{example}[Cyclic SCMs]
  \label{ex:SimpleCyclicExample}
  For brevity, we use throughout this paper the notation $\B{n} := \{1,2,\dots,n\}$ for $n\in\NN$.
  Let $\C{M} = \langle \B{2}, \B{1}, \RN^2, \RN, \B{f}, \Prb_{\RN} \rangle$ be an SCM\footnote{We will abuse notation by using nondisjoint subsets of the natural numbers to index both endogenous and exogenous variables; these should be understood to be disjoint copies of the natural numbers: if we write $\C{I} = \B{n}$ and $\C{J} = \B{m}$, we mean instead $\C{I} = \{1,2,\dots,n\}$ and $\C{J} = \{1',2',\dots,m'\}$ where $k'$ is a copy of~$k$. 
  } with $f_1(\B{x},e) = x_2$ and $f_2(\B{x},e) = x_1$,
  and $\Prb_{\RN}$ an arbitrary probability measure on $\RN$. Then $(X,X)$ is a solution of $\C{M}$ for any arbitrary 
  random variable $X$ with values in $\RN$. Hence, any probability distribution on $\{(x,x) : x \in \RN\}$ is an observational distribution 
  associated to $\C{M}$. Now consider instead the same SCM but with $f_1(\B{x},e) = x_2 + 1$. This SCM has no solutions at all,
  and hence induces no observational distribution.
\end{example}

Due to the fact that the structural equations only need to be satisfied almost surely, 
there may exist many different SCMs representing the same set of solutions (see Example~\ref{ex:StructuralEquationsUpToASEquality}). It therefore seems natural not to differentiate between 
structural equations that have different solutions on at most a $\Prb_{\BC{E}}$-null set of exogenous
variables. This leads to 
an equivalence relation between SCMs. To be able to state the equivalence
relation concisely, we introduce the following notation: For subsets $\C{U}\subseteq\C{I}$ and $\C{V}\subseteq\C{J}$, we write
$\BC{X}_{\C{U}}:=\prod_{i\in\C{U}}\C{X}_i$ and
$\BC{E}_{\C{V}}:=\prod_{j\in\C{V}}\C{E}_j$. In particular, $\BC{X}_{\emptyset}$ and $\BC{E}_{\emptyset}$ are defined by the singleton $\B{1}$. Moreover, for a subset $\C{W} \subseteq \C{I} \cup \C{J}$, we use the convention that we write 
$\BC{X}_{\C{W}}$ and $\BC{E}_{\C{W}}$ instead of $\BC{X}_{\C{W} \cap \C{I}}$ and 
$\BC{E}_{\C{W} \cap \C{J}}$, respectively and we adopt a similar notation for the (random) 
variables in those spaces, that is, we write $\B{x}_{\C{W}}$ and $\B{e}_{\C{W}}$ instead of 
$\B{x}_{\C{W} \cap \C{I}}$ and $\B{e}_{\C{W} \cap \C{J}}$, respectively. This allows us to define the following natural equivalence relation for SCMs.\footnote{An attempt at coarsening this notion of equivalence by replacing the quantifier ``for all $\B{x}\in\BC{X}$'' by ``for almost every $\B{x}\in\BC{X}$ under the observational distribution $\Prb^{\B{X}}$'' will not lead to a well-defined equivalence relation, since in general the observational distribution $\Prb^{\B{X}}$ may be nonunique or even nonexistent. Refining it by replacing the quantifier ``for $\Prb_{\BC{E}}$-almost every $\B{e}\in\BC{E}$'' by ``for all $\B{e}\in\BC{E}$'' would make it too fine for our purposes, since we assume the exogenous distribution to be fixed and we assume as usual that random variables that are almost surely identical are indistinguishable in practice. Note that the ``for $\Prb_{\BC{E}}$-almost every $\B{e}\in\BC{E}$'' and ``for all $\B{x}\in\BC{X}$'' quantifiers do not commute in general (see Example~\ref{ex:ForAlmostEveryForAllQuantifier})}\footnote{We may extend this definition to allow $\tilde{\C{J}} \ne \C{J}$ and for a larger class of SCMs such that the exogenous distribution does not factorize. Then, for any $\C{M}$ that satisfies Definition~\ref{def:SCM}, except for that it may have a non-factorizing exogenous distribution, there exists an equivalent SCM with a factorizing exogenous distribution (and a different $\C{J}$); the latter can be obtained by partitioning the exogenous components into independent tuples. This motivates why we can restrict ourselves in Definition~\ref{def:SCM} to factorizing exogenous distributions only. For some more discussion on the representation of latent confounders, see also Example~\ref{ex:no_nice_reduction}.}

\begin{definition}[Equivalence]
\label{def:EquivSCMs}
The two SCMs $\C{M}=\langle \C{I}, \C{J}, \BC{X}, \BC{E}, \B{f}, \Prb_{\BC{E}} \rangle$ and 
$\tilde{\C{M}}=\langle \C{I}, \C{J}, \BC{X}, \BC{E}, \tilde{\B{f}}, \Prb_{\BC{E}} \rangle$ are 
\emph{equivalent}, denoted by $\C{M} \equiv \tilde{\C{M}}$, if 
for all $i \in I$, for $\Prb_{\BC{E}}$-almost every $\B{e} \in \BC{E}$ and
for all $\B{x} \in \BC{X}$,
  $$x_i = f_i(\B{x},\B{e}) \quad\iff\quad x_i = \tilde{f}_i(\B{x},\B{e}).$$
\end{definition}

Thus, two equivalent SCMs can only differ in terms of their causal mechanism. Importantly, equivalent SCMs have the same solutions and, as we will see in Sections~\ref{sec:Interventions} and \ref{sec:Counterfactuals}, they have the same causal and counterfactual semantics (see Definitions~\ref{def:intervenedSCM} and \ref{def:TwinSCM}, respectively). This equivalence relation on the set of all SCMs gives rise to the quotient set of equivalence classes of SCMs. 
\subsection{The (augmented) graph}
\label{sec:Graph}

We will now define two types of graphs that can be used for representing structural properties of the SCM.
These graphical representations are related to Wright's path diagrams \citep{Wri21}.
The structural properties of the functional relations between variables modeled by an SCM are specified by the causal mechanism of the SCM and can be encoded in an (augmented) graph. 
For the graphical notation and standard terminology on directed (mixed) graphs
that is used throughout this paper, we refer the reader to
Appendix~\ref{app:AppendixDirectedMixedGraphs}.

We first define the parents of an endogenous variable.
\begin{definition}[Parent]
\label{def:Parents}
Let $\C{M} = \langle \C{I}, \C{J}, \BC{X}, \BC{E}, \B{f},
\Prb_{\BC{E}} \rangle$ be an SCM. We call $k \in \C{I}\cup\C{J}$ a \emph{parent of $i \in \C{I}$} if and only if there does not 
   exist a measurable function\footnote{For $\BC{X} = \prod_{i \in \C{I}}
     \C{X}_i$, $\C{I}$ some index set, $I\subseteq\C{I}$ and $k\in\C{I}$, we denote
     $\BC{X}_{\setminus I} =\prod_{i\in\C{I}\setminus I} \C{X}_i$ and $\BC{X}_{\setminus k} = \prod_{i \in \C{I} \setminus \{k\}} \C{X}_i$, 
   and similarly for their elements.} $\tilde f_i : \BC{X}_{\setminus k} \times
   \BC{E}_{\setminus k} \to \C{X}_i$ such that
   for $\Prb_{\BC{E}}$-almost every $\B{e} \in \BC{E}$ and for all $\B{x} \in \BC{X}$,
   $$x_i = f_i(\B{x},\B{e}) \quad\iff\quad x_i = \tilde f_i(\B{x}_{\setminus k},\B{e}_{\setminus k}).$$
\end{definition}
Exogenous variables have no parents by definition.
These parental relations are preserved under the equivalence relation $\equiv$
on SCMs. They can be represented by a directed graph or a directed mixed graph.\footnote{A \emph{directed mixed graph $\C{G}=(\C{V},\C{E},\C{B})$} consists of a set of nodes $\C{V}$, a set of directed edges $\C{E}$ and a set of bidirected edges $\C{B}$ (see Definition~\ref{def:DirectedMixedGraph} for a more precise definition).}
\begin{definition}[Graph and augmented graph]
\label{def:Graphs}
Let $\C{M} = \langle \C{I}, \C{J}, \BC{X}, \BC{E}, \B{f}, \Prb_{\BC{E}} \rangle$ be an SCM. We
define:
\begin{enumerate}
\item
\label{def:AugmentedGraph} 
the \emph{augmented graph} $\C{G}^a(\C{M})$ as the 
directed graph with nodes $\C{I} \cup \C{J}$ and directed edges $u \to v$ if and
only if $u\in\C{I}\cup\C{J}$ 
is a parent of $v\in\C{I}$;
\item
\label{def:Graph} 
the \emph{graph} $\C{G}(\C{M})$ as the directed mixed 
graph with nodes $\C{I}$, directed edges $u \to v$ if and only if
$u\in\C{I}$ is a parent 
of $v\in\C{I}$ and bidirected edges $u\leftrightarrow v$ if and only if there exists a $j\in\C{J}$ 
that is a parent of both $u \in \C{I}$ and $v \in \C{I}$.
\end{enumerate}
We call the mappings $\C{G}^a$ and $\C{G}$, that map $\C{M}$ to $\C{G}^a(\C{M})$ and $\C{G}(\C{M})$, the \emph{augmented graph mapping} and the \emph{graph mapping}, respectively.
\end{definition}
In particular, the augmented graph contains no directed edges 
pointing toward an exogenous variable, that is, $u\in\C{I}\cup\C{J}$ cannot be a parent of $v\in\C{J}$,
because they are not functionally related through the causal mechanism. 
We call a directed edge $i\to i$ in $\C{G}^a(\C{M})$ and $\C{G}(\C{M})$ 
(here, $i$ is a parent of itself) 
a \emph{self-cycle} at $i$. By definition, the mappings $\C{G}^a$ and $\C{G}$ are invariant under the equivalence
relation $\equiv$ on SCMs, and hence the equivalence class of an SCM $\C{M}$ is mapped to a unique augmented graph $\C{G}^a(\C{M})$ and a unique graph $\C{G}(\C{M})$. 

\begin{example}[Graphs of an SCM]
\label{ex:AugmentedGraphs}
Let $\C{M} = \langle \B{5}, \B{3}, \RN^5, \RN^3, \B{f}, \Prb_{\RN^3} \rangle$ be an SCM with causal mechanism given by
$$
  \begin{aligned}
    f_1(\B{x},\B{e}) &= x_1 - x_1^2 + \alpha e_1^2 \,, & f_3(\B{x},\B{e}) &=
    - x_4 + e_2 \,, & f_5(\B{x},\B{e}) &=  x_4 \cdot e_3 \,, \\
    f_2(\B{x},\B{e}) &= x_1 + x_3 + x_4 + e_1 \,, & f_4(\B{x},\B{e}) &= x_2 +
    e_2 \,, & & 
  \end{aligned}
$$
where $\alpha \neq 0$ and $\Prb_{\RN^3}$ is a product of three 
probability measures $\Prb_{\RN}$ over $\RN$ that are non-degenerate. The augmented graph $\C{G}^a(\C{M})$ and the graph $\C{G}(\C{M})$ of $\C{M}$ are depicted\footnote{For visualizing an 
(augmented) graph, we adapt the common convention of using random variables, 
with the index set as a subscript, instead of using the index set itself. With a slight abuse of 
notation, we still use the random variables notation in the (augmented) graph in 
the case that the SCM has no solution at all.} in Figure~\ref{fig:AugmentedGraphs} (left and center). Observe that if $\alpha$ had been equal to zero, then the endogenous variable $1$ would not have any parents in $\C{G}^a(\C{M})$, that is, it would not have a self-cycle and directed edge from any exogenous variables in $\C{G}^a(\C{M})$, and it would not have a self-cycle and bidirected edge from any other variable in $\C{G}(\C{M})$. Moreover, if one of the probability measures $\Prb_{\RN}$ over $\RN$ were degenerate,
then some of the 
directed edges from the exogenous variables to the endogenous variables in the augmented graph $\C{G}^a(\C{M})$ and bidirected edges in the graph $\C{G}(\C{M})$ would be missing.
\begin{figure}
  \begin{center}
  \adjustbox{scale=0.8,center}{%
    \begin{tikzpicture}
      \begin{scope}
        \node[exvar] (E1) at (-0.5,2.25) {$E_1$};
        \node[exvar] (E2) at (1,2.25) {$E_2$};
        \node[exvar] (E3) at (2.5,2.25) {$E_3$};
        \node[var] (X1) at (-1,1) {$X_1$}; 
        \node[var] (X2) at (0,0) {$X_2$}; 
        \node[var] (X3) at (1,1) {$X_3$}; 
        \node[var] (X4) at (2,0) {$X_4$};
        \node[var] (X5) at (3,1) {$X_5$};
        \draw[arr] (X1) to (X2);
        \draw[arr] (X3) to (X2);
        \draw[arr] (X4) to (X3);
        \draw[arr] (X4) to (X5);
        \draw[arr] (X2) to [bend right=20] (X4);
        \draw[arr] (X4) to [bend right=20] (X2);
        \draw[arr] (X1) to [out=129.5,in=185.5,looseness=4.4] (X1);
        \draw[arr] (E1) to (X1);
        \draw[arr] (E1) to (X2);
        \draw[arr] (E2) to (X3);
        \draw[arr] (E2) to [bend left=20] (X4);
        \draw[arr] (E3) to (X5);
        \node at (-1.7,-0.2) {$\C{G}^a(\C{M})$};
      \end{scope}
      \begin{scope}[shift={(6cm,0cm)}]
        \node[var] (X1) at (-1,1) {$X_1$}; 
        \node[var] (X2) at (0,0) {$X_2$}; 
        \node[var] (X3) at (1,1) {$X_3$}; 
        \node[var] (X4) at (2,0) {$X_4$};
        \node[var] (X5) at (3,1) {$X_5$};
        \draw[arr] (X1) to (X2);
        \draw[arr] (X3) to (X2);
        \draw[arr] (X4) to (X3);
        \draw[arr] (X4) to (X5);
        \draw[arr] (X2) to [bend right=20] (X4);
        \draw[arr] (X4) to [bend right=20] (X2);
        \draw[arr] (X1) to [out=129.5,in=185.5,looseness=4.4] (X1);
        \draw[biarr] (X2) to [bend right=40] (X1);
        \draw[biarr] (X4) to [bend right=40] (X3);
        \node at (-1.7,-0.2) {$\C{G}(\C{M})$};
      \end{scope}
      \begin{scope}[shift={(12cm,0cm)}]
        \node[var] (X1) at (-1,1) {$X_1$}; 
        \node[var] (X2) at (0,0) {$X_2$}; 
        \node[var] (X3) at (1,1) {$X_3$}; 
        \node[var] (X4) at (2,0) {$X_4$};
        \node[var] (X5) at (3,1) {$X_5$};
        \draw[arr] (X1) to (X2);
        \draw[arr] (X3) to (X2);
        \draw[arr] (X4) to (X5);
        \draw[arr] (X2) to [bend right=20] (X4);
        \draw[arr] (X4) to [bend right=20] (X2);
        \draw[arr] (X1) to [out=129.5,in=185.5,looseness=4.4] (X1);
        \draw[biarr] (X2) to [bend right=40] (X1);
        \node at (-1.7,-0.2) {$\C{G}(\C{M}_{\intervene(\{3\},1)})$};
      \end{scope}
  \end{tikzpicture}}
  \end{center}
  \caption{The augmented graph (left) and the graph (center) of the SCM
    $\C{M}$ of Example~\ref{ex:AugmentedGraphs} and the graph of the
  intervened SCM $\C{M}_{\intervene(\{3\},1)}$ of Example~\ref{ex:Interventions} (right).}
  \label{fig:AugmentedGraphs}
\end{figure}
\end{example}
As is illustrated in this example, the augmented graph provides a more detailed representation than the graph. Therefore, we use the augmented graph as the standard graphical representation for SCMs, unless stated otherwise. For an SCM $\C{M}$, we 
denote the sets $\pa_{\C{G}^a(\C{M})}(\C{U})$,
$\ch_{\C{G}^a(\C{M})}(\C{U})$, $\an_{\C{G}^a(\C{M})}(\C{U})$, etc., for some 
subset
$\C{U}\subseteq\C{I}\cup\C{J}$, by respectively $\pa(\C{U})$, $\ch(\C{U})$, $\an(\C{U})$, etc., when the notation is clear from the context.

\begin{definition}
\label{def:AcyclicSCM}
We call an SCM $\C{M}$ \emph{acyclic} if $\C{G}^a(\C{M})$ is a directed acyclic
graph (DAG). Otherwise, we call $\C{M}$ \emph{cyclic}.
\end{definition}
Equivalently, an SCM $\C{M}$ is acyclic if $\C{G}(\C{M})$ is an acyclic directed
mixed graph (ADMG) \citep{Ric03}.
Acyclic SCMs are also known as semi-Markovian SCMs \citep{Pea09,Tia02}. A commonly considered class of acyclic SCMs are the Markovian SCMs, which are acyclic SCMs for which each 
exogenous variable has at most one child.
Several Markov properties were first shown for these models~\citep{Pea09,LDLL90,Tia02}.

\subsection{Structurally minimal representations}
\label{sec:StructMinimalRepresentation}

We have discussed an equivalence relation between SCMs in Section~\ref{sec:SCMdef}. In this subsection, we show that for each SCM there exists a representative of the equivalence class of that SCM for which each component of the causal mechanism does not depend on its nonparents \citep[see also][]{PJS17}. 
\begin{definition}[Structurally minimal SCM]
\label{def:StructMinimalRepresentation}
Let $\C{M} = \langle \C{I}, \C{J}, \BC{X}, \BC{E}, \B{f}, \Prb_{\BC{E}} \rangle$ be an SCM. We call $\C{M}$ \emph{structurally minimal} if for all $i\in\C{I}$ there exists a mapping
$\tilde{f}_i: \BC{X}_{\pa(i)}\times\BC{E}_{\pa(i)} \to \C{X}_i$ such that
$f_i(\B{x},\B{e}) = \tilde{f}_i(\B{x}_{\pa(i)},\B{e}_{\pa(i)})$ for all $\B{e}\in\BC{E}$ and all $\B{x}\in\BC{X}$.
\end{definition}
We already encountered a structurally minimal SCM $\C{M}$ in Example~\ref{ex:AugmentedGraphs}. Taking instead $\alpha=0$ in that example gives an SCM $\C{M}$ that is not structurally minimal, since the endogenous variable $1$ is then not a parent of itself, while $f_1(\B{x},\B{e})$ depends on $x_1$. However, the equivalent SCM where we have replaced the causal mechanism of $1$ by $f_1(\B{x},\B{e})=0$ yields a structurally minimal SCM. In general, there always exists an equivalent structurally minimal SCM.

\begin{proposition}[Existence of a structurally minimal SCM]
\label{prop:StructMinimalRepresentation}
For an SCM $\C{M} = \langle \C{I}, \C{J}, \BC{X}, \BC{E}, \B{f}, \Prb_{\BC{E}} \rangle$, 
there exists an equivalent SCM $\tilde{\C{M}}=\langle \C{I}, \C{J}, \BC{X}, \BC{E},
\tilde{\B{f}}, \Prb_{\BC{E}} \rangle$ that is structurally minimal.
\end{proposition}

For a causal mechanism 
$\B{f}:\BC{X}\times\BC{E}\to\BC{X}$ and a subset $\C{U}\subseteq\C{I}$, we write
$\B{f}_{\C{U}}:\BC{X}\times\BC{E}\to\BC{X}_{\C{U}}$ for the $\C{U}$ components\footnote{For $\C{U}=\emptyset$, we always consider the trivial mapping $\B{f}_{\emptyset}:\BC{X}\times\BC{E}\to\BC{X}_{\emptyset}$ where $\BC{X}_{\emptyset}$ is the singleton $\B{1}$.} of $\B{f}$. A
structurally minimal representation is compatible with the (augmented) graph, in the
sense that for every $\C{U}\subseteq\C{I}$ there exists a unique measurable mapping $\tilde{\B{f}}_{\C{U}} :
\BC{X}_{\pa(\C{U})}\times\BC{E}_{\pa(\C{U})}\to\BC{X}_{\C{U}}$ such that $\B{f}_{\C{U}}(\B{x},\B{e}) = \tilde{\B{f}}_{\C{U}}(\B{x}_{\pa(\C{U})},\B{e}_{\pa(\C{U})})$ 
for all $\B{e}\in\BC{E}$ and all $\B{x}\in\BC{X}$. Moreover, for any $\C{U} \subseteq \C{I}$
there exists a unique measurable mapping $\tilde{\B{f}}_{\an(\C{U})} : 
\BC{X}_{\an(\C{U})} \times \BC{E}_{\an(\C{U})} \to \BC{X}_{\an(\C{U})}$
with
$\B{f}_{\an(\C{U})}(\B{x},\B{e}) = \tilde{\B{f}}_{\C{U}}(\B{x}_{\an(\C{U})},\B{e}_{\an(\C{U})})$ 
for all $\B{e}\in\BC{E}$ and all $\B{x}\in\BC{X}$.


\subsection{Interventions}
\label{sec:Interventions}

To define the causal semantics of SCMs, we consider here an idealized class of interventions
introduced by Pearl~\cite{Pea09} that we refer to as perfect interventions. Other types of interventions, like mechanism changes \citep{TP01a}, fat-hand interventions \citep{EM07},
activity interventions \citep{MH13} and stochastic versions of
all these are at least as relevant, but we do not consider them here.

\begin{definition}[Perfect intervention on an SCM]
\label{def:intervenedSCM}
Let $\C{M} = \langle \C{I}, \C{J}, \BC{X}, \BC{E}, \B{f}, \Prb_{\BC{E}} \rangle$ be an SCM, $I \subseteq \C{I}$ a subset of endogenous variables and $\B{\xi}_I \in \BC{X}_I$ a value. The \emph{perfect intervention 
$\intervene(I, \B{\xi}_I)$} maps $\C{M}$ to the SCM 
$\C{M}_{\intervene(I, \B{\xi}_I)} := \langle \C{I}, \C{J}, \BC{X}, \BC{E}, 
\tilde{\B{f}}, \Prb_{\BC{E}} \rangle$, where the \emph{intervened causal mechanism} 
$\tilde{\B{f}}$ is given by
$$
  \begin{aligned}
  \tilde f_i(\B{x},\B{e}) = \begin{cases}
    \xi_i & i \in I \\
    f_i(\B{x},\B{e}) & i \in \C{I} \setminus I \,.
  \end{cases}
  \end{aligned}
$$
\end{definition}
This operation $\intervene(I, \B{\xi}_I)$ preserves the equivalence relation 
(see Definition~\ref{def:EquivSCMs}) on the set of all SCMs, and hence this mapping 
induces a well-defined mapping on the set of equivalence classes of SCMs. Previous work has considered interventions only on a specific subset of endogenous variables \citep{RWB+17, BH19, BBM19}. Instead, we 
assume that we can intervene on any subset of endogenous variables in the model. 

We define an analogous operation $\intervene(I)$ on directed mixed graphs.
\begin{definition}[Perfect intervention on a directed mixed graph]
\label{def:InterventionOnGraph}
Let $\C{G}=(\C{V},\C{E},\C{B})$ be a directed mixed graph and $I \subseteq \C{V}$ a subset. The perfect intervention $\intervene(I)$ maps $\C{G}$ to 
the directed mixed graph $\intervene(I)(\C{G}) := (\C{V},\tilde{\C{E}},\tilde{\C{B}})$, where $\tilde{\C{E}} = \C{E} \setminus \{ v \to i : v\in\C{V}, i\in I\}$ and $\tilde{\C{B}} = \C{B}\setminus \{ v \oto i : v\in\C{V}, i\in\C{I} \}$. 
\end{definition}
This operation simply removes all incoming edges on the nodes in $I$. The two notions of 
intervention are compatible with the (augmented) graph mapping.

\begin{proposition}
\label{prop:InterventionOnGraph}
Let $\C{M} = \langle \C{I}, \C{J}, \BC{X}, \BC{E}, \B{f}, \Prb_{\BC{E}} \rangle$ be an SCM, $I \subseteq \C{I}$ a subset of endogenous variables and $\B{\xi}_I \in \BC{X}_I$ a value. Then $\big(\C{G}^a\circ\intervene(I, \B{\xi}_I)\big)(\C{M}) = 
\big(\intervene(I)\circ\C{G}^a\big)(\C{M})$ and $\big(\C{G}\circ\intervene(I, \B{\xi}_I)\big)(\C{M}) = 
\big(\intervene(I)\circ\C{G}\big)(\C{M})$.
\end{proposition}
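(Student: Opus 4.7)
The plan is to verify the equality of augmented functional graphs (resp.\ functional graphs) by comparing their node sets and edge sets, using Definition~\ref{def:FunctionalParents} applied to the intervened causal mechanism $\tilde{\B{f}}$ on one side and Definition~\ref{def:InterventionOnGraph} applied to $\C{G}^a(\C{M})$ (resp.\ $\C{G}(\C{M})$) on the other. Both constructions preserve the node set $\C{I}\cup\C{J}$ (resp.\ $\C{I}$), so only the edge sets need to be checked.

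First I would handle the augmented functional graph. The key observation is that the intervened mechanism $\tilde{f}_i$ splits into two cases: for $i\in I$, $\tilde{f}_i(\B{x},\B{e})=\xi_i$ is a constant in both $\B{x}$ and $\B{e}$, so by Definition~\ref{def:FunctionalParents} it has no functional parents at all (one may take $\tilde{f}_i$ itself, considered as a function on any smaller domain, as the witness); hence no edge of $\C{G}^a(\C{M}_{\intervene(I,\B\xi_I)})$ points into $i$. This matches $\intervene(I)(\C{G}^a(\C{M}))$, which by definition removes exactly the incoming edges at the nodes in $I$. For $i\in\C{I}\setminus I$, we have $\tilde{f}_i=f_i$ verbatim, so the set of functional parents of $i$ is unchanged, and the corresponding incoming edges are identical in both graphs. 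Since $\intervene(I)$ leaves edges into $\C{I}\setminus I$ (and between exogenous nodes) untouched, the two edge sets coincide. This establishes $\C{G}^a\circ\intervene(I,\B\xi_I)=\intervene(I)\circ\C{G}^a$ at the level of $\C{M}$.

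For the functional graph identity, the directed edges are handled by exactly the same argument restricted to endogenous nodes. For the bidirected edges, recall that $i\leftrightarrow j\in\C{G}(\C{M})$ iff some $k\in\C{J}$ is a functional parent of both $i$ and $j$. After intervention: if either $i$ or $j$ lies in $I$, then that endpoint has no exogenous functional parent (by the constant-mechanism argument above), so no bidirected edge survives — matching the removal $\tilde{\C{B}}=\C{B}\setminus[(\C{V}\times I)\cup(I\times\C{V})]$ in Definition~\ref{def:InterventionOnGraph}. If both $i,j\in\C{I}\setminus I$, then $\tilde{f}_i=f_i$ and $\tilde{f}_j=f_j$, so the exogenous functional parents are preserved, and the bidirected edge is present in $\C{G}(\C{M}_{\intervene(I,\B\xi_I)})$ iff it is present in $\C{G}(\C{M})$, which is exactly what $\intervene(I)$ prescribes.

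The only subtle point, and the one I expect to require the most care, is the claim that the intervened component $\tilde{f}_i=\xi_i$ (for $i\in I$) genuinely has no functional parents in the sense of Definition~\ref{def:FunctionalParents}. This requires exhibiting, for each potential parent $\ell$, an equivalent measurable representative defined on the smaller domain; but since a constant map restricts trivially to any subdomain, this is immediate. One should also remark that both mappings $\C{G}^a$ and $\C{G}$ are invariant under $\equiv$ (as noted after the definition of the functional graph), so the identity descends to equivalence classes and the statement is well-posed on the quotient.
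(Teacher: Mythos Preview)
Your proof is correct and follows essentially the same approach as the paper: the intervention replaces $f_j$ for $j\in I$ by a constant, eliminating all functional parents and hence all incoming edges, while leaving $f_j$ for $j\notin I$ untouched. The paper's proof is a one-liner stating exactly this; your version is more explicit, in particular by spelling out the bidirected-edge case for $\C{G}(\C{M})$, which the paper leaves implicit.
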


The two notions of perfect intervention satisfy the following elementary properties.
\begin{proposition}
\label{prop:IntDisjSubsetAndAcyclicity}
For an SCM and a directed mixed graph, we have the following properties:
\begin{enumerate}[ref=\theproposition.(\arabic*)]
	\item \label{prop:IntDisjSubsetCommute} perfect interventions on disjoint subsets of variables commute;
	\item acyclicity is preserved under perfect intervention.
\end{enumerate}
\end{proposition}

The following example shows that an SCM with a solution may not have a solution anymore after performing a perfect intervention on the SCM, and vice versa that an SCM without a solution may yield an SCM with a solution after intervention.
\begin{example}[Intervened SCM and its graphs]
\label{ex:Interventions}
Consider the SCM $\C{M}$ of Example~\ref{ex:AugmentedGraphs} which has
a solution if and only if $\alpha \geq 0$. Applying the perfect intervention $\intervene(\{3\}, 1)$ to $\C{M}$ gives the intervened model $\C{M}_{\intervene(\{3\},1)}$ with the intervened causal mechanism
$$
  \begin{aligned}
    \tilde{f}_1(\B{x},\B{e}) &= x_1 - x_1^2 + \alpha e_1^2 \,, &\quad
    \tilde{f}_3(\B{x},\B{e}) &= 1 \,, &\quad \tilde{f}_5(\B{x},\B{e}) &= x_4
    \cdot e_3 \,, \\
    \tilde{f}_2(\B{x},\B{e}) &= x_1 + x_3 + x_4 + e_1 \,, &\quad \tilde{f}_4(\B{x},\B{e}) &= x_2 +
    e_2 \,, & &
  \end{aligned}
$$
for which 
the graph $\C{G}(\C{M}_{\intervene(\{3\},1)})$ is depicted in
Figure~\ref{fig:AugmentedGraphs} (right). This is an example where a perfect
intervention leads to an intervened SCM
$\C{M}_{\intervene(\{3\}, 1)}$ that does not have a solution anymore. In addition, performing a perfect 
intervention $\intervene(\{4\}, 1)$ on $\C{M}_{\intervene(\{3\}, 1)}$ yields again an SCM with a solution for $\alpha \geq 0$. 
\end{example}

Recall that for each solution $\B{X}$ of an SCM $\C{M}$ we call the distribution $\Prb^{\B{X}}$ the observational distribution of $\C{M}$ associated to $\B{X}$. For cyclic SCMs, the observational distribution is in general not unique.\footnote{In order to assure the existence of a unique observational distribution it is common to consider only SCMs for which the structural equations have a unique solution (see, e.g., Definition~7.1.1 in~\cite{Pea09}). Although these SCMs induce a unique observational distribution, they generally do not induce a unique distribution after a perfect intervention.} For example, the SCM $\C{M}$ of Example~\ref{ex:AugmentedGraphs} has two different observational distributions if $\alpha > 0$. Similarly, an intervened SCM may induce a distribution that is not unique. Whenever the intervened SCM $\C{M}_{\intervene(I, \B{\xi}_I)}$ has a solution $\B{X}$ we therefore call the 
distribution $\Prb^{\B{X}}$ the \emph{\hypertarget{def:InterventionalDistribution}{interventional distribution} of $\C{M}$  under the perfect intervention $\intervene(I,\B{\xi}_I)$ associated to $\B{X}$}.\footnote{In the literature, one 
often finds the notation $p(\B{x})$ and $p(\B{x}\given \intervene(\B{X}_I=\B{x}_I))$ for the
densities of the observational and interventional distribution, respectively, in case these are
uniquely defined by the SCM \citep[e.g.,][]{Pea09}.}

\subsection{Counterfactuals}
\label{sec:Counterfactuals}

The causal semantics of an SCM are described by the interventions on the SCM. Adding another layer
of complexity, one can describe the counterfactual semantics of an SCM by the interventions on the so-called
twin SCM, an idea introduced in~\cite{BP94}.
 
\begin{definition}[Twin SCM]
\label{def:TwinSCM}
Let $\C{M}=\langle \C{I}, \C{J}, \BC{X}, \BC{E}, \B{f}, \Prb_{\BC{E}} \rangle$ be an SCM. The \emph{twin operation} maps $\C{M}$ to the \emph{twin structural causal model (twin SCM)}
$$
  \C{M}^{\twin} := \langle \C{I}\cup\C{I}', \C{J}, \BC{X}\times\BC{X}, \BC{E}, \tilde{\B{f}}, 
  \Prb_{\BC{E}} \rangle \,,
$$
where $\C{I}' = \{ i' : i \in \C{I}\}$ is a copy of $\C{I}$ and the causal mechanism $\tilde{\B{f}} : \BC{X} \times \BC{X} 
\times \BC{E} \to \BC{X} \times \BC{X}$ is the measurable function given by 
$\tilde{\B{f}}(\B{x},\B{x}',\B{e}) = \big(\B{f}(\B{x},\B{e}),\B{f}(\B{x}',\B{e})\big)$.
\end{definition}

The twin operation on SCMs preserves the equivalence relation $\equiv$ on the set of all SCMs. We define an analogous twin operation $\twin(\C{I})$ on directed graphs.
\begin{definition}[Twin graph]
\label{def:TwinGraph}
Let $\C{G}=(\C{V},\C{E})$ be a directed graph and $\C{I}\subseteq\C{V}$ a subset such that $\C{J}:=\C{V}\setminus \C{I}$ is \emph{exogenous}, that is, $\pa_{\C{G}}(\C{J}) = \emptyset$. The \emph{$\twin(\C{I})$ operation} maps $\C{G}$ to the \emph{twin graph w.r.t.\ $\C{I}$} defined by $\twin(\C{I})(\C{G}):=(\tilde{\C{V}}, \tilde{\C{E}})$, where:
\begin{enumerate}
  \item $\tilde{\C{V}}=\C{V}\cup \C{I}'$, where $\C{I}'$ is a copy of $\C{I}$,
  \item $\tilde{\C{E}}= \C{E}\cup\C{E}'$, where $\C{E}'$ is given by 
    $$\C{E}' = \{ j \to i' : j \in \C{J}, i \in \C{I}, j \to i \in \C{E} \} \cup \{ \tilde{i}' \to i' : \tilde{i},i \in \C{I}, \tilde{i} \to i \in \C{E} \}$$
    with $i',\tilde{i}' \in \C{I}'$ the respective copies of $i,\tilde{i} \in \C{I}$.
\end{enumerate}
\end{definition}

Twin operations are compatible with the augmented graph mapping and preserve acyclicity.
\begin{proposition}
\label{prop:CommuteTwinGraph}
Let $\C{M} = \langle \C{I}, \C{J}, \BC{X}, \BC{E}, \B{f}, \Prb_{\BC{E}} \rangle$ be an SCM. Then $(\C{G}^a \circ \twin)(\C{M}) = (\twin(\C{I})\circ \C{G}^a)(\C{M})$.
\end{proposition}

\begin{proposition}
\label{prop:TwinAcyclicityPreserved}
For SCMs and directed graphs, we have that acyclicity is preserved under the twin operation.
\end{proposition}

The perfect intervention and the twin operation for SCMs and directed graphs commute with each other in the following way.
\begin{proposition}
\label{prop:CommuteInterveneTwin}
Let $\C{M} = \langle \C{I}, \C{J}, \BC{X}, \BC{E}, \B{f}, \Prb_{\BC{E}} \rangle$ be an SCM and $\C{G}=(\C{V},\C{E})$ a directed graph. Then we have that perfect intervention commutes with the twin operation on both:
\begin{enumerate}[ref=\theproposition.(\arabic*)]
  \item \label{prop:CommuteInterveneTwin1} the SCM $\C{M}$: for a subset $I\subseteq\C{I}$ and value $\B{\xi}_I\in\BC{X}_I$, $(\intervene(I\cup I',\B{\xi}_{I\cup I'})) \circ \twin)(\C{M}) = (\twin\circ \intervene(I,\B{\xi}_I))(\C{M})$, and
  \item \label{prop:CommuteInterveneTwin2} the directed graph $\C{G}$: for subsets $I\subseteq \C{I}\subseteq\C{V}$ such that $\C{J}:=\C{V}\setminus\C{I}$ is exogenous, $(\intervene(I\cup I') \circ \twin(\C{I}))(\C{G}) = (\twin(\C{I})\circ \intervene(I))(\C{G})$,
\end{enumerate}
where $I'$ is the copy of $I$ in $\C{I}'$ and $\B{\xi}_{I'}=\B{\xi}_I$.
\end{proposition}

Whenever the intervened twin SCM $(\C{M}^{\twin})_{\intervene(\tilde{I}, \B{\xi}_{\tilde{I}})}$, 
where $\tilde{I}\subseteq\C{I}\cup\C{I}'$ and $\B{\xi}_{\tilde{I}} \in \BC{X}_{\tilde{I}}$, has 
a solution $(\B{X},\B{X}')$, we call the distribution 
$\Prb^{(\B{X},\B{X}')}$ the \emph{\hypertarget{def:CounterfactualDistribution}{counterfactual distribution} of $\C{M}$ under the perfect intervention $\intervene(\tilde{I},\B{\xi}_{\tilde{I}})$ associated to $(\B{X},\B{X}')$}. In Example~\ref{ex:PriceSupplyDemand}, we provide an example of how counterfactuals can be sensibly formulated for a well-known market equilibrium model described in terms of a cyclic SCM.

The interpretation of counterfactual statements has received a lot of attention
in the literature \citep{Lew79, Roe97, Byr07, BP94, Pea09}. For acyclic graphs, an alternative graphical approach to counterfactuals is the framework of 
Single World Intervention Graphs (SWIGs) \citep{RR13}. One topic of discussion is that there exist SCMs that induce the same observational and
interventional distributions, but differ in their counterfactual statements
\citep{Daw02} (see also Example~\ref{ex:Counterfactuals}). This raises the
question how one can estimate such SCMs from data. 
\section{Solvability}
\label{sec:Solvability}

In this section, we introduce the notions of solvability and unique
solvability with respect to a subset of the endogenous variables of an SCM. 
They describe the existence and uniqueness of measurable solution 
functions for the subsystem of structural equations that correspond with a 
certain subset of the endogenous variables. 
These notions 
play a central role in 
formulating sufficient conditions under which several properties of acyclic 
SCMs may be extended to the cyclic setting.
For example, we show that solvability of an SCM is a sufficient and necessary
condition for the existence of a solution of an SCM. Further, unique solvability of an
SCM implies the uniqueness of the induced observational distribution. 

\subsection{Definition of solvability}

Intuitively, one can think of the structural equations corresponding to a subset of endogenous variables $\C{O} \subseteq \C{I}$ as a description of how the subsystem formed by the variables $\C{O}$ interacts with the rest of the system $\C{I} \setminus \C{O}$ through the variables $\pa(\C{O})\setminus\C{O}$. A solution function w.r.t.\ $\C{O}$ assigns each input value $(\B{x}_{\pa(\C{O})\setminus\C{O}},\B{e}_{\pa(\C{O})})$ of this subsystem to a specific output value $\B{x}_{\C{O}}$ of the subsystem. This is formalized as follows.
\begin{definition}[Solvability]
\label{def:Solvability}
Let $\C{M} = \langle \C{I}, \C{J}, \BC{X}, \BC{E}, \B{f}, \Prb_{\BC{E}} \rangle$ be an SCM. We call $\C{M}$ \emph{solvable w.r.t.\ $\C{O} \subseteq \C{I}$} if there exists a measurable mapping 
$\B{g}_{\C{O}} : \BC{X}_{\pa(\C{O})\setminus\C{O}} \times \BC{E}_{\pa(\C{O})} \to \BC{X}_{\C{O}}$ 
such that for $\Prb_{\BC{E}}$-almost every $\B{e}\in\BC{E}$ and for all $\B{x} \in \BC{X}$,
$$
  \B{x}_{\C{O}} = \B{g}_{\C{O}}(\B{x}_{\pa(\C{O})\setminus\C{O}}, \B{e}_{\pa(\C{O})}) 
  \quad\implies\quad
  \B{x}_{\C{O}} = \B{f}_{\C{O}}(\B{x}, \B{e}) \,.
$$
We then call $\B{g}_{\C{O}}$ a \emph{measurable solution function w.r.t.\ $\C{O}$ for $\C{M}$}.
We call \emph{$\C{M}$ solvable} if it is solvable w.r.t.\ $\C{I}$.
\end{definition}
By definition, solvability w.r.t.\ a subset respects the equivalence relation
$\equiv$ on SCMs. The measurable solution functions w.r.t.\ a certain subset do not always exist, and if they exist, they are not always uniquely defined. For example, for the 
SCM $\C{M}$ in Example~\ref{ex:AugmentedGraphs}, the measurable solution functions w.r.t.\ $\{1\}$ are given by $g^{\pm}_1(e_1) = \pm \sqrt{\alpha e_1^2}$  if and only if $\alpha \geq 0$.
The following theorem states that various possible notions of ``solvability'' are equivalent.
\begin{theorem}[Sufficient and necessary conditions for solvability]
\label{thm:SolvabilityIffCondition}
For an SCM $\C{M} = \langle \C{I}, \C{J}, \BC{X}, \BC{E}, \B{f}, \Prb_{\BC{E}} \rangle$, the following are equivalent:
\begin{enumerate}
\item $\C{M}$ has a solution (see Definition~\ref{def:Solution});
\item for $\Prb_{\BC{E}}$-almost every $\B{e}\in\BC{E}$ the structural equations 
$
  \B{x} = \B{f}(\B{x},\B{e})
$
have a solution $\B{x}\in\BC{X}$;
\item $\C{M}$ is solvable (see Definition~\ref{def:Solvability}).
\end{enumerate}
\end{theorem}
While in the acyclic case, the above theorem is almost trivial, in the cyclic case the measure-theoretic aspects are not that obvious. In particular, to prove the existence of a \emph{measurable} solution function $\B{g} : \BC{E}_{\pa(\C{I})} \to \BC{X}$ in case the structural equations have a solution for almost every $\B{e} \in \BC{E}$, we make use of a strong measurable selection theorem (see Theorem~\ref{thm:MeasurableSelectionThm} or \citep{Kec95}). This theorem implies that if there exists a solution $\B{X}:\Omega\to\BC{X}$, then there necessarily exists a
random variable $\B{E}:\Omega\to\BC{E}$ and a mapping $\B{g}:\BC{E}_{\pa(\C{I})}\to\BC{X}$ such that
$\B{g}(\B{E}_{\pa(\C{I})})$ is a solution. However, it does not imply that there necessarily 
exists a random variable $\B{E}:\Omega\to\BC{E}$ and a mapping $\B{g}:\BC{E}_{\pa(\C{I})}\to\BC{X}$ such 
that $\B{X}=\B{g}(\B{E}_{\pa(\C{I})})$ holds a.s., for example, if $\B{X}$ is a nontrivial mixture of such solutions (see Example~\ref{ex:MixturesOfSolutions}).

Solvability w.r.t.\ a strict subset of $\C{I}$ is in general
neither sufficient nor necessary for the existence of a (global) solution of the
SCM. Consider, for example, the SCM $\C{M}$ in
Example~\ref{ex:AugmentedGraphs} with $\alpha < 0$. Even though this SCM is
solvable w.r.t.\ $\{2,3,4\}$, it is not (globally) solvable, and hence does not
have any solution. 
In Proposition~\ref{prop:SolvabilityIfCondition}, we provide a sufficient condition for solvability w.r.t.\ a strict subset of $\C{I}$ that is similar to condition (2) in Theorem~\ref{thm:SolvabilityIffCondition} in the sense that it is formulated in terms of the solutions of (a subset of) the structural equations without requiring measurability of the solutions.
For the class of linear SCMs, we provide in Proposition~\ref{prop:LinearSolvable} a sufficient and necessary condition for solvability w.r.t.\ a subset of $\C{I}$.

\subsection{Unique solvability}
\label{sec:UniqueSolvability}

The notion of unique solvability w.r.t.\ a subset $\C{O}\subseteq\C{I}$ is similar to the notion of solvability, but with the additional requirement that the measurable solution function $\B{g}_{\C{O}}:\BC{X}_{\pa(\C{O})\setminus\C{O}}\times\BC{E}_{\pa(\C{O})}\to\BC{X}_{\C{O}}$ is unique up to a $\Prb_{\BC{E}}$-null set.
\begin{definition}[Unique solvability]
\label{def:UniqueSolvability}
Let $\C{M} = \langle \C{I}, \C{J}, \BC{X}, \BC{E}, \B{f}, \Prb_{\BC{E}} \rangle$ be an SCM. We call $\C{M}$ \emph{uniquely solvable w.r.t.\ $\C{O} \subseteq \C{I}$} if there exists a measurable mapping 
$\B{g}_{\C{O}} : \BC{X}_{\pa(\C{O})\setminus\C{O}} \times \BC{E}_{\pa(\C{O})} \to \BC{X}_{\C{O}}$ 
such that for $\Prb_{\BC{E}}$-almost every $\B{e}\in\BC{E}$ and for all $\B{x} \in \BC{X}$,
$$
    \B{x}_{\C{O}} = \B{g}_{\C{O}}(\B{x}_{\pa(\C{O})\setminus\C{O}}, \B{e}_{\pa(\C{O})}) 
    \quad\iff\quad
    \B{x}_{\C{O}} =
    \B{f}_{\C{O}}(\B{x},\B{e}) \,.
$$
We call \emph{$\C{M}$ uniquely solvable} if it is uniquely solvable w.r.t.\ $\C{I}$.
\end{definition}
If $\C{M}\equiv\tilde{\C{M}}$ and $\C{M}$ is uniquely solvable w.r.t.\ $\C{O}$, then
$\tilde{\C{M}}$ is uniquely solvable w.r.t.\ $\C{O}$, too, and the same mapping $\B{g}_{\C{O}}$ is a measurable solution
function w.r.t.\ $\C{O}$ for both $\C{M}$ and $\tilde{\C{M}}$.

The following result explains why the notions of (unique) solvability do not play an important role in the theory of acyclic SCMs.
\begin{proposition}
\label{prop:AcyclicSCMUniquelySolvable}
An acyclic SCM $\C{M} = \langle \C{I}, \C{J}, \BC{X}, \BC{E}, \B{f}, \Prb_{\BC{E}} \rangle$ is uniquely solvable w.r.t.\ every subset $\C{O} \subseteq \C{I}$. 
\end{proposition}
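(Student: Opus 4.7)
The plan is to reduce to the canonical representation and then solve the structural equations componentwise by induction along a topological order of the acyclic graph.

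First I would invoke Proposition~\ref{prop:CanonicalRepresentation} to replace $\C{M}$ by an equivalent canonical SCM $\tilde{\C{M}}$ whose components $\tilde f_i : \BC{X}_{\pa(i)} \times \BC{E}_{\pa(i)} \to \C{X}_i$ depend only on the parents. Since unique solvability respects the equivalence relation $\equiv$, it suffices to prove the statement for $\tilde{\C{M}}$. Fix a subset $\C{O} \subseteq \C{I}$. Because $\C{G}^a(\C{M}) = \C{G}^a(\tilde{\C{M}})$ is a DAG, so is its induced subgraph on $\C{O}$, and we can choose a topological order $o_1, o_2, \dots, o_n$ of $\C{O}$ such that $j \in \pa(o_k) \cap \C{O}$ implies $j = o_\ell$ for some $\ell < k$.

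Next I would construct the desired mapping $\B{g}_{\C{O}} : \BC{X}_{\pa(\C{O}) \setminus \C{O}} \times \BC{E}_{\pa(\C{O})} \to \BC{X}_{\C{O}}$ recursively in this topological order. For each $k = 1, \dots, n$, the component $\tilde f_{o_k}$ takes as arguments $\B{x}_{\pa(o_k) \cap \C{O}}$, $\B{x}_{\pa(o_k) \setminus \C{O}}$, and $\B{e}_{\pa(o_k)}$. By the induction hypothesis, $g_{o_\ell}$ is already defined as a measurable function of $(\B{x}_{\pa(\C{O}) \setminus \C{O}}, \B{e}_{\pa(\C{O})})$ for each $\ell < k$. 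I would then set
\[
  g_{o_k}(\B{x}_{\pa(\C{O}) \setminus \C{O}}, \B{e}_{\pa(\C{O})}) := \tilde f_{o_k}\bigl( \B{g}_{\pa(o_k) \cap \C{O}}(\B{x}_{\pa(\C{O}) \setminus \C{O}}, \B{e}_{\pa(\C{O})}),\, \B{x}_{\pa(o_k) \setminus \C{O}},\, \B{e}_{\pa(o_k)} \bigr),
\]
which is measurable as a composition of measurable functions. This defines $\B{g}_{\C{O}}$ on the required domain.

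Finally I would verify the biconditional of Definition~\ref{def:UniqueSolvability}. The forward direction follows by induction on $k$: if $\B{x}_{\C{O}} = \B{g}_{\C{O}}(\B{x}_{\pa(\C{O}) \setminus \C{O}}, \B{e}_{\pa(\C{O})})$, then unfolding the recursive definition gives $x_{o_k} = \tilde f_{o_k}(\B{x}_{\pa(o_k)}, \B{e}_{\pa(o_k)})$ for every $k$, so $\B{x}_{\C{O}} = \tilde{\B{f}}_{\C{O}}(\B{x}, \B{e})$. For the converse, suppose $\B{x}_{\C{O}} = \tilde{\B{f}}_{\C{O}}(\B{x}, \B{e})$; I would show by induction on $k$ that $x_{o_k} = g_{o_k}(\B{x}_{\pa(\C{O}) \setminus \C{O}}, \B{e}_{\pa(\C{O})})$. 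For $k = 1$, $\pa(o_1) \cap \C{O} = \emptyset$ by minimality, so $x_{o_1} = \tilde f_{o_1}(\B{x}_{\pa(o_1)}, \B{e}_{\pa(o_1)}) = g_{o_1}(\dots)$; for the inductive step, the hypothesis expresses each $x_{o_\ell}$ with $\ell < k$ in terms of $g_{o_\ell}$, and substituting into $\tilde f_{o_k}$ yields exactly $g_{o_k}$.

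The argument has no real obstacle: acyclicity lets induction proceed cleanly, the canonical representation ensures that only parents appear in each component, and both directions of the biconditional hold everywhere (not just almost surely), since no measure-zero exceptional sets are incurred along the way. The mildly subtle bookkeeping step is checking that the domain of $\B{g}_{\C{O}}$ — namely $\pa(\C{O}) \setminus \C{O}$ rather than $\pa(o_k) \setminus \C{O}$ for each $k$ — is compatible with the recursion, which is immediate because $\pa(o_k) \setminus \C{O} \subseteq \pa(\C{O}) \setminus \C{O}$.
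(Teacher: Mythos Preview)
Your proof is correct and follows essentially the same approach as the paper's: reduce to the canonical representation, choose a topological order on $\C{O}$ using acyclicity, and build $\B{g}_{\C{O}}$ by recursive substitution of the components $\tilde f_{o_k}$. The paper's proof is considerably terser (it simply says ``substitute the components of the causal mechanism into each other'' along the topological order), whereas you spell out the recursion and explicitly verify both directions of the biconditional by induction, but the underlying argument is the same.
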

We now illustrate that also cyclic SCMs can be uniquely solvable w.r.t.\ 
every subset.
\begin{example}[Cyclic SCM, uniquely solvable w.r.t.\ each subset]
  \label{ex:CyclicExample}
  \begin{figure}
  \adjustbox{scale=0.8,center}{%
  \begin{tikzpicture}
    \begin{scope}
    \node[var] (X1) at (0,1.25) {$X_1$};
    \node[var] (X2) at (1.5,1.25) {$X_2$};
    \node[var] (X3) at (0,0) {$X_3$};
    \node[var] (X4) at (1.5,0) {$X_4$};
    \draw[arr] (X1) -- (X3);
    \draw[arr] (X2) -- (X4);
    \draw[arr] (X3) to [bend right=20] (X4);
    \draw[arr] (X4) to [bend right=20] (X3);
    \node (M) at (-1.0,-0.2) {$\C{G}(\C{M})$};
    \end{scope}
    \begin{scope}[shift={(4,0)}]
    \node[var] (X1) at (0,1.25) {$X_1$};
    \node[var] (X2) at (1.5,1.25) {$X_2$};
    \node[var] (X3) at (0,0) {$X_3$};
    \node[var] (X4) at (1.5,0) {$X_4$};
    \draw[arr] (X1) -- (X3);
    \draw[arr] (X2) -- (X4);
    \draw[arr] (X2) -- (X3);
    \draw[arr] (X1) -- (X4);
    \draw[biarr] (X4) to [bend right=20] (X3);
    \node (Mt) at (-1.0,-0.2) {$\C{G}(\tilde{\C{M}})$};
    \end{scope}
    \begin{scope}[shift={(10cm,0cm)}]
      \node[var] (X1) at (0,0) {$X_1$};
      \node[var] (X2) at (1.5,0) {$X_2$}; 
      \draw[arr] (X1) -- (X2);
      \node (Mb) at (-1.0,-0.2) {$\C{G}(\bar{\C{M}})$};
    \end{scope}
    \begin{scope}[shift={(14cm,0cm)}]
      \node[var] (X1) at (0,0) {$X_1$};
      \node[var] (X2) at (1.5,0) {$X_2$}; 
      \node (Mh) at (-1.0,-0.2) {$\C{G}(\hat{\C{M}})$};
    \end{scope}
\end{tikzpicture}}
\caption{Left: The graphs of the observationally equivalent SCMs $\C{M}$ and
$\tilde{\C{M}}$ of Examples~\ref{ex:CyclicExample} and \ref{ex:ObservationalEquivalenceDoesNotImplyEquivalence}, respectively. Right: The graphs of the interventionally equivalent SCMs $\bar{\C{M}}$ and $\hat{\C{M}}$ of Example~\ref{ex:InterventionalEquivalenceDoesNotImplySameAugmentedGraph}.}
  \label{fig:CyclicAndSimpleExample}
\end{figure}
Consider the SCM $\C{M} = \langle \B{4}, \B{4}, \RN^4, \RN^4, \B{f},
\Prb_{\RN^4} \rangle$ with causal mechanism given by
$$
f_1(\B{x},\B{e}) = e_1 \,,\,\quad f_2(\B{x},\B{e})=e_2 \,,\,\quad f_3(\B{x},\B{e})=x_1 x_4
+ e_3 \,,\,\quad f_4(\B{x},\B{e})=x_2 x_3 + e_4 
$$
and $\Prb_{\RN^4}$ the standard-normal distribution on $\RN^4$. This SCM $\C{M}$ is uniquely solvable w.r.t.\ every subset and its (augmented) graph includes a cycle (see Figure~\ref{fig:CyclicAndSimpleExample}).
\end{example}

Theorem~\ref{thm:SolvabilityIffCondition} provides sufficient and necessary conditions for (global) solvability. The next theorem states that under the additional uniqueness requirement there exists a sufficient and necessary condition for unique solvability w.r.t.\ any subset (for solvability w.r.t.\ a subset we only have the sufficient condition provided in Proposition~\ref{prop:SolvabilityIfCondition}), and moreover, that all solutions of a uniquely solvable SCM induce the same observational distribution.
\begin{theorem}[Sufficient and necessary conditions for unique solvability]
\label{thm:UniqueSolvabilityIffCondition}
Let $\C{M} = \langle \C{I}, \C{J}, \BC{X}, \BC{E}, \B{f}, \Prb_{\BC{E}} \rangle$ be an SCM and $\C{O}\subseteq\C{I}$ a subset.
The following are equivalent:
\begin{enumerate}
\item for $\Prb_{\BC{E}}$-almost every $\B{e}\in\BC{E}$ and for all $\B{x}_{\setminus\C{O}} \in \BC{X}_{\setminus\C{O}}$ the structural equations
$$
    \B{x}_{\C{O}} = \B{f}_{\C{O}}(\B{x}, \B{e})
$$
have a unique solution $\B{x}_{\C{O}}\in\BC{X}_{\C{O}}$;
\item $\C{M}$ is uniquely solvable w.r.t.\ $\C{O}$.
\end{enumerate}
Furthermore, if $\C{M}$ is uniquely solvable, then there exists a solution, and all solutions have the same observational distribution.
\end{theorem}
It is well known that under acyclicity the observational distribution is unique. Theorem~\ref{thm:UniqueSolvabilityIffCondition} generalizes this result to settings with cycles.
For linear SCMs, the unique solvability condition w.r.t.\ a subset is equivalent to a matrix invertibility condition (see Proposition~\ref{prop:LinearUniquelySolvable}).

In general, (unique) solvability w.r.t.\ $\C{O} \subseteq \C{I}$ does not imply 
(unique) solvability w.r.t.\ a strict superset $\C{O} \subsetneq \C{V} \subseteq \C{I}$ nor w.r.t.\ a strict subset $\C{W} \subsetneq \C{O}$ (see Example~\ref{ex:SolvabilityProperties}). Moreover, (unique) solvability is in general not preserved under unions and intersections (see Appendix~\ref{app:AppendixSolvabilityUnionAndIntersections}).

\subsection{Self-cycles}
\label{sec:SelfCycles}


One can think of a structural equation of a single endogenous variable $i\in\C{I}$ as describing a small subsystem that interacts with the rest of the system. If the output $x_i$ of this subsystem is uniquely determined by the input $(\B{x}_{\setminus i},\B{e})$ from the rest of the system (up to a $\Prb_{\BC{E}}$-null set), then $i$ is not a parent of itself (see Definition~\ref{def:Parents}).
\begin{proposition}[Self-cycles]
\label{prop:ObstructionToSelfCycles}
The SCM $\C{M} = \langle \C{I}, \C{J}, \BC{X}, \BC{E}, \B{f}, \Prb_{\BC{E}} \rangle$ is uniquely
solvable w.r.t.\ $\{i\}$ for $i\in\C{I}$ if and only if $\C{G}^a(\C{M})$ (or
$\C{G}(\C{M})$) has no self-cycle $i \to i$ at $i\in\C{I}$.
\end{proposition}
A self-cycle at an endogenous variable denotes that that variable is not uniquely determined by its parents, up to a $\Prb_{\BC{E}}$-null set. This implies that an SCM with a self-cycle at an endogenous variable in its graph can be either solvable, or not solvable, w.r.t.\ that variable. For the SCM $\C{M}$ of Example~\ref{ex:AugmentedGraphs}, we have indeed that it is solvable w.r.t.\ $\{1\}$ for $\alpha > 0$, while for $\alpha <0$ it is not. For linear SCMs with structural equations $X_i=\sum_{j\in\C{I}} B_{ij}X_j+\sum_{k\in\C{J}}\Gamma_{ik} E_k$, the endogenous variable $i\in\C{I}$ has a self-cycle if and only if $B_{ii}=1$ (see also Appendix~\ref{app:AppendixLinSCMs}).

\subsection{Interventions}
\label{sec:SolvabilityInterventions}

The property of (unique) solvability is in general not preserved under perfect intervention. For
example, a (uniquely) solvable SCM can lead to a nonuniquely 
solvable SCM after intervention, which either has no solution or has solutions
with multiple induced distributions (see, e.g., Examples~\ref{ex:Interventions} and \ref{ex:InterventionUniqueSolvability}). A sufficient condition for the intervened SCM to be (uniquely) solvable is that the original SCM has to be (uniquely) solvable w.r.t.\ the subset of nonintervened endogenous variables. 
\begin{proposition}
\label{prop:SolvabilityPreservedIntervenedModel}
Let $\C{M} = \langle \C{I}, \C{J}, \BC{X}, \BC{E}, \B{f}, \Prb_{\BC{E}} \rangle$ be an SCM that is (uniquely) solvable w.r.t.\ $\C{O} \subseteq \C{I}$. Then, for any set $I$ such that $\pa(\C{O})\setminus\C{O} \subseteq I \subseteq \C{I}\setminus\C{O}$ and value $\B{\xi}_I\in\BC{X}_I$ the intervened SCM $\C{M}_{\intervene(I,\B{\xi}_I)}$ is (uniquely) solvable w.r.t.\ $\C{O}\cup I$.
\end{proposition}

Proposition~\ref{prop:AcyclicSCMUniquelySolvable} shows that acyclic SCMs are uniquely solvable w.r.t.\ every subset and hence are uniquely solvable after every perfect intervention. This also directly follows from the fact that acyclicity is preserved under perfect intervention (see Proposition~\ref{prop:IntDisjSubsetAndAcyclicity}). Moreover, since acyclicity is preserved under the twin operation (see Proposition~\ref{prop:TwinAcyclicityPreserved}), an acyclic SCM induces unique observational, interventional and counterfactual distributions.

\subsection{Ancestral (unique) solvability}
\label{sec:AncestralSolvabilityAndUniqueSolvability}

We saw that, in general, solvability w.r.t.\ $\C{O}\subseteq\C{I}$ does not
imply solvability w.r.t.\ a strict subset of $\C{O}$. Here we show that
it does imply solvability w.r.t.\ the ancestral subsets in
$\C{G}(\C{M})_{\C{O}}$, that is, in the induced subgraph of the graph
$\C{G}(\C{M})$ on $\C{O}$. A subset $\C{A}\subseteq\C{O}$ is called an \emph{ancestral subset} in $\C{G}(\C{M})_{\C{O}}$ if $\C{A}=\an_{\C{G}(\C{M})_{\C{O}}}(\C{A})$, where $\an_{\C{G}(\C{M})_{\C{O}}}(\C{A})$ are the ancestors of $\C{A}$ according to the induced subgraph\footnote{%
Here, one can also use the augmented graph $\C{G}^a(\C{M})$ on $\C{O}$ since
$\an_{\C{G}(\C{M})_{\C{O}}}(\C{A}) = \an_{\C{G}^a(\C{M})_{\C{O}}}(\C{A})$ for
every subset $\C{A}\subseteq\C{O}\subseteq\C{I}$.} $\C{G}(\C{M})_{\C{O}}$.
\begin{definition}[Ancestral (unique) solvability]
\label{def:AncestralUniqueSolvability}
Let $\C{M}=\langle \C{I}, \C{J}, \BC{X}, \BC{E}, \B{f},
\Prb_{\BC{E}} \rangle$ be an SCM. We call $\C{M}$ \emph{ancestrally (uniquely) solvable w.r.t.\
$\C{O}\subseteq\C{I}$} if $\C{M}$ is (uniquely) solvable w.r.t.\ every ancestral subset in $\C{G}(\C{M})_{\C{O}}$. We call $\C{M}$ \emph{ancestrally (uniquely)
solvable} if it is ancestrally (uniquely) solvable w.r.t.\ $\C{I}$.
\end{definition}

\begin{proposition}[Solvability is equivalent to ancestral solvability]
\label{prop:SolvabilityAncestralSolvability}
The SCM $\C{M} = \langle \C{I}, \C{J}, \BC{X}, \BC{E}, \B{f}, \Prb_{\BC{E}} \rangle$ is solvable w.r.t.\ the subset $\C{O}\subseteq\C{I}$ if and only if $\C{M}$ is ancestrally solvable w.r.t.\ $\C{O}$. 
\end{proposition}
A similar result does not hold for unique solvability. Although
ancestral unique solvability w.r.t.\ $\C{O}\subseteq\C{I}$ implies unique solvability
w.r.t.\ $\C{O}$, the converse does not hold in general, as the following example illustrates.

\begin{example}[Unique solvability w.r.t.\ $\C{O}$ does not imply ancestral unique solvability w.r.t.\ $\C{O}$]
\label{ex:UniqueSolvabilityNotUniqAncestralSubset}
Consider the SCM $\C{M}=\langle \B{4}, \B{1}, \RN^4, \RN, \B{f}, \Prb_{\RN} \rangle$ with causal
mechanism given by
$$
  f_1(\B{x},e) = e \,, \,\, f_2(\B{x},e) = x_2 \cdot (1-\B{1}_{\{0\}}(x_1-x_3)) + 1 \,, \,\, f_3(\B{x},e)
  = x_3 \,, \,\, f_4(\B{x},e) = x_3
$$
and $\Prb_{\RN}$ the standard-normal measure on $\RN$. This SCM is uniquely solvable w.r.t.\ the set $\{2,3\}$, and thus solvable w.r.t.\ this set. Although it is solvable w.r.t., the ancestral subset $\{3\}$ in $\C{G}(\C{M})_{\{2,3\}}$, depicted in Figure~\ref{fig:NoLatentProjection} (left), it is not uniquely
solvable w.r.t.\ this subset, because the structural equation $x_3=x_3$ holds for any $x_3\in\RN$. Hence, it is not ancestrally uniquely solvable w.r.t.\ $\{2,3\}$.
\begin{figure}
  \begin{center}
  \adjustbox{scale=0.8,center}{%
    \begin{tikzpicture}
      \begin{scope}
      \node[var] (X1) at (-2,0) {$X_1$};
      \node[var] (X2) at (0,0) {$X_2$};
      \node[var] (X3) at (2,0) {$X_3$};
      \node[var] (X4) at (4,0) {$X_4$};
      \draw[arr] (X1) -- (X2);
      \draw[arr] (X3) -- (X2);
      \draw[arr] (X2) to [out=62,in=118,looseness=4.4] (X2);
      \draw[arr] (X3) to [out=62,in=118,looseness=4.4] (X3);
      \draw[arr] (X3) -- (X4);
      \node (BoxL) [draw=black, fit=(X2) (X3), inner xsep=0.3cm, inner ysep=0.5cm, dashed, 
      ultra thick, fill=black!50, fill opacity=0.2] {};
      \node at (1,0.45) {$\C{L}$};
      \coordinate (A) at (6,0);
      \coordinate (B) at (7,0);
      \end{scope}
      \begin{scope}[shift={(8,0)}]
      \node[var] (X1) at (1,0) {$X_1$};
      \node[var] (X4) at (3,0) {$X_4$};
      \draw[arr] (X1) -- (X4);
      \end{scope}
    \draw[->,double,thick, draw=black, shorten >=0cm,shorten <=0cm] (A) to node[above=4pt] 
    {$\marg(\C{L})$} (B);
  \end{tikzpicture}}
  \end{center}
  \caption{The graphs of the SCM $\C{M}$ (left) of Example~\ref{ex:UniqueSolvabilityNotUniqAncestralSubset} and the marginal SCM $\C{M}_{\marg(\{2,3\})}$ 
  (right) of Example~\ref{ex:NoLatentProjection}.}
  \label{fig:NoLatentProjection}
\end{figure}
\end{example}

However, for the class of linear SCMs we have that unique solvability w.r.t.\ $\C{O}$ always implies ancestral unique solvability w.r.t.\ $\C{O}$ (see Proposition~\ref{prop:LinearEquivalentUniqueSolvability}).

Although in general unique solvability is not preserved under unions, in Proposition~\ref{prop:UniqueSolvabilityAncestralUnion} we show that if an SCM is uniquely solvable w.r.t.\ two ancestral subsets and w.r.t.\ their intersection, then it is uniquely solvable w.r.t.\ their union. In general, the property of ancestral unique solvability is not preserved under perfect
intervention, as can be seen in Example~\ref{ex:InterventionUniqueSolvability}.
%
The notion of ancestral unique solvability will appear in various results in Sections~\ref{sec:Marginalization}
and \ref{sec:MarkovProperty}.

\section{Equivalences}
\label{sec:Equivalences}

In Section~\ref{sec:SCM}, we already encountered an equivalence relation on 
the class of SCMs (see Definition~\ref{def:EquivSCMs}). The (augmented) graph of an SCM, its solutions and its
induced observational, 
interventional and counterfactual distributions are preserved under this equivalence relation. In this section, we give several coarser
equivalence relations on the class of SCMs: observational, interventional and
counterfactual equivalence.  

\subsection{Observational equivalence}

Observational equivalence is the property that two SCMs are indistinguishable on the 
basis of their observational distributions.
\begin{definition}[Observational equivalence]
\label{def:ObservationallyEquivalent}
Two SCMs $\C{M} = \langle \C{I}, \C{J}, \BC{X}, \BC{E}, \B{f}, \Prb_{\BC{E}} \rangle$ and 
$\tilde{\C{M}} = \langle \tilde{\C{I}}, \tilde{\C{J}}, \tilde{\BC{X}}, \tilde{\BC{E}}, \tilde{\B{f}}, \Prb_{\tilde{\BC{E}}} \rangle$ are \emph{observationally equivalent w.r.t.\ $\C{O} \subseteq \C{I}\cap\tilde{\C{I}}$}, denoted by $\C{M}\equivobs{\C{O}}\tilde{\C{M}}$, 
if $\BC{X}_{\C{O}} = \tilde{\BC{X}}_{\C{O}}$ and for 
all solutions $\B{X}$ of $\C{M}$ there exists a solution $\tilde{\B{X}}$ of $\tilde{\C{M}}$ such that 
$\Prb^{\B{X}_{\C{O}}} = \Prb^{\tilde{\B{X}}_{\C{O}}}$ and for all solutions 
$\tilde{\B{X}}$ of $\tilde{\C{M}}$ there exists a solution $\B{X}$ of $\C{M}$ such that 
$\Prb^{\B{X}_{\C{O}}} = \Prb^{\tilde{\B{X}}_{\C{O}}}$. $\C{M}$ and $\tilde{\C{M}}$ are called 
\emph{observationally equivalent} if they are observationally equivalent w.r.t.\ $\C{I} = \tilde{\C{I}}$.
\end{definition}
Equivalent SCMs have the same solutions, and hence they are observationally equivalent w.r.t.\ every subset $\C{O}\subseteq\C{I}$. However, observational equivalence does not imply equivalence.
\begin{example}[Observational equivalence does not imply equivalence]
\label{ex:ObservationalEquivalenceDoesNotImplyEquivalence}
Consider the SCM $\tilde{\C{M}}$ that is the same as $\C{M}$ of Example~\ref{ex:CyclicExample} but with the causal mechanism $\tilde{\B{f}}$ given by
$$
\tilde{f}_1(\B{x},\B{e}):=e_1 ,\, \quad 
\tilde{f}_2(\B{x},\B{e}):=e_2 ,\, \quad 
\tilde{f}_3(\B{x},\B{e}):=\tfrac{x_1 e_4 + e_3}{1-x_1 x_2} ,\, \quad
\tilde{f}_4(\B{x},\B{e}):=\tfrac{x_2 e_3 + e_4}{1-x_1 x_2} \,.
$$
This SCM $\tilde{\C{M}}$ is observationally equivalent to the SCM $\C{M}$. Because both SCMs have a different (augmented) graph they are not equivalent to each other (see Figure~\ref{fig:CyclicAndSimpleExample}). 
\end{example}

This example shows that if two SCMs $\C{M}$ and $\tilde{\C{M}}$ are observationally equivalent, then their associated 
augmented graphs $\C{G}^a(\C{M})$ and $\C{G}^a(\tilde{\C{M}})$ are not necessarily equal to each other.



\subsection{Interventional equivalence}

We consider two SCMs to be interventionally equivalent if they induce the same interventional distributions under all perfect interventions.
\begin{definition}[Interventional equivalence]
\label{def:InterventionallyEquivalent}
Two SCMs $\C{M} = \langle \C{I}, \C{J}, \BC{X}, \BC{E}, \B{f}, \Prb_{\BC{E}} \rangle$ and 
$\tilde{\C{M}} = \langle \tilde{\C{I}}, \tilde{\C{J}}, \tilde{\BC{X}}, \tilde{\BC{E}}, \tilde{\B{f}}, \Prb_{\tilde{\BC{E}}} \rangle$ are \emph{interventionally equivalent w.r.t.\ $\C{O} \subseteq \C{I}\cap\tilde{\C{I}}$}, denoted by $\C{M}\equivint{\C{O}}\tilde{\C{M}}$, if $\BC{X}_{\C{O}}=\tilde{\BC{X}}_{\C{O}}$ and for every $I \subseteq \C{O}$ and every value 
$\B{\xi}_I \in \BC{X}_I$ their intervened models $\C{M}_{\intervene(I,\B{\xi}_I)}$ and 
$\tilde{\C{M}}_{\intervene(I,\B{\xi}_I)}$ are observationally equivalent with respect to 
$\C{O}$. $\C{M}$ and $\tilde{\C{M}}$ are called \emph{interventionally equivalent} if they are interventionally 
equivalent w.r.t.\ $\C{I} = \tilde{\C{I}}$.
\end{definition}
Equivalent SCMs have the same solutions under every perfect intervention, and hence they are interventionally equivalent w.r.t.\ every subset $\C{O}\subseteq\C{I}$. SCMs that are interventionally equivalent w.r.t.\ a subset $\C{O}\subseteq\C{I}$ are interventionally equivalent w.r.t.\ every strict subset $\C{W}\subsetneq\C{O}$. But in general, they are not interventionally equivalent w.r.t.\ a strict superset $\C{O}\subsetneq\C{V}\subseteq\C{I}$, as can be seen in Example~\ref{ex:ObservationalEquivalenceDoesNotImplyEquivalence}, where the SCMs $\C{M}$ and $\tilde{\C{M}}$ are interventionally equivalent w.r.t.\ $\{1,2\}$ but are not interventionally equivalent.
Interventional equivalence w.r.t.\ $\C{O}\subseteq\C{I}$ implies observational equivalence w.r.t.\ $\C{O}$, since the empty perfect intervention ($I = \emptyset$) is a special case of a perfect intervention. However, observational equivalence w.r.t.\ $\C{O}\subseteq\C{I}$ does not imply interventional equivalence w.r.t.\ $\C{O}$ in general, as can be seen in Example~\ref{ex:ObservationalEquivalenceDoesNotImplyEquivalence}, where the SCMs $\C{M}$ and $\tilde{\C{M}}$ are observationally equivalent but not interventionally equivalent. 

Although interventional equivalence is a finer notion than observational equivalence, we have that if two SCMs $\C{M}$ and $\tilde{\C{M}}$ are 
interventionally equivalent, then their associated augmented graphs $\C{G}^a(\C{M})$ and 
$\C{G}^a(\tilde{\C{M}})$ are not necessarily equal to each other. 
\begin{example}[Interventionally equivalent SCMs with different graphs]
\label{ex:InterventionalEquivalenceDoesNotImplySameAugmentedGraph}
Consider the SCM $\bar{\C{M}} = 
\langle \B{2},\B{2},\{-1,1\}^2,\{-1,1\}^2,\bar{\B{f}},\Prb_{\BC{E}}\rangle$ and the SCM $\hat{\C{M}}$ that is the same as $\bar{\C{M}}$ except for its causal mechanism $\hat{\B{f}}$, where the causal mechanisms are given by
$$
\bar{f}_1(\B{x},\B{e}) = e_1 \,, \quad \bar{f}_2(\B{x},\B{e}) = x_1e_2  \,,\quad\quad
  \hat{f}_1(\B{x},\B{e}) =e_1 \,, \quad \hat{f}_2(\B{x},\B{e}) =e_2 \,,
$$
and $\Prb_{\BC{E}} = \Prb^{\B{E}}$ with $E_1,E_2 \sim \C{U}(\{-1,1\})$ uniformly
distributed and $E_1 \indep E_2$. Then $\bar{\C{M}}$ and $\hat{\C{M}}$ are interventionally equivalent although $\C{G}(\bar{\C{M}})$ is 
not equal to $\C{G}(\hat{\C{M}})$ (see Figure~\ref{fig:CyclicAndSimpleExample}).
\end{example}

Example~\ref{ex:no_nice_reduction} showcases an SCM with two endogenous and three exogenous variables, for which there is no interventionally equivalent SCM (satisfying smoothness constraints) with one exogenous variable taking values in $\mathbb{R}^2$ whose first and second components enter in the first and second structural equation, respectively. In this sense, representing confounders with dependent exogenous variables can be nontrivial in nonlinear models.

\subsection{Counterfactual equivalence}

We consider two SCMs to be counterfactually equivalent if their twin SCMs induce the same
counterfactual distributions under every perfect intervention.
\begin{definition}[Counterfactual equivalence]
\label{def:CounterfactuallyEquivalent}
Two SCMs $\C{M} = \langle \C{I}, \C{J}, \BC{X}, \BC{E}, \B{f}, \Prb_{\BC{E}} \rangle$ and 
$\tilde{\C{M}} = \langle \tilde{\C{I}}, \tilde{\C{J}}, \tilde{\BC{X}}, \tilde{\BC{E}}, \tilde{\B{f}}, \Prb_{\tilde{\BC{E}}} \rangle$ are \emph{counterfactually equivalent with respect to 
$\C{O} \subseteq \C{I} \cap \tilde{\C{I}}$}, denoted by $\C{M}\equivcf{\C{O}}\tilde{\C{M}}$, 
if $\C{M}^{\twin}$ and $\tilde{\C{M}}^{\twin}$ are 
interventionally equivalent with respect to $\C{O}\cup\C{O}'$, where $\C{O}'$ corresponds to 
the copy of $\C{O}$ in $\C{I}'\cap \tilde{\C{I}}'$. $\C{M}$ and $\tilde{\C{M}}$ are called \emph{counterfactually equivalent} if 
they are counterfactually equivalent with respect to $\C{I} = \tilde{\C{I}}$.
\end{definition}

The notion of counterfactual equivalence is coarser than equivalence and finer than interventional equivalence.

\begin{proposition}
\label{prop:CounterfactualEqRelatedToOtherEqs}
For SCMs, we have that equivalence implies counterfactual equivalence w.r.t.\ $\C{O}$, which in turn implies interventional equivalence w.r.t.\ $\C{O}$, for any $\C{O}\subseteq\C{I}$. 
\end{proposition}

%


Interventionally equivalent SCMs that have the same causal mechanism (that differ only 
in their exogenous distribution) may not be counterfactually equivalent (see, e.g., Example~\ref{ex:Counterfactuals}).
Although the notion of counterfactual equivalence is finer than the notion
of observational and interventional equivalence, the (augmented) graphs
for counterfactually equivalent SCMs are in general not equal to each
other (see Example~\ref{ex:CounterfactualEquivalenceDoesNotImplySameAugmentedGraph}).
\subsection{Relations between equivalences}

The definitions of observational, interventional and counterfactual equivalence provide
equivalence relations on the set of all SCMs. 
For two SCMs to be observationally, 
interventionally or counterfactually equivalent w.r.t.\ $\C{O}\subseteq\C{I}\cap\tilde{\C{I}}$, the domains of their endogenous variables $\C{O}$ have to be equal, that is, $\BC{X}_{\C{O}}=\tilde{\BC{X}}_{\C{O}}$. Apart from that, the index sets of the endogenous and the exogenous 
variables, the spaces of the other endogenous and exogenous variables, the causal mechanism and the exogenous probability measure may all differ. The observational, interventional and 
counterfactual equivalence classes w.r.t.\ $\C{O}\subseteq\C{I}\cap\tilde{\C{I}}$ are related in the following way (see Proposition~\ref{prop:CounterfactualEqRelatedToOtherEqs}):
$$
\begin{aligned}
    \C{M}\text{ and }\tilde{\C{M}}\text{ are equivalent} 
    &\quad \implies & \C{M}\text{ and }\tilde{\C{M}}\text{ are counterfactually equivalent w.r.t.\ } \C{O} \\
    &\quad \implies & \C{M}\text{ and }\tilde{\C{M}}\text{ are interventionally equivalent w.r.t.\ } \C{O} \\
    &\quad \implies & \C{M}\text{ and }\tilde{\C{M}}\text{ are observationally equivalent w.r.t.\ } \C{O} \,. \\
\end{aligned}
$$

This hierarchy allows us to compare SCMs at different levels of abstraction and formally establishes the ``ladder'' of causation (last two implications)~\citep{SP08,PM18,Pea09}.

\section{Marginalizations}
\label{sec:Marginalization}

In this section, we show how, and under which condition, one can marginalize an SCM over a subset
$\C{L}\subseteq\C{I}$ of endogenous variables (thereby ``hiding'' the variables $\C{L}$), to another SCM on the margin
$\C{I}\setminus\C{L}$ that is observationally, interventionally and even
counterfactually equivalent with respect to $\C{I}\setminus\C{L}$. 
In other words, we provide a formal notion of marginalization and show that this preserves the
probabilistic, causal and counterfactual semantics on the margin.  

The problem of marginalization of directed graphical models has been addressed for acyclic graph structures, for example, ADMGs and mDAGs \citep[see][a.o.]{Ver93,Ric03,RS02,Eva16,Eva18}, and more recently in \citep{FM17} for certain graph structures (``HEDGes'') that may include cycles. 
Although in the acyclic setting it has been shown that the marginalization for some of these graph structures preserves the probabilistic and causal semantics, in the cyclic setting this has only been shown for modular SCMs~\citep{FM17}. 
We show that without the additional structure of a compatible system of solution functions (see Appendix~\ref{app:ModularSCMs}) one can still define a marginalization for SCMs under certain local unique solvability conditions.
Intuitively, the idea is that if the state of a subsystem of endogenous variables is uniquely determined by the parents outside of this subsystem, then one can ignore the internals of this subsystem by treating it as a ``black box'' that can be described by certain measurable solution functions (see Figure~\ref{fig:NoLatentProjection}). One can marginalize over this subsystem by substituting these measurable solution functions into the rest of the model, thereby removing the functional dependencies on the variables of the subsystem from the rest of the system, while preserving the probabilistic, causal and the counterfactual semantics of the rest of the system. 
We show that in general this marginalization operation defined on SCMs does not respect the latent projection on its associated (augmented) graph, where the latent projection is a similar marginalization operation defined on directed mixed graphs~\citep{Ver93,Tia02,Eva16}. We show that under certain stronger local ancestral unique solvability conditions the marginalization does respect the latent projection.

\subsection{Marginalization of a structural causal model}
\label{sec:MarginalSCMs}

Before we show how one can marginalize an SCM w.r.t.\ a subset of endogenous variables, we first point out that in general it is not always possible to find an SCM on the margin that
preserves the causal semantics,
as the following example illustrates.
\begin{example}[No SCM on the margin preserves the causal semantics]
Consider the SCM $\C{M} = \langle \B{3}, \emptyset, \RN^3, \B{1}, \B{f}, \Prb_{\B{1}} \rangle$ with causal
mechanism
$
  f_1(\B{x}) = x_1 + x_2 + x_3 \,, \quad f_2(\B{x}) = x_2 \,, \quad f_3(\B{x}) = 0 \,.
$ 
Then there exists no SCM $\tilde{\C{M}}$ on the endogenous variables $\{2,3\}$ that is interventionally
equivalent to $\C{M}$ w.r.t.\ $\{2,3\}$. To see this, suppose there exists such an SCM $\tilde{\C{M}}$, then for every $(\xi_2, \xi_3)\in\BC{X}_{\{2,3\}}$ such that $\xi_2 + \xi_3 \neq 0$ the
intervened model $\tilde{\C{M}}_{\intervene(\{2,3\},(\xi_2,\xi_3))}$ has a solution but
$\C{M}_{\intervene(\{2,3\},(\xi_2,\xi_3))}$ does not.
\end{example}
More generally, for an SCM $\C{M}$ that is not solvable w.r.t.\ a subset $\C{L}\subseteq\C{I}$ there is no SCM
$\tilde{\C{M}}$ on the endogenous variables $\C{I}\setminus\C{L}$ that is interventionally equivalent w.r.t.\
$\C{I}\setminus\C{L}$.

The following example illustrates that for an SCM that is uniquely solvable w.r.t.\ a subset there exists an SCM on the margin that preserves the causal semantics.
\begin{example}[SCM on the margin that preserves the causal semantics]
\label{ex:Marginalization}
Consider the SCM $\C{M}$ of Example~\ref{ex:UniqueSolvabilityNotUniqAncestralSubset} that is uniquely solvable w.r.t.\ the subset $\C{L}=\{2,3\}$ (depicted by the gray box in Figure~\ref{fig:NoLatentProjection}). Substituting the measurable solution functions $\B{g}_{\C{L}}$ into the causal mechanism components $f_1$ and $f_4$ for the remaining endogenous variables $\{1,4\}$ gives a ``marginal'' causal mechanism $\tilde{f}_1(\B{x},e):=e$ and $\tilde{f}_4(\B{x},e):=x_1$. This defines an SCM $\tilde{\C{M}}$ on the margin $\C{I}\setminus\C{L}=\{1,4\}$ that is interventionally equivalent w.r.t.\ $\C{I}\setminus\C{L}$ to $\C{M}$.
\end{example}

In general, for an SCM $\C{M}$ and a given subset $\C{L}\subseteq\C{I}$ of endogenous variables and
its complement $\C{O}=\C{I}\setminus\C{L}$, we can consider the ``subsystem'' of structural equations
$\B{x}_{\C{L}}=\B{f}_{\C{L}}(\B{x}_{\C{L}},\B{x}_{\C{O}},\B{e})$. If $\C{M}$ is uniquely
solvable w.r.t.\ $\C{L}$ with measurable solution function
$\B{g}_{\C{L}}:\BC{X}_{\pa(\C{L})\setminus\C{L}}\times\BC{E}_{\pa(\C{L})}\to\BC{X}_{\C{L}}$, then for each input 
$(\B{x}_{\pa(\C{L})\setminus\C{L}},\B{e}_{\pa(\C{L})}) \in 
\BC{X}_{\pa(\C{L})\setminus\C{L}}\times\BC{E}_{\pa(\C{L})}$ of the subsystem, there exists an
output $\B{x}_{\C{L}}\in\BC{X}_{\C{L}}$, which is unique for $\Prb_{\BC{E}_{\pa(\C{L})}}$-almost
every $\B{e}_{\pa(\C{L})}\in\BC{E}_{\pa(\C{L})}$ and for all $\B{x}_{\pa(\C{L})\setminus\C{L}}\in\BC{X}_{\pa(\C{L})\setminus\C{L}}$.
We can remove this subsystem
of endogenous variables from the model by substitution. This leads to a marginal SCM that is
observationally, interventionally and counterfactually equivalent to the original SCM w.r.t.\ the margin, as we prove in Theorem~\ref{thm:MarginalizationEquivalences}.

\begin{definition}[Marginalization of an SCM]
\label{def:MarginalSCM}
Let $\C{M} = \langle \C{I}, \C{J}, \BC{X}, \BC{E}, \B{f}, \Prb_{\BC{E}} \rangle$ be an SCM that is uniquely solvable w.r.t.\ a subset $\C{L}\subseteq\C{I}$ and let $\C{O}=\C{I}\setminus\C{L}$. For
$\B{g}_{\C{L}}:\BC{X}_{\pa(\C{L})\setminus\C{L}}\times\BC{E}_{\pa(\C{L})}\to\BC{L}$, any measurable
solution function of $\C{M}$ w.r.t.\ $\C{L}$, we call
the SCM $\C{M}_{\marg(\C{L})} := \langle \C{O}, \C{J}, 
\BC{X}_{\C{O}}, \BC{E}, \tilde{\B{f}}, \Prb_{\BC{E}} \rangle$ with the \emph{marginal causal 
mechanism} $\tilde{\B{f}} : \BC{X}_{\C{O}} \times \BC{E} \to \BC{X}_{\C{O}}$ given by 
$$ 
  \tilde{\B{f}}(\B{x}_{\C{O}},\B{e}) = 
  \B{f}_{\C{O}}(\B{g}_{\C{L}}(\B{x}_{\pa(\C{L})\setminus\C{L}},
  \B{e}_{\pa(\C{L})}),\B{x}_{\C{O}},\B{e}) \,,
$$
a \emph{marginalization of $\C{M}$ w.r.t.\ $\C{L}$}. 
We denote by $\marg(\C{L})(\C{M})$ the equivalence class of the marginalizations of $\C{M}$ w.r.t.\ $\C{L}$.
\end{definition}
The marginalization of $\C{M}$ w.r.t.\ $\C{L}$ is defined up to the equivalence $\equiv$ on SCMs, since the measurable solution functions $\B{g}_{\C{L}}$ are uniquely defined up to $\Prb_{\BC{E}}$-null sets.
With this definition at hand, we can always construct a marginal SCM over a subset of the endogenous variables of an acyclic SCM by mere substitution (see also Proposition~\ref{prop:AcyclicSCMUniquelySolvable}). Moreover, this definition extends that notion to SCMs that are uniquely solvable w.r.t.\ a certain subset. For linear
SCMs this condition translates into a matrix invertibility
condition, and since substitution preserves linearity, marginalization yields a 
linear marginal SCM (see Proposition~\ref{prop:MarginalLinearModel}). 

In general, marginalization is not always defined for all subsets. For instance, the SCM of
Example~\ref{ex:UniqueSolvabilityNotUniqAncestralSubset} cannot be marginalized over
the variable $3$ (due to the self-cycle at $3$), but can be marginalized over the variables $2$ and $3$ together. It follows
from Proposition~\ref{prop:ObstructionToSelfCycles} that we can only marginalize over a single variable if
that variable has no self-cycle. Note that we may introduce new self-cycles if we marginalize over a
subset of variables, as can be seen, for example, from the SCM $\C{M}$ in
Example~\ref{ex:AugmentedGraphs}. This SCM has only one self-cycle; however,
marginalizing w.r.t.\  
$\{2\}$ gives a marginal SCM with another self-cycle at variable $4$.

The definition of marginalization satisfies an intuitive property: if we can marginalize over two
disjoint subsets after each other, then we can also marginalize over the union of those
subsets at once, and the respective results agree.
\begin{proposition}
\label{prop:MarginalizationCommutes}
\Joris{Added ``if'' which replaces the old Proposition 5.2.8 which assumed
ancestral solvability w.r.t.\ $\C{L}_1\cup\C{L}_2$ and claimed ancestral solvability
of $\C{M}_{\marg(\C{L}_1)}$ w.r.t.\ $\C{L}_2$.}
Let $\C{M} = \langle \C{I}, \C{J}, \BC{X}, \BC{E}, \B{f}, \Prb_{\BC{E}}\rangle$ be an SCM
that is uniquely solvable w.r.t.\ a subset $\C{L}_1 \subseteq\C{I}$ and let $\C{L}_2\subseteq\C{I}$ be a subset disjoint from $\C{L}_1$. Then $\C{M}_{\marg(\C{L}_1)}$ is uniquely solvable w.r.t.\ $\C{L}_2$ if and only if
$\C{M}$ is uniquely solvable w.r.t.\ $\C{L}_1 \cup \C{L}_2$,
Moreover, $\marg(\C{L}_2)\circ\marg(\C{L}_1)(\C{M})=\marg(\C{L}_1\cup\C{L}_2)(\C{M})$.
\end{proposition}
In this proposition, $\C{L}_1$ and $\C{L}_2$ have to be disjoint, since marginalizing 
first over $\C{L}_1$ gives a marginal SCM $\C{M}_{\marg(\C{L}_1)}$ with endogenous 
variables $\C{I}\setminus\C{L}_1$.

Next, we show that the distributions of a marginal SCM are identical to the
marginal distributions induced by the original SCM. A simple proof of this result proceeds by showing that 
both the intervention and the twin operation commute with marginalization.
\begin{proposition}
\label{prop:MarginalizationCommuteWithInterventionAndTwin}
Let $\C{M}$ be an SCM that is uniquely solvable w.r.t.\ a subset $\C{L}\subseteq\C{I}$. Then the marginalization $\marg(\C{L})$ commutes with both:
\begin{enumerate}[ref=\theproposition.(\arabic*)]
\item \label{prop:MarginalizationCommuteWithIntervention} the perfect intervention $\intervene(I,\B{\xi}_I)$ for a subset $I\subseteq\C{I}\setminus\C{L}$ and a value $\B{\xi}_I\in\BC{X}_I$, that is, $(\marg(\C{L})\circ\intervene(I,\B{\xi}_I))(\C{M}) = (\intervene(I,\B{\xi})\circ\marg(\C{L}))(\C{M})$, and
\item \label{prop:MarginalizationCommuteWithTwinning} the twin operation $\twin$, that is, $(\marg(\C{L}\cup\C{L}')\circ\twin)(\C{M}) = (\twin\circ\marg(\C{L}))(\C{M})$,
\end{enumerate}
where $\C{L}'$ is the copy of $\C{L}$ in $\C{I}'$.
\end{proposition}

%

With Proposition~\ref{prop:MarginalizationCommuteWithInterventionAndTwin} at hand, we can prove the main result of this subsection.
\begin{theorem}[Marginalization of an SCM preserves the observational, causal and counterfactual semantics]
\label{thm:MarginalizationEquivalences}
Let $\C{M}$ be an SCM that is uniquely solvable w.r.t.\ a subset $\C{L}\subseteq\C{I}$. Then $\C{M}$ and $\marg(\C{L})(\C{M})$ are observationally,
interventionally and counterfactually equivalent w.r.t.\ 
$\C{I}\setminus\C{L}$.
\end{theorem}

This shows that our definition of marginalization (Definition~\ref{def:MarginalSCM}) preserves the probabilistic, causal and counterfactual semantics, under a certain local unique solvability condition. Moreover, this allows us to marginalize SCMs w.r.t.\ a certain subset that do not satisfy the additional assumptions imposed by modular SCMs, for example, the SCM $\C{M}$ of Example~\ref{ex:UniqueSolvabilityNotUniqAncestralSubset} does not have any additional structure of a compatible system of solution functions, but $\C{M}$ can be marginalized w.r.t.\ the subset $\{2,3\}$ (see Appendix~\ref{app:ModularSCMs}).

In general, interventional equivalence does not imply counterfactual equivalence (see, e.g., Example~\ref{ex:Counterfactuals}).
However, for our definition of marginalization we arrive at a
marginal SCM that is not only interventionally equivalent, but also counterfactually
equivalent w.r.t.\ the margin. 

For an SCM $\C{M}$, unique solvability w.r.t.\ a certain subset
$\C{L}\subseteq\C{I}$ is a sufficient, but not a necessary condition for the existence of an SCM $\tilde{\C{M}}$ on the margin $\C{I}\setminus\C{L}$ such that $\C{M}$ and $\tilde{\C{M}}$ are counterfactually equivalent w.r.t.\
$\C{I}\setminus\C{L}$ (see, e.g., Example~\ref{ex:MarginalizationSufficientCondition}). Hence, in certain cases it may be possible to relax the uniqueness condition.

\subsection{Marginalization of a graph}

We now turn to a marginalization operation for directed mixed graphs, 
which we call the latent projection. This name is inspired from a similar construction on
directed mixed graphs in \citep{Ver93}. In \citep{Ver93}, the authors concentrate on a mapping between directed mixed graphs and show that it preserves conditional independence properties \citep[see also][]{Tia02}. In this subsection, we provide a sufficient
condition for the marginalization of an SCM to respect the latent projection, that is, that the
augmented graph of the marginal SCM is a subgraph of the latent projection of the
augmented graph of the original SCM.

\begin{definition}[Marginalization/latent projection of a directed mixed graph]
\label{def:LatentProjection}
Let $\C{G} = (\C{V},\C{E},\C{B})$ be a directed mixed graph and $\C{L}
\subseteq \C{V}$ a subset. The \emph{marginalization of $\C{G}$ w.r.t.\
$\C{L}$} or the \emph{latent projection of $\C{G}$ onto $\C{V}\setminus\C{L}$} maps $\C{G}$ to the \emph{marginal graph} $\marg(\C{L})(\C{G}) := (\tilde{\C{V}}, \tilde{\C{E}}, \tilde{\C{B}})$, where:
\begin{enumerate}
  \item $\tilde{\C{V}} = \C{V} \setminus \C{L}$,
  \item for $i,j \in \tilde{\C{V}}$: $i\to j \in \tilde{\C{E}}$ if and only if there exists a directed path $i \to \ell_1 \to \dots \to \ell_n \to j$ in $\C{G}$ with $n \ge 0$ and $\ell_1,\dots,\ell_n \in \C{L}$,
  \item for $i\ne j \in \tilde{\C{V}}$: $i\oto j \in \tilde{\C{B}}$ if and only if \Joris{the conditions were incomplete!}
    \begin{enumerate}
      \item there exist $n,m \ge 0$, $\ell_1,\dots,\ell_n \in \C{L}$, $\tilde{\ell}_1,\dots,\tilde{\ell}_m \in \C{L}$ such that
        $i \ot l_1 \ot l_2 \ot \cdots \ot \ell_n \oto \tilde{\ell}_m \to \tilde{\ell}_{m-1} \to \dots \to \tilde{\ell}_1 \to j$ in $\C{G}$, or
      \item there exist $n,m \ge 1$, $\ell_1,\dots,\ell_n \in \C{L}$, $\tilde{\ell}_1,\dots,\tilde{\ell}_m \in \C{L}$ such that 
        $i \ot l_1 \ot l_2 \ot \cdots \ot \ell_n$ and $\tilde{\ell}_m \to \tilde{\ell}_{m-1} \to \dots \to \tilde{\ell}_1 \to j$ in $\C{G}$ and $\ell_n = \tilde{\ell}_m$.
    \end{enumerate}
    \Joris{Note here that the graphical marginalization defined on a mixed graph also may add too many bidirected edges, assuming implicitly that each endogenous variable has at least one exogenous parent. This is similar to the graphical twin operation for DMGs, so we could actually define that one too.}
\end{enumerate}
\end{definition}
Note that this gives $\C{G}(\C{M}) = \marg(\C{J})(\C{G}^a(\C{M}))$ for any SCM $\C{M}$.
Further, for a subgraph $\C{H} \subseteq \C{G}$ we have $\marg(\C{L})(\C{H}) \subseteq \marg(\C{L})(\C{G})$
for any subset of nodes $\C{L}$. It does not matter in which order we project out the nodes or if we 
perform several projections at once.
\begin{proposition}
\label{prop:LatentProjectionCommutes}
Let $\C{G}=(\C{V},\C{E},\C{B})$ be a directed mixed graph and $\C{L}_1,\C{L}_2\subseteq\C{V}$ two disjoint subsets. Then $(\marg(\C{L}_1)\circ\marg(\C{L}_2))(\C{G})=(\marg(\C{L}_2)\circ\marg(\C{L}_1))(\C{G})=\marg(\C{L}_1\cup\C{L}_2)(\C{G})$.
\end{proposition}
Similar to the definition of marginalization for SCMs, this definition of the
latent projection commutes with both the (graphical) perfect intervention and the twin operation.

\begin{proposition}
\label{prop:LatentProjectionCommuteWithInterventionAndTwin}
Let $\C{G}=(\C{V},\C{E},\C{B})$ be a directed mixed graph and $\C{L},\C{I},I\subseteq\C{V}$ subsets. Then the marginalization $\marg(\C{L})$ commutes with both:
\begin{enumerate}[ref=\theproposition.(\arabic*)]
  \item \label{prop:LatentProjectionCommuteWithIntervention} perfect intervention $\intervene(I)$ if $I$ is disjoint from $\C{L}$, that is, $(\marg(\C{L})\circ\intervene(I))(\C{G}) = (\intervene(I)\circ\marg(\C{L}))(\C{G})$, and
  \item \label{prop:LatentProjectionCommuteWithTwin} the twin operation $\twin(\C{I})$ if $\C{B}=\emptyset$, $\C{J}:=\C{V}\setminus\C{I}$ is exogenous (i.e., $\pa_{\C{G}}(\C{J})=\emptyset$) and $\C{L}\subseteq\C{I}$, that is, $(\marg(\C{L}\cup\C{L}')\circ\twin(\C{I}))(\C{G})=(\twin(\C{I}\setminus\C{L})\circ\marg(\C{L}))(\C{G})$,
\end{enumerate}
where $\C{L}'$ is the copy of $\C{L}$ in $\C{I}'$.
\end{proposition}


An example of an SCM for which a marginalization respects the latent projection is the SCM $\C{M}$ of Example~\ref{ex:AugmentedGraphs}. Marginalizing $\C{M}$ w.r.t.\ $\C{L}=\{2\}$ gives a marginal SCM $\C{M}_{\marg(\C{L})}$ with a graph that is a subgraph of the latent projection of the graph of the SCM $\C{M}$ onto $\C{I}\setminus\C{L}$.
In general, not all marginalizations respect the latent projection, as is illustrated in the following
example.
\begin{example}[Marginalization does not respect the latent projection]
\label{ex:NoLatentProjection}
Consider the SCM $\C{M}$ of Example~\ref{ex:UniqueSolvabilityNotUniqAncestralSubset}. Although $\C{M}$ and its marginalization $\C{M}_{\marg(\C{L})}$ with $\C{L}=\{2,3\}$ are interventionally equivalent w.r.t.\ $\C{I}\setminus\C{L}=\{1,4\}$, the graph $\C{G}(\C{M}_{\marg(\C{L})})$ is not a subgraph of the latent projection of $\C{G}(\C{M})$ onto $\C{I}\setminus\C{L}$, as
can be verified from the graphs depicted in Figure~\ref{fig:NoLatentProjection}.
\end{example}


Under the local ancestral unique solvability condition, which is a stronger condition than the local unique solvability condition (i.e., ancestral unique solvability w.r.t.\ a subset implies unique solvability w.r.t.\ that subset), one can prove that the marginalization of an SCM respects the latent projection.
\begin{proposition}
\label{prop:LatentProjection}
Let $\C{M}$ be an SCM that is ancestrally uniquely solvable w.r.t.\ a subset $\C{L}\subseteq\C{I}$. Then $\big(\C{G}^a\circ\marg(\C{L})\big)(\C{M}) \subseteq 
\big(\marg(\C{L})\circ\C{G}^a\big)(\C{M})$ and $\big(\C{G}\circ\marg(\C{L})\big)(\C{M}) \subseteq 
\big(\marg(\C{L})\circ\C{G}\big)(\C{M})$.
\end{proposition}
The (augmented) graph of a marginalized SCM can be a strict subgraph of the corresponding latent projection if, for example, certain paths cancel each other out after the substitution of the measurable solution function(s) into the causal mechanism(s) on the margin (see Example~\ref{ex:LatentProjection}). For acyclic SCMs, we recover with Proposition~\ref{prop:LatentProjection} the known result that this class is closed under marginalization (see Proposition~\ref{prop:AcyclicSCMUniquelySolvable})~\citep{Eva16}. For linear SCMs, we have that unique solvability w.r.t.\ a subset $\C{L}$ holds if and only if ancestral unique solvability w.r.t.\ $\C{L}$ holds (see Proposition~\ref{prop:LinearEquivalentUniqueSolvability}), and hence, a marginalization of a linear SCM always respects the latent projection.

\Joris{\textbf{The following turned out to have several problems in its proof. We couldn't fix
it, but still believe that it should be true. However, the proposal is to remove it for now.}

Under the stronger conditions of Proposition~\ref{prop:LatentProjection} we can also prove:
\begin{proposition}
\label{prop:MarinalizingTwoSubsetsEquiv2conjecture}
Given an SCM $\C{M}$ and two disjoint subsets $\C{L}_1,\C{L}_2\subseteq\C{I}$
such that $\C{M}$ is ancestrally uniquely solvable w.r.t.\ $\C{L}_1$ and w.r.t.\ $\C{L}_1\cup\C{L}_2$, then $\C{M}_{\marg(\C{L}_1)}$ is ancestrally uniquely solvable w.r.t.\ $\C{L}_2$. Moreover $\marg(\C{L}_2)\circ\marg(\C{L}_1)(\C{M})\equiv\marg(\C{L}_1\cup\C{L}_2)(\C{M})$.
\end{proposition}
Interestingly, the following (weaker) older version also had a bug in its proof:
\begin{proposition}
\label{prop:MarginalizingTwoSubsetsEquiv2older}
Given an SCM $\C{M}$ and two disjoint subsets $\C{L}_1,\C{L}_2\subseteq\C{I}$ such that $\C{M}$ is
uniquely solvable w.r.t.\ $\C{L}_1$ and 
ancestrally uniquely solvable w.r.t.\ $\C{L}_1\cup\C{L}_2$,
then $\C{M}_{\marg(\C{L}_1)}$ is uniquely solvable w.r.t.\ $\C{L}_2$. Moreover $\marg(\C{L}_2)\circ\marg(\C{L}_1)(\C{M})\equiv\marg(\C{L}_1\cup\C{L}_2)(\C{M})$.
\end{proposition}
In fixing the bug I strengthened to the following 
even older version (which is easiest to prove!), which now became part of
Proposition~\ref{prop:MarginalizationCommutes}:
\begin{proposition}
\label{prop:MarginalizingTwoSubsetsEquiv}
Let $\C{M} = \langle \C{I}, \C{J}, \BC{X}, \BC{E}, \B{f}, \Prb_{\BC{E}}\rangle$ be an SCM
that is uniquely solvable w.r.t.\ $\C{L}_1 \subseteq\C{I}$. Let $\C{L}_2\subseteq\C{I}$ be
disjoint from $\C{L}_1$.
Then $\C{M}_{\marg(\C{L}_1)}$ is uniquely solvable w.r.t.\ $\C{L}_2$ if and only if
$\C{M}$ is uniquely solvable w.r.t.\ $\C{L}_1 \cup \C{L}_2$,
Moreover $\marg(\C{L}_2)\circ\marg(\C{L}_1)(\C{M})\equiv\marg(\C{L}_1\cup\C{L}_2)(\C{M})$.
\end{proposition}
}

\section{Markov properties}
\label{sec:MarkovProperty}

In this section, we give a short overview of Markov properties for
SCMs with cycles. 
We make use of the Markov properties that were recently developed by Forr{\'e} and Mooij~\cite{FM17} for HEDGes, a graphical representation that is similar to the augmented graph of SCMs. We briefly summarize some of their main results and apply them to the class of SCMs. In Appendix~\ref{app:MarkovProperty}, we provide a more thorough introduction and give an intuitive derivation, which can act as an entry point for the reader into the more extensive discussion of Markov properties provided in~\cite{FM17}.

Markov properties associate a set of conditional independence relations to a
graph. The directed global Markov property for directed acyclic graphs (see Definitions~\ref{def:DSeparation} and \ref{def:Markov_property}), also
known as the $d$-separation criterion~\citep{Pea85}, is one of the most widely used.
It directly extends to a similar property for acyclic directed mixed graphs (ADMGs) \citep{Ric03}. It does not hold in general for cyclic SCMs, however, as
was already observed earlier \citep{Spi94, Spi95}.
\begin{example}[Directed global Markov property does not hold for cyclic SCM]
\label{ex:SpirtesExample}
One can check that for every solution $\B{X}$ of the SCM $\C{M}$ of Example~\ref{ex:CyclicExample}, $X_1$ is not independent of $X_2$ given $\{X_3,X_4\}$. However, the
variables $X_1$ and $X_2$ are $d$-separated given $\{X_3,X_4\}$ in $\C{G}(\C{M})$ (see Figure~\ref{fig:CyclicAndSimpleExample}). Hence the
global directed Markov property does not hold here. 
\end{example}
Although some progress has been made in the case of discrete \citep{PD96,Nea00,FM17}
and linear models \citep{Spi93, Spi94, Spi95, Ric96c, Kos96, HEH12,FM17}, only recently a general directed global Markov property has been introduced for more general cyclic models~\citep{FM17}, that is based on $\sigma$-separation (see Definition~\ref{def:SigmaSeparation} and \ref{def:generalized_Markov_property}), an extension of $d$-separation. This notion of $\sigma$-separation was derived from the notion of $d$-separation in the acyclification of the graph~\citep{FM17} (see Definition~\ref{def:GraphicalAcyclification}). The acyclification of a graph generalizes the idea of the collapsed graph developed by Spirtes~\citep{Spi94} and can, in particular, be applied to the graphs of SCMs.
The main idea of the acyclification is that under the condition that the SCM is uniquely solvable w.r.t.\ each strongly connected component, we can replace the causal mechanisms of these strongly connected components by their measurable solution functions, which results in an acyclic SCM. This acyclified SCM (see Definition~\ref{def:Acyclification}) is observationally equivalent to the original SCM (see Proposition~\ref{prop:Acyclification}).

\begin{example}[Construction of an observationally equivalent acyclic SCM]
\label{ex:AcyclificationExample}
The SCM $\C{M}$ of Example~\ref{ex:CyclicExample} is uniquely solvable w.r.t.\ all its strongly connected components, that is, the subsets $\{1\}$, $\{2\}$ and $\{3,4\}$. Replacing the causal mechanisms of these strongly connected components by their measurable solution functions gives the observationally equivalent SCM $\tilde{\C{M}}$ of Example~\ref{ex:ObservationalEquivalenceDoesNotImplyEquivalence}. Because $\tilde{\C{M}}$ is acyclic (see
Figure~\ref{fig:CyclicAndSimpleExample}) we can apply the directed global Markov property 
to $\tilde{\C{M}}$. The fact that $X_1$ and $X_2$ are not $d$-separated given $\{X_3,X_4\}$ in 
$\C{G}(\tilde{\C{M}})$ is in line with $X_1$ being dependent of $X_2$ given $\{X_3,X_4\}$
for every solution $\B{X}$ of $\tilde{\C{M}}$ (and hence of $\C{M}$).
\end{example}
This acyclification preserves solutions, and $d$-separation in the
acyclification can directly be translated into $\sigma$-separation on
the original graph (see Proposition~\ref{prop:SigmaSeparationAsDSeparation}). This leads to the general directed global Markov property. The following theorem summarizes the main results of \citep{FM17} applied to SCMs.
\begin{theorem}[Global Markov properties for SCMs~\citep{FM17}]
\label{thm:globalMarkovPropertiesSCMs}
Let $\C{M}$ be a uniquely solvable SCM. Then its observational distribution $\Prb^{\B{X}}$ exists, is unique and the following two statements hold:
\begin{enumerate}[ref=\theproposition.(\arabic*)]
\item \label{thm:DirectedGlobalMarkovPropertSCMs}
  $\Prb^{\B{X}}$ satisfies the \emph{directed global Markov property} (``$d$-separation criterion'') relative to $\C{G}(\C{M})$ (see Definition~\ref{def:Markov_property}) if $\C{M}$ satisfies at least one of the following conditions:
  \begin{enumerate}[ref=\theproposition.(1.\alph*)]
      \item 
        $\C{M}$ is acyclic;
      \item \label{thm:DirectedGlobalMarkovPropertSCMsb} all endogenous spaces $\C{X}_i$ are discrete and $\C{M}$ is ancestrally
        uniquely solvable;
      \item $\C{M}$ is linear (see Definition~\ref{def:LinearSCM}), 
        each of its causal mechanisms $\{f_i\}_{i\in\C{I}}$ has a non-trivial dependence on at least one exogenous variable, and
        $\Prb_{\BC{E}}$ has a density w.r.t.\ the Lebesgue measure on $\RN^{\C{J}}$.
    \end{enumerate}
  \item \label{thm:GeneralDirectedGlobalMarkovPropertSCMs} $\Prb^{\B{X}}$ satisfies the \emph{general directed global Markov property} (``$\sigma$-separation criterion'') relative to $\C{G}(\C{M})$ (see Definition~\ref{def:generalized_Markov_property}) if $\C{M}$ is uniquely solvable w.r.t.\ each strongly connected component of $\C{G}(\C{M})$.\footnote{Since \citep{FM17} also provides results under the weaker condition that an SCM is solvable
(not necessarily uniquely) w.r.t.\ each strongly connected component of $\C{G}(\C{M})$, one might
believe that Theorem~\ref{thm:GeneralDirectedGlobalMarkovPropertSCMs} could be generalized to stating that in that case,
any of its observational distributions satisfies the general directed global
Markov property. However, that is not true:
consider, for example, the SCM $\C{M} = \langle \B{2}, \emptyset, \RN^2, \B{1}, \B{f},
\Prb_{\B{1}} \rangle$ with $f_1(\B{x})=x_1$ and $f_2(\B{x})=x_2$. Then $\C{M}$ is solvable
w.r.t.\ each of its strongly connected components $\{1\}$ and $\{2\}$. The solution with $X_1=X_2$, where $X_2$ has a nondegenerate distribution, shows a dependence between $X_1$ and $X_2$, and thus $X_1 \indep X_2$ does not hold. In general, all strongly connected components that admit multiple solutions may be dependent on any other variable(s) in the model.
}
\end{enumerate}
\end{theorem}
The general directed global Markov property is generally weaker than the directed global Markov property, since $\sigma$-separation implies $d$-separation. The acyclic case is well known and was first shown in the context of linear-Gaussian structural
equation models \citep{SRM+98,Kos99}.
The discrete case fixes the erroneous theorem by Pearl and Dechter~\cite{PD96}, for which a
counterexample was found by Neal~\cite{Nea00}, by adding the ancestral unique
solvability condition, and extends it to allow for bidirected edges in the graph. 
The linear case is an extension of existing results for the linear-Gaussian setting 
without bidirected edges \citep{Spi94, Spi95, Kos96} to a linear (possibly
non-Gaussian) setting with bidirected edges in the graph. 


In constraint-based approaches to causal discovery, one usually assumes the converse of the (general) directed global Markov property to hold~\citep{SGS00,Pea09}, which is called $\sigma$-faithfulness respectively $d$-faithfulness (see Definition~\ref{def:dFaithfulness} and \ref{def:sigmaFaithfulness}). Meek~\citep{Mee95} showed that for multinomial and linear-Gaussian DAG (i.e., acyclic and causally sufficient SCMs) models, $d$-faithfulness holds for all parameter values up to a measure zero set. 
Up to our knowledge no such results have been shown in more general parametric or nonparametric settings (neither for $d$-faitfhulness in acyclic or cyclic settings, nor for $\sigma$-faithfulness).

\section{Causal interpretation of the graph of SCMs}
\label{sec:CausalInterpretationSCMs}

In Example~\ref{ex:InterventionalEquivalenceDoesNotImplySameAugmentedGraph}, we 
already saw that sometimes no information in the observational, interventional and even the counterfactual distributions suffices to decide whether a directed path or bidirected edge is present in the graph, or not. Here, we do not attempt to provide a complete characterization of 
the conditions under which the presence or absence of a directed path or bidirected edge in
the graph can be identified from the observational and interventional
distributions. Instead, we give sufficient conditions to
detect a directed path and bidirected edge in the graph.

In general, cyclic SCMs may have none, one or multiple induced observational distributions, and this may change after intervening in the system. Here, we restrict ourselves to
graphs of SCMs where the induced (marginal) observational and interventional distributions are uniquely defined.

\subsection{Directed paths and edges}

For cyclic SCMs, the causal interpretation of the SCM is not always consistent with its graph. This can be illustrated with the SCM $\C{M}$ of Example~\ref{ex:NoLatentProjection}. Here, one sees a difference
in the marginal distribution $\Prb_{\C{M}_{\intervene(\{1\},\xi_1)}}$ on $\C{X}_4$ for different values of $\xi_1$, although variable $1$ is not an ancestor of variable $4$ and each marginal distribution $\Prb_{\C{M}_{\intervene(\{1\},\xi_1)}}$ on $\C{X}_4$ is uniquely defined. This counterintuitive behavior that an intervention on a nonancestor of a variable can change the distribution of that variable was already observed by Neal~\cite{Nea00}. However, under a specific unique solvability condition, we obtain a direct causal interpretation for the absence of a directed edge or directed path in the graph of an SCM.
\begin{proposition}[Sufficient condition for detecting a directed edge in the latent projection of the graph of an SCM]
\label{prop:DirectedPathEdges}
Consider an SCM $\C{M} = \langle \C{I}, \C{J}, \BC{X}, \BC{E}, \B{f},
\Prb_{\BC{E}} \rangle$, a subset $\C{O}\subseteq\C{I}$ and $i,j\in\C{O}$ such that $i\ne j$. Let $\B{\xi}_{I}\in\BC{X}_I$, where $I:=\C{O}\setminus\{i,j\}$, such that $\C{M}_{\intervene(I,\B{\xi}_I)}$ is uniquely solvable w.r.t.\ $\an_{\C{G}(\C{M}_{\intervene(I,\B{\xi}_I)})_{\setminus i}}(j)$. If
there exist values $\xi_i \ne \tilde{\xi}_i \in \C{X}_i$ such that both 
$(\C{M}_{\intervene(I,\B{\xi}_I)})_{\intervene(\{i\},\xi_i)}$ and
$(\C{M}_{\intervene(I,\B{\xi}_I)})_{\intervene(\{i\},\tilde{\xi}_i)}$
induce unique marginal distributions on $\C{X}_j$, and these two induced distributions do not coincide, that is, there exists a measurable set $\C{B}_j\subseteq\C{X}_j$ such that
$$
\Prb_{(\C{M}_{\intervene(I,\B{\xi}_I)})_{\intervene(\{i\},\xi_i)}}(X_j\in\C{B}_j)
  \ne 
\Prb_{(\C{M}_{\intervene(I,\B{\xi}_I)})_{\intervene(\{i\},\tilde{\xi}_i)}}(X_j\in\C{B}_j) \,,
$$
the directed edge $i\to j$ is present in the
latent projection $\marg(\C{I}\setminus \C{O})(\C{G}(\C{M}))$ of $\C{G}(\C{M})$ on
$\C{O}$.
\end{proposition}
Two cases are of special interest: $\C{O}=\C{I}$, which corresponds with a
directed edge $i\to j$ in $\C{G}(\C{M})$, and $\C{O}=\{i,j\}$, which corresponds with a
directed path $i\to\cdots\to j$ in $\C{G}(\C{M})$. 

The condition in Proposition~\ref{prop:DirectedPathEdges} is a sufficient condition for
determining whether a directed edge or path is present in the graph. In general, not all directed edges and paths can be identified from the interventional distributions with this sufficient condition. For example, no interventional distribution satisfies the condition of Proposition~\ref{prop:DirectedPathEdges} for the SCM $\bar{\C{M}}$ in Example~\ref{ex:InterventionalEquivalenceDoesNotImplySameAugmentedGraph}, although there is a directed edge $1\to 2$ in the graph $\C{G}(\bar{\C{M}})$.

\subsection{Bidirected edges}

It is well known that there exists a similar sufficient condition for detecting
bidirected edges in the graph of an acyclic SCM also known as the common-cause principle~\citep[see, e.g.,][]{Pea09}. In the two variables case, this criterion informally states that there exists a bidirected edge between the variables $i$ and $j$ in the graph of the SCM, if the marginal interventional distribution of $X_j$ under the intervention $\intervene(\{i\},x_i)$ differs from the conditional distribution of $X_j$ given $X_i=x_i$ (see Example~\ref{ex:DetectingBiDirectedEdges}).
The following proposition provides a generalization of this 
sufficient condition for detecting bidirected edges in graphs of SCMs that may include cycles.
\begin{proposition}[Sufficient condition for detecting a bidirected edge in the latent projection of the graph of an SCM]
\label{prop:BidirectedEdges}
Consider an SCM $\C{M} = \langle \C{I}, \C{J}, \BC{X}, \BC{E}, \B{f}, \Prb_{\BC{E}} \rangle$, a subset $\C{O} \subseteq \C{I}$ and $i,j\in \C{O}$ such that $i\ne j$. Let $\B{\xi}_I\in\BC{X}_I$, where $I:=\C{O}\setminus \{i,j\}$, such that $\C{M}_{\intervene(I,\B{\xi}_I)}$ is uniquely solvable w.r.t.\ both $\an_{\C{G}(\C{M}_{\intervene(I,\B{\xi}_I)})}(i)$ and $\an_{\C{G}(\C{M}_{\intervene(I,\B{\xi}_I)})_{\setminus i}}(j)$. Assume that for every $\xi_i\in\C{X}_i$ both $\C{M}_{\intervene(I,\B{\xi}_I)}$ and $(\C{M}_{\intervene(I,\B{\xi}_I)})_{\intervene(\{i\},\xi_i)}$ induce a unique marginal distribution on $\C{X}_j\times\C{X}_i$ and $\C{X}_j$, respectively. If $j\notin\an_{\C{G}(\C{M}_{\intervene(I,\B{\xi}_I)})}(i)$ and there exists a measurable set $\C{B}_j\subseteq\C{X}_j$ such that for every version of
the regular conditional probability
$\Prb_{\C{M}_{\intervene(I,\B{\xi}_I)}}(X_j \in \C{B}_j \given
X_i = \xi_i)$, there exists a value $\xi_i\in\C{X}_i$ such that
$$
\Prb_{(\C{M}_{\intervene(I,\B{\xi}_I)})_{\intervene(\{i\},\xi_i)}}(X_j
  \in \C{B}_j)
 \ne 
  \Prb_{\C{M}_{\intervene(I,\B{\xi}_I)}}(X_j \in \C{B}_j \given
  X_i = \xi_i) \,,
$$ 
then there exists a bidirected edge $i\oto j$ in the
latent projection $\marg(\C{I}\setminus \C{O})(\C{G}(\C{M}))$ of $\C{G}(\C{M})$ on
$\C{O}$.
\end{proposition}
This proposition gives a sufficient condition for determining that a bidirected edge is present in the graph. In general, not all bidirected edges in the graph can be identified from the observational, interventional and even the counterfactual distributions, as we saw in 
Example~\ref{ex:CounterfactualEquivalenceDoesNotImplySameAugmentedGraph}. In this example, there exists a bidirected edge $1\oto 2 \in\C{G}(\C{M})$ while the density $p(x_2 \given \intervene(X_1=x_1)) = p(x_2 \given X_1 = x_1)$ for all $x_1\in\C{X}_1$.
For the acyclic setting, the above criterion is generally considered as a universal way to detect a confounder (note that then one can also deal with the case $j\in\an_{\C{G}(\C{M}_{\intervene(I,\B{\xi}_I)})}(i)$ by swapping the roles of $i$ and $j$). If $i$ and $j$ are part of a cycle, the above sufficient condition cannot be applied, and in that case, to the best of our knowledge, no simple sufficient conditions for detecting the presence of a bidirected edge are known.

\section{Simple SCMs}
\label{sec:SimpleSCMs}

In this section, we introduce the well-behaved class of simple SCMs. Simple SCMs satisfy all the local unique solvability conditions to ensure that this class is closed under both perfect intervention and marginalization. They extend the subclass of acyclic SCMs to the cyclic setting, while preserving many of their convenient properties.

\begin{definition}[Simple SCM]
\label{def:SimpleSCM}
Let $\C{M}=\langle \C{I}, \C{J}, \BC{X}, \BC{E}, \B{f}, \Prb_{\BC{E}} \rangle$ be an SCM. We call $\C{M}$ \emph{simple} if it is uniquely solvable w.r.t.\ every subset $\C{O}\subseteq\C{I}$.
\end{definition}
Loosely speaking, an SCM is simple if any subset of its structural equations can be solved uniquely for its associated variables in terms of the other variables that appear in these equations. An example of a simple SCM is given in Example~\ref{ex:HarmonicOscillator}.

On simple SCMs one can perform any number of marginalizations (see Definition~\ref{def:MarginalSCM}) in any order (see Proposition~\ref{prop:MarginalizationCommutes}). All these marginalizations respect the latent projection (see Proposition~\ref{prop:LatentProjection}) and each resulting marginal SCM is again simple. Moreover, we show that this class is closed under intervention and the twin operation.
\begin{proposition}
\label{prop:SimplenessClosedUnderResults}
The class of simple SCMs is closed under marginalization, perfect intervention and the twin operation.
\end{proposition}
The class of simple SCMs contains the acyclic SCMs as a subclass (see Proposition~\ref{prop:AcyclicSCMUniquelySolvable}). In particular, a simple SCM has no self-cycles (see Proposition~\ref{prop:ObstructionToSelfCycles}), since a self-cycle denotes that that variable cannot be uniquely (up to a $\Prb_{\BC{E}}$-null set) determined by its parents.

From Proposition~\ref{prop:SimplenessClosedUnderResults}, it follows that the results summarized in Theorem~\ref{thm:globalMarkovPropertiesSCMs} also apply to all the observational, interventional and counterfactual distributions of simple SCMs.
\begin{corollary}[Global Markov properties for simple SCMs]
\label{coro:gdgMarkovPropertySimpleSCM}
Let $\C{M}$ be a simple SCM. Then the: 
\begin{enumerate}
  \item observational distribution,
  \item interventional distribution after perfect intervention on
    $I\subset\C{I}$, 
  \item counterfactual distribution after perfect intervention on
   $\tilde{I}\subseteq\C{I}\cup\C{I}'$,
 \end{enumerate}
all exist, are unique and satisfy the 
general directed global Markov property relative to $\C{G}(\C{M})$,
$\intervene(I)(\C{G}(\C{M}))$ and $\intervene(\tilde{I})(\twin(\C{G}(\C{M})))$, respectively. Moreover, if $\C{M}$ satisfies at least one of the three
conditions (1a), (1b), (1c) of
Theorem~\ref{thm:globalMarkovPropertiesSCMs}, then they also obey the directed global Markov property relative to $\C{G}(\C{M})$,
$\intervene(I)(\C{G}(\C{M}))$ and $\intervene(\tilde{I})(\twin(\C{G}(\C{M})))$, respectively.
\end{corollary}
Many of these properties are also shown to hold for the class of \emph{modular SCMs}~\citep{FM17}, which contains, in particular, the class of simple SCMs (see Appendix~\ref{app:ModularSCMs} for more details). 

Moreover, simple SCMs satisfy the unique solvability conditions of Proposition~\ref{prop:DirectedPathEdges} and \ref{prop:BidirectedEdges},
which allows us to define the causal relationships for simple SCMs in terms of its graph.
\begin{definition}[Causal relationships for simple SCMs]
\label{def:DirectCausesConfounders}
Let $\C{M}$ be a simple SCM. 
\begin{enumerate}
\item
If there exists a directed edge $i\to j \in\C{G}(\C{M})$, that is, $i\in\pa(j)$, then we call \emph{$i$ a direct cause of $j$ according to $\C{M}$};
\item 
If there exists a directed path $i\to\cdots\to j$ in $\C{G}(\C{M})$, that is, $i\in\an(j)$, then we call \emph{$i$ a cause of $j$ according to $\C{M}$};
\item
  If there exists a bidirected edge $i\oto j \in \C{G}(\C{M})$, then we call \emph{$i$ and $j$ (latently) confounded according to $\C{M}$}.
\end{enumerate}
\end{definition}

In summary, we have the following sufficient conditions for determining the different causal and confoundedness relationships according to a specific simple SCM $\C{M}$.
\begin{corollary}[Sufficient conditions for the presence of causal and confoundedness relationships for simple SCMs]
\label{coro:SufficientConditionCausesConfounder}
Let $\C{M}$ be a simple SCM and $i,j\in\C{I}$ such that $i\ne j$ and $I:=\C{I}\setminus\{i,j\}$. Then:
\begin{enumerate}
\item 
If there exist values $\B{\xi}_I\in\BC{X}_I$ and $\xi_i \ne \tilde{\xi}_i \in \C{X}_i$ and a measurable set $\C{B}_j\subseteq\C{X}_j$ such that
$$
\Prb_{(\C{M}_{\intervene(I,\B{\xi}_I)})_{\intervene(\{i\},\xi_i)}}(X_j\in\C{B}_j)
  \ne 
\Prb_{(\C{M}_{\intervene(I,\B{\xi}_I)})_{\intervene(\{i\},\tilde{\xi}_i)}}(X_j\in\C{B}_j) \,,
$$
then $i$ is a direct cause of $j$ according to $\C{M}$, that is, $i\to j \in \C{G}(\C{M})$;
\item
If there exist values $\xi_i \ne \tilde{\xi}_i \in \C{X}_i$ and a measurable set $\C{B}_j\subseteq\C{X}_j$ such that
$$
\Prb_{\C{M}_{\intervene(\{i\},\xi_i)}}(X_j\in\C{B}_j)
  \ne 
\Prb_{\C{M}_{\intervene(\{i\},\tilde{\xi}_i)}}(X_j\in\C{B}_j) \,,
$$
then $i$ is a cause of $j$ according to $\C{M}$, that is, $i\to\cdots\to j$ in $\C{G}(\C{M})$;
\item
If $j\notin\an_{\C{G}(\C{M}_{\intervene(I,\B{\xi}_I)})}(i)$ and there exist a value $\B{\xi}_I\in\BC{X}_I$ and a measurable set $\C{B}_j\subseteq\C{X}_j$ such that for every version of the regular conditional probability
$\Prb_{\C{M}_{\intervene(I,\B{\xi}_I)}}(X_j \in \C{B}_j \given
X_i = \xi_i)$ there exists a value $\xi_i\in\C{X}_i$ such that
$$
\Prb_{(\C{M}_{\intervene(I,\B{\xi}_I)})_{\intervene(\{i\},\xi_i)}}(X_j
  \in \C{B}_j)
  \ne 
  \Prb_{\C{M}_{\intervene(I,\B{\xi}_I)}}(X_j \in \C{B}_j \given
  X_i = \xi_i) \,,
$$ 
then $i$ and $j$ are confounded according to $\C{M}$, that is, $i\oto j \in \C{G}(\C{M})$.
\end{enumerate}
\end{corollary}

For simple SCMs, it is in general not possible to identify all the causal and confoundedness relationships in the graph from the observational, interventional or even the counterfactual distributions. Examples~\ref{ex:InterventionalEquivalenceDoesNotImplySameAugmentedGraph} and
\ref{ex:CounterfactualEquivalenceDoesNotImplySameAugmentedGraph} show that this is already impossible for acyclic SCMs without further assumptions.

Finally, there is a connection between SCMs and potential outcomes \citep{Rub74} that generalizes
to the cyclic setting. One of the consequences of Proposition~\ref{prop:SimplenessClosedUnderResults}
is that all counterfactuals are defined for a simple SCM (even if it is cyclic). This allows us to
define potential outcomes in terms of a simple SCM in the following way.
\begin{definition}[Potential outcome]
Let $\C{M}=\langle \C{I},\C{J}, \BC{X}, \BC{E}, \B{f}, \Prb_{\BC{E}} \rangle$ be a simple SCM, $I\subseteq\C{I}$ a subset, $\B{\xi}_I\in\BC{X}_I$ a value and $\B{E}$ a random variable such that $\Prb^{\B{E}}=\Prb_{\BC{E}}$. The \emph{potential outcome under the perfect intervention $\intervene(I,\B{\xi}_I)$} is defined as $\B{X}_{\B{\xi}_I} := \B{g}_{\C{M}_{\intervene(I,\B{\xi}_I)}}(\B{E}_{\pa(\C{I})})$, where $\B{g}_{\C{M}_{\intervene(I,\B{\xi}_I)}}:\BC{E}_{\pa(\C{I})}\to\BC{X}$ is a measurable solution function for $\C{M}_{\intervene(I,\B{\xi}_I)}$.
\end{definition}
\section{Discussion}
\label{sec:Discussion}

In this paper, we studied the basic properties of SCMs in the presence of
cycles and latent variables without restricting to linear functional
relationships between the variables. We saw that cyclic SCMs behave differently in many aspects
than acyclic SCMs. Indeed, in the presence of cycles, many of
the convenient properties of acyclic SCMs do not hold in general: SCMs do not
always have a solution; they do not always induce unique observational,
interventional and counterfactual distributions; a marginalization does not
always exist, and if it exists the marginal model does not always respect the
latent projection; they do not always satisfy a Markov property and their
graphs are not always consistent with their causal semantics. 

We introduced various notions of (unique) solvability and showed that
under appropriate (unique) solvability conditions, many of the operations and
results for the acyclic setting can be extended to SCMs with cycles.
For example, we introduced several equivalence relations between SCMs to 
compare SCMs at different levels of abstraction, we showed how to define
marginal SCMs on a subset of the variables that are (in various ways)
equivalent to the original SCM, we discussed under which conditions the distributions 
satisfy the (general) directed global Markov property relative to their graphs 
and we showed under which conditions the graph of an SCM can be interpreted 
causally. Most of these results are shown under sufficient
conditions that are not necessary (e.g., for the marginalization operation this was shown in Example~\ref{ex:MarginalizationSufficientCondition}). It may therefore be possible to further relax some of the conditions.

These insights led us to introduce the more well-behaved class of simple SCMs, which forms an extension of the class of acyclic SCMs to the cyclic setting that preserves many of its convenient properties: 
simple SCMs induce unique observational, interventional and
counterfactual distributions; 
the class of simple SCMs is closed under both perfect intervention and marginalization;
the marginalization respects the latent projection; 
the induced distributions obey the general directed global Markov property and
obey the directed global Markov property in the acyclic, discrete and linear
case. 
This class does not contain SCMs that have
self-cycles 
and graphs of simple SCMs have a direct and intuitive causal interpretation.

One key property of simple SCMs is that the solutions always satisfy
the conditional independencies implied by $\sigma$-separation.
By simply replacing $d$-separation with $\sigma$-separation it turns out that one can
directly extend results and algorithms for acyclic SCMs to the more general
class of simple SCMs. For example, adjustment criteria (including the back-door criterion),
Pearl's $\intervene$-calculus and Tian's ID algorithm for the identification of causal effects
have been extended recently to the class of modular SCMs, which contains the class of simple SCMs \citep{FM19}. 
Several causal discovery algorithms have already been proposed that work with simple SCMs, for example, the first constraint-based causal discovery algorithm that can deal with cycles and nonlinear functional relationships~\citep{FM18}. Also, Local Causal Discovery (LCD) \citep{Cooper1997}, Y-structures \citep{ManiPhD2006} and  the Joint Causal Inference framework (JCI) all apply to simple SCMs~\cite{MMC19} even though they were originally developed for acyclic SCMs only. Recently, it has been shown that even the well-known
Fast Causal Inference (FCI) algorithm \citep{SMR1999,Zhang2008_AI} is directly applicable to simple SCMs 
\citep{MooijClaassen_2005.00610} and provides a consistent estimate of the Markov equivalence class (under 
the faithfulness assumption). Moreover, a method for constructing nonlinear simple SCMs using neural networks and sampling from them has been proposed~\citep{FM18}. This illustrates that the class of simple SCMs forms a convenient and practical extension of the class of acyclic SCMs that can be used for the purposes of causal modeling, reasoning, discovery and prediction.

We hope that this work will provide the foundations for a general
theory of statistical causal modeling with SCMs. 
Future work might consist of reparametrizing and reducing the space of
the exogenous variables of an SCM while preserving the causal and counterfactual
semantics; extending and generalizing the identifiability results for (direct) causes and confounders; extending the graphs of SCMs to represent selection bias; proving completeness results for some Markov properties for a subclass of SCMs that contains cycles.

\section*{Acknowledgments}

S.\ Bongers and J.M.\ Mooij are supported in part by NWO, the Netherlands Organization for Scientific Research (VIDI grant 639.072.410 and VENI grant 639.031.036). P.\ Forr{\'e} and J.M.\ Mooij are supported in part by the European Research Council (ERC) under the European Union's Horizon 2020 research and innovation programme (grant agreement n$^{\mathrm{o}}$ 639466). J.\ Peters is supported by research grants from VILLUM FONDEN (18968) and the Carlsberg Foundation.

The authors are grateful to Bernhard Sch{\"o}lkopf and Robin Evans for stimulating discussions, and to Noud de
Kroon, Tineke Blom
and Alexander Ly for providing helpful comments on earlier
drafts. We thank two anonymous reviewers and the associate editor for helpful comments.



\appendix

\part{Supplementary Material} 

This Supplementary Material contains a summary of the basic terminology and results for causal graphical models (Appendix~\ref{app:AppendixCausalGraphicalModels}), additional (unique) solvability properties (Appendix~\ref{app:AppendixSolvabilityResults}), some results for linear SCMs (Appendix~\ref{app:AppendixLinSCMs}), other examples (Appendix~\ref{app:AppendixExamples}), the proofs of all the
theoretical results (Appendix~\ref{app:AppendixProofs}) and the measurable selection theorems (Appendix~\ref{app:AppendixMST}) that are used in several proofs.

\section{Causal graphical models}
\label{app:AppendixCausalGraphicalModels}

In this appendix, we provide a summary of the basic terminology and results for causal graphical models. In Appendix~\ref{app:AppendixDirectedMixedGraphs} we provide the terminology for directed (mixed) graphs. In Appendix~\ref{app:MarkovProperty} we give an introduction and an intuitive derivation of Markov properties for SCMs with cycles. In Appendix~\ref{app:ModularSCMs} we provide a definition of modular SCMs and show how they relate to SCMs. In Appendix~\ref{app:AppendixOverviewCausalGraphicalModels} we provide an overview of the causal graphical models related to SCMs. The proofs of the theoretical results in this appendix are given in Appendix~\ref{app:AppendixProofs}.

\subsection{Directed (mixed) graphs}
\label{app:AppendixDirectedMixedGraphs}

In this subsection, we introduce the terminology for
directed (mixed) graphs, where we do allow for cycles
\citep{Lau96,Ric03,Pea09,FM17}.

\begin{definition}[Directed (mixed) graph]
\label{def:DirectedMixedGraph}
\leavevmode
\begin{enumerate}
\item 
A \emph{directed graph} is a pair $\C{G} = (\C{V},\C{E})$, where $\C{V}$ is a set of nodes and 
$\C{E}$ is a set of directed edges, which is a subset $\C{E} \subseteq \C{V} \times \C{V}$ of ordered 
pairs of nodes. Each element $(i,j) \in \C{E}$ can be represented by the directed edge $i \to j$
or equivalently $j \ot i$. In particular, $(i,i) \in \C{E}$ represents a \emph{self-cycle} $i\to i$. 
\item 
A \emph{directed mixed graph} is a triple $\C{G} = (\C{V},\C{E},\C{B})$, where the pair
$(\C{V},\C{E})$ forms a directed graph and $\C{B}$ is a set of bidirected edges,
which is a subset
$\C{B}\subseteq\{ \{i,j\} : i,j\in \C{V}, i\neq j \}$ of unordered (distinct)
pairs of nodes. Each element
$\{i,j\}\in\C{B}$ can be represented by the bidirected edge $i\leftrightarrow j$ or equivalently $j \leftrightarrow i$. Note that a directed graph can be considered as a directed mixed graph without bidirected edges. 
\item
Let $\C{G}=(\C{V},\C{E},\C{B})$ be a directed mixed graph. A directed mixed graph
$\tilde{\C{G}}=(\tilde{\C{V}},\tilde{\C{E}},\tilde{\C{B}})$ is a \emph{subgraph of $\C{G}$} if $\tilde{\C{V}}\subseteq\C{V}$, $\tilde{\C{E}}\subseteq\C{E}$ and
$\tilde{\C{B}}\subseteq\C{B}$, in which case we write $\tilde{\C{G}} \subseteq \C{G}$. For a subset $\C{W}\subseteq\C{V}$, we define the \emph{induced
subgraph of $\C{G}$ on $\C{W}$} by
$\C{G}_{\C{W}} := (\C{W},\tilde{\C{E}},\tilde{\C{B}})$, where $\tilde{\C{E}}$ and $\tilde{\C{B}}$ are
the set of directed and bidirected edges in $\C{E}$ and $\C{B}$, respectively, that lie in
$\C{W}\times\C{W}$ and $\{ \{i,j\} : i,j\in\C{W}, i\neq j \}$, respectively.

\item
\sloppy 
A \emph{walk} between $i, j \in \C{V}$ in a directed mixed graph $\C{G}$
is a tuple $(i_0, \epsilon_1, i_1, \epsilon_2, i_2, \dots, \epsilon_n, i_n)$ of alternating nodes
and edges in $\C{G}$ for some $n \ge 0$, where all $i_0, \dots, i_n \in \C{V}$, all $\epsilon_1, \dots, \epsilon_n \in \C{E} \cup \C{B}$ such that $\epsilon_k \in \{i_{k-1}\to i_k, i_{k-1}\leftarrow i_k, i_{k-1}\leftrightarrow i_k \}$ for all $k = 1, \dots, n$, and it starts with node $i_0 = i$ and ends with node $i_n = j$. Note that $n=0$ corresponds with a trivial walk consisting of a single node. If all nodes $i_0,\dots,i_n$ 
are distinct, it is called a \emph{path}. A walk (path) of the form $i\to \dots\to j$, that is, $\epsilon_k$ is $i_{k-1}\to i_k$ for all $k=1,2,\dots,n$, is called a \emph{directed walk
(path)} from $i$ to $j$. 
\item
A \emph{cycle} through $i \in \C{V}$ in a directed mixed graph $\C{G}$ is a directed path from $i$ to some node $j$ extended with the edge $j \to i \in \C{E}$. 
In particular, a self-cycle $i \to i \in \C{E}$ is a cycle. Note that a path cannot contain any cycles. A directed graph and a directed mixed graph are said to be \emph{acyclic} if they contain no cycles, and are then referred to as a \emph{directed acyclic graph (DAG)} and an 
\emph{acyclic directed mixed graph (ADMG)}, respectively.
\item
For a directed mixed graph $\C{G}$ and a node $i \in \C{V}$ we define the set of 
\emph{parents} of $i$ by 
$\pa_{\C{G}}(i) := \{j \in \C{V} : j \to i \in \C{E}\}$, the set of children of $i$ by 
$\ch_{\C{G}}(i) := \{j \in \C{V} : i \to j \in \C{E}\}$, the set of \emph{ancestors} of $i$ by 
$$
  \an_{\C{G}}(i) := \{ j \in \C{V}: 
  \text{there is a directed path from $j$ to $i$ in $\C{G}$} \}
$$ 
and the set of \emph{descendants} of $i$ by 
$$
  \de_{\C{G}}(i) := \{ j \in \C{V} : 
  \text{there is a directed path from $i$ to $j$ in $\C{G}$} \} \,.
$$
Note that we have $\{i\}\cup\pa_{\C{G}}(i)\subseteq\an_{\C{G}}(i)$ and 
$\{i\}\cup\ch_{\C{G}}(i)\subseteq\de_{\C{G}}(i)$. We can apply all these definitions to subsets 
$\C{U} \subseteq \C{V}$ by taking unions, for example $\pa_{\C{G}}(\C{U}) := \cup_{i \in \C{U}} \pa_{\C{G}}(i)$. A subset $\C{A}\subseteq\C{V}$ is called an \emph{ancestral subset} in $\C{G}$ if $\C{A}=\an_{\C{G}}(\C{A})$, that is, $\C{A}$ is closed under taking ancestors of $\C{A}$ in $\C{G}$. 
\item
Let $\C{G}=(\C{V},\C{E},\C{B})$ be a directed mixed graph. We call $\C{G}$ \emph{strongly connected} if for every pair of distinct nodes $i,j\in\C{V}$, the graph contains a cycle that passes through both $i$ and $j$. The \emph{strongly connected component of $i\in\C{V}$}, denoted by
$\scc_{\C{G}}(i)$, is the
maximal subset $\C{S}\subseteq\C{V}$ such that $i\in\C{S}$ and the induced subgraph $\C{G}_{\C{S}}$ is 
strongly connected. Equivalently, $\scc_{\C{G}}(i)=\an_{\C{G}}(i)\cap\de_{\C{G}}(i)$.
\item A \emph{loop} in a directed mixed graph $\C{G}=(\C{V},\C{E},\C{B})$ is a subset $\C{O}\subseteq\C{V}$ that is strongly connected in the induced subgraph $\C{G}_{\C{O}}$ of $\C{G}$ on $\C{O}$.
\item
For a directed graph $\C{G}=(\C{V},\C{E})$, we define the \emph{graph of strongly connected components of $\C{G}$} as the directed graph $\C{G}^{\scc}:=(\C{V}^{\scc},\C{E}^{\scc})$, where $\C{V}^{\scc}$ are the strongly connected components of $\C{G}$, that is, $\C{V}^{\scc}$ are the equivalence classes in $\C{V}/{\sim}$ with the equivalence relation $i\sim j$ if and only if $i\in\scc_{\C{G}}(j)$, and $\C{E}^{\scc} = (\C{E}\setminus\{i\to i : i\in \C{V}\})/{\sim}$ with the equivalence relation $(i\to j) \sim (i'\to j')$ if and only if $i \sim i'$ and $j \sim j'$.
\end{enumerate}
\end{definition}
We omit the subscript $\C{G}$ whenever it is clear which directed (mixed) graph $\C{G}$ we are referring to.

\begin{lemma}[DAG of strongly connected components]
\label{lemm:GraphOfStronglyConnectedComponentsDAG}
Let $\C{G}=(\C{V},\C{E})$ be a directed graph. Then $\C{G}^{\scc}$, the graph of strongly connected 
components of $\C{G}$, is a DAG.
\end{lemma}

\subsection{Markov properties}
\label{app:MarkovProperty}

In this subsection, we give a short overview of Markov properties for
SCMs with cycles. 
We will make use of the Markov properties that were recently developed by Forr{\'e} and Mooij~\cite{FM17} for HEDGes, a graphical representation that is similar to the augmented graph of SCMs. We briefly summarize some of their main results and apply them to the class of SCMs. We also provide a shorter 
and more intuitive derivation so that this subsection can act as an entry point
for the reader into the more extensive discussion of Markov 
properties provided in~\cite{FM17}.

Markov properties associate a set of conditional independence relations to a
graph. The directed global Markov property for directed acyclic graphs, also
known as the $d$-separation criterion \citep{Pea85}, is one of the most widely used.
It directly extends to a similar property for acyclic directed mixed graphs (ADMGs) \citep{Ric03}. It does not hold in general for cyclic SCMs, however, as
was already observed earlier \citep{Spi94, Spi95}.
Under some conditions (roughly speaking, linearity or discrete variables) 
the directed global Markov property can be shown to hold also in the presence of cycles \citep{FM17}.

Inspired by work of Spirtes~\cite{Spi94}, Forr{\'e} and Mooij~\cite{FM17} recognized that in the general
cyclic case a different extension of $d$-separation, termed
$\sigma$-separation, is needed, leading to the general directed global
Markov property. One key result in \citep{FM17}
implies that under the assumption of unique solvability w.r.t.\ each strongly connected 
component of its graph, the observational distribution of an SCM satisfies the 
general directed global Markov property w.r.t.\ its graph. The solvability 
assumptions are in general not preserved under interventions. Under the stronger 
assumption of simplicity, however, they are, and one obtains the corollary that
also all interventional and counterfactual distributions of a simple SCM satisfy 
the general directed global Markov property w.r.t.\ to their corresponding graphs. 

For a more extensive study of different Markov properties that can be associated to SCMs we refer the reader to \citep{FM17}. 

\subsubsection{The directed global Markov property}


Conditional independencies in the observational distribution of an acyclic SCM can be read off from its graph by using the graphical criterion called
$d$-separation \citep{Pea09}. The directed global Markov
property associates a conditional independence relation in the observational
distribution of the SCM to each $d$-separation entailed by the graph. 
Here, we use a formulation
of $d$-separation that generalizes $d$-separation
for DAGs \citep{Pea85} and $m$-separation for ADMGs \citep{Ric03} and
mDAGs \citep{Eva16}.

\begin{definition}[Collider]
\label{def:Colliders}
Let $\pi = (i_0,\epsilon_1,i_1,\epsilon_2,i_2,\dots,\epsilon_n,i_n)$ be a walk (path) in a directed
mixed graph $\C{G}=(\C{V},\C{E},\C{B})$.
A node $i_k$ on $\pi$ is called a \emph{collider on $\pi$} if it is a non-endpoint node
($1 \le k < n$) and the two edges $\epsilon_k,\epsilon_{k+1}$ meet head-to-head on $i_k$ (i.e., if 
the subwalk $(i_{k-1}, \epsilon_k, i_k, \epsilon_{k+1}, i_{k+1})$ is of the form
$i_{k-1} \to i_k \ot i_{k+1}$, $i_{k-1} \oto i_k \ot i_{k+1}$, 
$i_{k-1} \to i_k \oto i_{k+1}$ or $i_{k-1} \oto i_k \oto i_{k+1}$).
The node $i_k$ is called a \emph{non-collider on $\pi$} otherwise,
that is, if it is an endpoint node ($k=0$ or $k=n$) or if
the subwalk $(i_{k-1}, \epsilon_k, i_k, \epsilon_{k+1}, i_{k+1})$ is of the form
$i_{k-1} \to i_k \to i_{k+1}$, $i_{k-1} \ot i_k \ot i_{k+1}$,
$i_{k-1} \ot i_k \to i_{k+1}$, $i_{k-1} \oto i_k \to i_{k+1}$ or $i_{k-1} \ot i_k \oto i_{k+1}$.
\end{definition}
Note in particular that the end points of a walk are non-colliders on the walk.
\begin{definition}[$d$-separation]
\label{def:DSeparation}
Let $\C{G}=(\C{V},\C{E},\C{B})$ be a directed mixed graph and let $C\subseteq\C{V}$ be a subset of nodes. 
A walk (path) $\pi = (i_0, \epsilon_1, i_1, \dots, i_n)$ in $\C{G}$ is said to be \emph{$C$-$d$-blocked or $d$-blocked by $C$} if
\begin{enumerate}
  \item it contains a collider $i_k \notin \an_{\C{G}}(C)$, or
  \item it contains a non-collider $i_k \in C$.
\end{enumerate}
The walk (path) $\pi$ is said to be \emph{$C$-$d$-open} if it is not $d$-blocked by $C$.
For two subsets of nodes $A, B \subseteq \C{V}$, we say that \emph{$A$ is $d$-separated from $B$ given $C$ in $\C{G}$} 
if all paths between any node in $A$ and any node in $B$ are $d$-blocked by $C$, and write
$$
A \mathrel{\mathop{\sigmablocked}^{d}_{\C{G}}} B \given C \,.
$$
\end{definition}
The next lemma is a straightforward generalization of Lemma~3.3 in \citep{Gei90} to the cyclic
setting. It implies that it suffices to formulate $d$-separation in terms of paths rather than walks. 
\begin{lemma}
\label{lemm:DWalksPaths}
Let $\C{G}=(\C{V},\C{E},\C{B})$ be a directed mixed graph, $C\subseteq\C{V}$
and $i,j\in\C{V}$. There exists a $C$-$d$-open walk between $i$ and $j$ in
$\C{G}$ if and only if there exists a $C$-$d$-open path between $i$ and $j$ in $\C{G}$.
\end{lemma}

\begin{definition}[Directed global Markov property]
\label{def:Markov_property}
Let $\C{G}=(\C{V},\C{E},\C{B})$ be a directed mixed graph and $\Prb_{\C{V}}$ a probability distribution 
on $\BC{X}_{\C{V}}=\prod_{i\in\C{V}}\C{X}_i$, where each $\C{X}_i$ is a standard probability space. The probability distribution $\Prb_{\C{V}}$ satisfies 
the \emph{directed global Markov property} relative to $\C{G}$ if for all subsets 
$A,B,C\subseteq\C{V}$ we have
$$
  A \mathrel{\mathop{\sigmablocked}^{d}_{\C{G}}} B \given C \quad \Longrightarrow \quad
  \B{X}_A \mathrel{\mathop{\indep}_{\Prb_{\C{V}}}} \B{X}_B \given \B{X}_C \,,
$$
that is, $(X_i)_{i\in A}$ and $(X_i)_{i\in B}$ are conditionally
independent given $(X_i)_{i\in C}$ under $\Prb_{\C{V}}$, 
where we take the canonical projections $X_i:\BC{X}_{\C{V}}\to\C{X}_i$ as random variables.
\end{definition}

From the results in \citep{FM17} it directly follows that
for the observational distribution of an SCM, the directed global Markov property w.r.t.\ the graph of the SCM (also known as the $d$-separation criterion), holds under one of the following assumptions.
\begin{theorem}[Directed global Markov property for SCMs~\citep{FM17}]
\label{thm:dgMarkovPropertySCMThreeSpecialCases}
Let $\C{M}$ be a uniquely solvable SCM that satisfies at least one of the following three conditions:
\begin{enumerate}
  \item 
    $\C{M}$ is acyclic;
  \item all endogenous spaces $\C{X}_i$ are discrete and $\C{M}$ is ancestrally
    uniquely solvable;
  \item $\C{M}$ is linear (see Definition~\ref{def:LinearSCM}), 
    each of its causal mechanisms $\{f_i\}_{i\in\C{I}}$ has a nontrivial dependence on at least one exogenous variable, and
    $\Prb_{\BC{E}}$ has a density w.r.t.\ the Lebesgue measure on $\RN^{\C{J}}$.
\end{enumerate}
Then its observational distribution $\Prb^{\B{X}}$ exists, is unique and satisfies the directed global Markov property relative to $\C{G}(\C{M})$ (see Definition~\ref{def:Markov_property}).
\end{theorem}
The acyclic case is well known and was first shown in the context of linear-Gaussian structural
equation models \citep{SRM+98,Kos99}.
The discrete case fixes the erroneous theorem by Pearl and Dechter~\cite{PD96}, for which a
counterexample was found by Neal~\cite{Nea00}, by adding the ancestral unique
solvability condition, and extends it to allow for bidirected edges in the graph. 
The linear case is an extension of existing results for the linear-Gaussian setting 
without bidirected edges \citep{Spi94, Spi95, Kos96} to a linear (possibly
non-Gaussian) setting with bidirected edges in the graph. 

The following counterexample of
an SCM for which the directed global Markov property does not hold was already given in~\citep{Spi94,Spi95}.
\begin{example}[Directed global Markov property does not hold for cyclic SCM]
  \label{ex:SpirtesExampleAppendix}
  \begin{figure}
  \adjustbox{scale=0.8,center}{%
  \begin{tikzpicture}
    \begin{scope}
    \node[var] (X1) at (0,0) {$X_1$};
    \node[var] (X2) at (1.5,0) {$X_2$};
    \node[var] (X3) at (0,-1.25) {$X_3$};
    \node[var] (X4) at (1.5,-1.25) {$X_4$};
    \draw[arr] (X1) -- (X3);
    \draw[arr] (X2) -- (X4);
    \draw[arr] (X3) to [bend right=20] (X4);
    \draw[arr] (X4) to [bend right=20] (X3);
    \end{scope}
    \begin{scope}[shift={(4,0)}]
    \node[var] (X1) at (0,0) {$X_1$};
    \node[var] (X2) at (1.5,0) {$X_2$};
    \node[var] (X3) at (0,-1.25) {$X_3$};
    \node[var] (X4) at (1.5,-1.25) {$X_4$};
    \draw[arr] (X1) -- (X3);
    \draw[arr] (X2) -- (X4);
    \draw[arr] (X2) -- (X3);
    \draw[arr] (X1) -- (X4);
    \draw[biarr] (X4) to [bend right=20] (X3);
    \end{scope}
\end{tikzpicture}}
  \caption{The graphs of the observationally equivalent SCMs $\C{M}$ (left) and
  $\tilde{\C{M}}$ (right) of Example~\ref{ex:SpirtesExampleAppendix} and \ref{ex:AcyclificationExampleAppendix}.}
  \label{fig:SpirtesExampleAppendix}
\end{figure}
Consider the SCM $\C{M} = \langle \B{4}, \B{4}, \RN^4, \RN^4, \B{f},
\Prb_{\RN^4} \rangle$ with causal mechanism given by
$$
f_1(\B{x},\B{e}) = e_1 \,,\,\quad f_2(\B{x},\B{e})=e_2 \,,\,\quad f_3(\B{x},\B{e})=x_1 x_4
+ e_3 \,,\,\quad f_4(\B{x},\B{e})=x_2 x_3 + e_4 
$$
and $\Prb_{\RN^4}$ is the standard-normal distribution on $\RN^4$. The graph of $\C{M}$ is depicted in
Figure~\ref{fig:SpirtesExampleAppendix} on the left. The model is uniquely solvable (it is even simple). One can check
that for every solution $\B{X}$ of $\C{M}$, $X_1$ is not independent of $X_2$ given $\{X_3,X_4\}$. However, the
variables $X_1$ and $X_2$ are $d$-separated given $\{X_3,X_4\}$ in $\C{G}(\C{M})$. Hence the
global directed Markov property does not hold here. 
\end{example}

In constraint-based approaches to causal discovery, one usually assumes the converse of the directed global Markov property to hold~\citep{SGS00,Pea09}.
  \begin{definition}[$d$-Faithfulness]
\label{def:dFaithfulness}
Let $\C{G}=(\C{V},\C{E},\C{B})$ be a directed mixed graph and $\Prb_{\C{V}}$ a probability distribution 
on $\BC{X}_{\C{V}}=\prod_{i\in\C{V}}\C{X}_i$, where each $\C{X}_i$ is a standard probability space. The probability distribution $\Prb_{\C{V}}$ is \emph{$d$-faithful} to $\C{G}$ if for all subsets 
$A,B,C\subseteq\C{V}$ we have
$$
  A \mathrel{\mathop{\sigmablocked}^{d}_{\C{G}}} B \given C \quad \Longleftarrow \quad
  \B{X}_A \mathrel{\mathop{\indep}_{\Prb_{\C{V}}}} \B{X}_B \given \B{X}_C \,,
$$
where we take the canonical projections $X_i:\BC{X}_{\C{V}}\to\C{X}_i$ as random variables.
\end{definition}
In other words, the $d$-faithfulness assumption states that the graph explains, via $d$-separation, all the conditional independencies that are present in the observational distribution. Meek~\citep{Mee95} showed that for multinomial and linear-Gaussian DAG (i.e., acyclic and causally sufficient SCMs) models, $d$-faithfulness holds for all parameter values up to a measure zero set (in a natural parameterization). 
Up to our knowledge no such results have been shown in more general parametric or nonparametric settings (neither in the acyclic case, nor in the cyclic one).

\subsubsection{The general directed global Markov property}

In \citep{FM17} the general directed global Markov property is introduced, that is based on $\sigma$-separation, an extension of $d$-separation. This notion of $\sigma$-separation was derived from the notion of $d$-separation in the acyclification of the graph. The acyclification of a graph generalizes the idea of the collapsed graph for directed graphs, developed by Spirtes~\citep{Spi94}, to HEDGes. In particular, this notion can be applied to directed mixed graphs, and thus to the graphs of SCMs. The main idea of the acyclification is that under the condition that the SCM is uniquely solvable w.r.t.\ each strongly connected component, we can replace the causal mechanisms of these strongly connected components by their measurable solution functions, which results in an acyclic SCM. 
This acyclification preserves the solutions, and $d$-separation in the
acyclification can directly be translated into $\sigma$-separation in
the original graph. This then leads to the general directed
global Markov property. We will discuss this now in more detail.

\begin{example}[Construction of an observationally equivalent acyclic SCM]
\label{ex:AcyclificationExampleAppendix}
Consider the SCM $\C{M}$ of Example~\ref{ex:SpirtesExampleAppendix} which is uniquely solvable w.r.t.\ all its strongly connected components, i.e., the subsets $\{1\}$, $\{2\}$ and $\{3,4\}$. Replacing the causal mechanisms of these strongly connected components by their measurable solution functions gives the SCM $\tilde{\C{M}}$ that is the same as $\C{M}$ except that its causal mechanism $\tilde{\B{f}}$ is given by
$$
\tilde{f}_1(\B{x},\B{e}):=e_1 ,\, \quad 
\tilde{f}_2(\B{x},\B{e}):=e_2 ,\, \quad 
\tilde{f}_3(\B{x},\B{e}):=\tfrac{x_1 e_4 + e_3}{1-x_1 x_2} ,\, \quad
\tilde{f}_4(\B{x},\B{e}):=\tfrac{x_2 e_3 + e_4}{1-x_1 x_2} \,.
$$
By construction, $\C{M}$ and $\tilde{\C{M}}$ are observationally equivalent. Because $\tilde{\C{M}}$ is acyclic (see
Figure~\ref{fig:SpirtesExampleAppendix} on the right) we can apply the directed global Markov property 
to $\tilde{\C{M}}$. The fact that $X_1$ and $X_2$ are not $d$-separated given $\{X_3,X_4\}$ in 
$\C{G}(\tilde{\C{M}})$ is in line with $X_1$ being dependent of $X_2$ given $\{X_3,X_4\}$
for every solution $\B{X}$ of $\tilde{\C{M}}$ (and hence of $\C{M}$).
\end{example}

One of the key insights in \citep{FM17} is that this example can easily be generalized
as follows.
\begin{definition}[Acyclification of an SCM]
\label{def:Acyclification}
Let $\C{M}=\langle \C{I}, \C{J}, \BC{X}, \BC{E}, \B{f},
\Prb_{\BC{E}} \rangle$ be an SCM  that is uniquely solvable w.r.t.\ each strongly connected
component of $\C{G}(\C{M})$. For each $i\in\C{I}$, let $g_i$ be the $i^{\text{th}}$ component of a measurable solution function $\B{g}_{\scc(i)}:\BC{X}_{\pa(\scc(i))\setminus\scc(i)}\times\BC{E}_{\pa(\scc(i))}\to\BC{X}_{\scc(i)}$
of $\C{M}$ w.r.t.\ $\scc(i)$, where $\pa$ and $\scc$ denote the parents and strongly connected components according
to $\C{G}^a(\C{M})$, respectively. We call the SCM
$\C{M}^{\acy} := \langle \C{I}, \C{J}, \BC{X}, \BC{E}, \hat{\B{f}},
\Prb_{\BC{E}} \rangle$ with the \emph{acyclified causal mechanism}
$\hat{\B{f}}:\BC{X}\times\BC{E}\to\BC{X}$ given by
$$
\hat{f}_i(\B{x},\B{e}) =
  g_i(\B{x}_{\pa(\scc(i))\setminus\scc(i)},\B{e}_{\pa(\scc(i))})\,,\quad i\in\C{I} \,,
$$
an \emph{acyclification of $\C{M}$}. We denote by $\acy(\C{M})$ the equivalence class of the acyclifications of $\C{M}$.
\end{definition}
Note that $\acy(\C{M})$ is well-defined: 
all acyclifications of an SCM $\C{M}$ belong to the
same equivalence class of SCMs. 
\begin{proposition}
\label{prop:Acyclification}
Let $\C{M}$ be an SCM that is uniquely solvable w.r.t.\ each strongly connected
component of $\C{G}(\C{M})$. Then an acyclification $\C{M}^{\acy}$ of $\C{M}$ is acyclic and
observationally equivalent to $\C{M}$. 
\end{proposition}

We can also define a graphical acyclification for directed mixed graphs, which is
a special case of the operation defined in \citep{FM17} for HEDGes.
\begin{definition}[Acyclification of a directed mixed graph]
\label{def:GraphicalAcyclification}
Let $\C{G} = (\C{V},\C{E},\C{B})$ be a directed mixed graph. The \emph{acyclification
of $\C{G}$} maps $\C{G}$ to the \emph{acyclified graph} $\C{G}^{\acy} := (\C{V},\hat{\C{E}},\hat{\C{B}})$ with
directed edges $j \to i \in \hat{\C{E}}$ if and only if $j \in \pa_{\C{G}}(\scc_{\C{G}}(i))\setminus\scc_{\C{G}}(i)$
and bidirected edges $i \oto j \in \hat{\C{B}}$ if and only if there exist $i' \in \scc_{\C{G}}(i)$ and $j' \in \scc_{\C{G}}(j)$ with $i'=j'$ or $i' \oto j' \in \C{B}$.
\end{definition}
The following compatibility result is immediate from the definitions.
\begin{proposition}
\label{prop:CompatibilityAcyclification}
Let $\C{M}$ be an SCM that is uniquely solvable w.r.t.\ each strongly connected
component of $\C{G}(\C{M})$. Then $\C{G}^a(\acy(\C{M})) \subseteq \acy(\C{G}^a(\C{M}))$ and
$\C{G}(\acy(\C{M})) \subseteq \acy(\C{G}(\C{M}))$.
\end{proposition}

The following example illustrates that the graph of the acyclification of an SCM can be a strict subgraph of the acyclification of the graph of the SCM.
\begin{example}[Graph of the acyclification of the SCM is a strict subgraph of the acyclification of its graph]
\label{ex:AcyclificationSubgraph}
Consider the SCM $\C{M}=\langle \B{2}, \B{1}, \RN^2, \RN, \B{f}, \Prb_{\RN} \rangle$ with the causal mechanism defined by
$$
f_1(\B{x},e) = x_2 - e \,,\quad f_2(\B{x},e) = \tfrac{1}{2} x_1 + e
$$
and $\Prb_{\RN}$ the standard Gaussian measure on $\RN$. The SCM $\C{M}$ is uniquely solvable w.r.t.\ the (only) strongly connected component $\{1,2\}$. An acyclification of $\C{M}$ is the acyclified SCM $\C{M}^{\acy}$ with the acyclified causal mechanism $\hat{\B{f}}$ defined by
$$
\hat{f}_1(\B{x},e) = 0 \,,\quad \hat{f}_2(\B{x},e) =  e \,.
$$
The graph $\C{G}(\acy(\C{M}))$ is a strict subgraph of $\acy(\C{G}(\C{M}))$ as can be seen in Figure~\ref{fig:AcyclificationSubgraph}.
\begin{figure}
\adjustbox{scale=0.8,center}{%
\begin{tikzpicture}
  \begin{scope}
  \node[var] (X1) at (0,0) {$X_1$};
  \node[var] (X2) at (1.75,0) {$X_2$};
  \node at (0.75,-.75) {$\C{G}(\C{M})$};
  \draw[arr,bend left=15] (X1) to (X2);
  \draw[arr,bend left=15] (X2) to (X1);
  \draw[biarr] (X1) to [bend left=45] (X2);
  \end{scope}
  \begin{scope}[shift={(4,0)}]
  \node[var] (X1) at (0,0) {$X_1$};
  \node[var] (X2) at (1.75,0) {$X_2$};
  \node at (0.75,-.75) {$\C{G}(\acy(\C{M}))$};
  \end{scope}
  \begin{scope}[shift={(8,0)}]
  \node[var] (X1) at (0,0) {$X_1$};
  \node[var] (X2) at (1.75,0) {$X_2$};
  \draw[biarr] (X1) to [bend left=45] (X2);
  \node at (0.75,-.75) {$\acy(\C{G}(\C{M}))$};
  \end{scope}
\end{tikzpicture}}
\caption{The graphs of the original SCM $\C{M}$ (left), of the acyclified SCM (center), and of the acyclification of the graph of $\C{M}$ (right) corresponding to Example~\ref{ex:AcyclificationSubgraph}.}
\label{fig:AcyclificationSubgraph}
\end{figure}
\end{example}

Translating the notion of $d$-separation from the acyclified graph 
back to the original graph led to the notion of $\sigma$-separation.
\begin{definition}[$\sigma$-separation~\citep{FM17}]
\label{def:SigmaSeparation}
Let $\C{G}=(\C{V},\C{E},\C{B})$ be a directed mixed graph and let
$C\subseteq\C{V}$ be a subset of nodes. 
A walk (path) $\pi = (i_0, \epsilon_1, i_1, \dots, i_n)$ in $\C{G}$ is said to be \emph{$C$-$\sigma$-blocked or $\sigma$-blocked by $C$} if
\begin{enumerate}
  \item its first node $i_0 \in C$ or its last node $i_n \in C$, or
  \item it contains a collider $i_k \notin \an_{\C{G}}(C)$, or
  \item it contains a non-endpoint non-collider $i_k \in C$ that points towards a neighboring node on $\pi$
  that lies in a different strongly connected component of $\C{G}$, that is, such that $i_{k-1} \ot i_k$ in $\pi$
  and $i_{k-1} \notin \scc_{\C{G}}(i_k)$, or $i_{k} \to i_{k+1}$ in $\pi$ and
  $i_{k+1} \notin \scc_{\C{G}}(i_k)$.
\end{enumerate}
The walk (path) $\pi$ is said to be \emph{$C$-$\sigma$-open} if it is not $\sigma$-blocked by $C$.
For two subsets of nodes $A, B \subseteq \C{V}$, we say that \emph{$A$ is $\sigma$-separated from $B$ given $C$ in $\C{G}$} 
if all paths between any node in $A$ and any node in $B$ are $\sigma$-blocked by $C$, and write
$$
A \mathrel{\mathop{\sigmablocked}^{\sigma}_{\C{G}}} B \given C \,.
$$
\end{definition}
The only difference between $\sigma$-separation and $d$-separation is that $d$-separation does not have the
extra condition on the non-collider that it has to point to a node in a different
strongly connected component. It is therefore obvious that $\sigma$-separation reduces to
$d$-separation for acyclic graphs, since $\scc_{\C{G}}(i) = \{i\}$ for each $i\in\C{V}$ in that case.

Although for proofs it is often easier to make use of walks, it suffices to formulate $\sigma$-separation in
term of paths rather than walks because of the following result, which is analogous to a similar result
for $d$-separation (see Lemma~\ref{lemm:DWalksPaths}).
\begin{lemma}
\label{lemm:SigmaWalksPaths}
Let $\C{G}=(\C{V},\C{E},\C{B})$ be a directed mixed graph, $C\subseteq\C{V}$
and $i, j\in\C{V}$. There exists a $C$-$\sigma$-open walk between $i$ and
$j$ in $\C{G}$ if and only if there exists a $C$-$\sigma$-open path between $i$ and $j$ in $\C{G}$.
\end{lemma}

It is clear from the definitions that $\sigma$-separation implies $d$-separation. The other way around does not hold in general, as can be seen in the following example.
\begin{example}[$d$-separation does not imply $\sigma$-separation]
\label{ex:SpirtesExampleSigmaWeakerD}
Consider the directed graph $\C{G}$ as depicted in Figure~\ref{fig:SpirtesExampleAppendix} (left). Here 
$X_1$ is $d$-separated from $X_2$ given $\{X_3,X_4\}$, but $X_1$ is not $\sigma$-separated from 
$X_2$ given $\{X_3,X_4\}$.
\end{example}

The following result in \citep{FM17} relates $\sigma$-separation to $d$-separation.
\begin{proposition}
\label{prop:SigmaSeparationAsDSeparation}
Let $\C{G} = (\C{V},\C{E},\C{B})$ be a directed mixed graph. Then for $A,B,C \subseteq \C{V}$,
$$
  A \mathrel{\mathop{\sigmablocked}^{\sigma}_{\C{G}}} B \given C \iff
  A \mathrel{\mathop{\sigmablocked}^{d}_{\acy(\C{G})}} B \given C \,.
$$
\end{proposition}

By replacing in Definition~\ref{def:Markov_property} ``$d$-separation'' by ``$\sigma$-separation'', one obtains the formulation of
what Forr{\'e} and Mooij~\cite{FM17} termed the general directed global Markov property.
\begin{definition}[General directed global Markov property~\citep{FM17}]
\label{def:generalized_Markov_property}
Let $\C{G}=(\C{V},\C{E},\C{B})$ be a directed mixed graph and $\Prb_{\C{V}}$ a probability distribution 
on $\BC{X}_{\C{V}}=\prod_{i\in\C{V}}\C{X}_i$, where each $\C{X}_i$ is a standard probability space. The probability distribution $\Prb_{\C{V}}$ satisfies 
the \emph{general directed global Markov property} relative to $\C{G}$ if for all subsets 
$A,B,C\subseteq\C{V}$ we have
$$
  A \mathrel{\mathop{\sigmablocked}^{\sigma}_{\C{G}}} B \given C \quad \Longrightarrow \quad
  \B{X}_A \mathrel{\mathop{\indep}_{\Prb_{\C{V}}}} \B{X}_B \given \B{X}_C \,,
$$
that is, $(X_i)_{i\in A}$ and $(X_i)_{i\in B}$ are conditionally
independent given $(X_i)_{i\in C}$ under $\Prb_{\C{V}}$, 
where we take the canonical projections $X_i:\BC{X}_{\C{V}}\to\C{X}_i$ as random variables.
\end{definition}
The fact that $\sigma$-separation implies $d$-separation means that the directed global Markov
property implies the general directed global Markov property. In other words, the general
directed global Markov property is weaker than the directed global Markov property.
It is actually strictly weaker, as we saw in Example~\ref{ex:SpirtesExampleSigmaWeakerD}.

The following fundamental result, also known as the $\sigma$-separation criterion, follows directly from the theory in \citep{FM17}.
\begin{theorem}[General directed global Markov property for SCMs]
\label{thm:gdgMarkovPropertySCM}
Let $\C{M}$ be an SCM that is uniquely solvable w.r.t.\ each strongly connected component of
$\C{G}(\C{M})$. Then its observational distribution $\Prb^{\B{X}}$ exists, is
unique and it satisfies the general directed global Markov property relative to
$\C{G}(\C{M})$.\footnote{Since \citep{FM17} also provides results under the weaker condition that an SCM is solvable
(not necessarily uniquely) w.r.t.\ each strongly connected component of $\C{G}(\C{M})$, one might
believe that Theorem~\ref{thm:gdgMarkovPropertySCM} could be generalized to stating that in that case,
any of its observational distributions satisfies the general directed global
Markov property. However, that is not true:
consider for example the SCM $\C{M} = \langle \B{2}, \emptyset, \RN^2, \B{1}, \B{f},
\Prb_{\B{1}} \rangle$ with $f_1(\B{x})=x_1$ and $f_2(\B{x})=x_2$. Then $\C{M}$ is solvable
w.r.t.\ each of its strongly connected components $\{1\}$ and $\{2\}$. The solution with $X_1=X_2$ shows a dependence between $X_1$ and $X_2$ and thus $X_1 \indep X_2$ does not hold. In general, all strongly connected components that admit multiple solutions may be dependent on any other variable(s) in the model.
}
\end{theorem}
The proof is based on the reasoning that, for $A, B, C \subseteq \C{I}$, if $A$
is $\sigma$-separated from $B$ given $C$ in $\C{G}(\C{M})$, then $A$
is $d$-separated from $B$ by $C$ in $\acy(\C{G}(\C{M}))$ and hence in $\C{G}(\acy(\C{M}))$, and since
$\acy(\C{M})$ is acyclic and observationally equivalent to $\C{M}$, it follows from the directed global
Markov property applied to $\acy(\C{M})$ that 
$\B{X}_A \mathrel{\mathop{\indep}_{\Prb^{\B{X}}}} \B{X}_B \given \B{X}_C$ for every solution $\B{X}$ of $\C{M}$. 
Note that the ancestral unique solvability condition for the discrete case is
strictly weaker than the condition of unique solvability w.r.t.\ each strongly
connected component in Theorem~\ref{thm:gdgMarkovPropertySCM}. For the linear
case, the condition of unique solvability is equivalent to the condition of
unique solvability w.r.t.\ each strongly connected component (see Proposition~\ref{prop:LinearEquivalentUniqueSolvability}).

The results in Theorems~\ref{thm:dgMarkovPropertySCMThreeSpecialCases} and \ref{thm:gdgMarkovPropertySCM} are not preserved under perfect intervention, 
because intervening on a strongly connected component could split 
it into several strongly connected components with different solvability properties. As the class of simple SCMs is preserved under perfect intervention and the twin operation (Proposition~\ref{prop:SimplenessClosedUnderResults}), 
we
obtain the following corollary.
\begin{corollary}[Global Markov properties for simple SCMs]
\label{coro:gdgMarkovPropertyInterventionalSCM}
Let $\C{M}$ be a simple SCM. Then the:
\begin{enumerate}
  \item observational distribution,
  \item interventional distribution after perfect intervention on
    $I\subset\C{I}$, 
  \item counterfactual distribution after perfect intervention on
   $\tilde{I}\subseteq\C{I}\cup\C{I}'$,
 \end{enumerate}
all exist, are unique and satisfy the 
general directed global Markov property relative to $\C{G}(\C{M})$,
$\intervene(I)(\C{G}(\C{M}))$ and $\intervene(\tilde{I})(\twin(\C{G}(\C{M})))$, respectively. Moreover, if $\C{M}$ satisfies at least one of the three
conditions (1), (2), (3) of
Theorem~\ref{thm:dgMarkovPropertySCMThreeSpecialCases}, then they also satisfies the directed global Markov property relative to $\C{G}(\C{M})$,
$\intervene(I)(\C{G}(\C{M}))$ and $\intervene(\tilde{I})(\twin(\C{G}(\C{M})))$, respectively.
\end{corollary}

Similar to $d$-faithfulness, $\sigma$-faithfulness\footnote{In~\citep{Ric96c} it is called ``collapsed graph faithfulness''.} is defined as follows.
\begin{definition}[$\sigma$-Faithfulness]
\label{def:sigmaFaithfulness}
Let $\C{G}=(\C{V},\C{E},\C{B})$ be a directed mixed graph and $\Prb_{\C{V}}$ a probability distribution 
on $\BC{X}_{\C{V}}=\prod_{i\in\C{V}}\C{X}_i$, where each $\C{X}_i$ is a standard probability space. The probability distribution $\Prb_{\C{V}}$ is \emph{$\sigma$-faithful} to $\C{G}$ if for all subsets 
$A,B,C\subseteq\C{V}$ we have
$$
  A \mathrel{\mathop{\sigmablocked}^{\sigma}_{\C{G}}} B \given C \quad \Longleftarrow \quad
  \B{X}_A \mathrel{\mathop{\indep}_{\Prb_{\C{V}}}} \B{X}_B \given \B{X}_C \,,
$$
where we take the canonical projections $X_i:\BC{X}_{\C{V}}\to\C{X}_i$ as random variables.
\end{definition}
In other words, the graph explains, via $\sigma$-separation, all the conditional independencies that are present in the observational distribution. 
Although it has been conjectured~\citep{Spi95} that under certain conditions $\sigma$-faithfulness should hold, formulating and proving such completeness results is an open problem to the best of our knowledge.

\subsection{Modular SCMs}
\label{app:ModularSCMs}

In this subsection, we relate the class of (simple) SCMs to that of modular SCMs. Modular SCMs introduced by Forr{\'e} and Mooij~\citep{FM17} are causal graphical models on which marginalizations and interventions are defined and they satisfy the general directed global Markov property. 
For a comprehensive account on modular SCMs we refer the reader to \citep{FM17}.



\subsubsection{Definition of a modular SCM}

In contrast to an SCM from which a graph can be derived, a modular SCM is defined in terms of a graphical object, which Forr{\'e} and Mooij~\citep{FM17} call a directed graph with hyperedges (HEDG). The hyperedges of a HEDG are described in terms of a simplicial complex.
\begin{definition}[Simplicial complex]
Let $\C{V}$ be a finite set. A \emph{simplicial complex} $\C{H}$ over $\C{V}$ is a set of subsets of $\C{V}$ such that
\begin{enumerate}
\item all single element sets $\{v\}$ are in $\C{H}$ for $v\in\C{V}$, and
\item if $\C{F}\in\C{H}$, then also all subsets $\tilde{\C{F}}\subseteq F$ are elements of $\C{H}$.
\end{enumerate}
\end{definition}

\begin{definition}[Directed graph with hyperedges (HEDGes)~\citep{FM17}]
A \emph{directed graph with hyperedges (HEDG)} is a triple $\C{G} = (\C{V}, \C{E}, \C{H})$, where $(\C{V},\C{E})$ is a directed graph and $\C{H}$ a simplicial complex over the set of nodes $\C{V}$. The elements $\C{F}$ of $\C{H}$ are called \emph{hyperedges} of $\C{G}$. The elements $\C{F}$ of $\C{H}$ that are inclusion-maximal elements of $\C{H}$ are called \emph{maximal hyperedges} and are denoted by $\hat{\C{H}}$.
\end{definition}
A HEDG $\C{G} = (\C{V},\C{E}, \C{H})$ can be represented as a directed graph $\bar{\C{G}}:=(\C{V},\C{E})$ consisting of nodes $\C{V}$ and directed edges $\C{E}$, with additional maximal hyperedges $\C{F}\in\hat{\C{H}}$ with $|\C{F}| \geq 2$ (i.e., not corresponding to single element sets $\{v\}\in\hat{\C{H}}$), that point to their target nodes $v\in\C{F}$. For a HEDG $\C{G}$, we define $\pa_{\C{G}}$, $\ch_{\C{G}}$, etc., in terms of the underlying directed graph $\bar{\C{G}}$, that is, $\pa_{\bar{\C{G}}}$, $\ch_{\bar{\C{G}}}$, etc., respectively. 

A \emph{loop} in a HEDG $\C{G}=(\C{V}, \C{E}, \C{H})$ is a subset $\C{O}\subseteq\C{V}$ that is a loop in the underlying directed graph $\bar{\C{G}}=(\C{V},\C{E})$. In other words, a loop of $\C{G}$ is a set of nodes $\C{O}\subseteq\C{V}$ such that for every two nodes $v,w\in\C{O}$ there are directed paths $v\to\cdots\to w$ and $w \to\cdots\to v$ in $\C{G}$ for which all the intermediate nodes lie in $\C{O}$ (if any exist). In particular, a loop may consist of a single element $\{v\}$ for $v\in\C{V}$. The set of loops in $\C{G}$ is denoted by $\C{L}(\C{G})$.

In order to define a modular SCM one needs the notion of a compatible system of solution functions, which assigns to each loop a separate solution function such that all these solution functions are ``compatible'' with each other.
\begin{definition}[Compatible system of solution functions\protect\footnote{We deviate from the terminology in \citep{FM17} where this is called a ``compatible system of structural equations''.}]
Let $\C{G} = (\C{V},\C{E},\C{H})$ be a HEDG. For every $v\in\C{V}$ and maximal hyperedge $\C{F}$ in $\hat{\C{H}}$, let $\C{X}_v$ and $\C{E}_{\C{F}}$ be standard measurable spaces. For a subset $\C{O}\subseteq\C{V}$ we define\footnote{We use the ``hat'' notation $\widehat{\BC{E}}_{\C{O}}$ to distinguish it from the ordinary subscript convention that $\BC{E}_{\C{O}}=\prod_{\C{F}\in\C{O}}\C{E}_{\C{F}}$ for some subset $\C{O}\subseteq\hat{\C{H}}$.}
$$
  \BC{X}_{\C{O}} := \prod_{v\in\C{O}}\C{X}_v \quad\text{and}\quad 
  \widehat{\BC{E}}_{\C{O}} := \prod_{\substack{ \C{F}\in\hat{\C{H}} \\ \C{F}\cap\C{O} \neq\emptyset}}\C{E}_{\C{F}} \,.
$$
Consider a family of measurable mappings $(\B{g}_{\C{O}})_{\C{O}\in\C{L}(\C{G})}$ indexed by $\C{L}(\C{G})$ which are of the form
$$
\B{g}_{\C{O}} : \BC{X}_{\pa_{\C{G}}(\C{O})\setminus\C{O}}\times\widehat{\BC{E}}_{\C{O}} \to \BC{X}_{\C{O}} \,.
$$
We call the family of measurable mappings $(\B{g}_{\C{O}})_{\C{O}\in\C{L}(\C{G})}$ a \emph{compatible system of solution functions}, if for all $\C{O},\tilde{\C{O}}\in\C{L}(\C{G})$ with $\tilde{\C{O}}\subseteq\C{O}$ and for all $\widehat{\B{e}}_{\C{O}}\in\widehat{\BC{E}}_{\C{O}}$ and $\B{x}_{\pa_{\C{G}}(\C{O})\cup\C{O}}\in\BC{X}_{\pa_{\C{G}}(\C{O})\cup\C{O}}$ we have
$$
\B{x}_{\C{O}} = \B{g}_{\C{O}}(\B{x}_{\pa_{\C{G}}(\C{O})\setminus\C{O}},\widehat{\B{e}}_{\C{O}}) 
  \quad\implies\quad
  \B{x}_{\tilde{\C{O}}} = \B{g}_{\tilde{\C{O}}}(\B{x}_{\pa_{\C{G}}(\tilde{\C{O}})\setminus\tilde{\C{O}}},\widehat{\B{e}}_{\tilde{\C{O}}}) \,.
$$
\end{definition}

This structure of a compatible system of solution functions is at the heart of the defnition of a modular SCM.
\begin{definition}[Modular structural causal model (mSCM)~\citep{FM17}]
A \emph{modular structural causal model (mSCM)} is a tuple
$$
  \widehat{\C{M}} := \langle \C{G}, \BC{X}, \BC{E}, (\B{g}_{\C{O}})_{\C{O}\in\C{L}(\C{G})}, \Prb_{\BC{E}} \rangle \,,
$$
where
\begin{enumerate}
\item $\C{G}=(\C{V},\C{E},\C{H})$ is a HEDG,
\item $\BC{X} = \prod_{v\in\C{V}} \C{X}_v$ is the product of standard measurable spaces $\C{X}_v$,
\item $\BC{E} = \prod_{\C{F}\in\hat{\C{H}}} \C{E}_{\C{F}}$ is the product of standard measurable spaces $\C{E}_{\C{F}}$,
\item $(\B{g}_{\C{O}})_{\C{O}\in\C{L}(\C{G})}$ is a compatible system of solution functions,
\item $\Prb_{\BC{E}}=\prod_{\C{F}\in\hat{\C{H}}}\Prb_{\C{E}_{\C{F}}}$ is a product measure, where $\Prb_{\C{E}_{\C{F}}}$ is a probability measure on $\C{E}_{\C{F}}$ for each $\C{F}\in\hat{\C{H}}$.
\end{enumerate}
\end{definition}

Let $\widehat{\C{M}} = \langle \C{G}, \BC{X}, \BC{E}, (\B{g}_{\C{O}})_{\C{O}\in\C{L}(\C{G})}, \Prb_{\BC{E}} \rangle$ be a modular SCM and $\C{O}_1,\dots,\C{O}_r\in\C{L}(\C{G})$ the strongly connected components of $\C{G}$ ordered according to a topological order of the DAG of strongly connected components of $\C{G}$. Then for any random variable $\B{E}:\Omega\to\BC{E}$ such that $\Prb^{\B{E}}=\Prb_{\BC{E}}$ one can inductively define the random variables $X_v:=(\B{g}_{\C{O}_i})_v(\B{X}_{\pa_{\C{G}}(\C{O}_i)\setminus\C{O}_i},\widehat{\B{E}}_{\C{O}_i})$ for all $v\in\C{O}_i$ for all $i\geq 1$, starting at $X_v := (\B{g}_{\C{O}_1})_v(\widehat{\B{E}}_{\C{O}_1})$ for all $v\in\C{O}_1$. Because $(\B{g}_{\C{O}})_{\C{O}\in\C{L}(\C{G})}$ is a compatible system of solution functions, we have for every $\C{O}\in\C{L}(\C{G})$
$$
\B{X}_{\C{O}}=\B{g}_{\C{O}}(\B{X}_{\pa_{\C{G}}(\C{O})\setminus\C{O}},\widehat{\B{E}}_{\C{O}}) \,.
$$
We call the random variable $\B{X}$ a \emph{solution} of the modular SCM $\widehat{\C{M}}$. Note that the solution $\B{X}$ depends on the choice of the random variable $\B{E}:\Omega\to\BC{E}$.

The causal semantics of modular SCMs can be defined in terms of perfect interventions, which is defined as follows.
\begin{definition}[Perfect intervention on an mSCM]
\label{def:intervenedModularSCM}
Consider a modular SCM $\widehat{\C{M}} = \langle \C{G}, \BC{X}, \BC{E}, (\B{g}_{\C{O}})_{\C{O}\in\C{L}(\C{G})}, \Prb_{\BC{E}} \rangle$, a subset $I \subseteq \C{V}$ of endogenous variables and a value $\B{\xi}_I \in \BC{X}_I$. The \emph{perfect intervention 
$\intervene(I, \B{\xi}_I)$} maps $\widehat{\C{M}}$ to the modular SCM 
$$
  \widehat{\C{M}}_{\intervene(I, \B{\xi}_I)} := \langle \C{G}^{\intervene}, \BC{X}, \BC{E}^{\intervene}, (\B{g}_{\C{O}}^{\intervene})_{\C{O}\in\C{L}(\C{G}^{\intervene})}, \Prb_{\BC{E}^{\intervene}} \rangle \,,
$$
where 
\begin{enumerate}
  \item $\C{G}^{\intervene} = (\C{V}, \C{E}^{\intervene}, \C{H}^{\intervene})$, where 
    $$
    \C{E}^{\intervene}=\C{E}\setminus\{v\to w \,:\, v\in\C{V}, w\in I\}
    $$
    $$
    \C{H}^{\intervene}=\{\C{F}\setminus I \,:\, \C{F}\in\C{H}\}\cup \{\{v\} \,:\, v\in I\} \,,
    $$
\item $\phi: \{\C{F}\in\hat{\C{H}} \,:\, \C{F}\setminus I \neq \emptyset\} \to \hat{\C{H}}^{\intervene}\setminus\{\{v\} \,:\, v\in I\}$ is a mapping such that $\phi(\C{F})\supseteq \C{F}\setminus I$ for all $\C{F}\in\hat{\C{H}}$ for which $\C{F}\setminus I \neq \emptyset$,
  \item $\BC{E}^{\intervene}=\prod_{\tilde{\C{F}}\in\hat{\C{H}}^{\intervene}}\C{E}^{\intervene}_{\tilde{\C{F}}}$, where
    $$
    \C{E}^{\intervene}_{\tilde{\C{F}}} = \begin{cases}
      \C{X}_v & \text{if $\tilde{\C{F}}=\{v\}$ for $v\in I$} \\
      \prod_{\C{F}=\phi^{-1}(\tilde{\C{F}})} \C{E}_{\C{F}} &\text{if }\tilde{\C{F}}\in\hat{\C{H}}^{\intervene}\setminus\{\{v\} \,:\, v\in I\} \,,
    \end{cases}
    $$
  \item for every $\C{O}\in\C{L}(\C{G}^{\intervene})$ 
    $$
    \B{g}^{\intervene}_{\C{O}}= \begin{cases}
      \mathbb{I}_{\{v\}} &\text{if $\C{O}=\{v\}$ for $v\in I$} \\
      \B{g}_{\C{O}} &\text{otherwise,}
      \end{cases}
    $$
    (note that if $\C{O}$ is a loop in $\C{G}^{\intervene}$, then it is a loop in $\C{G}$),
  \item $\Prb_{\BC{E}^{\intervene}}=\prod_{\tilde{\C{F}}\in\hat{\C{H}}^{\intervene}}\Prb_{\C{E}^{\intervene}_{\tilde{\C{F}}}}$, where
      $$
      \Prb_{\C{E}^{\intervene}_{\tilde{\C{F}}}} = \begin{cases}
        \delta_{\xi_v} & \text{if $\tilde{\C{F}}=\{v\}$ for $v\in I$} \\
        \prod_{\C{F}=\phi^{-1}(\tilde{\C{F}})} \Prb_{\C{E}_{\C{F}}} &\text{if }\tilde{\C{F}}\in\hat{\C{H}}^{\intervene}\setminus\{\{v\} \,:\, v\in I\} \,.
      \end{cases}
      $$
\end{enumerate}
\end{definition}
In contrast to SCMs, these perfect interventions on modular SCMs are directly defined on the underlying HEDG and depend on the choice of the mapping $\phi$.

\subsubsection{Relation between SCMs and modular SCMs}

The solutions of a modular SCM can be described by an SCM that is loop-wisely solvable.
\begin{definition}[Underlying SCM]
  Let $\widehat{\C{M}} = \langle \C{G}, \BC{X}, \BC{E}, (\B{g}_{\C{O}})_{\C{O}\in\C{L}(\C{G})}, \Prb_{\BC{E}} \rangle$ be a modular SCM. Then the mapping $\iota$ maps $\widehat{\C{M}}$ to the \emph{underlying SCM} $\tilde{\C{M}} := \langle \tilde{\C{I}}, \tilde{\C{J}}, \tilde{\BC{X}}, \tilde{\BC{E}}, \tilde{\B{f}}, \Prb_{\tilde{\BC{E}}} \rangle$, where
\begin{enumerate}
  \item $\tilde{\C{I}} = \C{V}$,
  \item $\tilde{\C{J}} = \hat{\C{H}}$,
  \item $\tilde{\BC{X}} = \BC{X}$,
  \item $\tilde{\BC{E}} = \BC{E}$,
  \item $\tilde{\B{f}}$ is given by $\tilde{f}_v=(\B{g}_{\{v\}})_v$ for all $v\in\C{V}$,
  \item $\Prb_{\tilde{\BC{E}}}=\Prb_{\BC{E}}$.
\end{enumerate}
\end{definition}
Every solution $\B{X}$ of a modular SCM $\widehat{\C{M}}$ is also a solution of the underlying SCM $\iota(\widehat{\C{M}})$.

Observe that for the modular SCM $\widehat{\C{M}}$ we have that the induced subgraph $\C{G}^a(\iota(\widehat{\C{M}}))_{\tilde{\C{I}}}$, of the augmented graph of the underlying SCM $\C{G}^a(\iota(\widehat{\C{M}}))$ on $\tilde{\C{I}}$, is a subgraph of the underlying HEDG $\C{G}$, that is, $\C{G}^a(\iota(\widehat{\C{M}}))_{\tilde{\C{I}}}\subseteq\C{G}$. This implies that, in general, the underlying HEDG $\C{G}$ of $\widehat{\C{M}}$ may have more loops than the loops in $\C{G}(\iota(\widehat{\C{M}}))$. For a subset $\C{O}\subseteq\tilde{\C{I}}$, we have for the exogenous parents of the underlying SCM $\iota(\widehat{\C{M}})$
$$
\pa(\C{O})\cap\tilde{\C{J}} \subseteq \{ \C{F}\in\tilde{\C{J}} \,:\, \C{F}\cap\C{O} \neq\emptyset \} \,,
$$
where $\pa(\C{O})$ denotes the set of parents of $\C{O}$ in $\C{G}^a(\iota(\widehat{\C{M}}))$. Hence, in general, not all the hyperedges $\C{F}\in\C{H}$ such that $|\C{F}|=2$ (i.e., bidirected edges) are in the set of bidirected edges $\C{B}$ of the graph of the underlying SCM $\C{G}(\iota(\widehat{\C{M}}))=(\C{V},\C{E},\C{B})$. We conclude that the graph of the underlying SCM is, in general, a sparser graph than the HEDG of the modular SCM.

Next, we show that the compatible system of solution functions of a modular SCM induces a compatible system of solution functions on the underlying SCM. For this we need the notion of loop-wise solvability for SCMs.
\begin{definition}[Loop-wise (unique) solvability for SCMs]
We call an SCM $\C{M}$
\begin{enumerate}
  \item \emph{loop-wisely solvable}, if $\C{M}$ is solvable w.r.t.\ every loop $\C{O}\in\C{L}(\C{G}(\C{M}))$, and
    \item \emph{loop-wisely uniquely solvable}, if $\C{M}$ is uniquely solvable w.r.t.\ every loop $\C{O}\in\C{L}(\C{G}(\C{M}))$.
  \end{enumerate}
\end{definition}

\begin{definition}[Compatible system of solution functions for SCMs]
For a loop-wisely solvable SCM $\C{M}$, we call a family of measurable solution functions $(\B{g}_{\C{O}})_{\C{O}\in\C{L}(\C{G}(\C{M}))}$, where $\B{g}_{\C{O}}$ is a measurable solution function of $\C{M}$ w.r.t.\ $\C{O}$, a \emph{compatible system of solution functions}, if for all $\C{O},\tilde{\C{O}}\in\C{L}(\C{G}(\C{M}))$ with $\tilde{\C{O}}\subseteq\C{O}$ and for $\Prb_{\BC{E}}$-almost every $\B{e}\in\BC{E}$ and for all $\B{x}\in\BC{X}$ we have
$$
  \B{x}_{\C{O}} = \B{g}_{\C{O}}(\B{x}_{\pa(\C{O})\setminus\C{O}},\B{e}_{\pa(\C{O})}) \quad\implies\quad
  \B{x}_{\tilde{\C{O}}} = \B{g}_{\tilde{\C{O}}}(\B{x}_{\pa(\tilde{\C{O}})\setminus\tilde{\C{O}}},\B{e}_{\pa(\tilde{\C{O}})})  \,.
$$
\end{definition}

The underlying SCM of a modular SCM always has a compatible system of solution functions, by construction.
\begin{proposition}
\label{prop:ExistenceOfCompatibleSystemOfMeasurableSolutionFunctions}
Let $\widehat{\C{M}} = \langle \C{G}, \BC{X}, \BC{E}, (\B{g}_{\C{O}})_{\C{O}\in\C{L}(\C{G})}, \Prb_{\BC{E}} \rangle$ be a modular SCM. Then the underlying SCM $\tilde{\C{M}}:=\iota(\widehat{\C{M}})$ is loop-wisely solvable. Moreover, it has a compatible system of solution functions $(\B{g}_{\C{O}})_{\C{O}\in\C{L}(\C{G}(\tilde{\C{M}}))}$, where $\B{g}_{\C{O}}$ is a measurable solution function of $\tilde{\C{M}}$ w.r.t.\ $\C{O}$.
\end{proposition}
This shows that a modular SCM can be seen as an SCM together with an additional structure of a compatible system of solution functions, and is, in particular, loop-wisely solvable.

Moreover, the class of simple SCMs corresponds exactly with those SCMs that are loop-wisely uniquely solvable.
\begin{lemma}
\label{lemm:LoopwiseUniqueSolvabilityAndSimple}
An SCM $\C{M}$ is simple if and only if it is loop-wisely uniquely solvable.
\end{lemma}

In particular, for simple SCMs, or loop-wisely uniquely solvable SCMs, there always exists a compatible system of solution functions.
\begin{proposition}
\label{prop:SimpleSCMsCompatibleSysOfSolFunctions}
Let $\C{M} = \langle \C{I}, \C{J}, \BC{X}, \BC{E}, \B{f}, \Prb_{\BC{E}} \rangle$ be a simple SCM. Then every family of measurable solution functions $(\B{g}_{\C{O}})_{\C{O}\in\C{L}(\C{G}(\C{M}))}$, where $\B{g}_{\C{O}}$ is a measurable solution function of $\C{M}$ w.r.t.\ $\C{O}$, is a compatible system of solution functions.
\end{proposition}

\subsection{Overview of causal graphical models}
\label{app:AppendixOverviewCausalGraphicalModels}

\begin{figure}[t]
\begin{center}
\adjustbox{scale=0.8,center}{%
\begin{tikzpicture}
  \coordinate (a) at (0,0);
  \coordinate (b) at (1,0);
  \coordinate (c) at (2,0);
  \coordinate (d) at (3,0);
  \coordinate (e) at (4,0);
  \coordinate (f) at (5,0);
  \coordinate (g) at (6,0);
  \def\ellipseb{\ellipsebyfoci{a}{b}{2}{0.52}};
  \def\ellipsec{\ellipsebyfoci{a}{c}{2}{0.52}};
  \def\ellipsed{\ellipsebyfoci{a}{d}{2}{0.52}};
  \def\ellipsee{\ellipsebyfoci{a}{e}{2}{0.52}};
  \def\ellipsef{\ellipsebyfoci{a}{f}{2}{0.52}};
  \def\ellipseg{\ellipsebyfoci{a}{g}{2}{0.52}};
  \fill[gray!20] \ellipsef;
  \fill[gray!40] \ellipsec;
  \draw \ellipseb;
  \draw \ellipsec;
  \draw \ellipsed;
  \draw \ellipsee;
  \draw \ellipsef;
  \draw \ellipseg;
  \node at (0.52,0) (A) {\begin{tabular}{c} causal \\ BNs \end{tabular}};
  \node at (2.18,0) (B) {\begin{tabular}{c} acyclic \\ SCMs \end{tabular}};
  \node at (3.68,0) (C) {\begin{tabular}{c} simple \\ SCMs \end{tabular}};
  \node at (5.2,0) (D) {\begin{tabular}{c} modular \\ SCMs \end{tabular}};
  \node at (6.71,0) (E) {SCMs};
  \node at (8.25,0) (F) {CCMs};
\end{tikzpicture}}
\end{center}
\caption{Overview of causal graphical models. The ``gray'' and ``dark gray'' areas contain all the causal graphical models that can be modeled by an SCM and an acyclic SCM, respectively.}
\label{fig:OverviewCausalModels}
\end{figure}

Figure~\ref{fig:OverviewCausalModels} gives an overview of the causal graphical models related to SCMs. The ``gray'' area contains all the causal graphical models that can be modeled by an SCM, by which we mean, that there exists an SCM that can describe all its observational and interventional distributions. The ``dark gray'' area contains all the causal graphical models which can be modeled by an acyclic SCM. Acyclic SCMs generalize causal Bayesian networks (causal BNs)~\citep{Pea09} to allow for latent confounders and to derive counterfactuals. Simple SCMs form a subclass of SCMs that extends acyclic SCMs to the cyclic setting, while preserving many of their convenient properties. Modular SCMs~\citep{FM17} can be seen as SCMs that have an additional structure of compatible system of solution functions and contain, in particular, the class of simple SCMs. Forr{\'e} and Mooij~\citep{FM17} showed that modular SCMs satisfy various convenient properties, like marginalization and the general directed global Markov property. We show that for SCMs in general various of those properties still hold under certain solvability conditions. A generalization of SCMs, known as \emph{causal constraints models (CCMs)}, has been proposed~\citep{BBM19} in order to completely model the causal semantics of the equilibrium solutions of a dynamical system given the initial conditions. This class of CCMs is rich enough to model the causal semantics of SCMs, but does not come with a single graphical representation that provides both a Markov property and a causal interpretation~\citep{BDM20}.

\section{(Unique) solvability properties}
\label{app:AppendixSolvabilityResults}
\setcounter{section}{2}
\setcounter{theorem}{0}

In this appendix, we provide additional (unique) solvability properties for SCMs. 
In Appendix~\ref{app:AppendixSolvabilitySufficientStrictSubset} we provide a sufficient condition of solvability w.r.t.\ (strict) subsets. In Appendix~\ref{app:AppendixSolvabilitySuperAndSubsets} we discuss how (unique) solvability is preserved under strict super- and subsets. In Appendix~\ref{app:AppendixSolvabilityUnionAndIntersections} we discuss how (unique) solvability is preserved under unions and intersections. The proofs of the theoretical results in this appendix are given in Appendix~\ref{app:AppendixProofs}.

\subsection{Sufficient condition for solvability w.r.t.\ subsets}
\label{app:AppendixSolvabilitySufficientStrictSubset}

For solvability w.r.t.\ a (strict) subset of $\C{I}$ there exists a sufficient condition that is similar to the sufficient (and necessary) condition (2) in Theorem~\ref{thm:SolvabilityIffCondition} in the sense that it is formulated in terms of the solutions of (a subset of) the structural equations, but no measurability is required.
\begin{proposition}[Sufficient condition for solvability w.r.t.\ a subset]
\label{prop:SolvabilityIfCondition}
Let $\C{M} = \langle \C{I}, \C{J}, \BC{X}, \BC{E}, \B{f}, \Prb_{\BC{E}} \rangle$ be an SCM and $\C{O}\subseteq\C{I}$ a subset.
If for $\Prb_{\BC{E}}$-almost every $\B{e}\in\BC{E}$ and for all $\B{x}_{\setminus \C{O}}\in\BC{X}_{\setminus\C{O}}$ the topological space
$$
\BC{S}_{(\B{e},\B{x}_{\setminus\C{O}})} := 
\{ \B{x}_{\C{O}}\in\BC{X}_{\C{O}} :
\B{x}_{\C{O}} = \B{f}_{\C{O}}(\B{x},\B{e}) \} \,,
$$
with the subspace topology induced by $\BC{X}_{\C{O}}$ is nonempty and $\sigma$-compact,\footnote{
A topological space $\BC{X}$ is called \emph{$\sigma$-compact} if it is the union of a countable set of compact topological spaces.} 
then $\C{M}$ is solvable w.r.t.\ $\C{O}$.
\end{proposition}

For many purposes, this condition of $\sigma$-compactness suffices since it contains for example all countable discrete spaces, every interval of the real line, and moreover all the 
Euclidean spaces. In particular, it suffices to prove a sufficient and necessary condition for unique solvability w.r.t.\ a subset, in terms of the solutions of a subset of the structural equations (see Theorem~\ref{thm:UniqueSolvabilityIffCondition}). For larger solution spaces, we refer the reader to \citep{Kec95}. For the class of linear SCMs (see Definition~\ref{def:LinearSCM}), we provide in Proposition~\ref{prop:LinearSolvable} a sufficient and necessary condition for solvability w.r.t.\ a (strict) subset of $\C{I}$. 




\subsection{(Unique) solvability w.r.t.\ strict super- and subsets}
\label{app:AppendixSolvabilitySuperAndSubsets}

In general, (unique) solvability w.r.t.\ $\C{O} \subseteq \C{I}$ does not imply 
(unique) solvability w.r.t.\ a strict superset $\C{O} \subsetneq \C{V} \subseteq \C{I}$ nor w.r.t.\ a strict subset $\C{W} \subsetneq \C{O}$, as can be seen in the following example.
\begin{example}[Solvability is not preserved under strict sub- or supersets]
\label{ex:SolvabilityProperties}
Consider the SCM $\C{M}=\langle \B{3}, \emptyset, \RN^3, \B{1}, \B{f}, \Prb_{\B{1}} \rangle$ where the causal mechanism is given by
$$ 
    f_1(\B{x}) = x_1 \cdot (1-\B{1}_{\{1\}}(x_2)) + 1 \,,\,\, f_2(\B{x}) =
    x_2 \,, \,\, f_3(\B{x}) = x_3 \cdot (1-\B{1}_{\{-1\}}(x_2)) + 1 \,.
$$
This SCM is (uniquely) solvable w.r.t.\ the subsets $\{1,2\}$, $\{2,3\}$, however it is not (uniquely) solvable w.r.t.\ the subsets $\{1\}$, $\{3\}$ and $\{1,2,3\}$, and not uniquely solvable w.r.t.\ $\{2\}$.
\end{example}
However, in Proposition~\ref{prop:SolvabilityAncestralSolvability} we show that solvability w.r.t.\ $\C{O}$ implies solvability w.r.t.\ every ancestral subset in $\C{G}(\C{M})_{\C{O}}$.

\subsection{(Unique) solvability w.r.t.\ unions and intersections}
\label{app:AppendixSolvabilityUnionAndIntersections}

In general, (unique) solvability is not preserved under unions and intersections. The following example illustrates that 
(unique) solvability is in general not preserved under intersections.
\begin{example}[Solvability is not preserved under intersections]
\label{ex:SolvabilityProperties2}
Consider the SCM $\C{M}=\langle \B{3}, \emptyset, \RN^3, \B{1}, \B{f}, \Prb_{\B{1}} \rangle$ where the causal mechanism is given by
$$ 
  f_1(\B{x}) = 0 \,, \,\, f_2(\B{x}) = x_2 \cdot (1 - \B{1}_{\{0\}}(x_1\cdot
  x_3)) + 1 \,, \,\, f_3(\B{x}) = 0 \,.
$$
Then $\C{M}$ is (uniquely) solvable w.r.t.\ $\{1,2\}$ and $\{2,3\}$, however it
is not (uniquely) solvable w.r.t.\ 
their intersection.
\end{example}

Example~\ref{ex:SolvabilityProperties} gives an example where (unique) solvability is not preserved 
under unions. Even, if we take the union of disjoint subsets, (unique) solvability is not preserved 
(see Example~\ref{ex:SimpleCyclicExample}). 
Although, in general, unique solvability is not preserved under unions, we show next that unique solvability is preserved under the union of ancestral subsets, under the following assumptions.
\begin{proposition}[Combining measurable solution functions on different sets]
\label{prop:UniqueSolvabilityAncestralUnion}
Let $\C{M}=\langle \C{I}, \C{J}, \BC{X}, \BC{E}, \B{f},
\Prb_{\BC{E}} \rangle$ be an SCM, $\C{O}\subseteq\C{I}$ a subset and $\C{A},\tilde{\C{A}}\subseteq\C{O}$ two ancestral subsets in $\C{G}(\C{M})_{\C{O}}$. If $\C{M}$ is uniquely solvable w.r.t.\ $\C{A}$, $\tilde{\C{A}}$ and $\C{A} \cap \tilde{\C{A}}$, then $\C{M}$ is uniquely solvable w.r.t.\ $\C{A}\cup \tilde{\C{A}}$. 
\end{proposition}


A consequence of this property is that in order to check whether an SCM is ancestrally uniquely solvable w.r.t.\ $\C{O}$, 
it suffices to check that it is uniquely solvable w.r.t.\ the ancestral subsets for each node in $\C{O}$.
\begin{corollary}
\label{coro:UniqueSolvabilityAncestralUniqueSolvability}
Let $\C{M}=\langle \C{I}, \C{J}, \BC{X}, \BC{E}, \B{f},
\Prb_{\BC{E}} \rangle$ be an SCM and $\C{O}\subseteq\C{I}$ a subset. Then $\C{M}$ is
ancestrally uniquely solvable w.r.t.\ $\C{O}$ if and only if $\C{M}$ is uniquely solvable w.r.t.\ $\an_{\C{G}(\C{M})_{\C{O}}}(i)$ for every
$i \in\C{O}$.
\end{corollary}

\section{Linear SCMs}
\label{app:AppendixLinSCMs}
\setcounter{section}{3}
\setcounter{theorem}{0}

In this appendix, we provide some results about (unique) solvability and
marginalization for linear SCMs. Linear SCMs form a special class of SCMs that has seen much attention in the literature \citep[see, e.g.,][]{Bol89,HEH12}. The proofs of the theoretical results in this appendix are given in Appendix~\ref{app:AppendixProofs}.
\begin{definition}[Linear SCM]
\label{def:LinearSCM}
We call an SCM $\C{M}=\langle \C{I}, \C{J}, \RN^{\C{I}}, \RN^{\C{J}}, \B{f}, \Prb_{\RN^{\C{J}}} \rangle$
\emph{linear} if each component of the causal mechanism is a linear combination of the endogenous and exogenous variables, that is
$$
  f_i(\B{x},\B{e}) = \sum_{j\in\C{I}} B_{ij} x_j + \sum_{k\in\C{J}}\Gamma_{ik} e_k \,,
$$
where $i\in\C{I}$, $B\in\RN^{\C{I}\times\C{I}}$ and $\Gamma\in\RN^{\C{I}\times\C{J}}$ 
are matrices, and $\Prb_{\RN^{\C{J}}}$ is a product probability 
measure\footnote{Note that we do not assume that the probability measure $\Prb_{\RN^{\C{J}}}$ 
is Gaussian.} on $\RN^{\C{J}}$.
\end{definition}
For a subset $\C{O}\subseteq\C{I}$ we also use the shorthand vector-notation
$$
  \B{f}_{\C{O}}(\B{x},\B{e})=B_{\C{OI}}\B{x}+\Gamma_{\C{OJ}}\B{e} \,.
$$
A nonzero coefficient $B_{ij}$ for $i,j\in\C{I}$ such that $i\neq j$ corresponds with a directed edge $j\to i$ in the (augmented) graph, and a coefficient $B_{ii}=1$ for $i\in\C{I}$ corresponds with a self-cycle $i\to i$ in the (augmented) graph of the SCM. A nonzero coefficient $\Gamma_{ij}$ for $i\in\C{I}$, $j\in\C{J}$ with $\Prb_{\C{E}_j}$ a nondegenerate probability distribution over $\RN$ corresponds with a directed edge $j \to i$ in the augmented graph. A nonzero entry $(\Gamma \Gamma^T)_{ij}$ for $i,j\in\C{I}$ with $i\neq j$ such that there exists a $k\in\C{J}$ for which $\Gamma_{ik},\Gamma_{jk}\neq 0$ and $\Prb_{\C{E}_k}$ a nondegenerate probability distribution over $\RN$ corresponds with a bidirected edge $i \oto j$ in the graph of the SCM.

For linear SCMs, the solvability condition w.r.t.\ a subset, Definition~\ref{def:Solvability}, translates into a matrix condition. In order to state this condition we need to define the pseudoinverse (or the Moore-Penrose inverse) $A^+$ of a real matrix $A$ \citep{Pen55,GK65}. 
The \emph{pseudoinverse of the matrix $A$} is defined by $A^+:=V\Sigma^+ U^*$, where $A=U\Sigma V^*$ is the singular value decomposition of $A$ and $\Sigma^+$ is obtained by replacing each nonzero entry on the diagonal of $\Sigma$ by its reciprocal \citep{GK65}. One of its useful properties is that $A A^+ A = A$.

\begin{proposition}[Sufficient and necessary condition for solvability w.r.t.\ a subset for linear SCMs]
\label{prop:LinearSolvable}
Let $\C{M}$ be a linear SCM and $\C{L}\subseteq\C{I}$ and $\C{O}=\C{I}\setminus\C{L}$. Then $\C{M}$ is solvable w.r.t.\ 
$\C{L}$ if and only if for the matrix $A_{\C{L}\C{L}}=\mathbb{I}_{\C{L}}-B_{\C{LL}}$, for $\Prb_{\BC{E}}$-almost every $\B{e}\in\BC{E}$ and for all $\B{x}_{\C{O}}\in\BC{X}_{\C{O}}$ the identity
$$
A_{\C{L}\C{L}} A_{\C{L}\C{L}}^+ (B_{\C{LO}}\B{x}_{\C{O}}+\Gamma_{\C{LJ}}\B{e}) = B_{\C{LO}}\B{x}_{\C{O}}+\Gamma_{\C{LJ}}\B{e}
$$
is satisfied, where $A_{\C{L}\C{L}}^+$ is the pseudoinverse of $A_{\C{L}\C{L}}$. Moreover, if $\C{M}$ is solvable w.r.t.\ $\C{L}$, then for every vector $\B{v}\in \RN^{\C{L}}$ the mapping 
$\B{g}^{\B{v}}_{\C{L}}:\RN^{\C{O}}\times\RN^{\C{J}}\to\RN^{\C{L}}$ given by
$$
\B{g}^{\B{v}}_{\C{L}}(\B{x}_{\C{O}},\B{e}) = A_{\C{L}\C{L}}^+(B_{\C{LO}}\B{x}_{\C{O}}+\Gamma_{\C{LJ}}\B{e}) + [\mathbb{I}_{\C{L}} - A_{\C{L}\C{L}}^+ A_{\C{L}\C{L}}] \B{v} \,,
$$
is a measurable solution function for $\C{M}$ w.r.t.\ $\C{L}$.
\end{proposition}

For linear SCMs, the unique solvability condition w.r.t.\ a subset translates into a matrix invertibility condition, as was already shown in \citep{HEH12}.
\begin{proposition}[Sufficient and necessary condition for unique solvability w.r.t.\ a subset for linear SCMs]
\label{prop:LinearUniquelySolvable}
Let $\C{M}$ be a linear SCM, $\C{L}\subseteq\C{I}$ and $\C{O}=\C{I}\setminus\C{L}$. 
Then $\C{M}$ is uniquely solvable w.r.t.\ 
$\C{L}$ if and only if the matrix $A_{\C{L}\C{L}}=\mathbb{I}_{\C{L}}-B_{\C{LL}}$ is invertible. Moreover, if $\C{M}$ is uniquely solvable w.r.t.\ $\C{L}$, then the mapping 
$\B{g}_{\C{L}}:\RN^{\C{O}}\times\RN^{\C{J}}\to\RN^{\C{L}}$ given by
$$
  \B{g}_{\C{L}}(\B{x}_{\C{O}},\B{e}) = 
  A_{\C{L}\C{L}}^{-1}(B_{\C{LO}}\B{x}_{\C{O}}+\Gamma_{\C{LJ}}\B{e}) \,,
$$
is a measurable solution function for $\C{M}$ w.r.t.\ $\C{L}$.
\end{proposition}
Note that if $A_{\C{LL}}$ is invertible, then $A_{\C{LL}}^+=A_{\C{LL}}^{-1}$ (see Lemma~1.3 in \citep{Pen55}), and the matrix condition of Proposition~\ref{prop:LinearSolvable} is always satisfied and all the measurable solution functions $\B{g}^{\B{v}}_{\C{L}}$ of Proposition~\ref{prop:LinearSolvable} are (up to a $\Prb_{\BC{E}}$-null set) equal to the solution function $\B{g}_{\C{L}}$ of Proposition~\ref{prop:LinearUniquelySolvable}.

\begin{remark}
A sufficient condition for $A_{\C{LL}}$ to be invertible is that the spectral radius of $B_{\C{LL}}$ is less than one. If that is the case, then $A_{\C{LL}}^{-1}=\sum_{n=0}^{\infty}(B_{\C{LL}})^n$. Note that the nonzero nondiagonal entries of the matrix $B_{\C{LL}}$ represent the directed edges in the induced subgraph $\C{G}(\C{M})_{\C{L}}$. In particular, if the diagonal entries of the matrix $B_{\C{LL}}$ are zero, then for $n\in\NN$, the coefficients of the matrix $(B_{\C{LL}})^n$ in the sum represent the sum of the product of the edge weights $B_{ij}$ over directed paths of length $n$ in the induced subgraph $\C{G}(\C{M})_{\C{L}}$.
\end{remark}

From Proposition~\ref{prop:SolvabilityAncestralSolvability} we know that an SCM is solvable w.r.t.\ $\C{L}$ if and only if it is ancestrally solvable w.r.t.\ $\C{L}$. In particular, this result also holds for linear SCMs. We saw in Example~\ref{ex:UniqueSolvabilityNotUniqAncestralSubset} that a similar result for unique solvability does not hold, that is, in general, it does not hold that unique solvability w.r.t.\ $\C{L}$ implies ancestral unique solvability w.r.t.\ $\C{L}$. For the class of linear SCMs we do have the following positive result.
\begin{proposition}[Equivalent unique solvability conditions for linear SCMs]
\label{prop:LinearEquivalentUniqueSolvability}
For a linear SCM $\C{M}$ and a subset $\C{L}\subseteq\C{I}$ the following are equivalent:
\begin{enumerate}
  \item $\C{M}$ is uniquely solvable w.r.t.\ $\C{L}$;
  \item $\C{M}$ is ancestrally uniquely solvable w.r.t.\ $\C{L}$;
  \item $\C{M}$ is uniquely solvable w.r.t.\ each strongly connected component in $\C{G}(\C{M})_{\C{L}}$.
\end{enumerate}
\end{proposition}


Under the condition of unique solvability w.r.t.\ a subset $\C{L}$ we can define the marginalization w.r.t.\ $\C{L}$ of a linear SCM by mere substitution.
\begin{proposition}[Marginalization of a linear SCM]
\label{prop:MarginalLinearModel}
Let $\C{M}$ be a linear SCM and $\C{L}\subseteq\C{I}$ a subset of endogenous variables such that 
$\mathbb{I}_{\C{L}}-B_{\C{LL}}$ is invertible. Then there exists a marginalization $\C{M}_{\marg(\C{L})}$ that is 
linear and with marginal causal mechanism 
$\tilde{\B{f}}:\RN^{\C{O}}\times\RN^{\C{J}}\to\RN^{\C{O}}$ given by
$$ 
  \begin{aligned}
    \tilde{\B{f}}(\B{x}_{\C{O}},\B{e}) &=
  [B_{\C{OO}}+B_{\C{OL}}A_{\C{LL}}^{-1}B_{\C{LO}}]\B{x}_{\C{O}} + [B_{\C{OL}}A_{\C{LL}}^{-1}\Gamma_{\C{LJ}}+\Gamma_{\C{OJ}}]\B{e} \,,
  \end{aligned}
$$
where $A_{\C{LL}}=\mathbb{I}_{\C{L}} - B_{\C{LL}}$. Moreover, this marginalization respects the latent projection, that is, $\big(\C{G}^a \circ \marg(\C{L})\big)(\C{M}) \subseteq \big(\marg(\C{L})\circ\C{G}^a\big)(\C{M})$.
\end{proposition}
From Theorem~\ref{thm:MarginalizationEquivalences} we know that $\C{M}$ and its
marginalization $\C{M}_{\marg(\C{L})}$ over $\C{L}$ are observationally,
interventionally and counterfactually equivalent w.r.t.\ $\C{O}$. A similar result can also be found in \citep{HEH12}. In contrast to nonlinear SCMs, this class of linear SCMs has the convenient property that every marginalization of a model of this class respects the latent projection. Moreover, the subclass of simple linear SCMs is even closed under marginalization.

\section{Examples}
\label{app:AppendixExamples}
\setcounter{section}{4}
\setcounter{theorem}{0}

In this appendix, we provide additional examples. In Appendix~\ref{app:AppendixExamplesEquilibriumModels} we provide some examples of SCMs that describe the equilibrium states of certain feedback systems governed by (random) differential equations~\citep{BM18} that motivated our study of cyclic SCMs. In Appendix~\ref{app:AppendixAdditionalExamples} we provide additional examples that support the main text.

\subsection{SCMs as equilibrium models}
\label{app:AppendixExamplesEquilibriumModels}

In many systems occurring in the real world feedback loops between observed variables are present. For example, in economics, the price of a product may be a function of the demanded or supplied quantities, and vice versa; or in physics, two masses that are connected by a spring may exert forces on each other. Such systems are often described by a system of (random) differential equations. In~\citep{BM18} it was shown that SCMs are capable of modeling the causal semantics of the equilibrium states of such systems. For illustration purposes we provide the following toy example of interacting masses that are attached to springs.
\begin{example}[Damped coupled harmonic oscillator]
\label{ex:HarmonicOscillator}
Consider a one-dimensional system of $d$ point masses $m_i\in\RN$
($i=1,\dots,d$) with positions $Q_i$, which are
coupled by springs, with spring constants $k_i>0$ and equilibrium lengths $\ell_i>0$
($i=0,\dots,d$), under influence of friction with friction coefficients $b_i\in\RN$ ($i=1,\dots,d$) and with fixed
endpoints $Q_0=0$ and $Q_{d+1}=L>0$ (see Figure~\ref{fig:MassSpringSystem} (top)).
\begin{figure}
  \centerline{\adjustbox{scale=0.8,center}{%
      \begin{tikzpicture}[scale=0.6]
    \fill[pattern=north east lines,draw=none] (-.75,-1) rectangle (0,1);
    \draw (0,-1) -- (0,1);
    \node (Q1) at (1.75,0.6) {$m_1$};
    \node (Q2) at (3.5,0.6) {$m_2$};
    \node (Q3) at (5.25,0.6) {$m_3$};
    \node (Q4) at (7,0.6) {$m_4$};
    \node (Q5) at (8.75,0.6) {$m_5$};
    \begin{scope}[xshift=-1.5mm]
      \node (l0) at (0.875,-0.6) {$\ell_0$};
      \node (l1) at (2.625,-0.6) {$\ell_1$};
      \node (l2) at (4.375,-0.6) {$\ell_2$};
      \node (l3) at (6.125,-0.6) {$\ell_3$};
      \node (l4) at (7.875,-0.6) {$\ell_4$};
      \node (l5) at (9.625,-0.6) {$\ell_5$};
    \end{scope}
    \fill[black] (1.75,0) circle (.2);
    \fill[black] (3.5,0) circle (.2);
    \fill[black] (5.25,0) circle (.2);
    \fill[black] (7,0) circle (.2);
    \fill[black] (8.75,0) circle (.2);
    \draw (10.5,-1) -- (10.5,1);
    \fill[pattern=north east lines,draw=none] (10.5,-1) rectangle (11.25,1);
    \draw[thick,decorate,decoration={coil,aspect=0.7,amplitude=4,segment length=3mm}] (0,0) -- (1.75,0);
    \draw[thick,decorate,decoration={coil,aspect=0.7,amplitude=4,segment length=3mm}] (1.75,0) -- (3.5,0);
    \draw[thick,decorate,decoration={coil,aspect=0.7,amplitude=4,segment length=3mm}] (3.5,0) -- (5.25,0);
    \draw[thick,decorate,decoration={coil,aspect=0.7,amplitude=4,segment length=3mm}] (5.25,0) -- (7,0);
    \draw[thick,decorate,decoration={coil,aspect=0.7,amplitude=4,segment length=3mm}] (7,0) -- (8.75,0);
    \draw[thick,decorate,decoration={coil,aspect=0.7,amplitude=4,segment length=3mm}] (8.75,0) -- (10.5,0);
    \node at (0,1.3) {\tiny $Q_0 = 0$};
    \node at (10.5,1.3) {\tiny $Q_6 = L$};
  \end{tikzpicture}}}
  \bigskip
  \adjustbox{scale=0.8,center}{%
  \begin{tikzpicture}[scale=0.8]
    \node[var] (Q1) at (1.75,0.6) {$Q_1$};
    \node[var] (Q2) at (3.5,0.6) {$Q_2$};
    \node[var] (Q3) at (5.25,0.6) {$Q_3$};
    \node[var] (Q4) at (7,0.6) {$Q_4$};
    \node[var] (Q5) at (8.75,0.6) {$Q_5$};
    \draw[arr,bend left=20] (Q1) to (Q2);
    \draw[arr,bend left=20] (Q2) to (Q3);
    \draw[arr,bend left=20] (Q3) to (Q4);
    \draw[arr,bend left=20] (Q4) to (Q5);
    \draw[arr,bend left=20] (Q5) to (Q4);
    \draw[arr,bend left=20] (Q4) to (Q3);
    \draw[arr,bend left=20] (Q3) to (Q2);
    \draw[arr,bend left=20] (Q2) to (Q1);
  \end{tikzpicture}}
  \caption{Damped coupled harmonic oscillator (top) and the graph of the SCM $\C{M}$ that describes the positions of the masses at equilibrium (bottom) of Example~\ref{ex:HarmonicOscillator} for $d=5$.}
  \label{fig:MassSpringSystem}
\end{figure}
The equations of motion of this system are provided by the following differential equations
$$
  \frac{d^2Q_i}{dt^2} = \frac{k_i}{m_i} (Q_{i+1} - Q_i - \ell_i) + \frac{k_{i-1}}{m_i} (Q_{i-1} - Q_i + \ell_{i-1}) -\frac{b_i}{m_i} \frac{dQ_i}{dt} \quad\quad(i=1,\dots,d) \,.
$$
The dynamics of the masses, in terms of the position, velocity and acceleration, is described by a single and separate equation of motion for each mass. Under friction, that is, $b_i>0$ ($i=1,\dots,d$), there is a unique equilibrium position, where the sum of forces vanishes for each mass. If one starts out of equilibrium, for example, by moving one or several masses out of equilibrium, then the masses will start to oscillate and converge to their unique equilibrium position. At equilibrium (i.e., for $t\to\infty$) the velocity $\frac{dQ_i}{dt}$ and acceleration $\frac{d^2Q_i}{dt^2}$ of the masses vanish (i.e., $\frac{dQ_i}{dt},\frac{d^2Q_i}{dt^2}\to 0$), and thus the following equation holds at equilibrium
$$
  0 = \frac{k_i}{m_i} (Q_{i+1} - Q_i - \ell_i) + \frac{k_{i-1}}{m_i} (Q_{i-1} - Q_i + \ell_{i-1}) \,,
$$
for each mass ($i=1,\dots,d$). Hence, for each mass $i=1,\dots,d$ its equilibrium position $Q_i$ is given by
$$
  Q_i = \frac{k_i(Q_{i+1} - \ell_i) + k_{i-1}(Q_{i-1}+\ell_{i-1})}{k_i + k_{i-1}}\,.
$$
By considering the $\ell_i$ and $k_i$ and $L$ as fixed parameters, we arrive at a linear SCM (see \cite{BM18} for more details about constructing an SCM from a dynamical system)
$$
\C{M}=\langle \{1,\dots,d\},\emptyset,\RN^d,\B{1},\B{f},\Prb_{\B{1}} \rangle \,,
$$
where the causal mechanism $\B{f}$ is given by
$$
f_i(\B{q}) = \frac{k_i(q_{i+1} - \ell_i) + k_{i-1}(q_{i-1}+\ell_{i-1})}{k_i + k_{i-1}} \,.
$$
Alternatively, (some of) the parameters could be treated as exogenous variables instead. Its graph is depicted in Figure~\ref{fig:MassSpringSystem} (bottom). This SCM allows us to describe the equilibrium behavior of the system under perfect intervention. For example, when forcing the mass $j$ to a fixed position $Q_j=\xi_j$ with $0\leq \xi_j\leq L$, the equilibrium positions of the masses correspond to the solutions of the intervened model $\C{M}_{\intervene(\{j\},\xi_j)}$. It is an easy exercise to show that $\C{M}$ is a simple SCM by using Proposition~\ref{prop:LinearUniquelySolvable}.
\end{example}

Next, we show that the well known market equilibrium model from economics, which has been thoroughly discussed in the literature \citep[see, e.g.,][]{RichardsonRobins2014}, can be described by a (non-simple) SCM. This example illustrates how self-cycles enrich the class of SCMs.
\begin{example}[Price, supply and demand]
  \label{ex:PriceSupplyDemandDynamical}
Let $X_D$ denote the demand and $X_S$ the supply of a quantity of a product. The
price of the product is denoted by $X_P$. The following system of differential equations describes how the
demanded and supplied quantities are determined by the price, and how price
adjustments occur in the market:
$$
  \begin{aligned}
    X_D &= \beta_D X_P + E_D \\
    X_S &= \beta_S X_P + E_S \\
    \frac{dX_P}{dt} &= X_D - X_S \,,
  \end{aligned}
$$
where $E_D$ and $E_S$ are exogenous random influences on the demand and supply, respectively, $\beta_D<0$ is the reciprocal of the slope of the demand curve, and $\beta_S>0$ is the reciprocal of the slope of the supply curve. At the situation known as a ``market equilibrium'', the price is determined implicitly by the condition that demanded and supplied quantities should be equal, since $\frac{dX_P}{dt}=0$ at equilibrium. Applying the results in \citep{BM18} gives rise to a linear SCM $\C{M} = \langle \{P,S,D\}, \{S,D\}, \RN^3, \RN^2, \B{f}, \Prb_{\BC{E}} \rangle$
at equilibrium with the causal mechanism defined by
$$
  \begin{aligned}
    f_D(\B{x},\B{e}) &:= \beta_D x_P + e_D \\
    f_S(\B{x},\B{e}) &:= \beta_S x_P + e_S \\
    f_P(\B{x},\B{e}) &:= x_P + (x_D - x_S) \,.
  \end{aligned}
$$
Note how we use a self-cycle for $P$ in order to implement the equilibrium
equation $X_D = X_S$ as the causal mechanism for the price $P$.\footnote{Richardson and Robins~\cite{RichardsonRobins2014}
argue that this market equilibrium model cannot be modeled as an SCM. We observe that it can,
as long as one allows for self-cycles.} Moreover, $\C{M}$ is uniquely solvable. Its augmented graph is depicted in Figure~\ref{fig:PriceSupplyDemand} (left).
\begin{figure}[t]
\begin{center}
\adjustbox{scale=0.8,center}{%
    \begin{tikzpicture}
    \begin{scope}
      \node[exvar] (ED) at (1.5,0) {$E_D$};
      \node[var] (D) at (1.5,-1.5) {$X_D$};
      \node[exvar] (ES) at (-1.5,0) {$E_S$};
      \node[var] (S) at (-1.5,-1.5) {$X_S$};
      \node[var] (P) at (0,-1.5) {$X_P$};
      \draw[arr, bend left=20] (P) edge (S);
      \draw[arr, bend left=20] (P) edge (D);
      \draw[arr, bend left=20] (S) edge (P);
      \draw[arr, bend left=20] (D) edge (P);
      \draw[arr] (ED) to (D);
      \draw[arr] (ES) to (S);
      \draw[arr] (P)  to [out=62,in=118,looseness=4.4] (P);
      \node at (0,-2.2) {$\C{G}^a(\C{M})$};
    \end{scope}
    \begin{scope}[xshift=5cm]
      \node[exvar] (ED) at (1.5,0) {$E_D$};
      \node[var] (D) at (1.5,-1.5) {$X_D$};
      \node[var] (D2) at (1.5,1.5) {$X_D'$};
      \node[exvar] (ES) at (-1.5,0) {$E_S$};
      \node[var] (S) at (-1.5,-1.5) {$X_S$};
      \node[var] (S2) at (-1.5,1.5) {$X_S'$};
      \node[var] (P) at (0,-1.5) {$X_P$};
      \node[var] (P2) at (0,1.5) {$X_P'$};
      \draw[arr, bend left=20] (P) edge (S);
      \draw[arr, bend left=20] (P) edge (D);
      \draw[arr, bend left=20] (S) edge (P);
      \draw[arr, bend left=20] (D) edge (P);
      \draw[arr, bend left=20] (P2) edge (S2);
      \draw[arr, bend left=20] (P2) edge (D2);
      \draw[arr, bend left=20] (S2) edge (P2);
      \draw[arr, bend left=20] (D2) edge (P2);
      \draw[arr] (ED) to (D);
      \draw[arr] (ES) to (S);
      \draw[arr] (ED) to (D2);
      \draw[arr] (ES) to (S2);
      \draw[arr] (P)  to [out=62,in=118,looseness=4.4] (P);
      \draw[arr] (P2)  to [out=62,in=118,looseness=4.4] (P2);
      \node at (0,-2.2) {$\C{G}^a(\C{M}^\twin)$};
    \end{scope}
    \begin{scope}[xshift=10cm]
      \node[exvar] (ED) at (1.5,0) {$E_D$};
      \node[var] (D) at (1.5,-1.5) {$X_D$};
      \node[var] (D2) at (1.5,1.5) {$X_D'$};
      \node[exvar] (ES) at (-1.5,0) {$E_S$};
      \node[var] (S) at (-1.5,-1.5) {$X_S$};
      \node[var] (S2) at (-1.5,1.5) {$X_S'$};
      \node[var] (P) at (0,-1.5) {$X_P$};
      \node[var] (P2) at (0,1.5) {$X_P'$};
      \draw[arr, bend left=20] (P) edge (D);
      \draw[arr, bend left=20] (S) edge (P);
      \draw[arr, bend left=20] (D) edge (P);
      \draw[arr, bend left=20] (P2) edge (D2);
      \draw[arr, bend left=20] (S2) edge (P2);
      \draw[arr, bend left=20] (D2) edge (P2);
      \draw[arr] (ED) to (D);
      \draw[arr] (ED) to (D2);
      \draw[arr] (P)  to [out=62,in=118,looseness=4.4] (P);
      \draw[arr] (P2)  to [out=62,in=118,looseness=4.4] (P2);
      \node at (0,-2.2) {$\C{G}^a(\C{M}^\twin)_{\intervene(\{S,S'\})}$};
    \end{scope}
  \end{tikzpicture}}
  \end{center}
  \caption{The augmented graph of the SCM $\C{M}$ (left), its twin SCM $\C{M}^{\twin}$ (center) and the intervened twin SCM $(\C{M}^{\twin})_{\intervene(\{S,S'\},(s,s'))}$ (right) of Examples~\ref{ex:PriceSupplyDemandDynamical} and \ref{ex:PriceSupplyDemand}.}
   \label{fig:PriceSupplyDemand}
\end{figure}
\end{example}


Next, we provide an example of how counterfactuals can be sensibly formulated for cyclic SCMs, namely for the price, supply and demand model at equilibrium.
\begin{example}[Price, supply and demand at equilibrium]
\label{ex:PriceSupplyDemand}
Consider the price, supply and demand model at equilibrium of Example~\ref{ex:PriceSupplyDemandDynamical} given by the SCM $\C{M}$. As an example of a counterfactual query, consider 
$$\Prb(X_P' \given \intervene(X_S = s, X_{S'} = s'), X_P = p)\,,$$
  which denotes the conditional distribution of $X_P'$ given $X_P=p$ of a solution of the intervened twin model $\C{M}^{\twin}_{\intervene(\{S,S'\},(s,s'))}$. In words: how would---\emph{ceteris paribus}---price have been distributed, 
had we intervened to set supplied quantities equal to $s'$, given that actually
we intervened to set supplied quantities equal to $s$ and observed that this
led to price $p$? A straightforward calculation shows that this counterfactual distribution
of price is the Dirac measure on $x_P' = p + (s' - s) / \beta_D$. The augmented graphs of the SCM, its twin graph, and its intervened twin graph are depicted
in Figure~\ref{fig:PriceSupplyDemand}.
\end{example}

\subsection{Additional examples}
\label{app:AppendixAdditionalExamples}

In this subsection, we provide additional examples that support the main text.

\subsubsection*{Section 2}
\addcontentsline{toc}{subsubsection}{\protect\numberline{}Section 2}%

\begin{example}[Structural equations up to almost sure equality]
\label{ex:StructuralEquationsUpToASEquality}
Consider the SCM $\C{M} = \langle \B{1}, \B{1}, \C{X}, \C{E}, f, \Prb_{\C{E}} \rangle$ with 
$\C{X} = \C{E} = \{-1,0,1\}$, $\Prb_{\C{E}}(\{-1\}) = \Prb_{\C{E}}(\{1\}) = 
\frac{1}{2}$ and $f(x,e) = e^2 + e - 1$. Let $\tilde{\C{M}}$ be the SCM 
$\C{M}$ but with a different causal mechanism $\tilde{f}(x,e) = e$. Then the sets of solutions
of the structural equations agree for both SCMs for $e \in \{-1,+1\}$, while they differ
only for $e = 0$, which occurs with probability zero. Hence, a pair of random variables $(X,E)$ is a solution 
of $\C{M}$ if and only if it is a solution of $\tilde{\C{M}}$.
\end{example}

\begin{example}[The for-all and for-almost-every quantifier do not commute in general]
\label{ex:ForAlmostEveryForAllQuantifier}
  Consider the SCM $\C{M}=\langle \B{2}, \B{1}, \BC{X}, \C{E}, \B{f}, \Prb_{\C{E}} \rangle$ with $\BC{X}=(0,1)^2$, $\C{E}=(0,1)$, the causal mechanism $\B{f}$ given by
$$
f_1(\B{x},e) = x_1 \,,\quad f_2(\B{x},e) = \bm{1}_{\{0\}}(x_1-e)\cdot (x_2 + 1) \,,
$$
and $\Prb_{\C{E}}=\Prb^E$ with $E\sim \C{U}(0,1)$. Define the property
$$
P(\B{x},e) := \begin{cases} 1 &\text{if $\B{x}=\B{f}(\B{x},e)$ holds,} \\  0 &\text{otherwise.}\end{cases}
$$
Then, for all $\B{x}\in\BC{X}$ and for $\Prb_{\C{E}}$-almost every $e\in\C{E}$ the property $P(\B{x},e)$ holds, however for $\Prb_{\C{E}}$-almost every $e\in\C{E}$ and for all $\B{x}\in\BC{X}$ the property $P(\B{x},e)$ does not hold, since for $\Prb_{\C{E}}$-almost every $e\in\C{E}$ the equation $\B{x}=\B{f}(\B{x},e)$ does not hold for $x_1=e$. Hence, in general, for a property $P(\B{x},e)$ we have that for all $\B{x}\in\BC{X}$ and for $\Prb_{\C{E}}$-almost every $e\in\C{E}$ $P(\B{x},e)$ does not imply for $\Prb_{\C{E}}$-almost every $e\in\C{E}$ for all $\B{x}\in\BC{X}$ $P(\B{x},e)$ (see Lemma~\ref{lemm:AlmostAllQuantifierLogic} for additional properties of the for-almost-every quantifier).
\end{example}


\begin{example}[Representation of latent confounders]
  \label{ex:no_nice_reduction}
  Consider the SCM 
  $\C{M}=\langle \B{2}, \B{3}, \RN^{2}, \RN^{3}, \B{f}, \Prb_{\RN^3} \rangle$ with causal mechanism given by
  \begin{align*}
    f_1(e_1,e_3) &= e_1 + e_3 \\
    f_2(x_1,e_2,e_3) &= x_1 e_3 + e_2
\end{align*}
and $\Prb_{\RN^3}$ the standard-normal distribution on $\RN^3$; Figure~\ref{fig:counterexample} (left) shows the corresponding augmented graph.
Then there exists no SCM 
$\C{M}^*=\langle \B{2}, \B{1}, \RN^{2}, \RN^{2}, \B{f}^*, \Prb^*_{\RN^2} \rangle$
that satisfies the following conditions:
  \begin{enumerate}
    \item $\C{M^*}$ is interventionally equivalent to $\C{M}$,
    \item its structural equations have the form
      \begin{align*}
        x_1 &= f^*_1(e_1^*) \\
        x_2 &= f^*_2(x_1,e_2^*), 
      \end{align*}
     where $e_1^*,e_2^*$ are the two components of $e^* = (e_1^*, e_2^*) \in \RN^{2}$,
    \item the function
$e_2^* \mapsto f^*_2(x_1,e_2^*)$ is strictly monotonically increasing for all $x_1 \in \RN$,  
\item the cumulative distribution function $F^*_{2}$ of the second component of $\Prb^*_{\RN^2}$
is continuous and strictly monotonically increasing.
  \end{enumerate}
The augmented graph of such an SCM is shown in Figure~\ref{fig:counterexample} (right).
\begin{figure}
  \adjustbox{scale=0.8,center}{%
  \centerline{
    \begin{tikzpicture}
      \begin{scope}
        \node[var] (X) at (0,0) {$X_1$};
        \node[var] (Y) at (2,0) {$X_2$};
        \node[exvar] (EX) at (-1,1) {$E_1$};
        \node[exvar] (EY) at (3,1) {$E_2$};
        \node[exvar] (EZ) at (1,1) {$E_3$};
        \draw[arr] (X) -- (Y);
        \draw[arr] (EX) -- (X);
        \draw[arr] (EY) -- (Y);
        \draw[arr] (EZ) -- (X);
        \draw[arr] (EZ) -- (Y);
      \end{scope}
      \begin{scope}[xshift=6cm]
        \node[var] (X) at (0,0) {$X_1$};
        \node[var] (Y) at (2,0) {$X_2$};
        \node[exvar] (EX) at (1,1) {$E$};
        \draw[arr] (X) -- (Y);
        \draw[arr] (EX) -- (X);
        \draw[arr] (EX) -- (Y);
      \end{scope}
    \end{tikzpicture}
}}
  \caption{\label{fig:counterexample}Augmented graphs of the SCMs $\C{M}$ (left) and $\C{M}^*$ (right) in Example~\ref{ex:no_nice_reduction}. For SCM $\C{M}^*$, the exogenous variable $E$ consists of two real-valued components; the structural equation for $X_1$ depends only on the first, while the structural equation for $X_2$ depends only on the second component.}
\end{figure}

The proof of this statement proceeds by contradiction. Assume that such an SCM $\C{M}^*$ exists.
For any uniquely solvable SCM $\bar{\C{M}}$ and any endogenous variable $i$ appearing in $\bar{\C{M}}$,
we denote with $F^{\bar{\C{M}}}_{X_i}$ the marginal cumulative distribution function of the $i^{\text{th}}$ component
of the observational distribution of $\bar{\C{M}}$.
For all $\xi\in\RN$, we have for all $x_2\in\RN$
  \begin{equation}\label{eq:interventional_cdf_of_counterexample}
    F^{\C{M}_{\intervene(\{1\},\xi)}}_{X_2}(x_2) = \Prb(\xi E_3 + E_2 \le x_2) = \Phi\left({x_2}/{\sqrt{1+\xi^2}}\right),
\end{equation}
where $\Phi$ denotes the (invertible) cdf of the standard-normal distribution.
Now define 
$\phi: \mathbb{R} \rightarrow \mathbb{R}$ with
$\phi(e_2) := \Phi^{-1}(F^*_2(e_2))$
and define the SCM
$\tilde{\C{M}} :=\langle \B{2}, \B{1}, \RN^{2}, \RN^{2}, \tilde{\B{f}}, \tilde{\Prb}_{\RN^2} \rangle$
such that the causal mechanism $\tilde{\B{f}}$ is given by
  \begin{align*}
    \tilde{f}_1(e_1) &= f^*_1(e_1), \\
    \tilde{f}_2(x_1, e_2) &= {f^*_2}(x_1, \phi^{-1}(e_2)),
  \end{align*}
and $\tilde{\Prb}_{\RN^2}$ is the push-forward measure of ${\Prb}^*_{\RN^2}$ using $(\id_\RN,\phi)$.
Then, $\tilde{\C{M}}$ is interventionally equivalent to $\C{M}^*$ by construction,
and the second component of $\tilde{\Prb}_{\RN^2}$ has a standard-normal distribution. 
Let $(\tilde{X}_1, \tilde{X}_2, \tilde{E})$ be a solution of $\tilde{\C{M}}$ and let us write
$\tilde{E} = (\tilde{E}_1, \tilde{E}_2)$. 
Then, 
for all $\xi\in\RN$ and $\tilde e_2\in\RN$,
  \begin{equation*}
    F_{X_2}^{\tilde{\C{M}}_{\intervene(\{1\},\xi)}}(\tilde f_2(\xi,\tilde e_2))
  = \Prb(\tilde f_2(\xi,\tilde E_2) 
 \le \tilde f_2(\xi,\tilde e_2))
   = \Prb(\tilde E_2 \le \tilde e_2) 
     = \Phi(\tilde e_2),
\end{equation*}
using that $\tilde e_2 \mapsto \tilde f_2(\xi,\tilde e_2)$, too, is strictly monotonically increasing for all $\xi$.
This implies that, 
for all $\xi\in\RN$ and $\tilde{e}_2\in\RN$,
$$
\tilde f_2(\xi,\tilde e_2) = (F_{X_2}^{\C{M}_{\intervene(\{1\},\xi)}})^{-1}\big(\Phi(\tilde e_2)\big) = \sqrt{1+\xi^2} \; \tilde e_2\,,
$$
where we used interventional equivalence of $\C{M}$ and $\tilde{\C{M}}$, and \eref{eq:interventional_cdf_of_counterexample} for the second equality.
Furthermore,
$\tilde{X}_2 = \tilde f_2(\tilde{X}_1, \tilde{E}_2) = \sqrt{1 + \tilde{X}_1^2}\; \tilde{E}_2$ a.s., so
$\tilde E_2 = {\tilde X_2}/{\sqrt{1 + \tilde X_1^2}} \text{\ a.s.}$.
Now let $(X_1,X_2,E_1,E_2,E_3)$ be a solution of $\C{M}$. By observational equivalence, $(\tilde X_1, \tilde X_2)$ has the same distribution as $(X_1,X_2)$, and thus $\tilde E_2$ is distributed as
$$\frac{X_2}{\sqrt{1 + X_1^2}} = \frac{(E_1 + E_3)E_3 + E_2}{\sqrt{1 + (E_1 + E_3)^2}} \text{\ a.s.}.$$
This contradicts the fact that $\tilde E_2$ 
has a standard-normal distribution as, for example, the mean of the right-hand side is nonzero.
\end{example}

\begin{example}[Counterfactual density unidentifiable from observational and interventional densities \citep{Daw02}]
\label{ex:Counterfactuals}
Let $\rho\in\RN$ and 
$$
\C{M}_{\rho}=\langle
\B{2},\B{2},\{0,1\}\times\RN,\{0,1\}\times\RN^2,\B{f},\Prb_{\BC{E}}\rangle
$$
be the SCM with causal mechanism given by
$$
  f_1(\B{x},\B{e}) =e_1 \,, \quad f_2(\B{x},\B{e}) =e_{21}(1 - x_1) + e_{22}x_1
$$
and $\Prb_{\BC{E}}=\Prb^{(E_1,\B{E}_2)}$ with $E_1\sim\mathrm{Bernoulli}(1/2)$,
$$
\B{E}_2 := \begin{pmatrix} E_{21} \\ E_{22} \end{pmatrix}\sim\C{N}\left(\B{0} ,\begin{pmatrix}1&\rho\\ \rho&1\end{pmatrix}\right) \,
$$
normally distributed and $E_1 \indep \B{E}_2$. In an epidemiological setting, this SCM could be used to model whether a patient was treated or not ($X_1$) and the corresponding outcome for that patient ($X_2$).

Suppose in the actual world we did not assign treatment to a patient ($X_1=0$) and the outcome was $X_2=c \in\RN$. Consider the counterfactual query ``What would the outcome have been, if we had assigned treatment to this patient?''. We can answer this question by introducing a parallel counterfactual world that is modeled by the twin SCM $\C{M}_{\rho}^{\twin}$, as depicted in Figure~\ref{fig:Counterfactuals}. The counterfactual query then asks for $p(X_{2'}=x_{2'}\mid \intervene(X_{1'}=1, X_1=0),X_2=c)$. One can calculate that
$$
  \begin{pmatrix}X_{2'} \\X_2\end{pmatrix} \mid
  \intervene(X_{1'}=1, X_1=0) \sim \C{N}\left(\B{0}, \begin{pmatrix} 1 & \rho \\ \rho & 1 \end{pmatrix}\right)
$$
and hence $X_{2'} \mid \intervene(X_{1'}=1, X_1=0),X_2=c \sim \C{N}(\rho c,1-\rho^2)$. Note that the answer to the counterfactual query depends on a quantity $\rho$ that we cannot identify from the observational density $p(X_1,X_2)$ or the interventional densities $p(X_2\given \intervene(X_1=0))$ and $p(X_2\given \intervene(X_1=1))$, none of which depends on $\rho$. Therefore, even data from randomized controlled trials combined with observational data would not suffice to determine the value of this particular counterfactual query.
Indeed, SCMs $\C{M}_{\rho}$ and $\C{M}_{\rho'}$ with $\rho\neq\rho'$ are interventionally equivalent, but not counterfactually equivalent.
\end{example}

\begin{figure}[t]
  \begin{center}
  \adjustbox{scale=0.8,center}{%
  \begin{tikzpicture}
    \begin{scope}
    \node[exvar] (E2) at (1.25,0.75) {$\B{E}_2$};
    \node[exvar] (E1) at (1.25,2.0) {$E_1$};
    \node[var] (X1) at (0,1.25) {$X_1$}; 
    \node[var] (X2) at (0,0) {$X_2$}; 
    \draw[arr] (X1) -- (X2);
    \draw[arr] (E1) -- (X1);
    \draw[arr] (E2) -- (X2);
     \end{scope}
  \begin{scope}[shift={(4cm,0cm)}]
    \node[exvar] (E2) at (1.25,0.75) {$\B{E}_2$};
    \node[exvar] (E1) at (1.25,2.0) {$E_1$};
    \node[var] (X1) at (0,1.25) {$X_1$}; 
    \node[var] (X2) at (0,0) {$X_2$}; 
    \node[var] (X1a) at (2.5,1.25) {$X_{1'}$};
    \node[var] (X2a) at (2.5,0) {$X_{2'}$};
    \draw[arr] (X1) -- (X2);
    \draw[arr] (E2) -- (X2);
    \draw[arr] (E1) -- (X1);
    \draw[arr] (E2) -- (X2a);
    \draw[arr] (E1) -- (X1a);
    \draw[arr] (X1a) -- (X2a);
  \end{scope}
  \begin{scope}[shift={(9cm,0cm)}]
    \node[exvar] (E2) at (1.25,0.75) {$\B{E}_2$};
    \node[var] (X1) at (0,1.25) {$X_1$}; 
    \node[var] (X2) at (0,0) {$X_2$}; 
    \node[var] (X1a) at (2.5,1.25) {$X_{1'}$};
    \node[var] (X2a) at (2.5,0) {$X_{2'}$};
    \draw[arr] (X1) -- (X2);
    \draw[arr] (E2) -- (X2);
    \draw[arr] (E2) -- (X2a);
    \draw[arr] (X1a) -- (X2a);
  \end{scope}
  \end{tikzpicture}}
  \end{center}
  \caption{The augmented graph of the SCM $\C{M}_{\rho}$ (left), its twin SCM $\C{M}_{\rho}^{\twin}$ (center) and the intervened twin SCM $(\C{M}_{\rho}^{\twin})_{\intervene(\{1',1\},(1,0))}$ (right) of Example~\ref{ex:Counterfactuals}.}
   \label{fig:Counterfactuals}
\end{figure}

\subsubsection*{Section 3}
\addcontentsline{toc}{subsubsection}{\protect\numberline{}Section 3}%

\begin{example}[Mixtures of solutions are solutions]
\label{ex:MixturesOfSolutions}
Let $\C{M} = \langle \B{1}, \emptyset, \RN, \B{1}, f, \Prb_{\B{1}} \rangle$ be an SCM with causal mechanism $f:\C{X}\times\C{E}\to\C{X}$ defined by $f(x,e)=x - x^2 + 1$. There exist only two measurable solution functions $g_{\pm}:\C{E}\to\C{X}$ for $\C{M}$, defined by $g_{\pm}(e)=\pm 1$. Let $X:\Omega\to\RN$ be a random variable that is a nontrivial mixture of point masses on $\{-1,+1\}$. Then $X$ is a solution of $\C{M}$, however neither $g_{+}(E) = X$ a.s., nor $g_{-}(E) = X$ a.s., for any random variable $E$ such that $\Prb^E = \Prb_{\C{E}}$.
\end{example}

\begin{example}[Solvability is not preserved under perfect intervention]
\label{ex:InterventionUniqueSolvability}
Consider the SCM $\C{M} = \langle \B{2}, \emptyset, \RN^2, \B{1}, \B{f}, \Prb_{\B{1}} \rangle$ with the 
following causal mechanism
$$
	f_1(\B{x}) =x_1+x_1^2-x_2+1 \,, \quad f_2(\B{x}) =x_2(1-\bm{1}_{\{0\}}(x_1))+1 \,.
$$
This SCM has a unique solution $(0,1)$. Doing a perfect intervention $\intervene(\{1\}, \xi_1)$ for some $\xi_1 \neq 0$, however, leads to an intervened model 
$\C{M}_{\intervene(\{1\}, \xi_1)}$ that is not solvable. Performing instead the perfect 
intervention $\intervene(\{2\}, \xi_2)$ for some $\xi_2 > 1$ leads also to a nonuniquely 
  solvable SCM $\C{M}_{\intervene(\{2\}, \xi_2)}$ which has solutions with multiple induced distributions, for example, $(X_1,X_2) = (\phi(\xi_2) \sqrt{\xi_2-1},\xi_2)$ with some measurable $\phi : \RN \to \{-1,+1\}$, but also mixtures of those.
\end{example}

\subsubsection*{Section 4}
\addcontentsline{toc}{subsubsection}{\protect\numberline{}Section 4}%

\begin{figure}
\adjustbox{scale=0.8,center}{%
\begin{tikzpicture}
  \begin{scope}[shift={(0cm,0cm)}]
    \node[var] (X1) at (0,0) {$X_1$};
    \node[var] (X2) at (1.5,0) {$X_2$}; 
    \node[exvar] (E1) at (0,1.25) {$E_1$};
    \node[exvar] (E2) at (1.5,1.25) {$E_2$}; 
    \draw[arr] (X1) -- (X2);
    \draw[arr] (E1) -- (X1);
    \draw[arr] (E2) -- (X2);
    \node (Mb) at (-1.0,-0.2) {$\C{G}^a(\bar{\C{M}})$};
  \end{scope}
  \begin{scope}[shift={(4cm,0cm)}]
    \node[var] (X1) at (0,0) {$X_1$};
    \node[var] (X2) at (1.5,0) {$X_2$}; 
    \node[exvar] (E1) at (0,1.25) {$E_1$};
    \node[exvar] (E2) at (1.5,1.25) {$E_2$}; 
    \draw[arr] (E1) -- (X1);
    \draw[arr] (E2) -- (X2);
    \node (Mh) at (-1.0,-0.2) {$\C{G}^a(\hat{\C{M}})$};
  \end{scope}
  \begin{scope}[shift={(8cm,0cm)}]
    \node[var] (X1) at (0,0) {$X_1$};
    \node[var] (X2) at (1.5,0) {$X_2$}; 
    \node[exvar] (E1) at (0,1.25) {$E_1$};
    \node[exvar] (E2) at (1.5,1.25) {$E_2$}; 
    \draw[arr] (E1) -- (X2);
    \draw[arr] (E1) -- (X1);
    \draw[arr] (E2) -- (X2);
    \node (Mb) at (-1.0,-0.2) {$\C{G}^a(\C{M})$};
  \end{scope}
  \begin{scope}[shift={(12cm,0cm)}]
    \node[var] (X1) at (0,0) {$X_1$};
    \node[var] (X2) at (1.5,0) {$X_2$}; 
    \node[exvar] (E1) at (0,1.25) {$E_1$};
    \node[exvar] (E2) at (1.5,1.25) {$E_2$}; 
    \draw[arr] (E1) -- (X1);
    \draw[arr] (X1) -- (X2);
    \draw[arr] (E1) -- (X2);
    \node (Mh) at (-1.0,-0.2) {$\C{G}^a(\tilde{\C{M}})$};
  \end{scope}
\end{tikzpicture}}
\caption{The augmented graphs of SCMs $\bar{\C{M}}$, $\hat{\C{M}}$, $\C{M}$, and $\tilde{\C{M}}$ that appear in Examples~\ref{ex:InterventionalEquivalenceDoesNotImplySameAugmentedGraph}, \ref{ex:CounterfactualEquivalenceDoesNotImplySameAugmentedGraph}, and \ref{ex:DetectingBiDirectedEdges}.\label{fig:SimpleExamples}}
\end{figure}

\begin{example}[Counterfactually equivalent SCMs with different graphs]
\label{ex:CounterfactualEquivalenceDoesNotImplySameAugmentedGraph}
Consider the SCM $\hat{\C{M}} = 
\langle \B{2},\B{2},\{-1,1\}^2,\{-1,1\}^2,\hat{\B{f}},\Prb_{\BC{E}}\rangle$ with causal mechanism given by
$\hat{f}_1(\B{x},\B{e}) = e_1$ and $\hat{f}_2(\B{x},\B{e}) = e_2$,
and $\Prb_{\BC{E}} = \Prb^{\B{E}}$ with $E_1,E_2 \sim \C{U}(\{-1,1\})$ uniformly
distributed and $E_1 \indep E_2$.
Consider also the SCM $\C{M}$ that is the same as $\hat{\C{M}}$ except for its causal mechanism, which is given by $f_1(\B{x},\B{e}) = e_1$ and $f_2(\B{x},\B{e}) = e_1 e_2$. 
Then $\C{M}$ and $\hat{\C{M}}$ are counterfactually equivalent although $\C{G}(\C{M})$ is 
not equal to $\C{G}(\hat{\C{M}})$ (see Figure~\ref{fig:SimpleExamples}).
\end{example}

\subsubsection*{Section 5}
\addcontentsline{toc}{subsubsection}{\protect\numberline{}Section 5}%

\begin{example}[Marginalization condition of an SCM is not a necessary condition]
\label{ex:MarginalizationSufficientCondition}
Consider the SCM $\C{M} = \langle \B{4}, \B{1}, \RN^4, \RN, \B{f}, \Prb_{\RN} \rangle$ with causal
mechanism given by
$$
  f_1(\B{x},e) = e \,, \quad f_2(\B{x},e) = x_1 \,, \quad f_3(\B{x},e) = x_2 \,,
  \quad f_4(\B{x},e) = x_4 
$$
and $\Prb_{\RN}$ is the standard-normal measure on $\RN$. This SCM is solvable w.r.t.\ $\C{L}=\{2,4\}$, but not
uniquely solvable w.r.t.\ $\C{L}$, and hence we cannot apply Definition~\ref{def:MarginalSCM} to
$\C{L}$. However, the SCM $\tilde{\C{M}}$ on the endogenous variables $\{1,3\}$ with the causal
mechanism $\tilde{\B{f}}$ given by $\tilde{f}_1(\B{x},e) = e$ and $\tilde{f}_3(\B{x},e) = x_1$
is counterfactually equivalent to $\C{M}$ w.r.t.\ $\{1,3\}$, which can be checked easily.  \end{example}

\begin{example}[Graph of the marginal SCM is a strict subgraph of the latent projection]
\label{ex:LatentProjection}
Consider the SCM $\C{M}=\langle \B{3},\B{1},\RN^3,\RN,\B{f},\Prb_{\RN}\rangle$ with causal mechanism given by
$$
  f_1(\B{x},\B{e}) = e_1 \,, \quad
  f_2(\B{x},\B{e}) = x_1 - x_3 \,, \quad
  f_3(\B{x},\B{e}) = x_1
$$
and take for $\Prb_{\RN}$ the standard-normal measure on $\RN$. In contrast, to the (augmented) graph of $\C{M}$, there is no directed path in the (augmented) graph of the marginal SCM $\C{M}_{\marg(\{3\})}$.
\end{example}

\subsubsection*{Section 7}
\addcontentsline{toc}{subsubsection}{\protect\numberline{}Section 7}%

\begin{example}[Detecting a bidirected edge in the graph of an SCM]
\label{ex:DetectingBiDirectedEdges}
Consider the SCM $\bar{\C{M}} = 
\langle \B{2},\B{2},\{-1,1\}^2,\{-1,1\}^2,\bar{\B{f}},\Prb_{\BC{E}}\rangle$ with causal mechanism given by
$\bar{f}_1(\B{x},\B{e}) = e_1$ and $\bar{f}_2(\B{x},\B{e}) = x_1e_2$,
and $\Prb_{\BC{E}} = \Prb^{\B{E}}$ with $E_1,E_2 \sim \C{U}(\{-1,1\})$ uniformly
distributed and $E_1 \indep E_2$.
Consider also the SCM $\tilde{\C{M}}$ that is the same as $\bar{\C{M}}$ except for its causal mechanism, which is given by $\tilde{f}_1(\B{x},\B{e}) = e_1$ and $\tilde{f}_2(\B{x},\B{e}) = x_1 e_1$. See Figure~\ref{fig:SimpleExamples} for their augmented graphs. For the SCM $\tilde{\C{M}}$ we observe that the marginal interventional distribution $\Prb_{\tilde{\C{M}}_{\intervene(\{1\},\xi_1)}}(X_2=-1)$ is not equal to the conditional distribution $\Prb_{\tilde{\C{M}}}(X_2=-1 \given X_1=\xi_1)$ for both $\xi_1=-1$ and $\xi_1=1$. This observation suffices to identify the presence of the bidirected edge $1\oto 2$ in the graph $\C{G}(\tilde{\C{M}})$. For the SCM $\bar{\C{M}}$, whose graph does not contain the bidirected edge $1\oto 2$, the marginal interventional distribution and conditional distribution coincide.
\end{example}

\section{Proofs}
\label{app:AppendixProofs}
\setcounter{section}{5}
\setcounter{theorem}{0}

This appendix contains the proofs of all the theoretical results in the appendices \ref{app:AppendixCausalGraphicalModels}, \ref{app:AppendixSolvabilityResults} and \ref{app:AppendixLinSCMs}, and the main text. Some of the proofs will rely on the measure theoretic terminology and results of Appendix~\ref{app:AppendixMST}.

\subsection{Proofs of the appendices}

\subsubsection*{Appendix A}
\addcontentsline{toc}{subsubsection}{\protect\numberline{}Appendix A}%

\begin{proof}[Proof of Lemma~\ref{lemm:DWalksPaths}]
It suffices to show that for every $C$-$d$-open walk between $i$ and
$j$ in $\C{G}$, there exists a $C$-$d$-open path between $i$ and
$j$ in $\C{G}$. Take a $C$-$d$-open walk $\pi = (i=i_0, \dots ,i_n=j)$. 
If a node $\ell$ occurs more than once in $\pi$, let $i_j$
be the first occurrence of $\ell$ in $\pi$ and $i_k$ the last occurrence of $\ell$ in $\pi$. 
We now construct a new walk $\pi'$ from $\pi$ by removing the subwalk between $i_j$ and
$i_k$ of $\pi$ from $\pi$. It is easy to check that the new walk $\pi'$ is still $C$-$d$-open. 
If $\ell$ is an endpoint on $\pi'$, then $i_j$ or $i_k$ must be endpoint of $\pi$, and hence
$\ell \notin C$. 
If $\ell$ is a non-endpoint non-collider on $\pi'$, then also $i_j$ or $i_k$ must have been a non-endpoint non-collider on $\pi$,
and hence $\ell \notin C$. If $\ell$ is a collider on $\pi'$, then either
(i) $i_j$ or $i_k$ are both colliders on $\pi$, and hence $\ell$ is ancestor of
$C$ in $\C{G}$, or
(ii) on the subwalk between $i_j$ and $i_k$ that was removed, there must be a directed path in $\C{G}$ from 
$i_j$ or $i_k$ to a collider in $\an_{\C{G}}(C)$, and hence, $\ell$ is in $\an_{\C{G}}(C)$.
The other nodes on $\pi'$ cannot be responsible for $C$-$d$-blocking the walk, since they also occur 
(together with their adjacent edges) on $\pi$ and they do not $C$-$d$-block $\pi$.

In $\pi'$, the number of nodes that occur multiple times is at least one less than in $\pi$. Repeat this procedure until no
repeated nodes are left.
\end{proof}

\begin{proof}[Proof of Theorem~\ref{thm:dgMarkovPropertySCMThreeSpecialCases}]
  The first case is a well known result. An elementary proof is obtained by noting that an acyclic
  system of structural equations trivially satisfies the local directed Markov property,
  and then apply \citep[Proposition 4]{LDLL90}, followed by applying the stability of $d$-separation
  with respect to (graphical) marginalization \citep[Lemma 2.2.15]{FM17}. 
  Alternatively, the result also follows from sequential application of Theorems~3.8.2, 3.8.11, 3.7.7, 3.7.2 and 3.3.3
  (using Remark~3.3.4) in \citep{FM17}.

  The discrete case is proved by the series of results Theorem~3.8.12,
  Remark~3.7.2, Theorem~3.6.6 and 3.5.2 in \citep{FM17}. 
  
  
  The linear case is proved in Example~3.8.17 in \citep{FM17}. To connect the assumptions made there
  with the ones we state here, 
  observe that under the linear transformation rule for Lebesgue measures, the image measure of $\Prb_{\BC{E}}$ under the linear mapping $\RN^{\C{J}} \to \RN^{\C{I}} : \B{e} \mapsto \Gamma_{\C{IJ}}\B{e}$ gives a measure on $\BC{X} = \RN^{\C{I}}$ with a density w.r.t.\ the Lebesgue measure on $\RN^{\C{I}}$, as long as the image of the linear mapping is the entire $\RN^{\C{I}}$. This is guaranteed
  if each causal mechanism has a nontrivial dependence on some exogenous variable(s), that is, for each
  $i \in \C{I}$ there is some $j \in \C{J}$ with $\Gamma_{ij } \ne 0$.
\end{proof}

\begin{proof}[Proof of Proposition~\ref{prop:Acyclification}]
This follows directly from the fact that the strongly connected components of
$\C{G}^a(\C{M})$ form a DAG by
Lemma~\ref{lemm:GraphOfStronglyConnectedComponentsDAG} and that the directed edges in
$\C{G}^a(\acy(\C{M}))$ by construction respect every topological ordering of that DAG. 
Both SCMs are observationally equivalent by construction.
\end{proof}

\begin{proof}[Proof of Proposition~\ref{prop:CompatibilityAcyclification}]
This follows immediately from the Definitions~\ref{def:Acyclification} and \ref{def:GraphicalAcyclification}.
\end{proof}

\begin{proof}[Proof of Lemma~\ref{lemm:SigmaWalksPaths}]
It suffices to show that for every $C$-$\sigma$-open walk between $i$ and
$j$ in $\C{G}$, there exists a $C$-$\sigma$-open path between $i$ and
$j$ in $\C{G}$. Let $\pi = (i=i_0,\dots,i_n=j)$ be a $C$-$\sigma$-open walk in $\C{G}$. 
If a node $\ell$ occurs more than once in $\pi$, let $i_j$
be the first node in $\pi$ and $i_k$ the last node in $\pi$ that are in the same 
strongly connected component as $\ell$. 
Since $i_j$ and $i_k$ are in the same strongly connected
component, there are directed paths $i_j \to \dots \to i_k$ and $i_k \to \dots \to i_j$ in $\C{G}$.
We now construct a new walk $\pi'$ from $\pi$ by replacing the subwalk between $i_j$ and
$i_k$ of $\pi$ by a particular directed path between $i_j$ and $i_k$:
(i) If $k = n$, or if $k < n$ and $i_k \to i_{k+1}$ on $\pi$, we replace it by a shortest directed path
$i_j \to \dots \to i_k$, otherwise (ii) we replace it by a shortest directed path 
$i_j \leftarrow \dots \leftarrow i_k$. 
We now show that the new walk $\pi'$ is still $C$-$\sigma$-open. 

$\pi'$ cannot become $C$-$\sigma$-blocked through one of the initial nodes $i_0 \dots i_{j-1}$ or 
one of the final nodes $i_{k+1} \dots i_n$ on $\pi'$, since these nodes occur in the same local
configuration on $\pi$ and do not $C$-$\sigma$-block $\pi$ by assumption.
Furthermore, $\pi'$ cannot become $C$-$\sigma$-blocked through one of the nodes
strictly between $i_j$ and $i_k$ on $\pi'$ (if there are any), since these nodes are 
all non-endpoint non-colliders that only point to nodes in the same strongly connected component on $\pi'$.
Because $\pi$ is $C$-$\sigma$-open, $i_k \notin C$ if $k = n$ or if $i_k \to i_{k+1}$ on $\pi$. 
This holds in particular in case (i). 
Similarly, $i_j \notin C$ if $j = 0$ or $i_{j-1} \ot i_j$ on $\pi$.

In case (i), $\pi'$ is not $C$-$\sigma$-blocked by $i_k$ because $i_k$ is a non-collider on $\pi'$ 
but $i_k \notin C$. Also $i_j$ does not $C$-$\sigma$-block $\pi'$. Assume $i_j \ne i_k$ (otherwise
there is nothing to prove). If $j = 0$, or if $j > 0$ and $i_{j-1} \ot i_j$ on $\pi'$, then the same 
holds for $\pi$ and hence $i_j \notin C$; $i_j$ is then a non-collider on $\pi'$, but $i_j \notin C$. 
If $j > 0$ and $i_{j-1} \oto i_j$ or $i_{j-1} \to i_j$ on $\pi'$ then $i_j$ is a non-endpoint non-collider 
on $\pi'$ that does not point to a node in another strongly connected component.

Now consider case (ii). If $j = 0$ or $i_{j-1} \ot i_j$ on $\pi'$ then this case is analogous
to case (i). So assume $j > 0$ and $i_{j-1} \to i_j$ or $i_{j-1} \oto i_j$ on $\pi'$. 
If $i_j$ is an endpoint of $\pi'$, then $i_j = i_k$ and $k = n$ and therefore $i_k \notin C$, 
and hence $i_j$ and $i_k$ do not $C$-$\sigma$-block $\pi'$. Otherwise, $i_j$ must be a 
collider on $\pi'$ (whether $i_j = i_k$ or not). Then 
on the subwalk of $\pi$ between $i_j$ and $i_k$ there must be a directed path from $i_j$ 
to a collider that is ancestor of $C$, which implies that $i_j$ is itself ancestor of $C$, 
and hence $i_j$ does not $C$-$\sigma$-block $\pi'$. Also $i_k$ cannot $C$-$\sigma$-block $\pi'$.
Assume $i_j \ne i_k$ (otherwise there is nothing to prove). Since $i_k \ot i_{k+1}$ or 
$i_k \oto i_{k+1}$ on $\pi'$, $i_k$ is a non-endpoint non-collider on $\pi'$ that does not point to
a node in another strongly connected component.

Now in $\pi'$, the number of nodes that occurs more than once is at least one less than in $\pi$. Repeat this procedure until no
nodes occur more than once.
\end{proof}

\begin{proof}[Proof of Proposition~\ref{prop:SigmaSeparationAsDSeparation}]
This follows directly as a special case of Corollary~2.8.4 in \citep{FM17}.
\end{proof}

\begin{proof}[Proof of Theorem~\ref{thm:gdgMarkovPropertySCM}]
An SCM $\C{M}$ that is uniquely solvable w.r.t.\ each strongly connected component is
uniquely solvable and hence, by Theorem~\ref{thm:UniqueSolvabilityIffCondition}, all its
solutions have the same observational distribution. 
The last statement follows from the series of results Theorem~3.8.2,
3.8.11, Lemma~3.7.7 and Remark~3.7.2 in \citep{FM17}. Alternatively,
we give here a shorter proof: Under the stated conditions one can
always construct the acyclification $\acy(\C{M})$ which is observationally
equivalent to $\C{M}$ and is acyclic (see Proposition~\ref{prop:Acyclification}) and
hence we can apply Theorem~\ref{thm:dgMarkovPropertySCMThreeSpecialCases} to
$\acy(\C{M})$.
Together with Proposition~\ref{prop:CompatibilityAcyclification} and \ref{prop:SigmaSeparationAsDSeparation} this gives
$$
A \mathrel{\mathop{\sigmablocked}^{\sigma}_{\C{G}(\C{M})}} B \given C
 \ \iff \
A \mathrel{\mathop{\sigmablocked}^{d}_{\acy(\C{G}(\C{M}))}} B \given C 
 \ \implies \
A \mathrel{\mathop{\sigmablocked}^{d}_{\C{G}(\acy(\C{M}))}} B \given C 
 \ \implies \
\B{X}_A \mathrel{\mathop{\indep}_{\Prb_{\C{M}}^{\B{X}}}} \B{X}_B \given \B{X}_C \,,
$$
for $A,B,C \subseteq \C{I}$ and $\B{X}$ a solution of $\C{M}$.
\end{proof}

\begin{proof}[Proof of Corollary~\ref{coro:gdgMarkovPropertyInterventionalSCM}]
First observe that simplicity is preserved under both perfect intervention and the
twin operation (see
Proposition~\ref{prop:SimplenessClosedUnderResults}). Now the first statement follows from
Theorem~\ref{thm:gdgMarkovPropertySCM} if one takes into account the identities of Proposition~\ref{prop:InterventionOnGraph} and \ref{prop:CommuteTwinGraph}. Similarly, the last statement follows from Theorem~\ref{thm:dgMarkovPropertySCMThreeSpecialCases}.
\end{proof}

\begin{proof}[Proof of Proposition~\ref{prop:ExistenceOfCompatibleSystemOfMeasurableSolutionFunctions}]
Let $\tilde{\C{M}} =: \langle \C{V}, \hat{\C{H}}, \BC{X}, \BC{E}, \tilde{\B{f}}, \Prb_{\BC{E}} \rangle$ be the induced SCM. Observe that every loop $\C{O}\in\C{L}(\C{G}(\tilde{\C{M}}))$ is a loop in $\C{L}(\C{G})$. Fix $\check{\B{x}}\in\BC{X}$ and $\check{\B{e}}\in\BC{E}$. For every $\C{O}\in\C{L}(\C{G}(\tilde{\C{M}}))$, define 
$$
I_{\C{O}}:=(\pa_{\C{G}}(\C{O})\setminus\C{O})\setminus(\pa(\C{O})\setminus\C{O})\subseteq\tilde{\C{I}}
$$ 
and
$$
J_{\C{O}}:= \{ \C{F}\in\tilde{\C{J}} \,:\, \C{F}\cap\C{O} \neq\emptyset \} \setminus \pa(\C{O})\subseteq\tilde{\C{J}} \,.
$$
Now, define the family of measurable mappings $(\tilde{\B{g}}_{\C{O}})_{\C{O}\in\C{L}(\C{G}(\tilde{\C{M}}))}$, where the mapping $\tilde{\B{g}}_{\C{O}}:\BC{X}_{\pa(\C{O})\setminus\C{O}}\times\BC{E}_{\pa(\C{O})}\to\BC{X}_{\C{O}}$ is given by 
$$
\tilde{\B{g}}_{\C{O}}(\B{x}_{\pa(\C{O})\setminus\C{O}},\B{e}_{\pa(\C{O})}) :=
\B{g}_{\C{O}}(\B{x}_{\pa(\C{O})\setminus\C{O}},\check{\B{x}}_{I_{\C{O}}},\B{e}_{\pa(\C{O})},\check{\B{e}}_{J_{\C{O}}})
$$
where $\B{x}_{\pa_{\C{G}}(\C{O})\setminus\C{O}}=(\B{x}_{\pa(\C{O})\setminus\C{O}},\check{\B{x}}_{I_{\C{O}}})$ and $\widehat{\B{e}}_{\C{O}} = (\B{e}_{\pa(\C{O})},\check{\B{e}}_{J_{\C{O}}})$. Observe that from the definition of the parents (see Definition~\ref{def:Parents}) it follows that for $\Prb_{\BC{E}}$-almost every $\B{e}\in\BC{E}$ and for all $\B{x}\in\BC{X}$ we have
$$
\B{x}_{\C{O}} = \tilde{\B{f}}_{\C{O}}(\B{x}_{\setminus I_{\C{O}}},\check{\B{x}}_{I_{\C{O}}},\B{e}_{\setminus J_{\C{O}}},\check{\B{e}}_{J_{\C{O}}}) \quad\iff\quad
\B{x}_{\C{O}} = \tilde{\B{f}}_{\C{O}}(\B{x},\B{e}) \,.
$$
This, together with the fact that the family of mappings $(\B{g}_{\C{O}})_{\C{O}\in\C{L}(\C{G})}$ is a compatible system of solution functions, implies that for $\Prb_{\BC{E}}$-almost every $\B{e}\in\BC{E}$ and for all $\B{x}\in\BC{X}$ we have
$$
  \B{x}_{\C{O}} = \tilde{\B{g}}_{\C{O}}(\B{x}_{\pa(\C{O})\setminus\C{O}},\B{e}_{\pa(\C{O})}) \quad\implies\quad
  \B{x}_{\C{O}} = \tilde{\B{f}}_{\C{O}}(\B{x},\B{e}) \,.
$$
Hence, $\iota(\widehat{\C{M}})$ is loop-wisely solvable and thus $(\tilde{\B{g}}_{\C{O}})_{\C{O}\in\C{L}(\C{G}(\tilde{\C{M}}))}$ is a family of measurable solution functions. In particular, for all $\C{O},\tilde{\C{O}}\in\C{L}(\C{G}(\tilde{\C{M}}))$ with $\tilde{\C{O}}\subseteq\C{O}$ and for $\Prb_{\BC{E}}$-almost every $\B{e}\in\BC{E}$ and for all $\B{x}\in\BC{X}$ we have
$$
\B{x}_{\C{O}} = \tilde{\B{g}}_{\C{O}}(\B{x}_{\pa(\C{O})\setminus\C{O}},\B{e}_{\pa(\C{O})}) \quad\implies\quad
\B{x}_{\tilde{\C{O}}} = \tilde{\B{g}}_{\tilde{\C{O}}}(\B{x}_{\pa(\tilde{\C{O}})\setminus\tilde{\C{O}}},\B{e}_{\pa(\tilde{\C{O}})})  \,.
$$
From this we conclude that $(\tilde{\B{g}}_{\C{O}})_{\C{O}\in\C{L}(\C{G}(\tilde{\C{M}}))}$ is a compatible system of solution functions.
\end{proof}

\begin{proof}[Proof of Lemma~\ref{lemm:LoopwiseUniqueSolvabilityAndSimple}]
Suppose $\C{M}$ is loop-wisely uniquely solvable and consider a subset $\C{O}\subseteq\C{I}$. Consider the induced subgraph 
$\C{G}^a(\C{M})_{\C{O}}$ of $\C{G}^a(\C{M})$ on the nodes $\C{O}$. Then every strongly connected
component of $\C{G}^a(\C{M})_{\C{O}}$ is
an element of $\C{L}(\C{G}(\C{M}))$. Let $\C{C}$ be such a strongly connected component in $\C{G}^a(\C{M})_{\C{O}}$, and let $\B{g}_{\C{C}}:\BC{X}_{\pa(\C{C})\setminus\C{C}}\times\BC{E}_{\pa(\C{C})}\to\BC{X}_{\C{C}}$ be a measurable solution function for $\C{M}$ w.r.t.\ $\C{C}$. Since $\C{G}^a(\C{M})_{\C{O}}$
partitions into strongly connected components, we can recursively (by following a topological
ordering of the DAG $\C{G}^a(\C{M})_{\C{O}}^{\scc}$ from Lemma~\ref{lemm:GraphOfStronglyConnectedComponentsDAG}) insert these
mappings into each other to obtain a mapping
$\B{g}_{\C{O}}:\BC{X}_{\pa(\C{O})\setminus\C{O}}\times\BC{E}_{\pa(\C{O})}\to\BC{X}_{\C{O}}$ that
makes $\C{M}$ uniquely solvable w.r.t.\ $\C{O}$. 
\end{proof}

\begin{proof}[Proof of Proposition~\ref{prop:SimpleSCMsCompatibleSysOfSolFunctions}]
  Let $(\B{g}_{\C{O}})_{\C{O}\in\C{L}(\C{G}(\C{M}))}$ be any family of measurable solution functions, where $\B{g}_{\C{O}}$ is measurable solution function of $\C{M}$ w.r.t.\ $\C{O}$. Then, for $\C{O},\tilde{\C{O}}\in\C{L}(\C{G}(\C{M}))$ such that $\tilde{\C{O}}\subseteq\C{O}$, we have that for $\Prb_{\BC{E}}$-almost every $\B{e}\in\BC{E}$ and for all $\B{x}\in\BC{X}$
$$
  \B{x}_{\C{O}} = \B{f}_{\C{O}}(\B{x},\B{e}) 
  \quad\implies\quad
  \B{x}_{\tilde{\C{O}}} = \B{f}_{\tilde{\C{O}}}(\B{x},\B{e}) \,.
$$
This implies that for $\Prb_{\BC{E}}$-almost every $\B{e}\in\BC{E}$ and for all $\B{x}\in\BC{X}$
$$
\B{x}_{\C{O}} = \B{g}_{\C{O}}(\B{x}_{\pa(\C{O})\setminus\C{O}},\B{e}_{\pa(\C{O})}) 
\quad\implies\quad
\B{x}_{\tilde{\C{O}}} = \B{g}_{\tilde{\C{O}}}(\B{x}_{\pa(\tilde{\C{O}})\setminus\tilde{\C{O}}},\B{e}_{\pa(\tilde{\C{O}})}) \,.
$$
\end{proof}

\begin{proof}[Proof of Corollary~\ref{coro:SufficientConditionCausesConfounder}]
This follows directly from Proposition~\ref{prop:DirectedPathEdges} and \ref{prop:BidirectedEdges}.
\end{proof}

\subsubsection*{Appendix B}
\addcontentsline{toc}{subsubsection}{\protect\numberline{}Appendix B}%

\begin{proof}[Proof of Proposition~\ref{prop:SolvabilityIfCondition}]
Let $\tilde{\B{f}} : \BC{E} \times \BC{X} \to \BC{X}$ be the causal mechanism of a
structurally minimal SCM that is equivalent to $\C{M}$ (see Proposition~\ref{prop:StructMinimalRepresentation}).
In particular, for any $\B{\epsilon}_{\setminus\pa(\C{O})} \in \BC{E}_{\setminus\pa(\C{O})}$ and
$\B{\xi}_{\setminus\pa(\C{O})} \in \BC{X}_{\setminus\pa(\C{O})}$, we have that for all $\B{x} \in \BC{X}$ and all $\B{e} \in \BC{E}$,
$\tilde{\B{f}}(\B{x},\B{e}) = \tilde{\B{f}}(\B{x}_{\pa(\C{O})},\B{\xi}_{\setminus\pa(\C{O})},\B{e}_{\pa(\C{O})},\B{\epsilon}_{\setminus\pa(\C{O})})$.
This means that we may also consider $\tilde{\B{f}}$ as a mapping $\tilde{\B{f}}:\BC{X}_{\pa(\C{O})} \times \BC{E}_{\pa(\C{O})} \to \BC{X}$.

Consider the set 
$$
  \tilde{\BC{S}}:=\{ 
  (\B{e}_{\pa(\C{O})},\B{x}_{\pa(\C{O})\setminus\C{O}},\B{x}_{\C{O}}) \in \BC{E}_{\pa(\C{O})}\times\BC{X}_{\pa(\C{O})\setminus\C{O}}\times\BC{X}_{\C{O}} \,:\, 
  \B{x}_{\C{O}} = \tilde{\B{f}}_{\C{O}}(\B{x}_{\pa(\C{O})},\B{e}_{\pa(\C{O})}) \} \,.
$$
By similar reasoning as in the proof of Theorem~\ref{thm:SolvabilityIffCondition}, $\tilde{\BC{S}}$ is measurable.

By assumption, for $\Prb_{\BC{E}}$-almost every $\B{e}\in\BC{E}$ and for all $\B{x}_{\setminus \C{O}}\in\BC{X}_{\setminus\C{O}}$ the space
$\{ \B{x}_{\C{O}}\in\BC{X}_{\C{O}} : \B{x}_{\C{O}} = \B{f}_{\C{O}}(\B{x},\B{e}) \}$
is nonempty and $\sigma$-compact.
By applying Lemma~\ref{lemm:MeasurableMapsBetweenStandardSpaces} to the canonical projection $\B{pr}_{\BC{E}_\pa(\C{O})}:\BC{E} \to \BC{E}_{\pa(\C{O})}$
and using the equivalence of $\B{f}$ and $\tilde{\B{f}}$, we obtain that 
for $\Prb_{\BC{E}_{\pa(\C{O})}}$-almost every $\B{e}_{\pa(\C{O})}\in\BC{E}_{\pa(\C{O})}$ and for all $\B{x}_{\pa(\C{O})\setminus \C{O}}\in\BC{X}_{\pa(\C{O})\setminus\C{O}}$ the space
  $$
  \tilde{\BC{S}}_{(\B{e}_{\pa(\C{O})},\B{x}_{\pa(\C{O})\setminus\C{O}})} := 
  \{ \B{x}_{\C{O}}\in\BC{X}_{\C{O}} : \B{x}_{\C{O}} = \tilde{\B{f}}_{\C{O}}(\B{x}_{\pa(\C{O})},\B{e}_{\pa(\C{O})}) \}$$
is nonempty and $\sigma$-compact.

The second measurable selection theorem, Theorem~\ref{thm:MeasurableSelectionThmSigmaCompact}, now implies that there exists
a measurable $\B{g}_{\C{O}} : \BC{X}_{\pa(\C{O})\setminus\C{O}} \times \BC{E}_{\pa(\C{O})} \to \BC{X}_{\C{O}}$ such that for
  $\Prb_{\BC{E}_{\pa(\C{O})}}$-almost every $\B{e}_{\pa(\C{O})}\in\BC{E}_{\pa(\C{O})}$ and for all $\B{x}_{\pa(\C{O})\setminus\C{O}} \in \BC{X}_{\pa(\C{O})\setminus\C{O}}$
  $$\B{g}_{\C{O}}(\B{x}_{\pa(\C{O})\setminus\C{O}},\B{e}_{\pa(\C{O})}) = \tilde{\B{f}}_{\C{O}}\big(\B{x}_{\pa(\C{O})\setminus\C{O}},\B{g}_{\C{O}}(\B{x}_{\pa(\C{O})\setminus\C{O}},\B{e}_{\pa(\C{O})}),\B{e}_{\pa(\C{O})}\big).$$
Once more applying Lemma~\ref{lemm:MeasurableMapsBetweenStandardSpaces}, we obtain that for $\Prb_{\BC{E}}$-almost every $\B{e}\in\BC{E}$ and for all $\B{x} \in \BC{X}$ 
  $$\B{x}_{\C{O}} = \B{g}_{\C{O}}(\B{x}_{\pa(\C{O})\setminus\C{O}},\B{e}_{\pa(\C{O})}) \implies \B{x}_{\C{O}} = \B{f}_{\C{O}}(\B{x},\B{e}).$$
Hence $\C{M}$ is solvable w.r.t.\ $\C{O}$.
\end{proof}

\begin{proof}[Proof of Proposition~\ref{prop:UniqueSolvabilityAncestralUnion}]
\Joris{Fixed a bug in the previous proof. We need the additional assumption that
$\C{M}$ is uniquely solvable w.r.t.\ the intersection $\C{A} \cap \tilde{\C{A}}$.}
Without loss of generality, we assume that $\C{M}$ is structurally minimal (see Proposition~\ref{prop:StructMinimalRepresentation}).
Define $\C{C} := \C{A} \cap \tilde{\C{A}}$ and $\C{D} := \C{A} \cup \tilde{\C{A}}$.
Let $\B{g}_{\C{A}}$, $\B{g}_{\tilde{\C{A}}}$ be measurable solution functions for $\C{M}$ 
w.r.t.\ $\C{A}$ and $\tilde{\C{A}}$, respectively. 
Note that $\pa(\C{C}) \setminus \C{C} \subseteq \pa(\C{A}) \setminus \C{A}$ and
similarly $\pa(\C{C}) \setminus \C{C} \subseteq \pa(\tilde{\C{A}}) \setminus \tilde{\C{A}}$.
Indeed, for $c \in \pa(\C{C})$: if $c \in \C{O}$ then $c \in \C{C}$ because
$\C{A}$ and $\tilde{\C{A}}$ are both ancestral in $\C{G}(\C{M})_{\C{O}}$, 
while if $c \notin \C{O}$ then $c \notin \C{A}$ and $c \notin \tilde{\C{A}}$.
Hence by Lemma~\ref{lemm:ConsistentSolutionFunctions}, for
$\Prb_{\BC{E}}$-almost all $\B{e}\in\BC{E}$ and for all $\B{x}\in\BC{X}$
$$(\B{g}_{\C{A}})_{\C{C}}(\B{x}_{\pa(\C{A})\setminus \C{A}},\B{e}_{\pa(\C{A})}) = 
  (\B{g}_{\tilde{\C{A}}})_{\C{C}}(\B{x}_{\pa(\tilde{\C{A}})\setminus \tilde{\C{A}}},\B{e}_{\pa(\tilde{\C{A}})}) \,.$$
Hence for $\Prb_{\BC{E}}$-almost every $\B{e}\in\BC{E}$ and for all $\B{x}\in\BC{X}$
$$ 
  \begin{aligned}
  &\B{x}_{\C{D}} = \B{f}_{\C{D}}(\B{x},\B{e}) \\
  \iff
  &\begin{cases}
    \B{x}_{\C{A} \setminus \C{C}} &= \B{f}_{\C{A} \setminus \C{C}}(\B{x},\B{e}) \\
    \B{x}_{\C{C}} &= \B{f}_{\C{C}}(\B{x},\B{e}) \\
    \B{x}_{\C{C}} &= \B{f}_{\C{C}}(\B{x},\B{e}) \\
    \B{x}_{\tilde{\C{A}} \setminus \C{C}} &= \B{f}_{\tilde{\C{A}} \setminus \C{C}}(\B{x},\B{e})
  \end{cases} \\
  \iff
  &\begin{cases}
    \B{x}_{\C{A} \setminus \C{C}} &= (\B{g}_{\C{A}})_{\C{A}\setminus \C{C}}(\B{x}_{\pa(\C{A})\setminus \C{A}},\B{e}_{\pa(\C{A})}) \\
    \B{x}_{\C{C}} &= (\B{g}_{\C{A}})_{\C{C}}(\B{x}_{\pa(\C{A})\setminus \C{A}},\B{e}_{\pa(\C{A})}) \\
    \B{x}_{\C{C}} &= (\B{g}_{\tilde{\C{A}}})_{\C{C}}(\B{x}_{\pa(\tilde{\C{A}})\setminus \tilde{\C{A}}},\B{e}_{\pa(\tilde{\C{A}})}) \\
    \B{x}_{\tilde{\C{A}}\setminus \C{C}} &= (\B{g}_{\tilde{\C{A}}})_{\tilde{\C{A}}\setminus \C{C}}(\B{x}_{\pa(\tilde{\C{A}})\setminus \tilde{\C{A}}},\B{e}_{\pa(\tilde{\C{A}})})
  \end{cases}  \\
  \iff
  &\begin{cases}
    \B{x}_{\C{A}} &= \B{g}_{\C{A}}(\B{x}_{\pa(\C{A})\setminus \C{A}},\B{e}_{\pa(\C{A})}) \\
    \B{x}_{\tilde{\C{A}}} &= \B{g}_{\tilde{\C{A}}}(\B{x}_{\pa(\tilde{\C{A}})\setminus \tilde{\C{A}}},\B{e}_{\pa(\tilde{\C{A}})}) \,.
  \end{cases} 
\end{aligned}
$$
Now $\pa(\C{A}) \setminus \C{A} \subseteq \pa(\C{D}) \setminus \C{D}$, and similarly,
$\pa(\tilde{\C{A}}) \setminus \tilde{\C{A}} \subseteq \pa(\C{D}) \setminus \C{D}$.
Hence, we conclude that the mapping
$\B{h}_{\C{D}}:\BC{X}_{\pa(\C{D})\setminus \C{D}}\times\BC{E}_{\pa(\C{D})}\to\BC{X}_{\C{D}}$ defined by 
$$
  \begin{aligned}
    \B{h}_{\C{D}}(&\B{x}_{\pa(\C{D})\setminus \C{D}},\B{e}_{\pa(\C{D})}):=
\\
    \big(&(\B{g}_{\C{A}})_{\C{A}\setminus \C{C}}(\B{x}_{\pa(\C{A})\setminus \C{A}},\B{e}_{\pa(\C{A})}), (\B{g}_{\C{A}})_{\C{C}}(\B{x}_{\pa(\C{A})\setminus \C{A}},\B{e}_{\pa(\C{A})}), (\B{g}_{\tilde{\C{A}}})_{\tilde{\C{A}}\setminus \C{C}}(\B{x}_{\pa(\tilde{\C{A}})\setminus \tilde{\C{A}}},\B{e}_{\pa(\tilde{\C{A}})})
    \big)
  \end{aligned}
$$
is a measurable solution function for $\C{M}$ w.r.t.\ $\C{D}$, and that $\C{M}$ is uniquely solvable w.r.t.\ $\C{D}$.
\end{proof}

\begin{proof}[Proof of Corollary~\ref{coro:UniqueSolvabilityAncestralUniqueSolvability}]
  It suffices to show the implication to the left. We have to show that $\C{M}$ is uniquely solvable w.r.t.\ each ancestral subset of $\C{G}(\C{M})_{\C{O}}$. The proof proceeds via induction with respect to the size of the ancestral subset. For ancestral subsets of size 0, the claim is trivially true. Ancestral subsets of size 1 must be of the form $\{i\} = \an_{\C{G}(\C{M})_{\C{O}}}(i)$ for $i\in\C{O}$ and hence the claim is true by assumption. Assume that the claim holds for all ancestral subsets of size $\le n$. Let $\C{A}$ be an ancestral subset of $\C{G}(\C{M})_{\C{O}}$ of size $n+1$. If $\C{A} = \an_{\C{G}(\C{M})_{\C{O}}}(i)$ for some $i \in \C{O}$ then the claim holds for $\C{A}$ by assumption. Otherwise, $\C{A} = \bigcup_{i\in\C{A}} \an_{\C{G}(\C{M})_{\C{O}}}(i)$ is a union of ancestral subsets of size $\le n$. Choose distinct elements $\{i_1,\dots,i_k\} \subseteq \C{A}$ where $k$ is the smallest integer such that $\bigcup_{j=1}^k \an_{\C{G}(\C{M})_{\C{O}}}(i_j) = \C{A}$. By applying Proposition~\ref{prop:UniqueSolvabilityAncestralUnion} to $\bigcup_{j=1}^{k-1} \an_{\C{G}(\C{M})_{\C{O}}}(i_j)$ and $\an_{\C{G}(\C{M})_{\C{O}}}(i_k)$, thereby noting that the intersection of these two sets is an ancestral subset of size $\le n$ and making use of the induction hypothesis, we arrive at the conclusion that $\C{M}$ is uniquely solvable w.r.t.\ $\C{A}$.
\end{proof}

\subsubsection*{Appendix C}
\addcontentsline{toc}{subsubsection}{\protect\numberline{}Appendix C}%

\begin{proof}[Proof of Proposition~\ref{prop:LinearSolvable}]
Let $\B{e} \in \BC{E}$ and $\B{x}_{\C{O}}\in\BC{X}_{\C{O}}$. For $\B{x}_{\C{L}}\in\BC{X}$,
$$
  \begin{aligned}
    & \B{x}_{\C{L}}=\B{f}_{\C{L}}(\B{x},\B{e}) \\
    &\iff \B{x}_{\C{L}}=B_{\C{LL}}\B{x}_{\C{L}}+B_{\C{LO}}\B{x}_{\C{O}}+\Gamma_{\C{LJ}}\B{e} \\
    &\iff \C{A}_{\C{L}\C{L}}\B{x}_{\C{L}}=B_{\C{LO}}\B{x}_{\C{O}}+\Gamma_{\C{LJ}}\B{e} \\
    &\iff \begin{cases}
      A_{\C{L}\C{L}} A_{\C{L}\C{L}}^+ (B_{\C{LO}}\B{x}_{\C{O}}+\Gamma_{\C{LJ}}\B{e})=B_{\C{LO}}\B{x}_{\C{O}}+\Gamma_{\C{LJ}}\B{e}  & \\
      \exists_{\B{v}\in\BC{X}_{\C{L}}} : \B{x}_{\C{L}} = A_{\C{L}\C{L}}^+(B_{\C{LO}}\B{x}_{\C{O}}+\Gamma_{\C{LJ}}\B{e}) + [\mathbb{I}_{\C{L}} - A_{\C{L}\C{L}}^+A_{\C{L}\C{L}}]\B{v} \,, &
    \end{cases}
  \end{aligned}
$$
  where the last equivalence follows from \citep[Theorem~2,][]{Pen55}.
\end{proof}

\begin{proof}[Proof of Proposition~\ref{prop:LinearUniquelySolvable}]
$\C{M}$ is uniquely solvable w.r.t.\ $\C{L}$ if and only if for $\Prb_{\BC{E}}$-almost every 
  $\B{e}\in\BC{E}$ and for all $\B{x}_{\C{O}}\in\BC{X}_{\C{O}}$ the linear system of equations
$$
  \begin{aligned}
    & \B{x}_{\C{L}}=\B{f}_{\C{L}}(\B{x},\B{e}) \\
    &\iff \B{x}_{\C{L}}=B_{\C{LL}}\B{x}_{\C{L}}+B_{\C{LO}}\B{x}_{\C{O}}+\Gamma_{\C{LJ}}\B{e} \\
    &\iff A_{\C{LL}}\B{x}_{\C{L}}=B_{\C{LO}}\B{x}_{\C{O}}+\Gamma_{\C{LJ}}\B{e}
  \end{aligned}
$$
has a unique solution $\B{x}_{\C{L}}\in\C{X}_{\C{L}}$. Hence, $\C{M}$ is uniquely solvable w.r.t.\ $\C{L}$ if and only if $A_{\C{LL}}$ is invertible.
\end{proof}

\begin{proof}[Proof of Proposition~\ref{prop:LinearEquivalentUniqueSolvability}]
  It suffices to show $(1) \implies (2)$ and $(1) \iff (3)$. We start by showing that $(1) \implies (2)$. Let $\C{V}\subseteq\C{L}$ and denote $\C{U}:=\an_{\C{G}(\C{M})_{\C{L}}}(\C{V})$, then we need to show that $\C{M}$ is uniquely solvable w.r.t.\ $\C{U}$. From Proposition~\ref{prop:LinearUniquelySolvable} we know that $\C{M}$ is uniquely solvable w.r.t.\ $\C{L}$ if and only if the matrix $A_{\C{LL}} = \mathbb{I}_{\C{L}} - B_{\C{LL}}$ is invertible. The matrix $A_{\C{LL}}$ is invertible if and only if the rows of $A_{\C{LL}}$ are all linearly independent. In particular, the rows of $A_{\C{U}\C{L}}$ are all linearly independent. Because $A_{\C{U}\C{L}} = [A_{\C{U}\C{U}} \, Z_{\C{U}\C{L}}]$, where $Z_{\C{U}\C{L}}$ is the zero matrix, we know that the rows of $A_{\C{U}\C{U}}=\mathbb{I}_{\C{U}} - B_{\C{U}\C{U}}$ are also all linearly independent, and hence $A_{\C{U}\C{U}}$ is invertible.

  Next, we show that $(1) \iff (3)$. Observe that the strongly connected components of $\C{G}(\C{M})_{\C{L}}$ form a partition of the set $\C{L}$ and that the directed mixed graph $\C{G}(\C{M})_{\C{L}}$ and the directed graph $\C{G}^a(\C{M})_{\C{L}}$ have the same strongly connected components. Because, by Lemma~\ref{lemm:GraphOfStronglyConnectedComponentsDAG}, the graph of strongly connected components $\C{G}^{\scc}$ of the directed graph $\C{G}^a(\C{M})_{\C{L}}$ is a DAG, the square matrix $B_{\C{LL}}$ can be permuted to an upper triangular block matrix $\tilde{B}_{\C{LL}}$, where for each diagonal block $\tilde{B}_{\C{VV}}$ of $\tilde{B}_{\C{LL}}$ the set of nodes $\C{V}$ is a strongly connected component in $\C{G}(\C{M})_{\C{L}}$.

  Without loss of generality we assume now that $B_{\C{LL}}$ is an upper triangular block matrix. From Proposition~\ref{prop:LinearUniquelySolvable} it follows that $\C{M}$ is uniquely solvable w.r.t.\ $\C{L}$ if and only if the matrix $A_{\C{LL}}=\mathbb{I}_{\C{L}} - B_{\C{LL}}$ is invertible. Because $B_{\C{LL}}$ is an upper triangular block matrix, we know that $A_{\C{LL}}$ is an upper triangular block matrix, where for each diagonal block $A_{\C{VV}}$ of $A_{\C{LL}}$ the set of nodes $\C{V}$ is a strongly connected component in $\C{G}(\C{M})_{\C{L}}$. Since an upper triangular block matrix $A_{\C{LL}}$ is invertible if and only if every diagonal block in $A_{\C{LL}}$ is invertible, we have that $\C{M}$ is uniquely solvable w.r.t.\ $\C{L}$ if and only if $\C{M}$ is uniquely solvable w.r.t.\ each strongly connected component in $\C{G}(\C{M})_{\C{L}}$. 
\end{proof}

\begin{proof}[Proof of Proposition~\ref{prop:MarginalLinearModel}]
By the definition of marginalization and Proposition~\ref{prop:LinearUniquelySolvable} the marginal causal mechanism $\tilde{\B{f}}$ is given by
$$
  \begin{aligned}
    \tilde{\B{f}}(\B{x}_{\C{O}},\B{e}) &:= 
    \B{f}_{\C{O}}(\B{x}_{\C{O}},\B{g}_{\C{L}}(\B{x}_{\C{O}},\B{e}),\B{e}) \\
    &= B_{\C{OO}}\B{x}_{\C{O}}+B_{\C{OL}}\B{g}_{\C{L}}(\B{x}_{\C{O}},\B{e})+\Gamma_{\C{OJ}}\B{e} \\
    &= [B_{\C{OO}}+B_{\C{OL}}A_{\C{LL}}^{-1}B_{\C{LO}}]\B{x}_{\C{O}} + [B_{\C{OL}}A_{\C{LL}}^{-1}\Gamma_{\C{LJ}}+\Gamma_{\C{OJ}}]\B{e} \,.
  \end{aligned}
$$
From Propositions~\ref{prop:LinearEquivalentUniqueSolvability} and \ref{prop:LatentProjection} it follows that the marginalization respects the latent projection.
\end{proof}

\subsection{Proofs of the main text}

\subsubsection*{Section 2}
\addcontentsline{toc}{subsubsection}{\protect\numberline{}Section 2}%

\begin{proof}[Proof of Proposition~\ref{prop:StructMinimalRepresentation}]
Let $i\in\C{I}$. Note that Definition~\ref{def:Parents} can alternatively be formulated as follows:
for $k \in \C{I} \cup \C{J}$, $k \not\in \pa(i)$ if and only if there exists a measurable mapping $\hat f_i : \BC{X} \times \BC{E} \to \C{X}_i$ such that
for $\Prb_{\BC{E}}$-almost every $\B{e} \in \BC{E}$ and for all $\B{x}\in\BC{X}$,
  $$x_i = f_i(\B{x},\B{e}) \iff x_i = \hat f_i(\B{x},\B{e})$$
and either $k \in \C{I}$ and there exists $\hat x_k \in \C{X}_k$ such that $\hat f_i(\B{x},\B{e}) = \hat f_i(\B{x}_{\setminus k},\hat x_k,\B{e})$ for all $\B{x} \in \BC{X}, \B{e} \in \BC{E}$, or $k \in \C{J}$ and
there exists $\hat e_k \in \C{E}_k$ such that $\hat f_i(\B{x},\B{e}) = \hat f_i(\B{x},\B{e}_{\setminus k},\hat e_k)$ for all $\B{x} \in \BC{X}, \B{e} \in \BC{E}$.
By repeatedly applying (this formulation of) Definition~\ref{def:Parents} to all $k \notin \pa(i)$, 
we obtain the existence of a measurable mapping $\tilde f_i : \BC{X} \times \BC{E} \to \C{X}_i$ 
and $\B{\hat x}_{\setminus \pa(i)} \in \BC{X}_{\setminus \pa(i)}$, $\B{\hat e}_{\setminus \pa(i)} \in \BC{E}_{\setminus\pa(i)}$ such that
for $\Prb_{\BC{E}}$-almost every $\B{e} \in \BC{E}$ and for all $\B{x}\in\BC{X}$,
  $$x_i = f_i(\B{x},\B{e}) \iff x_i = \tilde f_i(\B{x},\B{e}),$$
and for all $\B{e} \in \BC{E}$ and all $\B{x} \in \BC{X}$,
  $$\tilde f_i(\B{x},\B{e}) = \tilde f_i(\B{x}_{\pa(i)},\B{\hat x}_{\setminus \pa(i)},\B{e}_{\pa(i)},\B{\hat e}_{\setminus \pa(i)}).$$
Define the SCM $\tilde{\C{M}}$ as $\C{M}$ except that its causal mechanism is $\B{\tilde{f}}$ instead of $\B{f}$. 
Then $\tilde{\C{M}}$ is structurally minimal and equivalent to $\C{M}$.
\end{proof}

\begin{proof}[Proof of Proposition~\ref{prop:InterventionOnGraph}]
The $\intervene(I, \B{\xi}_I)$ operation on $\C{M}$ completely removes the functional dependence 
on $\B{x}$ and $\B{e}$ from the $f_i$ components for $i\in I$ and hence the corresponding 
incoming directed and bidirected edges on nodes in $I$ from the (augmented) graph.
\end{proof}

\begin{proof}[Proof of Proposition~\ref{prop:IntDisjSubsetAndAcyclicity}]
The first statement follows from Definitions~\ref{def:intervenedSCM} and \ref{def:InterventionOnGraph}. For the second statement, 
note that a perfect intervention can only remove parental relations, and therefore will 
never introduce a cycle.
\end{proof}

\begin{proof}[Proof of Proposition~\ref{prop:CommuteTwinGraph}]
This follows directly from Definitions~\ref{def:TwinSCM} and \ref{def:TwinGraph}.
\end{proof}

\begin{proof}[Proof of Proposition~\ref{prop:TwinAcyclicityPreserved}]
The additional edges introduced by the twin operation cannot lead to a directed cycle involving both copied and
original nodes, because there are no edges pointing from copied nodes to original nodes (i.e., of the form $i' \to v$ with $i' \in I'$ and $v \in \C{V}$). Directed cycles involving only original nodes are absent by assumption, and directed cycles involving only copied nodes as well since they would correspond with a directed cycle in the original directed graph.
\end{proof}

\begin{proof}[Proof of Proposition~\ref{prop:CommuteInterveneTwin}]
It suffices to prove the property for directed graphs, since the property for SCMs follows directly from Definitions~\ref{def:intervenedSCM} and \ref{def:TwinSCM}.

Applying the intervention $\intervene(I)$ on the graph $\C{G}$ removes all the incoming edges from the nodes in $I$. Now, if we perform the twin operation w.r.t.\ $\C{I}$ on this graph $\intervene(I)(\C{G})$, then we copy the same edges as if we had twinned the graph $\C{G}$ w.r.t.\ $\C{I}$, except those edges that do point to one of the nodes in $I$. Hence, if we apply the intervention $\intervene(I\cup I')$ on the graph $\twin(\C{I})(\C{G})$, which removes all incoming edges of both $I$ and its copy $I'$, then we clearly obtain the same graph.
\end{proof}

%

\subsubsection*{Section 3}
\addcontentsline{toc}{subsubsection}{\protect\numberline{}Section 3}%

\begin{proof}[Proof of Theorem~\ref{thm:SolvabilityIffCondition}]
First we define the solution space $\BC{S}(\C{M})$ of $\C{M}$ by
$$
  \BC{S}(\C{M}) := \{(\B{e},\B{x}) \in 
  \BC{E} \times \BC{X} : \B{x} 
  = \B{f}(\B{x},\B{e}) \} \,.
$$
This is a measurable set, since $\BC{S}(\C{M}) = \B{h}^{-1}(\Delta)$, where $\B{h} : \BC{E} \times \BC{X} \to \BC{X} \times \BC{X}$ is the measurable mapping defined by $\B{h}(\B{e}, \B{x}) = (\B{x},\B{f}(\B{x}, \B{e}))$ and $\Delta$ is the set defined by $\{ (\B{x}, \B{x}) : \B{x} \in \BC{X}\}$, which is measurable since $\BC{X}$ is Hausdorff.
Note that
$$\begin{aligned}
    \BC{A} := \B{pr}_{\BC{E}}(\BC{S}(\C{M})) &= \{ \B{e} \in \BC{E}  : \exists \B{x}\in\BC{X} \text{ s.t. } \B{x} = \B{f}(\B{x},\B{e}) \} \,,
  \end{aligned}$$
is an analytic set because the projection $\B{pr}_{\BC{E}} : \BC{X} \times \BC{E} \to \BC{E}$ is a measurable mapping between standard measurable spaces (Lemma~\ref{lemm:AnalyticSetProperties}).

Suppose that (1) holds, that is, $\C{M}$ has a solution. Then there exists a pair of random variables 
$(\B{E}, \B{X}) : \Omega \to \BC{E} \times \BC{X}$ such that 
$\B{X} = \B{f}(\B{X},\B{E})$ $\Prb$-a.s.. 
Note that
  \begin{equation*}\begin{split}
    \{ \omega \in \Omega : \B{X}(\omega) = \B{f}\big(\B{X}(\omega),\B{E}(\omega)\big) \} 
    & {} \subseteq \{ \omega \in \Omega : \exists \B{x}\in\BC{X} \text{ s.t. } \B{x} = \B{f}\big(\B{x},\B{E}(\omega)\big) \} \\
    & {} \subseteq \B{E}^{-1}\Big( \{ \B{e} \in \BC{E} : \exists \B{x} \in \BC{X} \text{ s.t. } \B{x} = \B{f}(\B{x},\B{e}) \} \Big) \\
    & {} = \B{E}^{-1}(\BC{A}).
  \end{split}\end{equation*}
By Lemma~\ref{lemm:AnalyticSetMeasurableApproxUpPset}, $\BC{A}$ is $\Prb^{\B{E}}$-measurable because it is
analytic, and we can write $\BC{A} = \BC{B} \,\dot\cup\, \BC{N}$ with $\BC{B} \subseteq \BC{E}$ measurable and $\BC{N}$ a $\Prb^{\B{E}}$-null set.
Hence $\B{E}^{-1}(\BC{A}) = \B{E}^{-1}(\BC{B}) \cup \B{E}^{-1}(\BC{N})$ where $\B{E}^{-1}(\BC{N})$ is a $\Prb$-null set.
Therefore,
  $$\B{E}^{-1}(\BC{B}) \supseteq \{ \omega \in \Omega : \B{X}(\omega) = \B{f}\big(\B{X}(\omega),\B{E}(\omega)\big) \} \setminus \B{E}^{-1}(\BC{N})$$
which implies that $\Prb(\B{E}^{-1}(\BC{B})) = 1$. Hence, $\BC{E} \setminus \BC{A}$ is a $\Prb_{\BC{E}}$-null set.
In other words, for $\Prb_{\BC{E}}$-almost every $\B{e}\in\BC{E}$ the structural equations 
$\B{x} = \B{f}(\B{x},\B{e})$
have a solution $\B{x}\in\BC{X}$, that is, (2) holds.

Suppose that (2) holds. Then $\BC{E} \setminus \B{pr}_{\BC{E}}(\BC{S}(\C{M}))$ is a $\Prb_{\BC{E}}$-null set.
By application of the measurable selection theorem
\ref{thm:MeasurableSelectionThm}, there exists a measurable $\B{g} : \BC{E} \to \BC{X}$ such that
for $\Prb_{\BC{E}}$-almost all $\B{e}\in\BC{E}$, $\B{g}(\B{e}) = \B{f}(\B{g}(\B{e}),\B{e})$. Hence, there exists a measurable mapping $\B{g}:\BC{E}\to\BC{X}$ such
that for $\Prb_{\BC{E}}$-almost every $\B{e}\in\BC{E}$ and for all $\B{x}\in\BC{X}$
$$
  \B{x}=\B{g}(\B{e}) \quad\implies\quad \B{x}=\B{f}(\B{x},\B{e}) \,,
$$
which we call property (A). Let $\tilde{\B{f}} : \BC{E} \times \BC{X} \to \BC{X}$ be the causal mechanism of a
structurally minimal SCM that is equivalent to $\C{M}$ (see Proposition~\ref{prop:StructMinimalRepresentation}).
In particular, for any $\B{\epsilon}_{\setminus\pa(\C{I})} \in \BC{E}_{\setminus\pa(\C{I})}$, we have that
$\tilde{\B{f}}(\B{x},\B{e}) = \tilde{\B{f}}(\B{x},\B{e}_{\pa(\C{I})},\B{\epsilon}_{\setminus\pa(\C{I})})$ for all $\B{x} \in \BC{X}$ and all $\B{e} \in \BC{E}$.
This means that we may also consider $\tilde{\B{f}}$ as a mapping $\tilde{\B{f}}:\BC{X} \times \BC{E}_{\pa(\C{I})} \to \BC{X}$.
By applying Lemma~\ref{lemm:MeasurableMapsBetweenStandardSpaces} to the canonical projection $\B{pr}_{\BC{E}_\pa(\C{I})}:\BC{E} \to \BC{E}_{\pa(\C{I})}$
and using the equivalence of $\B{f}$ and $\tilde{\B{f}}$,
we obtain that for $\Prb_{\BC{E}_{\pa(\C{I})}}$-almost all $\B{e}_{\pa(\C{I})} \in \BC{E}_{\pa(\C{I})}$
there exists $\B{x} \in \BC{X}$ with $\B{x} = \tilde{\B{f}}(\B{x},\B{e}_{\pa(\C{I})})$. 
By applying the implication (2) $\implies$ (A) to $\BC{E}_{\pa(\C{I})}$ and $\tilde{\B{f}}$,
we conclude the existence of a measurable $\B{g} : \BC{E}_{\pa(\C{I})} \to \BC{X}$ such that
for $\Prb_{\BC{E}_{\pa(\C{I})}}$-almost all $\B{e}_{\pa(\C{I})}\in\BC{E}_{\pa(\C{I})}$, $\B{g}(\B{e}_{\pa(\C{I})}) = \tilde{\B{f}}(\B{g}(\B{e}_{\pa(\C{I})}),\B{e}_{\pa(\C{I})})$. 
Once more using Lemma~\ref{lemm:MeasurableMapsBetweenStandardSpaces}, we obtain that for $\Prb_{\BC{E}}$-almost all $\B{e}\in\BC{E}$, $\B{g}(\B{e}_{\pa(\C{I})}) = \B{f}(\B{g}(\B{e}_{\pa(\C{I})}),\B{e})$. In other words, (3) holds.

Lastly, suppose that (3) holds, that is there exists a measurable solution function $\B{g}:\BC{E}_{\pa(\C{I})}\to\BC{X}$. 
Then the measurable mappings 
$\B{E} : \BC{E} \to \BC{E}$ and $\B{X} : 
\BC{E} \to \BC{X}$, defined by $\B{E}(\B{e}) 
:= \B{e}$ and $\B{X}(\B{e}) := 
\B{g}(\B{e}_{\pa(\C{I})})$, respectively, define a pair of random variables 
$(\B{X},\B{E})$ such that $\B{X} = 
\B{f}(\B{X},\B{E})$ holds a.s.\ and hence 
  $(\B{X},\B{E})$ is a solution. Hence (1) holds.
\end{proof}

\begin{proof}[Proof of Proposition~\ref{prop:AcyclicSCMUniquelySolvable}]
Let $\tilde{\B{f}} : \BC{E} \times \BC{X} \to \BC{X}$ be the causal mechanism of a
structurally minimal SCM $\tilde{\C{M}}$ that is equivalent to $\C{M}$ (see Proposition~\ref{prop:StructMinimalRepresentation}).
For a subset $\C{O}\subseteq\C{I}$ consider the induced subgraph $\C{G}^a(\C{M})_{\C{O}}$ of the augmented graph $\C{G}^a(\C{M})$ on $\C{O}$. Then the acyclicity of $\C{G}^a(\C{M})$ implies that the induced subgraph $\C{G}^a(\C{M})_{\C{O}}$ is acyclic, and hence there exists a topological ordering on the nodes $\C{O}$. We can substitute the components $\tilde{f}_i$ of the causal mechanism $\tilde{\B{f}}$ for $i\in\C{O}$ into each other along this topological ordering. This gives a measurable solution function 
$\B{g}_{\C{O}} : \BC{X}_{\pa(\C{O})\setminus\C{O}} \times \BC{E}_{\pa(\C{O})}  \to \BC{X}_{\C{O}}$ for
$\tilde{\C{M}}$, and hence for $\C{M}$. It is clear from the acyclic structure that 
this mapping $\B{g}_{\C{O}}$ is independent of the choice of the topological ordering and is the only
  solution function for $\C{M}$. Therefore, $\tilde{\C{M}}$ is uniquely solvable w.r.t.\ $\C{O}$, and so is $\C{M}$.
\end{proof}

\begin{proof}[Proof of Proposition~\ref{prop:ObstructionToSelfCycles}]
This follows immediately from Definitions~\ref{def:Graphs} and \ref{def:UniqueSolvability}.
\end{proof}

\begin{proof}[Proof of Theorem~\ref{thm:UniqueSolvabilityIffCondition}]
\Joris{Rewrote proof.}
Suppose that (1) holds. By Proposition~\ref{prop:SolvabilityIfCondition} there exists a measurable solution function
$\B{g}_{\C{O}} : \BC{X}_{\pa(\C{O})\setminus\C{O}} \times \BC{E}_{\pa(\C{O})} \to \BC{X}_{\C{O}}$ for
$\C{M}$ w.r.t.\ $\C{O}$. Then for $\Prb_{\BC{E}}$-almost every $\B{e}\in\BC{E}$ and for all
$\B{x}_{\setminus\C{O}} \in \BC{X}_{\setminus\C{O}}$ we have that 
$\B{g}_{\C{O}}(\B{x}_{\pa(\C{O})\setminus\C{O}}, \B{e}_{\pa(\C{O})})$
is a solution of $\B{x}_{\C{O}} = \B{f}_{\C{O}}(\B{x},\B{e})$. Hence, because of (1),
for $\Prb_{\BC{E}}$-almost every $\B{e}\in\BC{E}$ and for all
$\B{x}_{\setminus\C{O}} \in \BC{X}_{\setminus\C{O}}$ we have that
$\B{x}_{\C{O}} = \B{f}_{\C{O}}(\B{x},\B{e})$ implies
$\B{x}_{\C{O}} = \B{g}_{\C{O}}(\B{x}_{\pa(\C{O})\setminus\C{O}}, \B{e}_{\pa(\C{O})})$.
Thus, $\C{M}$ is
uniquely solvable w.r.t.\ $\C{O}$, that is, (2) holds.

Suppose that (2) holds. Let $\B{g}_{\C{O}} : \BC{X}_{\pa(\C{O})\setminus\C{O}} \times \BC{E}_{\pa(\C{O})} \to \BC{X}_{\C{O}}$ be a measurable solution function for $\C{M}$ w.r.t.\ $\C{O}$.
Then, for $\Prb_{\BC{E}}$-almost every $\B{e}\in\BC{E}$ and for all $\B{x}\in\BC{X}$
$$
    \B{x}_{\C{O}} = \B{g}_{\C{O}}(\B{x}_{\pa(\C{O})\setminus\C{O}}, \B{e}_{\pa(\C{O})}) 
    \quad\iff\quad
    \B{x}_{\C{O}} =
    \B{f}_{\C{O}}(\B{x},\B{e}) \,.
$$
This implies (1).

For the last statement, assume that $\C{M}$ is uniquely solvable. Let $\B{g} : \BC{E}_{\pa(\C{I})} \to \BC{X}$ be
a measurable solution function. 
Then there exists a measurable set $\B{B} \subseteq \BC{E}$ with $\Prb_{\BC{E}}(\B{B}) = 1$ and for all $\B{e} \in \B{B}$,
  $$\forall \B{x} \in \BC{X} : \B{x} = \B{f}(\B{x},\B{e}) \implies \B{x} = \B{g}(\B{e}_{\pa(\C{I})}).$$
The existence of a solution for $\C{M}$ follows directly from Theorem~\ref{thm:SolvabilityIffCondition}. 
Each solution $(\B{X}, \B{E}) : \Omega \to \BC{X} \times \BC{E}$ of $\C{M}$ satisfies $\B{X}(\omega) = \B{f}(\B{X}(\omega),\B{E}(\omega))$ $\Prb$-a.s..
In addition, it satisfies $\B{E}(\omega) \in \B{B}$ $\Prb$-a.s., since $\Prb \circ \B{E}^{-1} = \Prb_{\BC{E}}$.
Hence, it satisfies $\B{X}(\omega)=\B{g}(\B{E}(\omega)_{\pa(\C{I})})$ $\Prb$-a.s..
Thus for every solution $(\B{X},\B{E})$ the associated observational 
distribution is the push-forward of $\Prb_{\BC{E}}$ under $\B{g} \circ \B{pr}_{\pa(\C{I})}$.
\end{proof}

\begin{proof}[Proof of Proposition~\ref{prop:SolvabilityPreservedIntervenedModel}]
Let $\B{g}_{\C{O}}:\BC{X}_{\pa(\C{O})\setminus\C{O}}\times\BC{E}_{\pa(\C{O})}\to\BC{X}_{\C{O}}$ be 
a measurable solution function for $\C{M}$ w.r.t.\ $\C{O}$. Then the mapping
$\tilde{\B{g}}_{\C{O}\cup I} : \BC{E}_{\pa(\C{O})} \to \BC{X}_{\C{O}\cup I}$ 
defined by
$\tilde{\B{g}}_{\C{O}\cup I}(\B{e}_{\pa(\C{O})}):=(\B{g}_{\C{O}}(\B{\xi}_{\pa(\C{O})\setminus\C{O}},\B{e}_{\pa(\C{O})}),\B{\xi}_I)$ is a measurable solution function for the SCM
$\C{M}_{\intervene(I,\B{\xi}_I)}$
w.r.t.\ $\C{O}\cup I$. If $\C{M}$ is (uniquely) solvable w.r.t.\ $\C{O}$, then it follows that $\C{M}_{\intervene(I,\B{\xi}_I)}$ is (uniquely) solvable w.r.t.\ $\C{O}\cup I$.
\end{proof}

\begin{proof}[Proof of Proposition~\ref{prop:SolvabilityAncestralSolvability}]
It suffices to show that solvability of $\C{M}$ w.r.t.\ $\C{O}$ implies
ancestral solvability w.r.t.\ $\C{O}$. Solvability of $\C{M}$ w.r.t.\ $\C{O}$ implies that there exists a measurable mapping 
$\B{g}_{\C{O}}:\BC{X}_{\pa(\C{O})\setminus\C{O}}\times\BC{E}_{\pa(\C{O})}\to\BC{X}_{\C{O}}$ 
such that for $\Prb_{\BC{E}}$-almost every $\B{e}\in\BC{E}$ and for all 
$\B{x}\in\BC{X}$
$$
  \B{x}_{\C{O}} = \B{g}_{\C{O}}(\B{x}_{\pa(\C{O})\setminus\C{O}},\B{e}_{\pa(\C{O})})
  \quad\implies\quad
  \B{x}_{\C{O}} = \B{f}_{\C{O}}(\B{x},\B{e}) \,.
$$
Let $\tilde{\B{f}} : \BC{E} \times \BC{X} \to \BC{X}$ be the causal mechanism of a
structurally minimal SCM $\tilde{\C{M}}$ that is equivalent to $\C{M}$ (see Proposition~\ref{prop:StructMinimalRepresentation}).
Let $\C{P}:=\an_{\C{G}(\C{M})_{\C{O}}}(\C{A})$ for some $\C{A}\subseteq\C{O}$. 
Then 
for $\Prb_{\BC{E}}$-almost every $\B{e}\in\BC{E}$ and for all $\B{x}\in\BC{X}$
$$
  \begin{cases}
	  \B{x}_{\C{P}} &= (\B{g}_{\C{O}})_{\C{P}}(\B{x}_{\pa(\C{O})\setminus\C{O}},\B{e}_{\pa(\C{O})}) \\
	  \B{x}_{\C{O}\setminus\C{P}} &= (\B{g}_{\C{O}})_{\C{O}\setminus\C{P}}(\B{x}_{\pa(\C{O})\setminus\C{O}},\B{e}_{\pa(\C{O})})
  \end{cases} 
  \,\implies\,
  \begin{cases}
  \B{x}_{\C{P}} &= \tilde{\B{f}}_{\C{P}}(\B{x}_{\pa(\C{P})},\B{e}_{\pa(\C{P})}) \\
  \B{x}_{\C{O}\setminus\C{P}} &=
\tilde{\B{f}}_{\C{O}\setminus\C{P}}(\B{x}_{\pa(\C{O}\setminus\C{P})},\B{e}_{\pa(\C{O}\setminus\C{P})}) \,.
  \end{cases}
$$
Since $\pa(\C{P})\setminus\C{P}\subseteq\pa(\C{O})\setminus\C{O}$,
we have that in particular for $\Prb_{\BC{E}}$-almost every $\B{e}\in\BC{E}$ and for all $\B{x}\in\BC{X}$
$$
\B{x}_{\C{P}} = (\B{g}_{\C{O}})_{\C{P}}(\B{x}_{\pa(\C{O})\setminus\C{O}},\B{e}_{\pa(\C{O})}) 
  \quad\implies\quad
  \B{x}_{\C{P}} = \tilde{\B{f}}_{\C{P}}(\B{x}_{\pa(\C{P})},\B{e}_{\pa(\C{P})}) \,.
$$
This implies that the mapping $(\B{g}_{\C{O}})_{\C{P}}$ cannot depend on elements different from $\pa(\C{P})$. Moreover, it follows from the definition of 
$\C{P}$ that
$(\pa(\C{O})\setminus\C{O})\cap\pa(\C{P})=\pa(\C{P})\setminus\C{P}$ and thus we have $\pa(\C{O})\setminus\C{O} = (\pa(\C{P})\setminus\C{P}) \cup (\pa(\C{O})\setminus(\C{O}\cup\pa(\C{P})))$. Now, pick an element $\hat{\B{x}}_{\pa(\C{O})\setminus(\C{O}\cup\pa(\C{P}))} \in \BC{X}_{\pa(\C{O})\setminus(\C{O}\cup\pa(\C{P}))}$ and define the mapping 
$\tilde{\B{g}}_{\C{P}}:\BC{X}_{\pa(\C{P})\setminus\C{P}}\times\BC{E}_{\pa(\C{P})}\to\BC{X}_{\C{P}}$ by 
$$
  \tilde{\B{g}}_{\C{P}}(\B{x}_{\pa(\C{P})\setminus\C{P}},\B{e}_{\pa(\C{P})}) := 
  (\B{g}_{\C{O}})_{\C{P}}(\B{x}_{\pa(\C{P})\setminus\C{P}},\hat{\B{x}}_{\pa(\C{O})\setminus(\C{O}\cup\pa(\C{P}))},\B{e}_{\pa(\C{O})}) \,.
$$
Then, for $\Prb_{\BC{E}}$-almost every
$\B{e}\in\BC{E}$ and for all $\B{x}\in\BC{X}$
$$
  \B{x}_{\C{P}} = \tilde{\B{g}}_{\C{P}}(\B{x}_{\pa(\C{P})\setminus\C{P}},\B{e}_{\pa(\C{P})}) 
  \quad\iff\quad
  \B{x}_{\C{P}} = (\B{g}_{\C{O}})_{\C{P}}(\B{x}_{\pa(\C{O})\setminus\C{O}},\B{e}_{\pa(\C{O})}) \,.
$$
Together this gives that for $\Prb_{\BC{E}}$-almost every
$\B{e}\in\BC{E}$ and for all $\B{x}\in\BC{X}$
$$
  \B{x}_{\C{P}} = \tilde{\B{g}}_{\C{P}}(\B{x}_{\pa(\C{P})\setminus\C{P}},\B{e}_{\pa(\C{P})}) 
  \quad\implies\quad
  \B{x}_{\C{P}} = \tilde{\B{f}}_{\C{P}}(\B{x}_{\pa(\C{P})},\B{e}_{\pa(\C{P})}) \,.
$$
which is equivalent to the statement that $\C{M}$ is solvable w.r.t.\ 
$\an_{\C{G}(\C{M})_{\C{O}}}(\C{A})$.
\end{proof}

\subsubsection*{Section 4}
\addcontentsline{toc}{subsubsection}{\protect\numberline{}Section 4}%

\begin{lemma}
  \label{lemm:ConsistentSolutionFunctions}
  \Joris{NEW!}
  Let $\C{M}$ be an SCM that is uniquely solvable w.r.t.\ two subsets $A,B \subseteq \C{I}$
  that satisfy $A \subseteq B$ and $\pa(A)\setminus A \subseteq \pa(B) \setminus B$. Let
  $\B{g}_A : \BC{X}_{\pa(A)\setminus A} \times \BC{E}_{\pa(A)} \to \BC{X}_A$
  and
  $\B{g}_B : \BC{X}_{\pa(B)\setminus B} \times \BC{E}_{\pa(B)} \to \BC{X}_B$
  be measurable solution functions for $\C{M}$ w.r.t.\ $A$ and $B$, respectively. Then 
  for $\Prb_{\BC{E}}$-almost every $\B{e}\in\BC{E}$ and for all $\B{x}\in\BC{X}$
  $$\B{g}_A(\B{x}_{\pa(A)\setminus A},\B{e}_{\pa(A)}) = (\B{g}_B)_A(\B{x}_{\pa(B)\setminus B},\B{e}_{\pa(B)}) \,.$$
\end{lemma}
\begin{proof}
  Without loss of generality, we assume that $\C{M}$ is structurally minimal (see Proposition~\ref{prop:StructMinimalRepresentation}).
  Let $\bar{\BC{E}} \subseteq \BC{E}$ be a measurable set with $\Prb_{\BC{E}}(\bar{\BC{E}}) = 1$ such that for all $\B{e} \in \bar{\BC{E}}$ for all $\B{x}\in\BC{X}$:
  $$\B{x}_{A} = \B{g}_{A}(\B{x}_{\pa(A)\setminus A},\B{e}_{\pa(A)}) \iff \B{x}_{A} = \B{f}_{A}(\B{x}_{\pa(A)},\B{e}_{\pa(A)})$$
  and
  $$\B{x}_{B} = \B{g}_{B}(\B{x}_{\pa(B)\setminus B},\B{e}_{\pa(B)}) \iff \B{x}_{B} = \B{f}_{B}(\B{x}_{\pa(B)},\B{e}_{\pa(B)})\,.$$
  
  Now let $\B{e}\in\bar{\BC{E}}$ and let $\B{x}_{A \cup \pa(B)\setminus B}\in\BC{X}_{A \cup \pa(B)\setminus B}$. Then
  $$\begin{aligned}
    & \B{x}_A = (\B{g}_B)_A(\B{x}_{\pa(B)\setminus B},\B{e}_{\pa(B)}) \\
    & \implies \begin{cases}
            &\B{x}_A = (\B{g}_B)_A(\B{x}_{\pa(B)\setminus B},\B{e}_{\pa(B)}) \\
            \exists \B{x}_{B\setminus A}\in\BC{X}_{B\setminus A}: &\B{x}_{B\setminus A} = (\B{g}_B)_{B\setminus A}(\B{x}_{\pa(B)\setminus B},\B{e}_{\pa(B)})
           \end{cases} \\
    & \implies \exists \B{x}_{B\setminus A}\in\BC{X}_{B\setminus A}: \quad
            \B{x}_B = \B{g}_B(\B{x}_{\pa(B)\setminus B},\B{e}_{\pa(B)}) \\
    & \implies \exists \B{x}_{B\setminus A}\in\BC{X}_{B\setminus A}: \quad
            \B{x}_B = \B{f}_B(\B{x}_{\pa(B)},\B{e}_{\pa(B)}) \\
    & \implies \exists \B{x}_{B\setminus A}\in\BC{X}_{B\setminus A}: \quad
            \B{x}_A = \B{f}_A(\B{x}_{\pa(A)},\B{e}_{\pa(A)}) \\
    & \implies \B{x}_A = \B{f}_A(\B{x}_{\pa(A)},\B{e}_{\pa(A)}) \\
    & \implies \B{x}_A = \B{g}_A(\B{x}_{\pa(A)\setminus A},\B{e}_{\pa(A)}) \,,
  \end{aligned}$$
  where the exists-quantifier could be omitted because the expression it binds to does not depend on $\B{x}_{B\setminus A}$ (from the assumptions it follows that $(A \cup \pa(A)) \cap (B \setminus A) = \emptyset$).
  Hence, for all $\B{e}\in\bar{\BC{E}}$ and all $\B{x}_{A \cup \pa(B)\setminus B}\in\BC{X}_{A \cup \pa(B)\setminus B}$
    $$\B{x}_A = (\B{g}_B)_A(\B{x}_{\pa(B)\setminus B},\B{e}_{\pa(B)}) \implies
      \B{x}_A = \B{g}_A(\B{x}_{\pa(A)\setminus A},\B{e}_{\pa(A)}) \,.$$
  Hence, for all $\B{e}\in\bar{\BC{E}}$ and all $\B{x}_{A \cup \pa(B)\setminus B}\in\BC{X}_{A \cup \pa(B)\setminus B}$
    $$(\B{g}_B)_A(\B{x}_{\pa(B)\setminus B},\B{e}_{\pa(B)}) = \B{g}_A(\B{x}_{\pa(A)\setminus A},\B{e}_{\pa(A)}) \,.$$
  Since this expression does not depend on $\B{x}_{(B \setminus A) \cup \C{I} \setminus (B \cup \pa(B))}$, from Lemma~\ref{lemm:AlmostAllQuantifierLogic2} we conclude that
  for all $\B{e}\in\bar{\BC{E}}$ and all $\B{x}\in\BC{X}$
    $$(\B{g}_B)_A(\B{x}_{\pa(B)\setminus B},\B{e}_{\pa(B)}) = \B{g}_A(\B{x}_{\pa(A)\setminus A},\B{e}_{\pa(A)}) \,.$$
\end{proof}



\begin{lemma}
\label{lemm:SCMandTwinSCMObservationallyEquivalence}
An SCM $\C{M}$ is observationally equivalent to $\C{M}^{\twin}$ w.r.t.\ $\C{O}\subseteq\C{I}$.
\end{lemma}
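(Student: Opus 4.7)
The plan is to prove both directions of the observational equivalence by exhibiting explicit correspondences between solutions of $\C{M}$ and $\C{M}^{\twin}$ that preserve the $\C{O}$-marginals. First I would note that the codomains agree trivially on $\C{O}$: since $\C{O}\subseteq\C{I}$ and $\C{M}^{\twin}$ has endogenous index set $\C{I}\cup\C{I}'$ with codomains $\BC{X}\times\BC{X}$ built from the original $\BC{X}_i$, the space $(\BC{X}\times\BC{X})_{\C{O}}$ equals $\BC{X}_{\C{O}}$.

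For the forward direction, given a solution $(\B{X},\B{E})$ of $\C{M}$, I propose the ``diagonal'' candidate $(\B{X},\B{X},\B{E})$ as a solution of $\C{M}^{\twin}$. Since both components of $\tilde{\B{f}}$ evaluated at $(\B{X},\B{X},\B{E})$ equal $\B{f}(\B{X},\B{E})$, which equals $\B{X}$ almost surely by hypothesis, the structural equations $\tilde{\B{f}}(\B{X},\B{X},\B{E})=(\B{X},\B{X})$ hold almost surely. The distribution of the $\C{O}$-marginal of this twin solution is $\Prb^{\B{X}_{\C{O}}}$, matching that of the original solution trivially.

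For the reverse direction, given a solution $((\B{Y},\B{Y}'),\B{E})$ of $\C{M}^{\twin}$, the structural equation $\tilde{\B{f}}(\B{Y},\B{Y}',\B{E})=(\B{Y},\B{Y}')$ splits componentwise into $\B{f}(\B{Y},\B{E})=\B{Y}$ and $\B{f}(\B{Y}',\B{E})=\B{Y}'$, each holding almost surely. In particular, $(\B{Y},\B{E})$ is a solution of $\C{M}$, and its $\C{O}$-marginal distribution coincides with that of the $\C{O}$-marginal of $(\B{Y},\B{Y}')$ by definition (since $\C{O}\subseteq\C{I}$ picks out the first copy).

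There is essentially no obstacle here: the proof is a direct unpacking of Definitions~\ref{def:Solution}, \ref{def:TwinSCM}, and \ref{def:ObservationallyEquivalent}. The only thing to watch is that the diagonal map $(\B{X},\B{X})$ is measurable (it is, as the composition of $\B{X}$ with the measurable diagonal embedding $\BC{X}\to\BC{X}\times\BC{X}$) and that one correctly identifies $\C{O}\subseteq\C{I}$ with its image in $\C{I}\cup\C{I}'$ rather than $\C{I}'$.
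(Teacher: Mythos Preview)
Your proof is correct and follows the same approach as the paper's: the diagonal embedding $\B{X}\mapsto(\B{X},\B{X})$ for the forward direction and projection onto the first copy for the reverse. The paper's version is simply terser, omitting the explicit verification of the structural equations and measurability that you (correctly) spell out.
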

\begin{proof}
Let $(\B{X},\B{E})$ be a solution of $\C{M}$, then $((\B{X},\B{X}),\B{E})$ is a solution of $\C{M}^{\twin}$. 
Conversely, let $((\B{X},\B{X}'),\B{E})$ be a solution of $\C{M}^{\twin}$, then $(\B{X},\B{E})$ is a solution 
of $\C{M}$.
\end{proof}


\begin{proof}[Proof of Proposition~\ref{prop:CounterfactualEqRelatedToOtherEqs}]
First we show that equivalence implies counterfactual equivalence w.r.t.\ $\C{O}$. The twin operation preserves the equivalence relation on SCMs and since equivalent SCMs are interventionally equivalent w.r.t.\ every subset, the two equivalent twin SCMs have to be interventionally equivalent w.r.t.\ $\C{O}\cup\C{O}'$ for every $\C{O}\subseteq\C{I}$ with $\C{O}'$ the copy of $\C{O}$ in $\C{I}'$.

Now, let $\C{M}$ and $\tilde{\C{M}}$ be counterfactually equivalent w.r.t.\ $\C{O}$. Then 
$\C{M}^{\twin}$ and $\tilde{\C{M}}^{\twin}$ are interventionally equivalent w.r.t.\ 
$\C{O} \cup \C{O}'$. Thus for $I \subseteq \C{O}$,
$I'\subseteq\C{O}'$ the copy of $I$ and $\B{\xi}_{I'}=\B{\xi}_I\in\BC{X}_{I}$,
$\C{M}^{\twin}_{\intervene(I\cup I',\B{\xi}_{I\cup I'})}$ and 
$\tilde{\C{M}}^{\twin}_{\intervene(I\cup I',\B{\xi}_{I\cup I'})}$ are observationally 
equivalent w.r.t.\ $\C{O}\cup\C{O}'$. In particular, they are observationally equivalent w.r.t.\ $\C{O}$. From Proposition~\ref{prop:CommuteInterveneTwin} we have that $\C{M}^{\twin}_{\intervene(I\cup I',\B{\xi}_{I\cup I'})} = 
(\C{M}_{\intervene(I,\B{\xi}_I)})^{\twin}$ and 
$\tilde{\C{M}}^{\twin}_{\intervene(I\cup I',\B{\xi}_{I\cup I'})} = 
(\tilde{\C{M}}_{\intervene(I,\B{\xi}_I)})^{\twin}$, and together with Lemma~\ref{lemm:SCMandTwinSCMObservationallyEquivalence} this gives that  
$\C{M}_{\intervene(I,\B{\xi}_I)}$ and 
$\tilde{\C{M}}_{\intervene(I,\B{\xi}_I)}$ are observationally equivalent w.r.t.\ $\C{O}$.
\end{proof}

\subsubsection*{Section 5}
\addcontentsline{toc}{subsubsection}{\protect\numberline{}Section 5}%

\Joris{Added the following lemma.}
\begin{lemma}
\label{lemm:SolutionFunctionRedundantInputs}
\Joris{NEW!}
Let $\C{M}$ be an SCM. Let $B \subseteq \C{I}$ and $A \subseteq \C{I} \cup \C{J}$ such that $(\pa(B)\setminus B) \subseteq A$
and $B \cap A = \emptyset$.
Assume that $\B{g}_B : \BC{X}_A \times \BC{E}_A \to \BC{X}_B$ is a measurable function such that
for $\Prb_{\BC{E}}$-almost every $\B{e}\in\BC{E}$ and for all $\B{x}\in\BC{X}$
$$ 
  \B{x}_{B} = \B{f}_{B}(\B{x}_{\pa(B)},\B{e}_{\pa(B)}) \iff
  \B{x}_{B} = \B{g}_{B}(\B{x}_A,\B{e}_A) \,.
$$
Then $\C{M}$ is uniquely solvable w.r.t.\ $B$.
\end{lemma}
\begin{proof}
Assume that for $\Prb_{\BC{E}}$-almost every $\B{e}\in\BC{E}$ and for all $\B{x}\in\BC{X}$
$$ 
  \B{x}_{B} = \B{f}_{B}(\B{x}_{\pa(B)},\B{e}_{\pa(B)}) \iff
  \B{x}_{B} = \B{g}_{B}(\B{x}_A,\B{e}_A) \,.
$$
Let $C:=A\setminus (\pa(B)\setminus B)$, then by Lemma~\ref{lemm:AlmostAllQuantifierLogic7} we have that there exists $\hat{\B{e}}_C\in\BC{E}_C$ and $\hat{\B{x}}_C\in\BC{X}_C$ such that for $\Prb_{\BC{E}_{\C{J}\setminus C}}$-almost every $\B{e}_{\C{J}\setminus C}\in\BC{E}_{\C{J}\setminus C}$ and for all $\B{x}_{\C{I}\setminus C}\in\BC{X}_{\C{I}\setminus C}$ $$
    \B{x}_{B} = \B{f}_{B}(\B{x}_{\pa(B)},\B{e}_{\pa(B)})
      \iff
  \B{x}_{B} = \B{g}_{B}(\B{x}_{\pa(B)\setminus B},\hat{\B{x}}_C,
\B{e}_{\pa(B)},\hat{\B{e}}_C) \,.
$$
Defining the mapping
$\B{h}_{B} : \BC{X}_{\pa(B)\setminus B}\times\BC{E}_{\pa(B)}\to\BC{X}_{B}$
by
$$\B{h}_{B}(\B{x}_{\pa(B)\setminus B},\B{e}_{\pa(B)}) :=\B{g}_{B}(\B{x}_{\pa(B)\setminus B},\hat{\B{x}}_C, \B{e}_{\pa(B)},\hat{\B{e}}_C)\,,$$
where we picked $\hat{\B{e}}_C\in\BC{E}_C$ and $\hat{\B{x}}_C\in\BC{X}_C$ such that the above equivalence holds,
and applying Lemma~\ref{lemm:AlmostAllQuantifierLogic6} we get that for $\Prb_{\BC{E}}$-almost every $\B{e}\in\BC{E}$ and for all $\B{x}\in\BC{X}$
$$
  \B{x}_{B} = \B{f}_{B}(\B{x}_{\pa(B)},\B{e}_{\pa(B)})
  \iff
  \B{x}_{B} = \B{h}_{B}(\B{x}_{\pa(B)\setminus B},\B{e}_{\pa(B)})
$$
holds. Thus, $\C{M}$ is uniquely solvable w.r.t.\ $B$.
\end{proof}

\begin{proof}[Proof of Proposition~\ref{prop:MarginalizationCommutes}]
\Joris{Rewritten, and added converse of original statement.}
From unique solvability of $\C{M}$ w.r.t.\ $\C{L}_1$ it follows that there exists a mapping 
$\B{g}_{\C{L}_1} : \BC{X}_{\pa(\C{L}_1)\setminus(\C{L}_1)} \times 
\BC{E}_{\pa(\C{L}_1)} \to \BC{X}_{\C{L}_1}$ such that for $\Prb_{\BC{E}}$-almost every $\B{e}\in\BC{E}$ and for all $\B{x}\in\BC{X}$
$$ 
  \B{x}_{\C{L}_1} = \B{g}_{\C{L}_1}(\B{x}_{\pa(\C{L}_1)\setminus\C{L}_1},\B{e}_{\pa(\C{L}_1)})
  \quad\iff\quad
  \B{x}_{\C{L}_1} = \B{f}_{\C{L}_1}(\B{x},\B{e}) \,.
$$
Let $\widehat{\pa}$ denotes the parents in $\C{G}^a(\C{M}_{\marg(\C{L}_1)})$.
Note that 
$\widehat{\pa}(\C{L}_2)\setminus\C{L}_2\subseteq\pa(\C{L}_1\cup\C{L}_2)\setminus(\C{L}_1\cup\C{L}_2)$.
Let $\tilde{\B{f}}$ denote the marginal causal mechanism of a structurally minimal
SCM that is equivalent to the marginalization $\C{M}_{\marg(\C{L}_1)}$ constructed
from $\B{g}_{\C{L}_1}$ (see Proposition~\ref{prop:StructMinimalRepresentation}).

$\implies$: If $\C{M}_{\marg(\C{L}_1)}$ is uniquely solvable w.r.t.\ $\C{L}_2$, then there exists a mapping 
$\tilde{\B{g}}_{\C{L}_2} : \BC{X}_{\widehat{\pa}(\C{L}_2)\setminus\C{L}_2} \times 
\BC{E}_{\widehat{\pa}(\C{L}_2)} \to \BC{X}_{\C{L}_2}$
such that for $\Prb_{\BC{E}}$-almost every $\B{e}\in\BC{E}$ and for all $\B{x}_{\C{I}\setminus\C{L}_1} \in \BC{X}_{\C{I}\setminus\C{L}_1}$
$$
  \B{x}_{\C{L}_2} = \tilde{\B{g}}_{\C{L}_2}(\B{x}_{\widehat{\pa}(\C{L}_2)\setminus\C{L}_2}, 
  \B{e}_{\widehat{\pa}(\C{L}_2)}) 
  \iff
  \B{x}_{\C{L}_2} = \B{f}_{\C{L}_2}(\B{g}_{\C{L}_1}(
  \B{x}_{\pa(\C{L}_1)\setminus\C{L}_1},\B{e}_{\pa(\C{L}_1)}), 
  \B{x}_{\C{I}\setminus\C{L}_1},\B{e}) \,.
$$
Define the mapping $\B{h} : 
\BC{X}_{\pa(\C{L}_1\cup\C{L}_2)\setminus(\C{L}_1\cup\C{L}_2)} \times 
\BC{E}_{\pa(\C{L}_1\cup\C{L}_2)} \to  \BC{X}_{\C{L}_1\cup\C{L}_2}$ by
$$
  \begin{aligned}
    (&\B{h}_{\C{L}_1}, \B{h}_{\C{L}_2})(
    \B{x}_{\pa(\C{L}_1\cup\C{L}_2)\setminus(\C{L}_1\cup\C{L}_2)}, 
    \B{e}_{\pa(\C{L}_1\cup\C{L}_2)}) := \\
    \Big(&\B{g}_{\C{L}_1}\big((\tilde{\B{g}}_{\C{L}_2})_{\pa(\C{L}_1)}(
    \B{x}_{\widehat{\pa}(\C{L}_2)\setminus\C{L}_2}, 
    \B{e}_{\widehat{\pa}(\C{L}_2)}), \B{x}_{\pa(\C{L}_1)\setminus(\C{L}_1\cup\C{L}_2)},
    \B{e}_{\pa(\C{L}_1)}\big), \tilde{\B{g}}_{\C{L}_2}(\B{x}_{\widehat{\pa}(\C{L}_2)\setminus\C{L}_2}, 
    \B{e}_{\widehat{\pa}(\C{L}_2)})\Big) \,.
  \end{aligned}
$$
Then for $\Prb_{\BC{E}}$-almost every $\B{e}\in\BC{E}$ 
and for all $\B{x} \in \BC{X}$
$$
  \begin{aligned}
  &\begin{cases}
    \B{x}_{\C{L}_1} &= \B{f}_{\C{L}_1}(\B{x},\B{e}) \\
    \B{x}_{\C{L}_2} &= \B{f}_{\C{L}_2}(\B{x},\B{e}) 
  \end{cases} \\
  \iff
  &\begin{cases}
    \B{x}_{\C{L}_1} &= \B{g}_{\C{L}_1}(\B{x}_{\pa(\C{L}_1)\setminus\C{L}_1},\B{e}_{\pa(\C{L}_1)}) \\
    \B{x}_{\C{L}_2} &= \B{f}_{\C{L}_2}(\B{x},\B{e}) 
  \end{cases} \\
  \iff
  &\begin{cases}
    \B{x}_{\C{L}_1} &= \B{g}_{\C{L}_1}(\B{x}_{\pa(\C{L}_1)\setminus\C{L}_1},\B{e}_{\pa(\C{L}_1)}) \\
    \B{x}_{\C{L}_2} &= \B{f}_{\C{L}_2}(\B{g}_{\C{L}_1}(\B{x}_{\pa(\C{L}_1)\setminus\C{L}_1},\B{e}_{\pa(\C{L}_1)}), \B{x}_{\C{I}\setminus\C{L}_1},\B{e})
  \end{cases} \\
  \iff
  &\begin{cases}
    \B{x}_{\C{L}_1} &= \B{g}_{\C{L}_1}(\B{x}_{\pa(\C{L}_1)\setminus\C{L}_1},\B{e}_{\pa(\C{L}_1)}) \\
    \B{x}_{\C{L}_2} &=
\tilde{\B{g}}_{\C{L}_2}(\B{x}_{\widehat{\pa}(\C{L}_2)\setminus\C{L}_2},\B{e}_{\widehat{\pa}(\C{L}_2)}) 
  \end{cases} \\
  \iff
  &\begin{cases}
    \B{x}_{\C{L}_1} &=
    \B{g}_{\C{L}_1}\big((\tilde{\B{g}}_{\C{L}_2})_{\pa(\C{L}_1)}(\B{x}_{\widehat{\pa}(\C{L}_2)\setminus\C{L}_2},\B{e}_{\widehat{\pa}(\C{L}_2)}), \B{x}_{\pa(\C{L}_1)\setminus(\C{L}_1\cup\C{L}_2)},\B{e}_{\pa(\C{L}_1)}\big) \\
    \B{x}_{\C{L}_2} &=
\tilde{\B{g}}_{\C{L}_2}(\B{x}_{\widehat{\pa}(\C{L}_2)\setminus\C{L}_2},\B{e}_{\widehat{\pa}(\C{L}_2)}) 
  \end{cases} \\
  \iff
  &\begin{cases}
    \B{x}_{\C{L}_1} &= \B{h}_{\C{L}_1}(\B{x}_{\pa(\C{L}_1\cup\C{L}_2)\setminus(\C{L}_1\cup\C{L}_2)},\B{e}_{\pa(\C{L}_1\cup\C{L}_2)}) \\
    \B{x}_{\C{L}_2} &= \B{h}_{\C{L}_2}(\B{x}_{\pa(\C{L}_1\cup\C{L}_2)\setminus(\C{L}_1\cup\C{L}_2)},\B{e}_{\pa(\C{L}_1\cup\C{L}_2)}) \,,
  \end{cases} 
  \end{aligned}
$$
where in the first equivalence we used unique solvability w.r.t.\ $\C{L}_1$ of $\C{M}$, in the second 
we used substitution, in the third we used unique solvability w.r.t.\ $\C{L}_2$ of 
$\C{M}_{\marg(\C{L}_1)}$, in the fourth we used again substitution and in the last equivalence 
we used the definition of $\B{h}$. From this we conclude that $\C{M}$ is uniquely solvable w.r.t.\ 
$\C{L}_1\cup\C{L}_2$. Hence, by definition it follows that $\marg(\C{L}_2)\circ\marg(\C{L}_1)(\C{M})=\marg(\C{L}_1\cup\C{L}_2)(\C{M})$.

$\impliedby$: 
If $\C{M}$ is uniquely solvable w.r.t.\ $\C{L}_1\cup\C{L}_2$, then there exists a
mapping $\B{h} : \BC{X}_{\pa(\C{L}_1\cup\C{L}_2)\setminus(\C{L}_1\cup\C{L}_2)} \times \BC{E}_{\C{L}_1\cup\C{L}_2} \to \BC{X}_{\C{L}_1\cup\C{L}_2}$ such that for $\Prb_{\BC{E}}$-almost every $\B{e}\in\BC{E}$ for all
$\B{x}\in\BC{X}$
$$ 
  \B{x}_{\C{L}_1 \cup \C{L}_2} = \B{h}(\B{x}_{\pa(\C{L}_1\cup\C{L}_2)\setminus(\C{L}_1\cup\C{L}_2)},\B{e}_{\pa(\C{L}_1\cup\C{L}_2)})
  \quad\iff\quad
  \B{x}_{\C{L}_1 \cup \C{L}_2} = \B{f}_{\C{L}_1\cup \C{L}_2}(\B{x},\B{e}) \,.
$$
Then, for $\Prb_{\BC{E}}$-almost every $\B{e}\in\BC{E}$
for all $\B{x}\in\BC{X}$
$$
  \begin{aligned}
  &\begin{cases}
    \B{x}_{\C{L}_1} &= \B{h}_{\C{L}_1}(\B{x}_{\pa(\C{L}_1\cup\C{L}_2)\setminus(\C{L}_1\cup\C{L}_2)},\B{e}_{\pa(\C{L}_1\cup\C{L}_2)}) \\
    \B{x}_{\C{L}_2} &= \B{h}_{\C{L}_2}(\B{x}_{\pa(\C{L}_1\cup\C{L}_2)\setminus(\C{L}_1\cup\C{L}_2)},\B{e}_{\pa(\C{L}_1\cup\C{L}_2)})
  \end{cases} \\
  \iff
  &\begin{cases}
    \B{x}_{\C{L}_1} &= \B{f}_{\C{L}_1}(\B{x},\B{e}) \\
    \B{x}_{\C{L}_2} &= \B{f}_{\C{L}_2}(\B{x},\B{e}) 
  \end{cases} \\
  \iff
  &\begin{cases}
    \B{x}_{\C{L}_1} &= \B{g}_{\C{L}_1}(\B{x}_{\pa(\C{L}_1)\setminus\C{L}_1},\B{e}_{\pa(\C{L}_1)}) \\
    \B{x}_{\C{L}_2} &= \B{f}_{\C{L}_2}(\B{x},\B{e}) 
  \end{cases} \\
  \iff
  &\begin{cases}
    \B{x}_{\C{L}_1} &= \B{g}_{\C{L}_1}(\B{x}_{\pa(\C{L}_1)\setminus\C{L}_1},\B{e}_{\pa(\C{L}_1)}) \\
    \B{x}_{\C{L}_2} &= \B{f}_{\C{L}_2}(\B{g}_{\C{L}_1}(\B{x}_{\pa(\C{L}_1)\setminus\C{L}_1},\B{e}_{\pa(\C{L}_1)}), \B{x}_{\C{I}\setminus\C{L}_1},\B{e})
  \end{cases} \\
  \iff
  &\begin{cases}
    \B{x}_{\C{L}_1} &= \B{g}_{\C{L}_1}(\B{x}_{\pa(\C{L}_1)\setminus\C{L}_1},\B{e}_{\pa(\C{L}_1)}) \\
    \B{x}_{\C{L}_2} &= \tilde{\B{f}}_{\C{L}_2}(\B{x}_{\widehat{\pa}(\C{L}_2)},\B{e}_{\widehat{\pa}(\C{L}_2)}) \,.
  \end{cases} \\
  \end{aligned}
$$
This gives for $\Prb_{\BC{E}}$-almost every $\B{e}\in\BC{E}$
for all $\B{x}_{\C{I}\setminus\C{L}_1}\in\BC{X}_{\C{I}\setminus\C{L}_1}$
$$ 
\begin{aligned}
    \B{x}_{\C{L}_2} &=
\B{h}_{\C{L}_2}(\B{x}_{\pa(\C{L}_1\cup\C{L}_2)\setminus(\C{L}_1\cup\C{L}_2)},\B{e}_{\pa(\C{L}_1\cup\C{L}_2)}) \\
 \iff
    \B{x}_{\C{L}_2} &=
\tilde{\B{f}}_{\C{L}_2}(\B{x}_{\widehat{\pa}(\C{L}_2)},\B{e}_{\widehat{\pa}(\C{L}_2)}) \,.
  \end{aligned}
$$
Now apply Lemma~\ref{lemm:SolutionFunctionRedundantInputs} to conclude that $\C{M}_{\marg(\C{L}_1)}$ is uniquely solvable w.r.t.\ $\C{L}_2$.
\end{proof}

\begin{proof}[Proof of Proposition~\ref{prop:MarginalizationCommuteWithInterventionAndTwin}]
The commutation relation with the perfect intervention follows straightforwardly from the definitions of perfect intervention and marginalization 
and the fact that if $\C{M}$ is uniquely solvable w.r.t.\ $\C{L}$, then 
$\C{M}_{\intervene(I,\B{\xi}_I)}$ is also uniquely solvable w.r.t.\ $\C{L}$, since the 
structural equations for the variables $\C{L}$ are the same for $\C{M}$ and $\C{M}_{\intervene(I,\B{\xi}_I)}$.

The commutation relation with the twin operation follows straightforwardly from the definition of the twin operation and
marginalization and the fact that if $\C{M}$ is uniquely solvable w.r.t.\
$\C{L}$, then $\twin(\C{M})$ is uniquely solvable w.r.t.\ $\C{L}\cup\C{L}'$,
where $\C{L}'$ is the copy of $\C{L}$ in $\C{I}'$.
\end{proof}

%

\begin{lemma}
\label{lemm:SCMandMarginalSCMAreObservationalEquivalent}
Given an SCM $\C{M}$ and a subset $\C{L}\subseteq\C{I}$ such that $\C{M}$ is uniquely solvable 
w.r.t.\ $\C{L}$. Then $\C{M}$ and $\marg(\C{L})(\C{M})$ are observationally equivalent 
w.r.t.\ $\C{I}\setminus\C{L}$.
\end{lemma}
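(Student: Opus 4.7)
The plan is to prove the two required implications from Definition~\ref{def:ObservationallyEquivalent} by directly constructing, for each solution on one side, a corresponding solution on the other side whose $\C{O}$-marginal distribution agrees. Here $\C{O} := \C{I}\setminus\C{L}$, and I will use the mapping $\B{g}_{\C{L}} : \BC{X}_{\pa(\C{L})\setminus\C{L}} \times \BC{E}_{\pa(\C{L})} \to \BC{X}_{\C{L}}$ that witnesses unique solvability of $\C{M}$ w.r.t.\ $\C{L}$ and that is used to build the marginal mechanism $\tilde{\B{f}}$.

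For the forward direction, I would start with a solution $(\B{X},\B{E})$ of $\C{M}$ and set $\tilde{\B{X}} := \B{X}_{\C{O}}$, $\tilde{\B{E}} := \B{E}$. Since the structural equation $\B{X}_{\C{L}} = \B{f}_{\C{L}}(\B{X},\B{E})$ holds almost surely and since unique solvability w.r.t.\ $\C{L}$ provides, for $\Prb_{\BC{E}}$-almost every $\B{e}$, the equivalence between solving $\B{x}_{\C{L}} = \B{f}_{\C{L}}(\B{x},\B{e})$ and $\B{x}_{\C{L}} = \B{g}_{\C{L}}(\B{x}_{\pa(\C{L})\setminus\C{L}},\B{e}_{\pa(\C{L})})$, I can invoke $\Prb^{\B{E}} = \Prb_{\BC{E}}$ to conclude that $\B{X}_{\C{L}} = \B{g}_{\C{L}}(\B{X}_{\pa(\C{L})\setminus\C{L}}, \B{E}_{\pa(\C{L})})$ a.s. Substituting this identity into $\B{X}_{\C{O}} = \B{f}_{\C{O}}(\B{X},\B{E})$ yields $\tilde{\B{X}} = \tilde{\B{f}}(\tilde{\B{X}},\tilde{\B{E}})$ a.s., so $(\tilde{\B{X}},\tilde{\B{E}})$ is a solution of $\marg(\C{L})(\C{M})$, and trivially $\Prb^{\B{X}_{\C{O}}} = \Prb^{\tilde{\B{X}}}$.

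For the reverse direction, I would start with a solution $(\tilde{\B{X}},\tilde{\B{E}})$ of $\marg(\C{L})(\C{M})$ and define $\B{X}_{\C{O}} := \tilde{\B{X}}$, $\B{E} := \tilde{\B{E}}$, and $\B{X}_{\C{L}} := \B{g}_{\C{L}}(\tilde{\B{X}}_{\pa(\C{L})\setminus\C{L}}, \tilde{\B{E}}_{\pa(\C{L})})$. The $\C{L}$-structural equation $\B{X}_{\C{L}} = \B{f}_{\C{L}}(\B{X},\B{E})$ then holds a.s.\ by the same unique-solvability equivalence; the $\C{O}$-structural equation $\B{X}_{\C{O}} = \B{f}_{\C{O}}(\B{X},\B{E})$ follows by unpacking the definition of $\tilde{\B{f}}$ and using $\B{X}_{\C{O}} = \tilde{\B{X}} = \tilde{\B{f}}(\tilde{\B{X}},\tilde{\B{E}})$ a.s. Since $\B{X}_{\C{O}} = \tilde{\B{X}}$ by construction, distributional equality on $\C{O}$ is immediate.

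The only step that requires any care is the bookkeeping with the $\Prb_{\BC{E}}$-null exceptional set in the definition of unique solvability: I must argue that pulling this set back along $\B{E}$ (respectively $\tilde{\B{E}}$) gives a $\Prb$-null set on the sample space, which is exactly what the condition $\Prb^{\B{E}} = \Prb_{\BC{E}}$ (item~3 of Definition~\ref{def:Solution}) guarantees. No measurability issues arise because $\B{g}_{\C{L}}$ is measurable by hypothesis, so $\B{X}_{\C{L}}$ in the reverse direction is automatically a well-defined random variable.
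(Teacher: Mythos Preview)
Your proposal is correct and follows essentially the same approach as the paper's own proof: both directions are established by passing a solution from one SCM to the other via the unique-solvability witness $\B{g}_{\C{L}}$, replacing $\B{X}_{\C{L}}$ by $\B{g}_{\C{L}}(\B{X}_{\pa(\C{L})\setminus\C{L}},\B{E}_{\pa(\C{L})})$ in the forward direction and defining $\B{X}_{\C{L}}$ this way in the reverse direction. Your added remark on the null-set bookkeeping merely makes explicit what the paper leaves implicit.
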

\begin{proof}
Let $\C{O}:=\C{I}\setminus\C{L}$. From unique solvability w.r.t.\ $\C{L}$ it follows that for $\Prb_{\BC{E}}$-almost every $\B{e}\in\BC{E}$ and for all $\B{x} \in \BC{X}$
$$
  \begin{aligned}
  &\begin{cases}
    \B{x}_{\C{L}} &= \B{f}_{\C{L}}(\B{x},\B{e}) \\
    \B{x}_{\C{O}} &= \B{f}_{\C{O}}(\B{x},\B{e}) 
  \end{cases} \\
  \iff
  &\begin{cases}
    \B{x}_{\C{L}} &= \B{g}_{\C{L}}(\B{x}_{\pa(\C{L})\setminus\C{L}},\B{e}_{\pa(\C{L})}) \\
    \B{x}_{\C{O}} &=
\B{f}_{\C{O}}(\B{g}_{\C{L}}(\B{x}_{\pa(\C{L})\setminus\C{L}},\B{e}_{\pa(\C{L})}), \B{x}_{\C{O}},\B{e})
  \end{cases} \\
  \iff 
  &\begin{cases}
    \B{x}_{\C{L}} &= \B{g}_{\C{L}}(\B{x}_{\pa(\C{L})\setminus\C{L}},\B{e}_{\pa(\C{L})}) \\
    \B{x}_{\C{O}} &= \tilde{\B{f}}(\B{x}_{\C{O}},\B{e}) \,,
  \end{cases} 
  \end{aligned}
$$
where $\tilde{\B{f}}$ is the marginal causal mechanism of $\C{M}_{\marg(\C{L})}$ constructed from a measurable solution function $\B{g}_{\C{L}}: \BC{X}_{\pa(\C{L})\setminus\C{L}}\times\BC{E}_{\pa(\C{L})} \to \BC{X}_{\C{L}}$ for $\C{M}$ w.r.t.\ $\C{L}$. Hence, a solution $(\B{X},\B{E})$ of $\C{M}$ satisfies $\B{X}_{\C{O}} = \tilde{\B{f}}(\B{X}_{\C{O}},\B{E})$ a.s.. Conversely, if $(\tilde{\B{X}}_{\C{O}},\B{E})$ is a solution of the marginal SCM $\C{M}_{\marg(\C{L})}$ then with $\tilde{\B{X}}_{\C{L}} := \B{g}_{\C{L}}(\tilde{\B{X}}_{\pa(\C{L})\setminus\C{L}},\B{E}_{\pa(\C{L})})$, the random variables $(\B{X},\B{E}):=(\tilde{\B{X}}_{\C{O}},\tilde{\B{X}}_{\C{L}},\B{E})$ are a solution of $\C{M}$.
\end{proof}

\begin{proof}[Proof of Theorem~\ref{thm:MarginalizationEquivalences}]
The observational equivalence follows from
Lemma~\ref{lemm:SCMandMarginalSCMAreObservationalEquivalent}. Using both
Lemma~\ref{lemm:SCMandMarginalSCMAreObservationalEquivalent} and Proposition~\ref{prop:MarginalizationCommuteWithInterventionAndTwin} we can prove the
interventional equivalence. Observe that from Proposition~\ref{prop:MarginalizationCommuteWithInterventionAndTwin} we know that for a subset 
$I\subseteq\C{I}\setminus\C{L}$ and a value $\B{\xi}_I\in\BC{X}_I$, 
$(\marg(\C{L})\circ\intervene(I,\B{\xi}_I))(\C{M})$ exists. By 
Lemma~\ref{lemm:SCMandMarginalSCMAreObservationalEquivalent} we know that 
$\intervene(I,\B{\xi}_I)(\C{M})$ and $(\marg(\C{L})\circ\intervene(I,\B{\xi}_I))(\C{M})$ 
are observationally equivalent w.r.t.\ $\C{O}$ and hence by applying again 
Proposition~\ref{prop:MarginalizationCommuteWithInterventionAndTwin}, $\intervene(I,\B{\xi}_I)(\C{M})$ 
and $(\intervene(I,\B{\xi})\circ\marg(\C{L}))(\C{M})$ are observationally equivalent w.r.t.\ 
$\C{O}$. This implies that $\C{M}$ and $\marg(\C{L})(\C{M})$ are
interventionally equivalent w.r.t.\ $\C{O}$. Lastly, we need to show that $\twin(\C{M})$ and
$(\twin\circ\marg(\C{L}))(\C{M})$ are interventionally equivalent w.r.t.\
$(\C{I}\cup\C{I}')\setminus(\C{L}\cup\C{L}')$, where $\C{L}'$ is the copy of
$\C{L}$ in $\C{I}'$. From Proposition~\ref{prop:MarginalizationCommuteWithInterventionAndTwin}
$(\twin\circ\marg(\C{L}))(\C{M})$ is equivalent to
$(\marg(\C{L}\cup\C{L}')\circ\twin)(\C{M})$ and since we proved that $(\marg(\C{L}\cup\C{L}')\circ\twin)(\C{M})$ and $\twin(\C{M})$ are
interventionally equivalent w.r.t.\ $(\C{I}\cup\C{I}')\setminus(\C{L}\cup\C{L}')$ the result follows.
\end{proof}

\begin{proof}[Proof of Proposition~\ref{prop:LatentProjectionCommutes}]
A similar proof as for Theorem~1 in \citep{Eva16} works.
\end{proof}

\begin{proof}[Proof of Proposition~\ref{prop:LatentProjectionCommuteWithInterventionAndTwin}]
First we prove the commutation relation of the perfect intervention. Observe that applying the $\intervene(I)$ operation to the latent projection
$\marg(\C{L})(\C{G})$ removes all the incoming edges on the nodes $I$.
Such an incoming edge at a node in $I$ in $\marg(\C{L})(\C{G})$ corresponds to a path in $\C{G}$ that points to that node. But since $\intervene(I)(\C{G})$ is
just $\C{G}$ with all the incoming edges on $I$ removed, the graph
$(\marg(\C{L})\circ\intervene(I))(\C{G})$ also has all the incoming edges on the
nodes $I$ removed.

Next, we will prove the commutation relation of the twin operation. We will denote the copy in $\C{I}'$ of any node $i \in \C{I}$ by $i'$, that is, $\C{I}' = \{i' : i \in \C{I}\}$.
The edges in $(\twin(\C{I}\setminus\C{L}) \circ \marg(\C{L}))(\C{G})$ can be partitioned into three cases:
$$\begin{cases}
v \to w & v \in \C{J} \cup \C{I} \setminus \C{L}, w \in \C{J} \cup \C{I} \setminus \C{L}, v \to w \in \marg(\C{L})(\C{G}) \,, \\
v \to w' & v \in \C{J}, w \in \C{I} \setminus \C{L}, v \to w \in \marg(\C{L})(\C{G}) \,, \\
v' \to w' & v \in \C{I} \setminus \C{L}, w \in \C{I} \setminus \C{L}, v \to w \in \marg(\C{L})(\C{G}) \,,
\end{cases}$$
where $\C{J}:=\C{V}\setminus \C{I}$.

Note that in $\twin(\C{I})(\C{G})$, there are no directed edges of the form $v' \to w$ by definition.
Therefore, the edges in $(\marg(\C{L} \cup \C{L}') \circ \twin(\C{I}))(\C{G})$ can be partitioned into three cases:
$$\begin{cases}
  v \to w & v \in \C{J} \cup \C{I} \setminus \C{L}, w \in \C{J} \cup \C{I} \setminus \C{L}, v \to \ell_1 \to \dots \to \ell_n \to w \in \twin(\C{I})(\C{G}) \,, \\
  v \to w' & v \in \C{J}, w \in \C{I} \setminus \C{L}, v \to \ell_1' \to \dots \to \ell_n' \to w' \in \twin(\C{I})(\C{G}) \,, \\
  v' \to w' & v \in \C{I} \setminus \C{L}, w \in \C{I} \setminus \C{L}, v' \to \ell_1' \to \dots \to \ell_n' \to w' \in \twin(\C{I})(\C{G}) \,,
\end{cases}$$
where all $\ell_1,\dots,\ell_n \in \C{L}$ and $\ell'_1,\dots,\ell'_n \in \C{L}'$.
Thus, the non-endpoint nodes on the directed paths in $\twin(\C{I})(\C{G})$ must either all lie in $\C{L}$ or in $\C{L}'$.
With the definition of $\twin(\C{I})(\C{G})$ we can rewrite this as follows:
$$\begin{cases}
  v \to w & v \in \C{J} \cup \C{I} \setminus \C{L}, w \in \C{J} \cup \C{I} \setminus \C{L}, v \to \ell_1 \to \dots \to \ell_n \to w \in \C{G} \,, \\
  v \to w' & v \in \C{J}, w \in \C{I} \setminus \C{L}, v \to \ell_1 \to \dots \to \ell_n \to w \in \C{G} \,, \\
  v' \to w' & v \in \C{I} \setminus \C{L}, w \in \C{I} \setminus \C{L}, v \to \ell_1 \to \dots \to \ell_n \to w \in \C{G} \,,
\end{cases}$$
where all intermediate $\ell_1,\dots,\ell_n$ must lie in $\C{L}$.
This corresponds exactly with the edges in $(\twin(\C{I}\setminus\C{L}) \circ \marg(\C{L}))(\C{G})$.

\end{proof}

\begin{proof}[Proof of Proposition~\ref{prop:LatentProjection}]
  \sloppy \Joris{Refined} Without loss of generality, we assume that $\C{M}$ is structurally minimal (see Proposition~\ref{prop:StructMinimalRepresentation}).
  Let $\B{g}_{\C{L}}$ be a measurable solution function for $\C{M}$
  w.r.t.\ $\C{L}$ and denote by $\C{M}_{\marg(\C{L})}$ the marginal SCM 
  constructed from $\B{g}_{\C{L}}$. For $j\in\C{I}\setminus\C{L}$, define $A_j :=
  \an_{\C{G}(\C{M})_{\C{L}}}(\pa(j)\cap\C{L}) \subseteq \C{L}$ and let 
$\tilde{\B{g}}_{A_j}$ be a measurable solution function for $\C{M}$ 
w.r.t.\ $A_j$. Because $A_j \subseteq \C{L}$ and $\pa(A_j) \setminus A_j \subseteq \pa(\C{L}) \setminus \C{L}$,
by Lemma~\ref{lemm:ConsistentSolutionFunctions},
for $\Prb_{\BC{E}}$-almost every $\B{e}\in\BC{E}$ and for all $\B{x}\in\BC{X}$
$$
  (\B{g}_{\C{L}})_{A_j}(\B{x}_{\pa(\C{L})\setminus\C{L}},\B{e}_{\pa(\C{L})})
  =
  \tilde{\B{g}}_{A_j}(\B{x}_{\pa(A_j)\setminus
  A_j},\B{e}_{\pa(A_j)}) \,.
$$
Therefore, the component $\tilde{f}_j$ of the marginal causal mechanism $\tilde{\B{f}}$ of
$\C{M}_{\marg(\C{L})}$ satisfies
for $\Prb_{\BC{E}}$-almost every $\B{e}\in\BC{E}$ and for all $\B{x}\in\BC{X}$
  $$
  \begin{aligned}
    \tilde{f}_j(\B{x}_{\C{I}\setminus\C{L}},\B{e}) & := f_j\big((\B{g}_{\C{L}})_{\pa(j)}(\B{x}_{\pa(\C{L})\setminus\C{L}},\B{e}_{\pa(\C{L})}),\B{x}_{\pa(j)\setminus\C{L}},\B{e}_{\pa(j)}\big) \\
    & = f_j\big((\tilde{\B{g}}_{A_j})_{\pa(j)\cap\C{L}} (\B{x}_{\pa(A_j)\setminus A_j},\B{e}_{\pa(A_j)}),\B{x}_{\pa(j)\setminus\C{L}},\B{e}_{\pa(j)}\big) \,.
  \end{aligned}
  $$
Hence, the endogenous parents of $j$ in $\C{M}_{\marg(\C{L})}$ are a subset of 
$\big((\pa(A_j)\setminus A_j) \cup (\pa(j)\setminus\C{L})\big) \cap \C{I}$
and the exogenous parents of $j$ in $\C{M}_{\marg(\C{L})}$ are a subset of 
$(\pa(A_j) \cup \pa(j)) \cap \C{J}$. Hence, all parents of $j$ in $\C{M}_{\marg(\C{L})}$ 
are a subset of those $k\in(\C{I}\setminus\C{L})\cup\C{J}$ such that there exists a path 
$k\to\ell_1\to\dots\to\ell_n\to j\in\C{G}^a(\C{M})$ for $n\geq 0$ and 
$\ell_1,\dots,\ell_n\in\C{L}$. Therefore, the augmented graph 
$\C{G}^a\big(\marg(\C{L})(\C{M})\big)$ is a subgraph of the latent projection 
$\marg(\C{L})\big(\C{G}^a(\C{M})\big)$. Hence, 
$$\begin{aligned}
  \C{G}\big(\marg(\C{L})(\C{M})\big) & = \marg(\C{J})\Big(\C{G}^a\big(\marg(\C{L})(\C{M})\big)\Big) \\
                                     & \subseteq \marg(\C{J})\Big(\marg(\C{L})\big(\C{G}^a(\C{M})\big)\Big) \\
                                     & = \marg(\C{L})\Big(\marg(\C{J})\big(\C{G}^a(\C{M})\big)\Big) \\
                                     & = \marg(\C{L})\big(\C{G}(\C{M})\big)
  \end{aligned}
$$
and we conclude that also the graph $\C{G}\big(\marg(\C{L})(\C{M})\big)$
is a subgraph of the latent projection $\marg(\C{L})\big(\C{G}(\C{M})\big)$.
\end{proof}


\subsubsection*{Section 6}
\addcontentsline{toc}{subsubsection}{\protect\numberline{}Section 6}%

\begin{proof}[Proof of Theorem~\ref{thm:globalMarkovPropertiesSCMs}]
This follows directly from Theorems~\ref{thm:dgMarkovPropertySCMThreeSpecialCases} and \ref{thm:gdgMarkovPropertySCM}.
\end{proof}

\subsubsection*{Section 7}
\addcontentsline{toc}{subsubsection}{\protect\numberline{}Section 7}%

\begin{proof}[Proof of Proposition~\ref{prop:DirectedPathEdges}]
We define $\tilde{\C{M}}:=\C{M}_{\intervene(I,\B{\xi}_I)}$, $\widetilde{\pa} := \pa_{\C{G}^a(\tilde{\C{M}})}$ and $\C{A} := \an_{\C{G}(\tilde{\C{M}})_{\setminus i}}(j)$. Suppose that $i\to j \notin \marg(\C{I}\setminus\C{O})(\C{G}(\C{M}))$ and assume that the two induced distributions do not coincide. Because $i\to j \notin \marg(\C{I}\setminus\C{O})(\C{G}(\C{M}))$ it follows that $(\widetilde{\pa}(\C{A})\setminus\C{A}) \cap \C{I}= \emptyset$. Let now $\tilde{\B{g}}_{\C{A}}:\BC{E}_{\widetilde{\pa}(\C{A})}\to\BC{X}_{\C{A}}$
be a measurable solution function for $\tilde{\C{M}}$ w.r.t.\ $\C{A}$, that is, we have 
for $\Prb_{\BC{E}}$-almost every $\B{e}\in\BC{E}$ and for all $\B{x}\in\BC{X}$
$$
  \B{x}_{\C{A}} = \tilde{\B{f}}_{\C{A}}(\B{x},\B{e})  \quad\iff\quad
  \B{x}_{\C{A}} = \tilde{\B{g}}_{\C{A}}(\B{e}_{\widetilde{\pa}(\C{A})}) \,,
$$
where $\tilde{\B{f}}$ is the ausal mechanism of $\tilde{\C{M}}$. Because
$i\notin\C{A}$ and $j\in \C{A}$, it follows that for the intervened model
$(\C{M}_{\intervene(I,\B{\xi}_I)})_{\intervene(\{i\},\xi_i)}$
the marginal solution $X_j$ is also a marginal solution of
$(\C{M}_{\intervene(I,\B{\xi}_I)})_{\intervene(\{i\},\tilde{\xi}_i)}$
and vice versa, which is in contradiction with the assumption.
\end{proof}

\begin{proof}[Proof of Proposition~\ref{prop:BidirectedEdges}]
  Let's define $\tilde{\C{M}}:=\C{M}_{\intervene(I,\B{\xi}_I)}$, $\widetilde{\pa} := \pa_{\C{G}^a(\tilde{\C{M}})}$, $\C{A}_i := \an_{\C{G}(\tilde{\C{M}})}(i)$ and $\C{A}_{j}^{\setminus i} :=
  \an_{\C{G}(\tilde{\C{M}})_{\setminus i}}(j)$. Suppose that there does not exist a bidirected edge $i\oto j$ in the latent projection $\marg(\C{I}\setminus\C{O})(\C{G}(\C{M}))$. Because $i\oto j \notin \marg(\C{I}\setminus\C{O})(\C{G}(\tilde{\C{M}}))$, where here $\tilde{\C{M}}$ is the intervened model $\C{M}_{\intervene(I,\B{\xi}_I)}$, we have that $\an_{\C{G}^a(\tilde{\C{M}})_{\setminus j}}(i)\cap\an_{\C{G}^a(\tilde{\C{M}})_{\setminus i}}(j)\cap\C{J} = \emptyset$. From $j\notin \an_{\C{G}(\tilde{\C{M}})}(i)$ it follows that $\an_{\C{G}(\tilde{\C{M}})_{\setminus j}}(i) = \an_{\C{G}(\tilde{\C{M}})}(i)$, and hence $\an_{\C{G}^a(\tilde{\C{M}})}(i)\cap\an_{\C{G}^a(\tilde{\C{M}})_{\setminus i}}(j)\cap\C{J} = \emptyset$. Observe that $\widetilde{\pa}(\C{A}_i)\subseteq\an_{\C{G}^a(\tilde{\C{M}})}(i)$ and $\widetilde{\pa}(\C{A}_{j}^{\setminus i})\subseteq\an_{\C{G}^a(\tilde{\C{M}})_{\setminus i}}(j)\cup\{i\}$, and thus $\widetilde{\pa}(\C{A}_i)\cap\widetilde{\pa}(\C{A}_{j}^{\setminus i})\cap \C{J} = \emptyset$. Let $\B{g}_{\C{A}_i}:\BC{E}_{\widetilde{\pa}(\C{A}_i)}\to\BC{X}_{\C{A}_i}$
be a measurable solution function for $\tilde{\C{M}}$ w.r.t.\ $\C{A}_i$, that is, we have 
for $\Prb_{\BC{E}}$-almost every $\B{e}\in\BC{E}$ and for all $\B{x}\in\BC{X}$
$$
  \B{x}_{\C{A}_i} = \tilde{\B{f}}_{\C{A}_i}(\B{x},\B{e})  \quad\iff\quad
  \B{x}_{\C{A}_i} =
  \B{g}_{\C{A}_i}(\B{e}_{\widetilde{\pa}(\C{A}_i)}) \,,
$$
  where $\tilde{\B{f}}$ is the intervened causal mechanism of $\tilde{\C{M}}$. Because $\widetilde{\pa}(\C{A}_i)\cap\widetilde{\pa}(\C{A}_{j}^{\setminus i})\cap \C{J} =
  \emptyset$ and $i\in\C{A}_i$, we have that $X_i \indep \B{E}_{\widetilde{\pa}(\C{A}_{j}^{\setminus i})}$ for every solution $(\B{X},\B{E})$ of $\tilde{\C{M}}$. 
  
Assume for the moment that $i\in\widetilde{\pa}(\C{A}_{j}^{\setminus i})\setminus\C{A}_{j}^{\setminus i}$, then
  $(\widetilde{\pa}(\C{A}_{j}^{\setminus i})\setminus\C{A}_{j}^{\setminus i}) \cap \C{I}= \{i\}$. Let
  $\B{g}_{\C{A}_{j}^{\setminus i}}:\C{X}_i\times\BC{E}_{\widetilde{\pa}(\C{A}_{j}^{\setminus i})}\to\BC{X}_{\C{A}_{j}^{\setminus i}}$
  be a measurable solution function for $\tilde{\C{M}}$ w.r.t.\ $\C{A}_{j}^{\setminus i}$, that is, we
have for $\Prb_{\BC{E}}$-almost every $\B{e}\in\BC{E}$ and for all $\B{x}\in\BC{X}$
$$
  \B{x}_{\C{A}_{j}^{\setminus i}} =
  \tilde{\B{f}}_{\C{A}_{j}^{\setminus i}}(\B{x},\B{e}) \iff \B{x}_{\C{A}_{j}^{\setminus i}} =
  \B{g}_{\C{A}_{j}^{\setminus i}}(x_i,\B{e}_{\widetilde{\pa}(\C{A}_{j}^{\setminus i})}) \,.
$$
For every measurable set $\C{B}_j\subseteq\C{X}_j$ there exists a version of the regular conditional
probability $\Prb_{\C{M}_{\intervene(I,\B{\xi}_I)}}(X_j \in \C{B} \given X_i = \xi_i)$ such that for every value $\xi_i\in\C{X}_i$ it satisfies
$$
  \begin{aligned}
    \Prb_{\C{M}_{\intervene(I,\B{\xi}_I)}}\big( X_j \in \C{B}_j \given X_i = \xi_i \big) &= \Prb_{\tilde{\C{M}}}\big( X_j \in \C{B}_j \given X_i = \xi_i \big) \\ 
    &= \Prb_{\tilde{\C{M}}}\big( (\B{g}_{\C{A}_{j}^{\setminus i}})_j(X_i, \B{E}_{\widetilde{\pa}(\C{A}_{j}^{\setminus i})}) \in \C{B}_j \given X_i = \xi_i \big) \\
    &= \Prb_{\tilde{\C{M}}}\big( (\B{g}_{\C{A}_{j}^{\setminus i}})_j(\xi_i, \B{E}_{\widetilde{\pa}(\C{A}_{j}^{\setminus i})}) \in \C{B}_j \given X_i = \xi_i \big) \\
    &= \Prb_{\tilde{\C{M}}}\big( (\B{g}_{\C{A}_{j}^{\setminus i}})_j(\xi_i, \B{E}_{\widetilde{\pa}(\C{A}_{j}^{\setminus i})}) \in \C{B}_j \big) \\
    &= \Prb_{\tilde{\C{M}}_{\intervene(\{i\},\xi_i)}}\big( (\B{g}_{\C{A}_{j}^{\setminus i}})_j(X_i, \B{E}_{\widetilde{\pa}(\C{A}_{j}^{\setminus i})}) \in \C{B}_j \big) \\
    &= \Prb_{\tilde{\C{M}}_{\intervene(\{i\},\xi_i)}}\big( X_j \in \C{B}_j \big) \\
    &= \Prb_{\big(\C{M}_{\intervene(I,\B{\xi}_I)}\big)_{\intervene(\{i\},\xi_i)}}\big( X_j \in \C{B}_j \big) \,,
  \end{aligned}
$$
  where we used $X_i \indep \B{E}_{\widetilde{\pa}(\C{A}_{j}^{\setminus i})}$ in the fourth
equality. 
  
If we assume
  $i\notin\widetilde{\pa}(\C{A}_{j}^{\setminus i})\setminus\C{A}_{j}^{\setminus i}$ instead of $i\in\pa(\C{A}_{j}^{\setminus i})\setminus\C{A}_{j}^{\setminus i}$, then we similarly arrive at the same conclusion.
\end{proof}

\subsubsection*{Section 8}
\addcontentsline{toc}{subsubsection}{\protect\numberline{}Section 8}%

\begin{proof}[Proof of Proposition~\ref{prop:SimplenessClosedUnderResults}]
We first show that the class of simple SCMs is closed under marginalization. Take two disjoint subsets $\C{L}_1$ and $\C{L}_2$ in $\C{I}$. Then, it suffices to show that
$\C{M}_{\marg(\C{L}_1)}$ is uniquely solvable w.r.t.\ $\C{L}_2$. This follows directly from Proposition~\ref{prop:MarginalizationCommutes}.

To show that the class of simple SCMs is closed under perfect intervention. Let $\C{M}$ be a simple SCM, $\C{O}\subseteq\C{I}$, $I\subseteq\C{I}$ and $\B{\xi}_I\in\BC{X}_I$. Define $\C{O}_1:=\C{O}\cap I$ and $\C{O}_2:=\C{O}\setminus I$, then $\C{O}=\C{O}_1\cup\C{O}_2$. Note that $\pa(\C{O}_2)\setminus\C{O}_2 = (\pa(\C{O}_2)\setminus (\C{O}_2\cup I)) \cup (\pa(\C{O}_2)\cap I)$ and $\pa(\C{O}_2)\setminus (\C{O}_2\cup I) \subseteq \pa(\C{O})\setminus\C{O}$. Let $\B{g}_{\C{O}_2} : \BC{X}_{\pa(\C{O}_2)\setminus\C{O}_2} \times \BC{E}_{\pa(\C{O}_2)} \to \BC{X}_{\C{O}_2}$ be a measurable solution function for $\C{M}$ w.r.t.\ $\C{O}_2$. The mapping $\tilde{\B{g}}_{\C{O}}:\BC{X}_{\pa(\C{O})\setminus\C{O}}\times\BC{E}_{\pa(\C{O})}\to\BC{X}_{\C{O}}$ defined by 
$$
\left\{
\begin{aligned}
  (\tilde{\B{g}}_{\C{O}})_{\C{O}_1}(\B{x}_{\pa(\C{O})\setminus\C{O}},\B{e}_{\pa(\C{O})}) &:= \B{\xi}_{\C{O}_1} \\
  (\tilde{\B{g}}_{\C{O}})_{\C{O}_2}(\B{x}_{\pa(\C{O})\setminus\C{O}},\B{e}_{\pa(\C{O})}) &:=   \B{g}_{\C{O}_2}(\B{x}_{\pa(\C{O}_2)\setminus (\C{O}_2\cup I)},\B{\xi}_{\pa(\C{O}_2)\cap I}, \B{e}_{\pa(\C{O}_2)})
\end{aligned}
\right.
$$
is a measurable solution function for $\C{M}_{\intervene(I,\B{\xi}_I)}$ w.r.t.\ $\C{O}$, and it is clear that
$\C{M}_{\intervene(I,\B{\xi}_I)}$ is uniquely solvable w.r.t.\ $\C{O}$.

Next, we show that the class of simple SCMs is closed under the twin operation. Let $\tilde{\C{O}}\subseteq\C{I}\cup\C{I}'$. Take $\C{O}_1=\tilde{\C{O}}\cap\C{I}$, $\C{O}_2'=\tilde{\C{O}}\cap\C{I}'$ and $\C{O}_2$ the original copy of $\C{O}_2'$ in $\C{I}$. Let $\B{g}_{\C{O}_1}:\BC{X}_{\pa(\C{O}_1)\setminus\C{O}_1}\times\BC{E}_{\pa(\C{O}_1)}\to\BC{X}_{\C{O}_1}$ and
$\B{g}_{\C{O}_2}:\BC{X}_{\pa(\C{O}_2)\setminus\C{O}_2}\times\BC{E}_{\pa(\C{O}_2)}\to\BC{X}_{\C{O}_2}$ be measurable solution functions for $\C{M}$ w.r.t.\ $\C{O}_1$ and $\C{O}_2$, respectively. Define now the mapping $\B{h}_{\tilde{\C{O}}}:\BC{X}_{\widetilde{\pa}(\tilde{\C{O}})\setminus\tilde{\C{O}}}\times\BC{E}_{\widetilde{\pa}(\tilde{\C{O}})}\to\BC{X}_{\tilde{\C{O}}}$ by 
$$
\begin{aligned} 
  (\B{h}_{\tilde{\C{O}}})_{\tilde{\C{O}}\cap\C{I}}(\B{x}_{\widetilde{\pa}(\tilde{\C{O}})\setminus\tilde{\C{O}}},\B{e}_{\widetilde{\pa}(\tilde{\C{O}})}) &:=
  \B{g}_{\C{O}_1}(\B{x}_{\widetilde{\pa}(\C{O}_1)\setminus\C{O}_1},\B{e}_{\widetilde{\pa}(\C{O}_1)}) \\
  (\B{h}_{\tilde{\C{O}}})_{\tilde{\C{O}}\cap\C{I}'}(\B{x}_{\widetilde{\pa}(\tilde{\C{O}})\setminus\tilde{\C{O}}},\B{e}_{\widetilde{\pa}(\tilde{\C{O}})}) &:= 
  \B{g}_{\C{O}_2}(\B{x}_{\widetilde{\pa}(\C{O}_2')\setminus\C{O}_2'},\B{e}_{\widetilde{\pa}(\C{O}_2')}) \,,
\end{aligned}
$$
where we define $\widetilde{\pa}:=\pa_{\C{G}^a(\C{M}^{\twin})}$ as the parents w.r.t.\ the twin graph $\C{G}^a(\C{M}^{\twin})$. Then by construction this mapping $\B{h}_{\tilde{\C{O}}}$ is a measurable solution function for $\C{M}^{\twin}$ w.r.t.\ $\tilde{\C{O}}$, and it is clear that $\C{M}^{\twin}$ is uniquely solvable w.r.t.\ $\tilde{\C{O}}$.

Lastly, it follows that the observational and all the intervened models of $\C{M}$ and $\C{M}^{\twin}$ are uniquely solvable. From Theorem~\ref{thm:UniqueSolvabilityIffCondition} we conclude that $\C{M}$ induces unique observational, interventional and counterfactual distributions.
\end{proof}

\begin{proof}[Proof of Corollary~\ref{coro:gdgMarkovPropertySimpleSCM}]
This follows from Corollary~\ref{coro:gdgMarkovPropertyInterventionalSCM}.
\end{proof}

\section{Measurable selection theorems}
\label{app:AppendixMST}
\setcounter{section}{6}
\setcounter{theorem}{0}

In this appendix, we derive some lemmas and state two measurable selection theorems that are
used in several proofs in Appendix~\ref{app:AppendixProofs}. First, we introduce the measure theoretic notation and terminology needed to understand the results (see \citep{Kec95} for more details).


\begin{definition}[Standard measurable space]
\label{def:StandardMeasurableSpace}
A measurable space $(\BC{X},\B{\Sigma})$ is a \emph{standard measurable space} if it is isomorphic to $(\BC{Y},\C{B}(\BC{Y}))$, where $\BC{Y}$ is a Polish space, that is, a separable completely metrizable space,\footnote{A \emph{metrizable space} is a topological space $\BC{X}$ for which there exists a metric $d$ such that $(\BC{X},d)$ is a metric space and induces the topology on $\BC{X}$. For a metric space $(\BC{X},d)$, a \emph{Cauchy sequence} is a sequence $(x_n)_{n\in\NN}$ of elements of $\BC{X}$ such that for every $\epsilon > 0$ there exists an $N\in\NN$ such that for all natural numbers $p,q > N$ we have $d(x_n,x_m)<\epsilon$. We call $(\BC{X},d)$ \emph{complete} if every Cauchy sequence has a limit in $\BC{X}$. A \emph{completely metrizable space} is a topological space $\BC{X}$ for which there exists a metric $d$ such that $(\BC{X},d)$ is a complete metric space that induces the topology on $\BC{X}$. A topological space $\BC{X}$ is called \emph{separable} if it contains a countable dense subset, that is, there exists a sequence $(x_n)_{n\in\NN}$ of elements in $\BC{X}$ such that every nonempty open subset of $\BC{X}$ contains at least one element of the sequence. A separable completely metrizable space is called a \emph{Polish space} (see \citep{Coh13} and \citep{Kec95} for more details).} and $\C{B}(\BC{Y})$ are the Borel subsets of $\BC{Y}$, that is, the $\sigma$-algebra generated by the open sets in $\BC{Y}$. A measure space $(\BC{X},\B{\Sigma},\B{\mu})$ is a \emph{standard probability space} if $(\BC{X},\B{\Sigma})$ is a standard measurable space and $\B{\mu}$ is a probability measure.
\end{definition}
Examples of standard measurable spaces are the open and closed subsets of $\RN^d$, and the finite sets with the usual complete metric. If we say that $\BC{X}$ is a standard measurable space, then we implicitly assume that there exists a $\sigma$-algebra $\B{\Sigma}$ such that $(\BC{X},\B{\Sigma})$ is a standard measurable space. Similarly, if we say that $\BC{X}$ is a standard probability space with probability measure $\Prb_{\BC{X}}$, then we implicitly assume that there exists a $\sigma$-algebra $\B{\Sigma}$ such that $(\BC{X},\B{\Sigma},\Prb_{\BC{X}})$ is a standard probability space.

\begin{definition}[Analytic set]
Let $\BC{X}$ be a Polish space. A set $\BC{A}\subseteq\BC{X}$ is called \emph{analytic} if there exist a Polish space $\BC{Y}$ and a continuous mapping $\B{f}:\BC{Y}\to\BC{X}$ with $\B{f}(\BC{Y})=\BC{A}$.
\end{definition}

\begin{lemma}
\label{lemm:AnalyticSetProperties}
Let $\BC{X}$ and $\BC{Y}$ be standard measurable spaces and $\B{f}:\BC{X}\to\BC{Y}$ a measurable mapping. Then
\begin{enumerate}
	\item every measurable set $\BC{A}\subseteq\BC{X}$ is analytic;
	\item if the subsets $\BC{A}\subseteq\BC{X}$ and $\BC{\tilde{A}}\subseteq\BC{Y}$ are analytic, then the sets $\B{f}(\BC{A})$ and $\B{f}^{-1}(\BC{\tilde{A}})$ are analytic.
\end{enumerate}
\end{lemma}
\begin{proof}
From Proposition~13.7 in~\citep{Kec95} it follows that every measurable set $\BC{A}\subseteq\BC{X}$ is analytic. From Proposition~14.4.(ii) in~\citep{Kec95} it follows that the image and the preimage of an analytic set is an analytic set.
\end{proof}

\begin{definition}[$\B{\mu}$-measurability]
Let $(\BC{X},\B{\Sigma},\B{\mu})$ be a measure space. A set $\BC{E} \subseteq \BC{X}$ is called 
a \emph{$\B{\mu}$-null set} if there exists a $\BC{A} \in \B{\Sigma}$ with 
$\BC{E} \subseteq \BC{A}$ and $\B{\mu}(\BC{A}) = 0$. We denote the class of $\B{\mu}$-null sets 
by $\BC{N}$, and we denote the $\sigma$-algebra generated by 
$\B{\Sigma}\cup\BC{N}$ by $\bar{\B{\Sigma}}$, 
and its members are called the \emph{$\B{\mu}$-measurable} sets. Note that each member of
$\bar{\B{\Sigma}}$ is of the form $\BC{A}\cup\BC{E}$ with $\BC{A}\in\B{\Sigma}$ and $\BC{E}\in\BC{N}$. 
The measure $\B{\mu}$ is extended to a measure $\bar{\B{\mu}}$ on $\bar{\B{\Sigma}}$, by
$\bar{\B{\mu}}(\BC{A}\cup\BC{E})=\B{\mu}(\BC{A})$ for every $\BC{A}\in\B{\Sigma}$ and $\BC{E}\in\BC{N}$, 
and is called its \emph{completion}. A mapping $\B{f} : \BC{X} \to \BC{Y}$ between measurable 
spaces is called \emph{$\B{\mu}$-measurable} if the inverse image $\B{f}^{-1}(\BC{C})$ of every
 measurable set $\BC{C} \subseteq \BC{Y}$ is $\B{\mu}$-measurable. 
\end{definition}

\begin{definition}[Universal measurability]
Let $(\BC{X},\B{\Sigma})$ be a standard measurable space. A set $\BC{A}\subseteq\BC{X}$ is called \emph{universally measurable} if it is $\B{\mu}$-measurable for every $\sigma$-finite measure\footnote{A measure $\B{\mu}$ on a measurable space $(\BC{X},\B{\Sigma})$ is called \emph{$\sigma$-finite} if $\BC{X}=\cup_{n\in\NN}\BC{A}_n$, with $\BC{A}_n\in\B{\Sigma}$, $\B{\mu}(\BC{A}_n) < \infty$.} $\B{\mu}$ on $\BC{X}$ (i.e., in particular every probability measure). A mapping $\B{f}:\BC{X}\to\BC{Y}$ between standard measurable spaces is \emph{universally measurable} if it is $\B{\mu}$-measurable for every $\sigma$-finite measure $\B{\mu}$.
\end{definition}

\begin{lemma}
\label{lemm:AnalyticSetMeasurableApproxUpPset}
Let $\BC{E}$ be a standard probability space with probability measure $\Prb_{\BC{E}}$ and $\BC{A}\subseteq\BC{E}$ an analytic set. Then $\BC{A}$ is $\Prb_{\BC{E}}$-measurable and there exist measurable sets $\BC{S},\BC{T}\subseteq\BC{E}$ such that $\BC{S}\subseteq\BC{A}\subseteq\BC{T}$ and $\Prb_{\BC{E}}(\BC{S}) = \bar{\Prb}_{\BC{E}}(\BC{A}) = \Prb_{\BC{E}}(\BC{T})$, where $\bar{\Prb}_{\BC{E}}$ is the completion of $\Prb_{\BC{E}}$.
\end{lemma}
\begin{proof}
  Let $\BC{A}\subseteq\BC{E}$ be an analytic set. Since every analytic set in a standard measurable space is a universally measurable set (see Theorem~21.10 in~\citep{Kec95}), we know that $\BC{A}$ is a universally measurable set, and hence it is in particular a $\Prb_{\BC{E}}$-measurable set. Thus, there exist a measurable set $\BC{S}\subseteq\BC{E}$ and a $\Prb_{\BC{E}}$-null set $\BC{C}\subseteq\BC{E}$ such that $\BC{A}=\BC{S}\cup\BC{C}$ and $\bar{\Prb}_{\BC{E}}(\BC{A}) = \Prb_{\BC{E}}(\BC{S})$, where $\bar{\Prb}_{\BC{E}}$ is the completion of $\Prb_{\BC{E}}$. Moreover, there exists a measurable set $\tilde{\BC{C}}\subseteq\BC{E}$ such that $\BC{C}\subseteq\tilde{\BC{C}}$ and $\Prb_{\BC{E}}(\tilde{\BC{C}})=0$. Let $\BC{T}:=\BC{S}\cup\tilde{\BC{C}}$, then $\BC{A}\subseteq\BC{T}$ and $\Prb_{\BC{E}}(\BC{T})=\Prb_{\BC{E}}(\BC{S})$.
\end{proof}

\begin{lemma}
\label{lemm:CountGenMeasurability}
Let $\B{f} : \BC{X} \to \BC{Y}$ be a $\B{\mu}$-measurable mapping. If $\BC{Y}$ is countably 
generated, then there exists a measurable mapping $\B{g} : \BC{X} \to \BC{Y}$  such that 
$\B{f}(\B{x}) = \B{g}(\B{x})$ holds $\B{\mu}$-a.e..
\end{lemma}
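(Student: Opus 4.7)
The plan is to exploit the countable generating family of $\BC{Y}$ to reduce the $\B{\mu}$-measurability of $\B{f}$ to the existence of a genuinely measurable modification. Let $\{\BC{C}_n\}_{n\in\NN}$ be a countable collection generating the $\sigma$-algebra on $\BC{Y}$. For each $n$, the preimage $\B{f}^{-1}(\BC{C}_n)$ lies in $\bar{\B{\Sigma}}$, so by the description of $\bar{\B{\Sigma}}$ recalled in the preamble of the appendix, there exist $\BC{A}_n\in\B{\Sigma}$ and $\BC{E}_n\in\BC{N}$ with $\B{f}^{-1}(\BC{C}_n)\,\triangle\,\BC{A}_n \subseteq \BC{E}_n$. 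Each $\BC{E}_n$ is contained in some $\BC{M}_n\in\B{\Sigma}$ with $\B{\mu}(\BC{M}_n)=0$.

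Next I would set $\BC{M}:=\bigcup_{n\in\NN}\BC{M}_n$, which lies in $\B{\Sigma}$ and satisfies $\B{\mu}(\BC{M})=0$, and fix an arbitrary point $y_0\in\BC{Y}$. Define
$$
  \B{g}(\B{x}) := \begin{cases}
    \B{f}(\B{x}) & \B{x}\in\BC{X}\setminus\BC{M} \\
    y_0 & \B{x}\in\BC{M} \,.
  \end{cases}
$$
Clearly $\B{f}=\B{g}$ off $\BC{M}$, hence $\B{\mu}$-a.e. The task is therefore to verify that $\B{g}$ is measurable in the ordinary sense.

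For this I would check measurability only against the generating sets $\BC{C}_n$ and appeal to the fact that the collection $\{\BC{C}\subseteq\BC{Y}:\B{g}^{-1}(\BC{C})\in\B{\Sigma}\}$ is a $\sigma$-algebra. For each $n$, decompose
$$
  \B{g}^{-1}(\BC{C}_n) = \big(\B{g}^{-1}(\BC{C}_n)\cap(\BC{X}\setminus\BC{M})\big) \cup \big(\B{g}^{-1}(\BC{C}_n)\cap\BC{M}\big) \,.
$$
On $\BC{X}\setminus\BC{M}$, the set coincides with $\B{f}^{-1}(\BC{C}_n)\cap(\BC{X}\setminus\BC{M})=\BC{A}_n\cap(\BC{X}\setminus\BC{M})$, which lies in $\B{\Sigma}$ because both $\BC{A}_n$ and $\BC{X}\setminus\BC{M}$ do, and the symmetric difference with $\B{f}^{-1}(\BC{C}_n)$ is contained in $\BC{E}_n\subseteq\BC{M}$ and so gets cut out by the intersection. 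The second piece is either $\BC{M}$ (if $y_0\in\BC{C}_n$) or $\emptyset$, both in $\B{\Sigma}$. Hence $\B{g}^{-1}(\BC{C}_n)\in\B{\Sigma}$ for every generator, and measurability extends to the whole $\sigma$-algebra on $\BC{Y}$.

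The only subtle point, and the step I would treat most carefully, is the control of the exceptional set: I must make sure the countably many null sets $\BC{E}_n$ are absorbed into a genuine measurable null set of $\B{\Sigma}$ (not merely of $\bar{\B{\Sigma}}$), which is why each $\BC{E}_n$ is enlarged to a $\BC{M}_n\in\B{\Sigma}$ before taking the countable union. With this in place, the argument is essentially a bookkeeping exercise on generators, and no structural hypotheses on $\BC{X}$ beyond what is already assumed are needed.
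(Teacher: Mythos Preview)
Your proof is correct and follows essentially the same strategy as the paper's: pull back a countable generating family, absorb the countably many null discrepancies into a single measurable $\B{\mu}$-null set, redefine $\B{f}$ to a constant there, and verify measurability on generators. The only cosmetic differences are that you phrase the approximation via symmetric difference rather than the $\BC{A}\cup\BC{E}$ decomposition, and you take the ``good'' set to be $\BC{X}\setminus\BC{M}$ directly rather than building it as a union of modified $\BC{A}_n$'s; both choices lead to the same argument.
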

\begin{proof}
Let the $\sigma$-algebra of $\BC{Y}$ be generated by the countable generating set 
$\{ \BC{C}_n \}_{n \in \NN}$. The $\B{\mu}$-measurable set $\B{f}^{-1}(\BC{C}_n) = 
\BC{A}_n \cup \BC{E}_n$ for some $\BC{A}_n \in \B{\Sigma}$ and some $\BC{E}_n \in \BC{N}$ 
and hence there is some $\BC{E}_n \subseteq \BC{B}_n \in \B{\Sigma}$ such that 
$\B{\mu}(\BC{B}_n) = 0$. Let $\hat{\BC{B}} = \cup_{n \in \NN} \BC{B}_n$, 
$\hat{\BC{A}}_n = \BC{A}_n \setminus \hat{\BC{B}}$ and $\hat{\BC{A}} = \cup_{n \in \NN} 
\hat{\BC{A}}_n$, then $\B{\mu}(\hat{\BC{B}}) = 0$, $\hat{\BC{A}}$ and $\hat{\BC{B}}$ are 
disjoint and $\BC{X} = \hat{\BC{A}} \cup \hat{\BC{B}}$. Now define the mapping 
$\B{g} : \BC{X} \to \BC{Y}$ by
$$
  \begin{aligned}
    \B{g}(\B{x}) := 
    \begin{cases} 
      \B{f}(\B{x}) &\text{if } \B{x} \in \hat{\BC{A}}, \\ 
      \B{y}_0 &\text{otherwise,} 
    \end{cases}
  \end{aligned}
$$
where for $\B{y}_0$ we can take an arbitrary point in $\BC{Y}$. This mapping $\B{g}$ is 
measurable since for each generator $\BC{C}_n$ we have
$$
  \begin{aligned}
    \B{g}^{-1}(\BC{C}_n)=
    \begin{cases} 
      \hat{\BC{A}}_n &\text{if }\B{y}_0 \notin \BC{C}_n, \\ 
      \hat{\BC{A}}_n \cup \hat{\BC{B}} &\text{otherwise.} 
    \end{cases}
  \end{aligned}
$$
is in $\B{\Sigma}$. Moreover, $\B{f}(\B{x}) = \B{g}(\B{x})$ $\B{\mu}$-almost everywhere. 
\end{proof}

With this result at hand we can now prove the first measurable selection theorem.
\begin{theorem}[Measurable selection theorem]
\label{thm:MeasurableSelectionThm}
Let $\BC{E}$ be a standard probability space with probability measure $\Prb_{\BC{E}}$, $\BC{X}$ a standard 
measurable space and $\BC{S} \subseteq \BC{E} \times \BC{X}$ a measurable set such that 
$\BC{E} \setminus \B{pr}_{\BC{E}}(\BC{S})$ is a $\Prb_{\BC{E}}$-null set, where $\B{pr}_{\BC{E}}:\BC{E}\times\BC{X}\to\BC{E}$ is
the projection mapping on $\BC{E}$. Then there 
exists a measurable mapping $\B{g} : \BC{E} \to \BC{X}$ such that $(\B{e},\B{g}(\B{e})) 
\in \BC{S}$ for $\Prb_{\BC{E}}$-almost every $\B{e}\in\BC{E}$.
\end{theorem}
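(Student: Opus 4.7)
The plan is to use a classical measurable selection result (Jankov--von Neumann) and then use Lemma~\ref{lemm:CountGenMeasurability} to pass from a $\Prb_{\BC{E}}$-measurable selection to a genuinely measurable one. Since $\BC{E}$ and $\BC{X}$ are standard measurable spaces, I may assume without loss of generality that they are Polish spaces equipped with their Borel $\sigma$-algebras, so that the product $\BC{E}\times\BC{X}$ is again Polish and $\BC{S}$ is a Borel subset thereof.

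First, I would note that as a Borel set in a Polish space, $\BC{S}$ is an analytic (Suslin) set, and hence so is its projection $\B{pr}_{\BC{E}}(\BC{S})\subseteq\BC{E}$. By the Jankov--von Neumann measurable selection theorem (see e.g.\ Theorem~18.1 in \citet{Kechris95}), there exists a selection $\B{h}:\B{pr}_{\BC{E}}(\BC{S})\to\BC{X}$ that is measurable with respect to the $\sigma$-algebra generated by the analytic subsets of $\B{pr}_{\BC{E}}(\BC{S})$, and which satisfies $(\B{e},\B{h}(\B{e}))\in\BC{S}$ for every $\B{e}\in\B{pr}_{\BC{E}}(\BC{S})$. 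In particular $\B{h}$ is universally measurable, and hence $\Prb_{\BC{E}}$-measurable when restricted to the probability space $(\BC{E},\Prb_{\BC{E}})$.

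Next, I would extend $\B{h}$ to all of $\BC{E}$. Since $\BC{E}\setminus\B{pr}_{\BC{E}}(\BC{S})$ is a $\Prb_{\BC{E}}$-null set by hypothesis, I can pick any point $\B{x}_0\in\BC{X}$ and define $\tilde{\B{h}}:\BC{E}\to\BC{X}$ to equal $\B{h}$ on $\B{pr}_{\BC{E}}(\BC{S})$ and $\B{x}_0$ elsewhere. This extension is $\Prb_{\BC{E}}$-measurable since it agrees with the $\Prb_{\BC{E}}$-measurable selection $\B{h}$ off a null set, and it satisfies $(\B{e},\tilde{\B{h}}(\B{e}))\in\BC{S}$ for $\Prb_{\BC{E}}$-almost every $\B{e}$.

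Finally, since $\BC{X}$ is a standard measurable space, its $\sigma$-algebra is countably generated, so I can apply Lemma~\ref{lemm:CountGenMeasurability} to the $\Prb_{\BC{E}}$-measurable mapping $\tilde{\B{h}}$ to obtain a genuinely measurable mapping $\B{g}:\BC{E}\to\BC{X}$ with $\B{g}=\tilde{\B{h}}$ $\Prb_{\BC{E}}$-almost everywhere. Combining this with the preceding almost-everywhere statement gives $(\B{e},\B{g}(\B{e}))\in\BC{S}$ for $\Prb_{\BC{E}}$-almost every $\B{e}$. The main obstacle is that $\B{pr}_{\BC{E}}(\BC{S})$ need not itself be Borel (projections of Borel sets are only analytic in general), which is precisely why the Jankov--von Neumann theorem, rather than a simpler Borel measurable selection result, is the appropriate tool; the completion step via Lemma~\ref{lemm:CountGenMeasurability} is then needed to upgrade universal measurability to measurability in the original $\sigma$-algebra.
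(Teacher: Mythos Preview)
Your proposal is correct and follows essentially the same route as the paper: apply the Jankov--von Neumann uniformization theorem to the analytic set $\BC{S}$, extend the resulting universally measurable selection by a constant, and then invoke Lemma~\ref{lemm:CountGenMeasurability} to replace the $\Prb_{\BC{E}}$-measurable map by a Borel measurable one. The only cosmetic difference is that the paper first passes to a Borel subset $\hat{\BC{E}}\subseteq\B{pr}_{\BC{E}}(\BC{S})$ of full measure before applying Jankov--von Neumann, whereas you work directly on the (merely analytic) projection; your justification that the extension $\tilde{\B{h}}$ is $\Prb_{\BC{E}}$-measurable is slightly informal (since $\B{h}$ is defined only on the analytic set $\B{pr}_{\BC{E}}(\BC{S})$, not on all of $\BC{E}$), but it is easily made rigorous by noting that both $\B{pr}_{\BC{E}}(\BC{S})$ and its complement are universally measurable, so the piecewise-defined $\tilde{\B{h}}$ is universally measurable and hence $\Prb_{\BC{E}}$-measurable.
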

\begin{proof}
Take the subset $\hat{\BC{E}} := \BC{E} \setminus \BC{B}$, for some measurable set $\BC{B} \supseteq \BC{E} 
\setminus \B{pr}_{\BC{E}}(\BC{S})$ and $\Prb_{\BC{E}}(\BC{B}) = 0$, and note that $\hat{\BC{E}}$ 
is a standard measurable space (see Corollary~13.4 in \citep{Kec95}) and $\hat{\BC{E}} \subseteq \B{pr}_{\BC{E}}(\BC{S})$. Let $\hat{\BC{S}} = 
\BC{S} \cap (\hat{\BC{E}} \times \BC{X})$. Because the set $\hat{\BC{S}}$ is measurable, it is 
in particular analytic (see Lemma~\ref{lemm:AnalyticSetProperties}). It follows by the 
Jankov-von Neumann Theorem (see Theorem~18.8 or 29.9 in \citep{Kec95}) that $\hat{\BC{S}}$ has a 
universally measurable uniformizing function, that is, there exists a universally measurable 
mapping $\hat{\B{g}} : \hat{\BC{E}} \to \BC{X}$ such that for all $\B{e} \in \hat{\BC{E}}$, 
$(\B{e},\hat{\B{g}}(\B{e})) \in \hat{\BC{S}}$. Hence,
in particular, it is $\Prb_{\BC{E}} \big|_{\hat{\BC{E}}}$-measurable, where $\Prb_{\BC{E}}
\big|_{\hat{\BC{E}}}$ is the restriction of $\Prb_{\BC{E}}$ to $\hat{\BC{E}}$.

Now define the mapping $\B{g}^*:\BC{E}\to\BC{X}$ by
$$
  \begin{aligned}
    \B{g}^*(\B{e}) := 
    \begin{cases} 
      \hat{\B{g}}(\B{e}) &\text{if } \B{e} \in \hat{\BC{E}} \\ 
      \B{x}_0 &\text{otherwise,}
    \end{cases} 
  \end{aligned}
$$
where for $\B{x}_0$ we can take an arbitrary point in $\BC{X}$. Then this mapping $\B{g}^*$ is
$\Prb_{\BC{E}}$-measurable. To see this, take any measurable set $\BC{C}\subseteq\BC{X}$, then 
$$
  \begin{aligned}
    \B{g}^{*-1}(\BC{C}) = 
    \begin{cases} 
      \hat{\B{g}}^{-1}(\BC{C}) &\text{if } \B{x_0} \notin \BC{C} \\ 
      \hat{\B{g}}^{-1}(\BC{C})\cup\BC{B} &\text{otherwise.}
    \end{cases} 
  \end{aligned}
$$
Because $\hat{\B{g}}^{-1}(\BC{C})$ is $\Prb_{\BC{E}} \big|_{\hat{\BC{E}}}$-measurable it is also
$\Prb_{\BC{E}}$-measurable and thus $\B{g}^{*-1}(\BC{C})$ is $\Prb_{\BC{E}}$-measurable.

By Lemma~\ref{lemm:CountGenMeasurability} and the fact that standard measurable spaces are 
countably generated (see Proposition~12.1 in \citep{Kec95}), we prove the existence of a measurable mapping $\B{g} : \BC{E} \to \BC{X}$ 
such that $\B{g}^* = \B{g}$ $\Prb_{\BC{E}}$-a.e.\ and thus it satisfies 
$(\B{e},\B{g}(\B{e})) \in \BC{S}$ for $\Prb_{\BC{E}}$-almost every $\B{e}\in\BC{E}$.
\end{proof}

This theorem rests on the assumption that the standard measurable space $\BC{E}$ has a probability measure
$\Prb_{\BC{E}}$. If this space becomes the product space $\BC{Y}\times\BC{E}$, for some standard measurable space $\BC{Y}$ where only the space $\BC{E}$ has a probability measure, then in general this theorem does not hold
anymore. However, if we assume in addition that the fibers of $\BC{S}$ in $\BC{Y}$ are $\sigma$-compact for $\Prb_{\BC{E}}$-almost every $\B{e}\in\BC{E}$ and for all $\B{x}\in\BC{X}$, then we can prove a second measurable selection theorem. A topological space is $\sigma$-compact if it is the union of 
countably many compact subspaces. For example, all countable discrete
spaces, every interval of the real line, and moreover all the Euclidean spaces are
$\sigma$-compact spaces.

\begin{theorem}[Second measurable selection theorem]
\label{thm:MeasurableSelectionThmSigmaCompact}
Let $\BC{E}$ be a standard probability space with probability measure $\Prb_{\BC{E}}$, $\BC{X}$ and $\BC{Y}$ standard measurable spaces and $\BC{S} \subseteq \BC{X} \times 
\BC{E} \times \BC{Y}$ a measurable set such that $\BC{E} \setminus \BC{K}_{\sigma}$ is a $\Prb_{\BC{E}}$-null set, 
where
$$
  \BC{K}_{\sigma} := \{ \B{e} \in \BC{E} \,:\, \forall\B{x} \in \BC{X} (\BC{S}_{(\B{x},\B{e})} 
  \text{ is nonempty and $\sigma$-compact}) \} \,,
$$
with $\BC{S}_{(\B{x},\B{e})}$ denoting the fiber over $(\B{x},\B{e})$, that is
$$
\BC{S}_{(\B{x},\B{e})} := \{ \B{y}\in\BC{Y} \,:\, (\B{x},\B{e},\B{y})\in\BC{S} \} \,.
$$
Then there exists a measurable mapping $\B{g} : \BC{X} \times \BC{E} \to \BC{Y}$ such that for 
$\Prb_{\BC{E}}$-almost every $\B{e}\in\BC{E}$ and for all $\B{x} \in \BC{X}$ we have 
$(\B{x},\B{e},\B{g}(\B{x},\B{e})) \in \BC{S}$.
\end{theorem}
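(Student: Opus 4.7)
The plan is to mirror the structure of the proof of Theorem~\ref{thm:MeasurableSelectionThm}, but replace the Jankov-von Neumann uniformization of analytic sets by a selection theorem that is sensitive to the section geometry. The natural candidate is the Arsenin--Kunugui theorem (Theorem~18.18 in \citet{Kechris95}): a Borel subset of a product of Polish spaces whose vertical sections are all $\sigma$-compact admits a Borel uniformization on its projection. This is exactly the hypothesis we have, but phrased over the combined coordinate $(\B{x},\B{e})$.

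First I would strip off the null set where the hypothesis fails. Choose a measurable $\BC{B}\subseteq\BC{E}$ with $\BC{E}\setminus\BC{K}_\sigma\subseteq\BC{B}$ and $\Prb_{\BC{E}}(\BC{B})=0$, and set $\hat{\BC{E}}:=\BC{E}\setminus\BC{B}$; as a measurable subset of a standard measurable space, $\hat{\BC{E}}$ is itself standard (Corollary~13.4 in \citet{Kechris95}). Define $\hat{\BC{S}}:=\BC{S}\cap(\BC{X}\times\hat{\BC{E}}\times\BC{Y})$. Then $\hat{\BC{S}}$ is a measurable subset of the standard product $\BC{X}\times\hat{\BC{E}}\times\BC{Y}$, and by the choice of $\BC{B}$, for every $(\B{x},\B{e})\in\BC{X}\times\hat{\BC{E}}$ the fiber $\hat{\BC{S}}_{(\B{x},\B{e})}$ is non-empty and $\sigma$-compact.

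Next I would apply Arsenin's theorem to $\hat{\BC{S}}$, viewed as a Borel subset of $(\BC{X}\times\hat{\BC{E}})\times\BC{Y}$ whose sections over the base $\BC{X}\times\hat{\BC{E}}$ are all non-empty and $\sigma$-compact. This yields a Borel-measurable uniformization $\hat{\B{g}}:\BC{X}\times\hat{\BC{E}}\to\BC{Y}$ satisfying $(\B{x},\B{e},\hat{\B{g}}(\B{x},\B{e}))\in\hat{\BC{S}}$ for every $(\B{x},\B{e})\in\BC{X}\times\hat{\BC{E}}$. I would then extend to $\BC{X}\times\BC{E}$ by setting $\B{g}(\B{x},\B{e}):=\hat{\B{g}}(\B{x},\B{e})$ on $\BC{X}\times\hat{\BC{E}}$ and $\B{g}(\B{x},\B{e}):=\B{y}_0$ for an arbitrary $\B{y}_0\in\BC{Y}$ on $\BC{X}\times\BC{B}$. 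Since $\hat{\BC{E}}$ and $\BC{B}$ are measurable and each piece of the definition is measurable, $\B{g}$ is measurable; and by construction $(\B{x},\B{e},\B{g}(\B{x},\B{e}))\in\BC{S}$ holds for every $\B{e}\in\hat{\BC{E}}$ and every $\B{x}\in\BC{X}$, which is the desired $\Prb_{\BC{E}}$-almost-everywhere statement.

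The main obstacle is precisely the gap between the earlier theorem and this one: in Theorem~\ref{thm:MeasurableSelectionThm} we needed a selection over a coordinate carrying a probability measure, so a universally measurable selection (Jankov--von Neumann) followed by a null-set correction (Lemma~\ref{lemm:CountGenMeasurability}) sufficed. Here we additionally need the selection to be defined pointwise in $\B{x}$, on a coordinate over which we have no measure, so a mere almost-everywhere correction would be useless. The $\sigma$-compactness of every fiber is exactly the regularity hypothesis that upgrades the mere existence of a universally measurable uniformization to an honest Borel one, closing the gap and making the pointwise-in-$\B{x}$ conclusion possible.
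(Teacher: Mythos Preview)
Your proposal is correct and follows essentially the same route as the paper: strip off a $\Prb_{\BC{E}}$-null set to land inside $\BC{K}_\sigma$, apply the Arsenin--Kunugui uniformization theorem to the restricted set $\hat{\BC{S}}\subseteq(\BC{X}\times\hat{\BC{E}})\times\BC{Y}$ (whose vertical sections are all non-empty and $\sigma$-compact) to obtain a Borel selector $\hat{\B{g}}$, and then extend by a constant on the exceptional set. The only minor quibble is the reference: the Arsenin--Kunugui uniformization theorem is Theorem~35.46 in \citet{Kechris95}, not Theorem~18.18 (which is the related Saint~Raymond result on $K_\sigma$ sections and Borel projections).
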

\begin{proof}
Take the subset $\hat{\BC{E}} := \BC{E} \setminus \BC{B}$, for some measurable set $\BC{B} \supseteq \BC{E} 
\setminus \BC{K}_{\sigma}$ and $\Prb_{\BC{E}}(\BC{B}) = 0$. Note that $\hat{\BC{E}}$ 
is a standard measurable space, $\hat{\BC{E}} \subseteq \BC{K}_{\sigma}$ and $\hat{\BC{S}} = 
\BC{S} \cap (\BC{X} \times \hat{\BC{E}} \times \BC{Y})$ is measurable. By assumption, for 
each $(\B{x}, \B{e}) \in \BC{X}\times\hat{\BC{E}}$ the fiber $\hat{\BC{S}}_{(\B{x}, \B{e})}$ is nonempty 
and $\sigma$-compact and hence by applying the Theorem of Arsenin-Kunugui 
(see Theorem 35.46 in \citep{Kec95}) it follows that the set $\hat{\BC{S}}$ has a 
measurable uniformizing function, that is, there exists a measurable mapping $\hat{\B{g}} : \BC{X} \times \hat{\BC{E}} \to \BC{Y}$ 
such that for all $(\B{x},\B{e}) \in \BC{X} \times \hat{\BC{E}}$, 
$(\B{x},\B{e},\hat{\B{g}}(\B{x},\B{e})) \in \hat{\BC{S}}$. 
Now define the mapping $\B{g} : \BC{X} \times \BC{E} \to \BC{Y}$ by 
$$
  \begin{aligned}
    \B{g}(\B{x}, \B{e}) := 
    \begin{cases} 
      \hat{\B{g}}(\B{x}, \B{e}) &\text{if } \B{e} \in \hat{\BC{E}} \\ 
      \B{y}_0 &\text{otherwise,}
    \end{cases} 
  \end{aligned} 
$$
where for $\B{y}_0$ we can take an arbitrary point in $\BC{Y}$. This mapping $\B{g}$ inherits 
the measurability from $\hat{\B{g}}$ and it satisfies for $\Prb_{\BC{E}}$-almost every 
$\B{e}\in\BC{E}$ and for all $\B{x} \in \BC{X}$ that $(\B{x},\B{e},\B{g}(\B{x},\B{e})) \in \BC{S}$.
\end{proof}

The next two lemmas provide some useful properties for the ``for $\Prb_{\BC{E}}$-almost every $\B{e}\in\BC{E}$'' quantifier.
\begin{lemma}\label{lemm:MeasurableMapsBetweenStandardSpaces}
  Let $\B{\phi} : \BC{E} \to \tilde{\BC{E}}$ be a measurable map between 
  two standard measurable spaces. Let $\Prb_{\BC{E}}$ be a probability measure on $\BC{E}$ and let $\Prb_{\tilde{\BC{E}}} = \Prb_{\BC{E}} \circ \B{\phi}^{-1}$ be its push-forward under $\B{\phi}$. 
  Let $\tilde P:\tilde{\BC{E}}\to\{0,1\}$ be a property, that is, a (measurable) boolean-valued function on $\tilde{\BC{E}}$. Then the property $P = \tilde P\circ\B{\phi}$ on $\BC{E}$ holds $\Prb_{\BC{E}}$-a.e.\ if and only if the property $\tilde P$ holds $\Prb_{\tilde{\BC{E}}}$-a.e..
\end{lemma}
\begin{proof}
Assume the property $P = \tilde P\circ\B{\phi}$ holds $\Prb_{\BC{E}}$-a.e., then $\BC{C}=\{\B{e}\in\BC{E} : P(\B{e}) = 1 \}$ contains a measurable set $\BC{C}^*$ with $\Prb_{\BC{E}}$-measure 1, that is, $\BC{C}^* \subseteq \BC{C}$ and $\Prb_{\BC{E}}(\BC{C}^*) = 1$.
By Lemma~\ref{lemm:AnalyticSetProperties}, $\B{\phi}(\BC{C}^*)$ is analytic. By Lemma~\ref{lemm:AnalyticSetMeasurableApproxUpPset}, there exist measurable sets $\BC{A}, \BC{B}$ such that $\BC{A} \subseteq \B{\phi}(\BC{C}^*) \subseteq \BC{B}$ and $\Prb_{\tilde{\BC{E}}}(\BC{A}) = \Prb_{\tilde{\BC{E}}}(\BC{B})$. Because $\B{\phi}$ is measurable, $\B{\phi}^{-1}(\BC{A})$ and $\B{\phi}^{-1}(\BC{B})$ are both measurable. Also, $\B{\phi}^{-1}(\BC{A}) \subseteq \B{\phi}^{-1}(\B{\phi}(\BC{C}^*)) \subseteq \B{\phi}^{-1}(\BC{B})$. As $\BC{C}^* \subseteq \B{\phi}^{-1}(\B{\phi}(\BC{C}^*))$, we must have that $\Prb_{\BC{E}}(\B{\phi}^{-1}(\BC{B})) \ge \Prb_{\BC{E}}(\BC{C}^*) = 1$. Hence $\Prb_{\tilde{\BC{E}}}(\BC{A}) = \Prb_{\tilde{\BC{E}}}(\BC{B}) = 1$. 
  Note that as $\BC{C}^* \subseteq \BC{C}$, $\BC{A} \subseteq \B{\phi}(\BC{C}^*) \subseteq \B{\phi}(\BC{C})\subseteq\{\tilde{\B{e}}\in\tilde{\BC{E}} : \tilde P(\tilde{\B{e}}) = 1 \}$. Hence the set $\tilde{\BC{C}} := \{\tilde{\B{e}}\in\tilde{\BC{E}} : \tilde P(\tilde{\B{e}}) = 1 \}$ contains a measurable set of $\Prb_{\tilde{\BC{E}}}$-measure 1, in other words, $\tilde P$ holds $\Prb_{\tilde{\BC{E}}}$-a.s..

  The converse is easier to prove. Suppose $\tilde{\BC{C}}=\{\tilde{\B{e}}\in\tilde{\BC{E}} : \tilde P(\tilde{\B{e}}) = 1 \}$ contains a measurable set $\tilde{\BC{C}}^*$ with $\Prb_{\tilde{\BC{E}}}$-measure 1, that is, $\tilde{\BC{C}}^* \subseteq \tilde{\BC{C}}$ and $\Prb_{\tilde{\BC{E}}}(\tilde{\BC{C}}^*) = 1$. Because $\B{\phi}$ is measurable, the set $\B{\phi}^{-1}(\tilde{\BC{C}}^*)$ is measurable and $\Prb_{\BC{E}}(\B{\phi}^{-1}(\tilde{\BC{C}}^*)) = 1$, and furthermore, $\B{\phi}^{-1}(\tilde{\BC{C}}^*) \subseteq \B{\phi}^{-1}(\tilde{\BC{C}})=\BC{C}$.
\end{proof}

\begin{lemma}[Some properties for the for-almost-every quantifier]
  \label{lemm:AlmostAllQuantifierLogic}
Let $\BC{X}=\C{X}\times\tilde{\C{X}}$ and $\BC{E}=\C{E}\times\tilde{\C{E}}$ be products of nonempty standard measurable spaces and $\Prb_{\BC{E}}=\Prb_{\C{E}}\times\Prb_{\tilde{\C{E}}}$ be the product measure of probability measures $\Prb_{\C{E}}$ and $\Prb_{\tilde{\C{E}}}$ on $\C{E}$ and $\tilde{\C{E}}$, respectively. Denote by ``$\foralmostall\B{e}$'' the quantifier ``for $\Prb_{\BC{E}}$-almost every $\B{e}\in\BC{E}$'' and by ``$\forall\B{x}$'' the quantifier ``for all $\B{x}\in\BC{X}$'', and similarly for their components, for example, ``$\foralmostall e$'' for ``for $\Prb_{\C{E}}$-almost every $e\in\C{E}$'' and ``$\forall x$'' for ``for all $x\in \C{X}$''. Then we have the following properties:
  \begin{enumerate}[ref=\thelemma.(\arabic*)]
    \item \label{lemm:AlmostAllQuantifierLogic1} $\foralmostall e: P(e) \implies \exists e: P(e)$ \,(similarly to $\forall x: P(x) \implies \exists x: P(x)$);
    \item \label{lemm:AlmostAllQuantifierLogic2} $\foralmostall e: P(e) \iff \foralmostall\B{e}: P(e)$ \,(similarly to $\forall x: P(x) \iff \forall\B{x}: P(x)$);
    \item \label{lemm:AlmostAllQuantifierLogic3} $\exists x \foralmostall e: P(x,e) \implies \foralmostall e \exists x: P(x,e)$ \,(similarly to $\exists x \forall e: P(x,e) \implies \forall e \exists x: P(x,e)$);
    \item \label{lemm:AlmostAllQuantifierLogic4} $\foralmostall e \forall x: P(x,e) \implies \forall x \foralmostall e: P(x,e)$ \,(similarly to $\forall e \forall x: P(x,e) \implies \forall x \forall e: P(x,e)$);
    \item \label{lemm:AlmostAllQuantifierLogic5} $\foralmostall\B{e}: P(\B{e}) \implies \exists \tilde{e} \foralmostall e: P(\B{e})$ \,(similarly to $\forall \B{x}: P(\B{x}) \implies \exists \tilde{x} \forall x: P(\B{x})$);
    \item \label{lemm:AlmostAllQuantifierLogic6} $\foralmostall e\forall x: P(x,e) \iff \foralmostall\B{e}\forall\B{x}: P(x,e)$;
    \item \label{lemm:AlmostAllQuantifierLogic7} $\foralmostall\B{e}\forall\B{x}: P(\B{x},\B{e}) \implies \exists \tilde{e} \exists \tilde{x} \foralmostall e\forall x: P(\B{x},\B{e})$,
 \end{enumerate}
where $P$ denotes a property, that is, a measurable boolean-valued function, on the corresponding measurable spaces and we write $\B{e}$ and $\B{x}$ for $(e,\tilde{e})$ and $(x,\tilde{x})$, respectively.
\end{lemma}
\Joris{I shortened the proof, only stating the non-obvious.}
\begin{proof}
We only prove the statements that may not be immediately obvious.

Property 2. Let $pr_{\C{E}}:\BC{E}\to\C{E}$ be the projection mapping on $\C{E}$. 
Then by Lemma~\ref{lemm:MeasurableMapsBetweenStandardSpaces} we have 
$$
   \foralmostall e : P(e) 
   \iff \foralmostall {\B{e}} : P\circ\text{pr}_{\C{E}}(\B{e}) \iff \foralmostall {\B{e}} : P(e) \,.
$$

Property 4: We have
$$
\begin{aligned}
 & \foralmostall e \forall x : P(x,e) \\
  &\implies \exists {\,\text{$\Prb_{\C{E}}$-null set $N$}\,} \forall {e \in \C{E}\setminus N} \, \forall x : P(x,e) \\
  &\implies \exists {\,\text{$\Prb_{\C{E}}$-null set $N$}\,} \forall x \, \forall {e \in \C{E}\setminus N} : P(x,e) \\
  &\implies \forall x \, \exists {\,\text{$\Prb_{\C{E}}$-null set $N$}\,} \forall { e \in \C{E}\setminus N} : P(x,e) \\
&\implies \forall x \foralmostall e : P(x,e) \,.
\end{aligned}
$$

Property 5:
Let $\B{N}$ be a measurable $\Prb_{\BC{E}}$-null set such that $P(\B{e})$ holds for all $\B{e}\in\BC{E}\setminus \B{N}$. Define for $\tilde{e}\in\tilde{\C{E}}$ the set $N_{\tilde{e}}:=\{ e \in \C{E} : (e,\tilde{e}) \in \B{N}\}$. Note that the sets $N_{\tilde{e}}$ are measurable. 
From Fubini's theorem 
it follows that for $\Prb_{\tilde{\C{E}}}$-almost every $\tilde{e}\in\tilde{\C{E}}$ we have $\Prb_{\C{E}}(N_{\tilde{e}})=0$. That is, there exists a measurable $\Prb_{\tilde{\C{E}}}$-null set $\tilde{N}$ such that $\Prb_{\C{E}}(N_{\tilde{e}})=0$ for all $\tilde{e}\in\tilde{\C{E}}\setminus\tilde{N}$.
Hence, there exists $\tilde{e}\in\tilde{\C{E}}\setminus\tilde{N}$ such that $\Prb_{\C{E}}(N_{\tilde{e}})=0$; for
all $e \in \C{E}\setminus N_{\tilde{e}}$, $P(\B{e})$ then holds.
This means $\exists \tilde{e} \foralmostall e : P(\B{e})$.
%

Property 7: We have 
$$
  \begin{aligned}
    \foralmostall {\B{e}} \forall {\B{x}} : P(\B{x},\B{e}) 
    &\implies \exists {\tilde{e}} \foralmostall e \forall {\B{x}} : P(\B{x},\B{e}) 
  \implies \exists {\tilde{e}} \foralmostall e \forall {\tilde{x}} \forall x : P(\B{x},\B{e}) \\
    &\implies \exists {\tilde{e}} \forall {\tilde{x}} \foralmostall e \forall x : P(\B{x},\B{e}) 
  \implies \exists {\tilde{e}} \exists {\tilde{x}} \foralmostall e \forall x : P(\B{x},\B{e}) \,,
  \end{aligned}
$$
where in the first equivalence we used Property~5, in the third equivalence we used Property~4 and in the last equivalence we used Property~1.
\end{proof}

\bibliographystyle{imsart-number}

\end{document}